\newtheorem{thm}{Theorem}[section]
\newtheorem{lemma}[thm]{Lemma}
\newtheorem{proposition}[thm]{Proposition}
\newtheorem{corollary}[thm]{Corollary}
\newtheorem{remark}[thm]{Remark}
\newcommand{\R}{\mathbb{R}}
\newcommand{\E}{\mathbb{E}}
\renewcommand{\P}{\mathbb{P}}
\newcommand{\tr}{r_{\tilde{x},\tilde{y}}}
\newcommand{\ttr}{\tilde{r}_{\tilde{x},\tilde{y}}}
\newcommand{\tR}{R }
\newcommand{\cN}{\mathcal{N}}
\newcommand{\cT}{\mathcal{T}}
\newcommand{\rel}{\mathrm{rel}}
\newcommand{\relR}{\mathrm{rel}^{R}}
\newcommand{\dotproduct}{RP Dot Product}
\newcommand{\dotproductT}{\dotproduct{} when $P=T$}
\newcommand{\dotproductA}{\dotproduct{} when $P=A$}
\newcommand{\cosineSim}{RP Cosine Similarity}
\newcommand{\NDCG}{\mathrm{NDCG}}
\newcommand{\DCG}{\mathrm{DCG}}
\newcommand{\rank}{\mathrm{rank}}
\author{
Tvrtko Tadić\\
Microsoft Search,\\ Assistant and Intelligence\\
\texttt{tvrtkota@microsoft.com}
\And
Cassiano Becker\\
Microsoft Search,\\ Assistant and Intelligence\\
\texttt{casbecker@microsoft.com}
\And
Jennifer Neville\\
Microsoft Research\\
\texttt{jenneville@microsoft.com}
}
\begin{document}
\title{Node Similarities under Random Projections: Limits and Pathological Cases}


%


\maketitle

\begin{abstract}
Random Projections have been widely used to generate embeddings for various graph learning tasks due to their computational efficiency. The majority of applications have been justified through the Johnson-Lindenstrauss Lemma. In this paper, we take a step further and investigate how well dot product and cosine similarity are preserved by random projections when these are applied over the rows of the graph matrix. Our analysis provides new asymptotic and finite-sample results, identifies pathological cases, and tests them with numerical experiments. We specialize our fundamental results to a ranking application by computing the probability of random projections flipping the node ordering induced by their embeddings. We find that, depending on the degree distribution, the method produces especially unreliable embeddings for the dot product, regardless of whether the adjacency or the normalized transition matrix is used. With respect to the statistical noise introduced by random projections, we show that cosine similarity produces remarkably more precise approximations.
\end{abstract}

\section{Introduction}

\emph{Random projections} (RP) provide a simple and elegant approach to dimensionality reduction~\cite{MR2073630}. Leveraging concentration of measure phenomena in high-dimensional statistics~\cite{MR3837109}, RP's rely on the well-known Johnson-Lindenstrauss (JL) lemma~\cite{johnson1986extensions,dasgupta2003elementary} to provide data-independent guarantees for approximation quality. As originally developed, the JL lemma showed that Euclidean distances between any two points in a dataset are preserved with high probability if their vectors are projected using appropriately constructed random matrices. Remarkably, the JL lemma shows that the required projection dimension of the random matrix needs to grow only with the logarithm of the ambient dimension of the dataset. As such, the JL lemma has found wide application in diverse fields such as information retrieval~\cite{kleinberg1997two,indyk1998approximate}, 
machine learning~\cite{durrant2013sharp,cannings2017random,boutsidis2010random,cardoso2012iterative,makarychev2019performance}, 
privacy~\cite{liu2005random,wang2010analysis}, and
numerical linear algebra~\cite{halko2011finding,martinsson2020randomized}. 

\emph{In the context of graphs}, the JL lemma can be applied to generate low-dimensional encodings of a node's connectivity to other nodes in the graph. In that case, nodes are usually represented as vectors extracted as the rows of a connectivity matrix~\cite{hamilton2020graph}, which is obtained from the adjacency matrix (or some analytic function of it). For example, it is known that the entries of the power of an adjacency matrix encode the number of walks of length of that prescribed power between any two nodes~\cite{newman2018networks}. Likewise, the transition matrix can be used to compute the corresponding random-walk probability between any pair of nodes~\cite{bianchini2005inside}. This notion is key to algorithms such as PageRank~\cite{page1999pagerank}, which computes the steady-state random walk probability as the limit of a weighted sum over all possible lengths. In these cases, although the adjacency matrix is typically sparse --in that it registers only first-order immediate neighbors-- higher-order connectivity matrices will often become dense and pose representation challenges in term of the storage space required. We recall that the inherent dimensionality of such representations based on connectivity matrices would otherwise grow quadratically with the number of nodes in the graph, which in practical data applications is often of the order of millions or higher~\cite{tang2009social}. To that end, random projections consists of a promising dimensionality reduction approach. In the context of representation learning, the use of random projections to generate node embeddings for graphs was introduced in \cite{IterativeRP} and further developed in \cite{FastRP}, becoming a popular and fast way to produce embeddings. 



\emph{The problem addressed in this paper} is the one of graph representation for tasks over large graphs, and can be summarized by the following steps:

\hspace{1mm} 1. We use random projections to generate \emph{low-dimensional node embeddings} capturing (high order) connectivity induced by (otherwise computationally-expensive) adjacency matrix polynomials.

\hspace{1mm} 2. With the node embeddings, we compute arbitrary \emph{pairwise node similarities} as a feature in models for graph inference tasks (e.g., ranking for recommendation).

\emph{The contributions of this paper} as threefold:

\hspace{1mm} 1. We show that the \emph{degree of the node has significant influence on the quality of approximation similarity}, according to the type of matrix (e.g., adjacency or transition) and function adopted (e.g., dot product or cosine). By expressing the JL Lemma in terms of the graph's degree distribution, we show that it becomes data dependent for dot product, yielding especially weak guarantees for low and high-degree nodes. Contrary to intuition, this happens not only for the adjacency matrix, but also when the transition matrix is considered (i.e., when the matrix is row-normalized).

\hspace{1mm} 2. We provide extensive analyses of node embedding approximation quality under cosine similarity. By crafting a rotation argument followed by a Gram-Schmidt orthogonalization procedure, we develop \emph{novel asymptotic and finite-sample results} that show that cosine similarity produces more precise approximations with respect to the graph's degree distribution.

\hspace{1mm} 3. We extend our theoretical results to the case of \emph{ranking based on node embeddings}. Specifically, we derive the probability of random projections flipping the node ordering induced by their embeddings, and show that cosine similarity produces remarkably more stable rankings. These findings are shown to occur and impact practical applications, as we illustrate with an example using a Wikipedia dataset.

\emph{Related work} on random projections has focused predominantly on proposing alternative random projection matrix constructions and seeking sparser configurations (e.g., based on Rademacher random variables)~\cite{freksen2018fully,jagadeesan2019understanding,achlioptas2003database}. Other works have been dedicated to obtaining tighter bounds for the preservation of Euclidean distance~\cite{sobczyk2022approximate,li2006improving}. In the case of the dot product as a similarity metric, work in ~\cite{KabanDotProduct} obtained general guarantees with depending on the angle between the vectors. Other similar studies have focused on the preservation of margin for as a quantity of interest in the formulation of support vector machines~\cite{ShiEtAll}. Additional contributions have been achieved in the analysis of clustering~\cite{becchetti2019oblivious,cohen2022improved,bucarelli2024generalization}.


\section{Random Projections in Graphs}\label{sec:RPGraphs}

\begin{table}[t]
\caption{Notation and symbols used throughout the paper}
\begin{tabular}{cc}
    \begin{minipage}[t]{.45\linewidth}
        \begin{tabular}{ll}
            \hline
            $n$ & number of nodes\\
            $G$ & \makecell[l]{graph with $V$ and $E$\\ the sets of nodes and edges}\\
            $A, \,T$ & adjacency matrix, transition matrix\\
            $P$ & matrix polynomial $p(A)$ or $p(T)$\\
            $d_u$ & degree of node $u$\\
            $n_{uv}$ & connectivity between nodes $u$ and $v$\\
            \hline
        \end{tabular}
    \end{minipage} &

    \begin{minipage}[t]{.55\linewidth}
        \begin{tabular}{ll}
            \hline
            $c, \gamma $ & low and high degree parameters\\
            $L_c$ & set of low degree nodes wrt. $c$\\
            $H_c^\gamma $ & set of high degree nodes wrt. $c$ and $\gamma $\\
            $q$ & random projection dimension\\
            $R$ & random projection matrix\\
            $X$ & node embedding matrix\\
            $\mathcal T_q$ & $t$-distribution with parameter $q$ \\
            \hline
        \end{tabular}
    \end{minipage} 
\end{tabular}
\end{table}

In the case of large graphs $G=(V,E)$, random projections have been applied to polynomials of the adjacency matrix~$A$ and transition matrix~$T$. If $R=[R_{ij}]\in \mathbb{R}^{q \times n}$, $q \ll n$, is a random matrix generated in a specific way, then $X=p(A)R^\top$ or $X=p(T)R^\top$ preserves many of the properties~$p(A)$ and $p(T)$ for a matrix polynomial~$p(X)=\sum_{j=1}^l \alpha_j X^j$. In our paper, we construct~$R$ by sampling $(R_{ij} : i =1,\ldots,n, j=1,\ldots,d)$ as i.i.d. normal random variables with mean zero and variance $1/q$. The sequence of row vectors~$p_1,\ldots,p_k$ in $\mathbb{R}^{1\times n}$ is then mapped to row-vectors~$p_1R^\top,\ldots,p_kR^\top$ in $\mathbb{R}^{1\times q}$.
An additional benefit is that $p(A)R^\top$ and $p(T)R^\top$ can be calculated faster than $p(A)$ and $p(T)$. This becomes even more useful in the case of heterogeneous graphs, where we have polynomials of several variables associated with the so-called metapaths~\cite{sun2011pathsim}.

For $P=p(A)$ or $P=p(T)$ and $u,v
\in V$, the interpretation is that   $\rel_{uv}:=P_{v*}P_{u*}^\top$ is the relevance between $u$ and $v$ and 
\begin{equation}
\relR_{uv}:=X_{v*}X_{u*}^\top\approx P_{v*}P_{u*}^\top. \label{randomApproximationrel}
\end{equation}

When \eqref{randomApproximationrel} holds, a vertex $v$ in the graph can be effectively represented by the embedding $X_{v*}$. In the finite-sample regime, this approximation is generally justified  by the JL Lemma, which states that for a small $\varepsilon >0$ and for $q$ above a specific value we have $ \|p_jR^\top-p_iR^\top\|^2\in [(1-\varepsilon)\|p_j-p_i\|^2, (1+\varepsilon)\|p_j-p_i\|^2]$ for all $i<j$ with high probability.  

Due to the randomness of $R$, we will show that for the large sparse graph the assumption \eqref{randomApproximationrel} will fail in two important cases:

\emph{(i)} $P=A$: for a low-degree node $v$, $X_{v*}X_{u*}$ will overvalue $P_{v*}P_{u*}^\top$ for many high-degree nodes $u$;

\emph{(ii)} $P=T$: for a high-degree node $u$, $X_{v*}X_{u*}$ will overvalue $P_{u*}P_{v*}^\top$ for many low-degree nodes $v$.

In contrast, we will show that $\frac{X_{v*}X_{u*}^\top}{\|X_{v*}\|\|X_{u*}\|}\approx \frac{P_{v*}P_{u*}^\top}{\|P_{v*}\|\|P_{u*}\|}$
holds more consistently. This is why we propose to use $\frac{X_{v*}}{\|X_{v*}\|}$ as the embedding for a vertex $v$. We call this method \emph{\cosineSim{}}.




To develop our specific results, we begin by denoting, for all $u,v\in V$,
$d_u:= \sum_{w\in V} A_{uw}$ as the \emph{degree} of the vertex $u$. Likewise, we define 
$n_{uv}:=A_{u*}A_{v*}^\top$ as the \emph{2-hop connectivity} between $u$ and $v$, corresponding to the number of paths of length two between those vertices. 
The following lemma provides basic information about $n_{uv}$.
\begin{lemma}\label{lemma:degreeAndNeighbor}
For all $u,v\in V$ we have 

\noindent\begin{minipage}{.5\linewidth}
\begin{equation}\label{eq:degreeAT}
d_u \leq n_{uu}\leq d_u^2;
\end{equation}
\end{minipage}%
\begin{minipage}{.5\linewidth}
\begin{equation}\label{neighbors:degreeAT}
 \frac{n_{uv}}{d_ud_v} = T_{u*}T_{v*}^\top.
\end{equation}
\end{minipage}

\end{lemma}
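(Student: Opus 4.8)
The plan is to handle the two displayed identities separately; both unwind directly from the definitions once $n_{uu}$ and the rows of $T$ are written out coordinatewise.

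For \eqref{eq:degreeAT}, I would start from $n_{uu} = A_{u*}A_{u*}^\top = \sum_{w\in V} A_{uw}^2$ and compare this with $d_u = \sum_{w\in V} A_{uw}$. Since the adjacency matrix has $0/1$ entries we have $A_{uw}^2 = A_{uw}$ for every $w$, so in fact $n_{uu} = d_u$, which in particular gives the lower bound $d_u \le n_{uu}$. For the upper bound I would expand the square of the degree, $d_u^2 = \big(\sum_{w\in V} A_{uw}\big)^2 = \sum_{w\in V} A_{uw}^2 + \sum_{w\ne w'} A_{uw}A_{uw'}$, and drop the nonnegative cross terms $A_{uw}A_{uw'}$, $w\ne w'$, to conclude $n_{uu} = \sum_{w\in V} A_{uw}^2 \le d_u^2$; equivalently, one can simply note that $d_u$ is a nonnegative integer, so $d_u = n_{uu} \le d_u^2$.

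For \eqref{neighbors:degreeAT}, I would use that the transition matrix is the row-normalized adjacency matrix, $T_{uw} = A_{uw}/d_u$, i.e.\ $T_{u*} = d_u^{-1} A_{u*}$. Bilinearity of the inner product then yields $T_{u*}T_{v*}^\top = d_u^{-1} d_v^{-1} A_{u*}A_{v*}^\top = n_{uv}/(d_u d_v)$, which is exactly the claimed identity.

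This lemma is essentially bookkeeping, so I do not anticipate a genuine obstacle; the only point that warrants a sentence of care is the hypothesis on $A$ used in the lower bound of \eqref{eq:degreeAT}, which relies on $A_{uw}^2 \ge A_{uw}$ entrywise. This holds with equality for simple graphs and, more generally, whenever each entry of $A$ is either $0$ or at least $1$; the same hypothesis makes $d_u$ a quantity bounded below by $1$ when nonzero, so that $d_u \le d_u^2$ and the upper bound is likewise safe.
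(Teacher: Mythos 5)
Your proposal is correct and follows essentially the same route as the paper: the lower bound in \eqref{eq:degreeAT} comes from $A_{uw}^2\geq A_{uw}$ entrywise (valid since the paper allows $A\in(\mathbb{Z}^+_0)^{n\times n}$, exactly the hypothesis you flag), the upper bound from expanding $(\sum_w A_{uw})^2$ and discarding nonnegative cross terms, and \eqref{neighbors:degreeAT} from $T_{u*}=d_u^{-1}A_{u*}$ and bilinearity. The only cosmetic difference is that you first specialize to $0/1$ entries before noting the generalization, whereas the paper argues directly under the nonnegative-integer assumption.
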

\begin{proof}
Since all entries of the matrix $A$ are nonnegative integers:
$n_{uu}=\sum_{w\in V}A_{uw}^2\geq\sum_{w\in V}A_{uw} =d_u$. On the other hand, $
d_u^2= (\sum_{w\in V}A_{uw})^2\geq\sum_{w\in V}A_{uw}^2=A_{u*}A_{u*}^\top = n_{uu}$.
Note, $T_{u*}=\frac{1}{d_u}A_{u*}$ for all $u\in V$. Hence, $ T_{u*}T_{v*}^\top =\frac{A_{u*}A_{v*}^\top}{d_ud_v}=\frac{n_{uv}}{d_ud_v}$. This shows \eqref{eq:degreeAT} and \eqref{neighbors:degreeAT}.
\end{proof}
Further, we note that $d_{u}$ has a linear dependence on the entries of $A$, while $n_{u,v}$ has a quadratic one. Thus, in order to relate $d_u$ to $n_{uv}$, we define the following graph-wide property:
\begin{equation}\label{eq:MConnection}
\gamma:=\max_{u,v\in V}\frac{n_{uv}}{d_v}.
\end{equation}

This property enables our results to generalize to weighted adjacency matrices $A\in (\mathbb{Z}^+_0)^{n\times n}$. In general, from \eqref{eq:degreeAT}, we have $1 {\leq} \gamma \leq \max_{u,v\in V}A_{uv}$. In particular, if $A\in \{0, 1\}^{n\times n}$, then $\gamma = 1$.  

Next, we will define what does it mean for the node to be of \emph{high} or \emph{low} degree in the context of this paper. Let 
\begin{equation}
L_c:=\{v\in V: d_v\leq c\}. 
\end{equation}
This will be called the set of nodes of low degree. In particular, for a graph following a power law distribution, $L_c$ will contain the majority of the nodes. Conversely, let 
\begin{equation}
H^\gamma_c:= \{u\in V\, :\,  d_u\geq \gamma ^2cq\}.
\end{equation}
We will call this the set nodes of high degree. We are interested analysing graphs where $n=|V|$ is large using Johnson-Lindenstrauss type of results. For that, we consider $q = \mathcal O (\log n)$. Under the assumption of the power law distribution, we have  $\frac{|H^\gamma_c|}{n}\sim  C [\log n]^{-\alpha}$ for some $\alpha>0$. Hence, for large $n$, the set $H^\gamma_c$ should have a number of nodes on the order of $\mathcal O (n[\log n]^{-\alpha})$.

\subsection{\dotproductT}\label{subsec:dotP}

To avoid numerical instability issues associated with higher degree nodes and higher powers of $A$, practitioners often consider the transition matrix~$T$, since it consists of a bounded, row-stochastic normalization of $A$. We will show that even in this case, random projections may yield especially poor approximations for nodes $u \in H^\gamma_c$ and $v \in L_c$.

\begin{thm}\label{thm:graphDotProductT}
Let $X=TR^\top$. The following statements hold:\\
(a) Asymptotic result. For $u,v\in V$ and large $q$
\begin{equation}
X_{u*}X_{v*}^\top \stackrel{a}{\sim} \cN\left(\frac{n_{uv}}{d_{u}d_{v}}, \frac{1}{q}\left[\frac{n_{uu}n_{vv}}{d_u^2d_v^2}+\left(\frac{n_{uv}}{d_ud_v}\right)^2\right]\right),
\label{eq:dotANormalityPT}
\end{equation}
(b) Finite sample result. For $\varepsilon \in (0,1)$ and $\delta\in (0,1)$, if $q\geq 4 \frac{1+\varepsilon}{\varepsilon^2}\log \left[\frac{n(n-1)}{\delta}\right]$,
then 
\begin{equation}
|X_{u*}X_{v*}^\top-\frac{n_{uv}}{d_ud_v}|<\varepsilon\frac{\sqrt{n_{uu}n_{vv}}}{d_ud_v}\label{eq:graphJLDotT}
\end{equation}
for all $u,v\in V$ holds with probability at least $1-\delta$.
\end{thm}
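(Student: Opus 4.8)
The plan is to work directly with the scalar
$X_{u*}X_{v*}^\top=T_{u*}R^\top R\,T_{v*}^\top=\sum_{k=1}^{q}U_kV_k$,
where $R_k$ denotes the $k$-th row of $R$ and $U_k:=R_kT_{u*}^\top$, $V_k:=R_kT_{v*}^\top$. Since the rows $R_k$ are i.i.d.\ $\cN(0,\tfrac1q I_n)$, the pairs $(U_k,V_k)$, $k=1,\dots,q$, are i.i.d.\ centered bivariate Gaussians, and by \eqref{neighbors:degreeAT} of Lemma~\ref{lemma:degreeAndNeighbor} (with $v=u$ for the variances) we have $\mathrm{Var}(U_k)=\tfrac1q\tfrac{n_{uu}}{d_u^2}$, $\mathrm{Var}(V_k)=\tfrac1q\tfrac{n_{vv}}{d_v^2}$ and $\mathrm{Cov}(U_k,V_k)=\E[U_kV_k]=\tfrac1q\tfrac{n_{uv}}{d_ud_v}$. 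Everything below uses only these three moments, so the argument is in fact agnostic to which matrix is projected; Lemma~\ref{lemma:degreeAndNeighbor} is used only to re-express $\|T_{u*}\|^2,\|T_{v*}\|^2$ and $T_{u*}T_{v*}^\top$ through $d$ and $n_{uv}$.

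For part (a) I would apply the classical i.i.d.\ central limit theorem to $\sum_{k=1}^q U_kV_k$. Its mean is $q\,\E[U_1V_1]=\tfrac{n_{uv}}{d_ud_v}$, matching the stated centering. For the variance, Isserlis'/Wick's theorem for zero-mean Gaussians gives $\E[U_1^2V_1^2]=\E[U_1^2]\E[V_1^2]+2(\E[U_1V_1])^2$, hence $\mathrm{Var}(U_1V_1)=\E[U_1^2]\E[V_1^2]+(\E[U_1V_1])^2=\tfrac1{q^2}\bigl(\tfrac{n_{uu}n_{vv}}{d_u^2d_v^2}+(\tfrac{n_{uv}}{d_ud_v})^2\bigr)$; multiplying by $q$ reproduces exactly the variance in \eqref{eq:dotANormalityPT}. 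Thus $\sqrt q\bigl(X_{u*}X_{v*}^\top-\tfrac{n_{uv}}{d_ud_v}\bigr)$ converges in distribution to the centered normal with that variance, which is the asymptotic statement.

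For part (b) the route is: normalize, then polarize, then union-bound a $\chi^2$ tail estimate. Put $\tilde T_{u*}:=T_{u*}/\|T_{u*}\|$ (similarly for $v$); using $\|T_{u*}\|\|T_{v*}\|=\sqrt{n_{uu}n_{vv}}/(d_ud_v)$ and $X_{u*}X_{v*}^\top-\tfrac{n_{uv}}{d_ud_v}=\|T_{u*}\|\|T_{v*}\|\bigl(\tilde T_{u*}R^\top R\,\tilde T_{v*}^\top-\tilde T_{u*}\tilde T_{v*}^\top\bigr)$, inequality \eqref{eq:graphJLDotT} becomes $\bigl|\tilde T_{u*}R^\top R\,\tilde T_{v*}^\top-\tilde T_{u*}\tilde T_{v*}^\top\bigr|<\varepsilon$ for all $u,v$. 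By the polarization identities $pR^\top R p'^\top=\tfrac14\bigl(\|R(p+p')\|^2-\|R(p-p')\|^2\bigr)$ and $pp'^\top=\tfrac14\bigl(\|p+p'\|^2-\|p-p'\|^2\bigr)$, on the event that $\bigl|\|Rw\|^2-\|w\|^2\bigr|\le\varepsilon\|w\|^2$ holds for $w=\tilde T_{u*}+\tilde T_{v*}$ and $w=\tilde T_{u*}-\tilde T_{v*}$, the deviation is bounded by $\tfrac\varepsilon4\bigl(\|\tilde T_{u*}+\tilde T_{v*}\|^2+\|\tilde T_{u*}-\tilde T_{v*}\|^2\bigr)=\tfrac\varepsilon4\cdot4=\varepsilon$. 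For fixed $w$, $\|Rw\|^2/\|w\|^2$ is distributed as $\chi^2_q/q$, so a standard tail bound of the form $\P\bigl(|\chi^2_q/q-1|>\varepsilon\bigr)\le 2\exp\!\bigl(-\tfrac{q\varepsilon^2}{4(1+\varepsilon)}\bigr)$ applies; with $q\ge 4\tfrac{1+\varepsilon}{\varepsilon^2}\log[\tfrac{n(n-1)}{\delta}]$ each such event fails with probability $\mathcal O(\delta/n^2)$, and a union bound over the $\mathcal O(n^2)$ vectors $\tilde T_{u*}\pm\tilde T_{v*}$ (the case $u=v$ being trivially included) yields \eqref{eq:graphJLDotT} with probability at least $1-\delta$.

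The genuinely routine pieces are the fourth-moment computation in (a) and the $\chi^2$ concentration in (b). The step that actually matters is the order of operations in (b): one must normalize the rows \emph{before} polarizing, because polarizing $T_{u*},T_{v*}$ directly only gives the error bound $\tfrac\varepsilon2(\|T_{u*}\|^2+\|T_{v*}\|^2)$, which is strictly weaker than the advertised $\varepsilon\sqrt{n_{uu}n_{vv}}/(d_ud_v)$; normalizing first is exactly what collapses the parallelogram identity to the constant $4$ and produces the tight scaling. Matching the precise constant in the sample-size bound to the exact number of union-bound events is a minor bookkeeping point absorbed into the $\chi^2$-tail constants.
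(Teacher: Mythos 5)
Your part (a) is correct and reaches the stated mean and variance by a legitimate, more elementary route: you treat $X_{u*}X_{v*}^\top=\sum_k U_kV_k$ as a sum of $q$ i.i.d.\ products of jointly Gaussian pairs and compute $\mathrm{Var}(U_1V_1)$ by Isserlis. The paper instead first rewrites $P_{u*}R^\top RP_{v*}^\top$ via a rotation/Gram--Schmidt argument as $\|x\|\|y\|(\rho\|N\|^2+M_1\|N\|\sqrt{1-\rho^2})$ with $M,N$ independent standard normal vectors (Theorem~\ref{representationTheorem}), and derives the CLT from that representation. Both give \eqref{eq:dotANormalityPT}; your version is shorter for (a), while the paper's representation is reused for the sign-flip and cosine results.

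Part (b) is where you genuinely diverge, and in the direction the paper explicitly argues against. You normalize and then polarize, reducing the dot-product deviation to norm preservation for the $\mathcal O(n^2)$ vectors $\tilde T_{u*}\pm\tilde T_{v*}$; your observation that normalizing \emph{before} polarizing collapses the parallelogram identity to $4$ and yields the error bound $\varepsilon\|T_{u*}\|\|T_{v*}\|$ is correct and is the key step that makes the polarization route give the right error magnitude. However, the constant in the sample-size bound is not "absorbed into the $\chi^2$-tail constants": each pair $\{u,v\}$ now requires two two-sided norm events (four one-sided tails) instead of the single two-sided event the paper controls directly, and the two norms $\|R(\tilde T_{u*}+\tilde T_{v*})^\top\|$, $\|R(\tilde T_{u*}-\tilde T_{v*})^\top\|$ are not independent, so only a union bound is available. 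With the standard tail $\P(|\chi^2_q/q-1|>\varepsilon)\le 2\exp\bigl(-\tfrac{q\varepsilon^2}{4(1+\varepsilon)}\bigr)$ and $q\ge 4\tfrac{1+\varepsilon}{\varepsilon^2}\log\tfrac{n(n-1)}{\delta}$, your union bound gives failure probability at most $2\delta$, not $\delta$; to recover $1-\delta$ you need $q\ge 4\tfrac{1+\varepsilon}{\varepsilon^2}\log\tfrac{2n(n-1)}{\delta}$. The paper avoids exactly this loss by applying its rotation representation to $(Q\tilde x,Q\tilde y)$ directly (Theorem~\ref{thm:upperLowerEstimates} and Proposition~\ref{propo:EstimatesNoRho}), obtaining one two-sided event per pair with the same exponent --- this is the content of the remark after Theorem~\ref{thm:graphDotProductT} and of \S\ref{subsec:comparisonDotProd}. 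So your argument proves the theorem with a slightly weaker constant; to match the statement as written you would either have to adjust the threshold on $q$ or replace the polarization step with a direct bound on the bilinear form.
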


\emph{Remark}.
We denote by $Y_q\stackrel{a}{\sim} \cN(\mu, \frac{\sigma^2}{q}) $ the fact that $\frac{Y_q -\mu}{\sigma}\sqrt{q}\stackrel{d}{\to}\cN(0,1)$ as $q\to \infty$. The "$\stackrel{a}{\sim}$" is interpreted as meaning an approximate distribution for large $q$. 

\begin{proof}

Using the rotation argument from \S\ref{sec:rotationArgument}, we represent $X_{u*}X_{v*}^\top$ as sum of $q$ i.i.d random variables and a convenient term. This is stated in part (a) of Theorem \ref{representationTheorem} in the Appendix. 

Using the Central Limit Theorem and Law of Large Numbers, we get the following asymptotic result: $T_uR^\top RT_v^\top\stackrel{a}{\sim}\cN (T_uT_v^\top , (T_uT_v^\top )^2/q+\|T_u\|^2\|T_v\|^2/q)$. For details, see part (a) of Proposition \ref{propo:dotProduct} in the Appendix. Applying \eqref{neighbors:degreeAT} we have $T_uT_v^\top =\frac{n_{uv}}{d_ud_v}$, $\|T_v\|^2 =T_vT_v^\top =\frac{n_{vv}}{d_v^2}$ and $\|T_u\|^2 =T_uT_u^\top =\frac{n_{uu}}{d_u^2}$ and part (a) follows.

%
%
%

Using the representation $X_{u*}X_{v*}^\top$ mentioned at the beginning of the proof, we can produce concentration results that lead to a JL Lemma-type result in part (b) of Proposition \ref{propo:dotProduct} in the Appendix. From there, we set $n=k=|V|$ and get that $|T_uR^\top RT_v^\top  -T_uT_v^\top |<\varepsilon \|T_v\|\|T_u\|$ for all $u,v\in V$ under the same conditions and assumptions of part (b) of this Theorem. The full claim follows from the calculation we did to prove part (a).
\end{proof}

The rotation argument mentioned above (and fully developed in \S\ref{sec:rotationArgument}), allows us to directly analyze $P_{u*}R^\top RP_{v*}^\top$. This improves on the typical approach ~\cite{MR2073630,KabanDotProduct,amirov2023creating} of representing the dot product by the expansion $\frac{1}{4}(\|P_{u*}R^\top+ P_{v*}R^\top\|-\|P_{u*}R^\top- P_{v*}R^\top\|)$, since these factors are usually not independent. Doing so enables us to produce more precise bounds, as discussed in \S\ref{subsec:comparisonDotProd}.


We will fix $u\in H_c$ and look at the values $\{X_{u*}X_{v*}^\top \, :\, v\in L_c\}$ that may misapproximate $\{\frac{n_{uv}}{d_ud_v} : v\in L_c \}$.

\begin{corollary}
If $v\in L_c$ and $u\in H^\gamma_c$, then the standard deviation in \eqref{eq:dotANormalityPT} is greater than its expectation, i.e.:
\begin{equation}
\frac{n_{uv}}{d_ud_v}\leq \sqrt{\frac{1}{q}\left[\frac{n_{uu}n_{vv}}{d_u^2d_v^2}+\left(\frac{n_{uv}}{d_ud_v}\right)^2\right]}. \label{eq:exLessDev2}
\end{equation}
\end{corollary}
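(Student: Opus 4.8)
The plan is to reduce \eqref{eq:exLessDev2} to a purely combinatorial comparison between $2$-hop connectivities and degrees, and then chain together the defining inequalities of $L_c$, $H^\gamma_c$, and $\gamma$. First I would note that every term under the square root in \eqref{eq:dotANormalityPT} is nonnegative, so it suffices to prove the single-term bound
\begin{equation*}
\left(\frac{n_{uv}}{d_ud_v}\right)^2 \leq \frac{1}{q}\cdot\frac{n_{uu}n_{vv}}{d_u^2d_v^2};
\end{equation*}
indeed, once this holds, adding the remaining nonnegative summand $\frac{1}{q}\left(\frac{n_{uv}}{d_ud_v}\right)^2$ only increases the right-hand side, and taking square roots yields \eqref{eq:exLessDev2}. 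Clearing the common factor $d_u^2d_v^2$, this target is equivalent to the clean inequality $q\,n_{uv}^2 \leq n_{uu}n_{vv}$.

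The second step is to establish $q\,n_{uv}^2 \leq n_{uu}n_{vv}$ by a short chain. From the definition \eqref{eq:MConnection} of $\gamma$ we get $n_{uv}\leq \gamma d_v$, hence $q\,n_{uv}^2 \leq q\gamma^2 d_v^2$. Since $v\in L_c$ we have $d_v\leq c$, so $q\gamma^2 d_v^2 \leq (\gamma^2 cq)\,d_v$. Since $u\in H^\gamma_c$ we have $d_u\geq \gamma^2 cq$, giving $(\gamma^2 cq)\,d_v \leq d_u d_v$. Finally Lemma~\ref{lemma:degreeAndNeighbor} supplies $d_u\leq n_{uu}$ and $d_v\leq n_{vv}$, so $d_u d_v \leq n_{uu}n_{vv}$. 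Concatenating the four steps gives $q\,n_{uv}^2 \leq n_{uu}n_{vv}$, and the corollary follows.

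I do not expect any real analytic obstacle here; the content is bookkeeping, and the only genuine decision is the very first step of the chain. One must bound $n_{uv}^2$ by writing $n_{uv}\leq\gamma d_v$ for \emph{both} copies of $n_{uv}$ (turning it into $\gamma^2 d_v^2$), rather than using the more symmetric-looking split $n_{uv}^2\leq(\gamma d_u)(\gamma d_v)$. Only the first choice exposes a factor $d_v$ that can be controlled by $d_v\leq c$, which in turn lets the lower bound $d_u\geq\gamma^2 cq$ absorb the factor $q$; the symmetric split loses the $q$ with no way to recover it. After that, every remaining inequality is a direct substitution of one of the standing hypotheses or of Lemma~\ref{lemma:degreeAndNeighbor}.
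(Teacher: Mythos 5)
Your proof is correct and follows essentially the same route as the paper's: both reduce the claim to the chain $n_{uu}n_{vv}\geq d_ud_v\geq \gamma^2 cq\,d_v\geq q(\gamma d_v)^2\geq q\,n_{uv}^2$ using \eqref{eq:degreeAT}, the definitions of $H^\gamma_c$ and $L_c$, and \eqref{eq:MConnection}. The only (cosmetic) difference is that you discard the nonnegative term $\frac{1}{q}\bigl(\frac{n_{uv}}{d_ud_v}\bigr)^2$ at the outset, whereas the paper carries it partway through the chain before dropping it.
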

\begin{proof} We have $\frac{n_{uu}n_{vv}+n_{uv}^2}{q} \stackrel{\eqref{eq:degreeAT}}\geq \frac{d_ud_{v}+n_{uv}^2}{q} \stackrel{u\in H_c}{\geq} \frac{\gamma^2cqd_v+n_{uv}^2}{q}\geq \gamma^2cd_v\stackrel{v\in L_c}{ \geq} (\gamma d_v)^2\stackrel{\eqref{eq:MConnection}}{\geq}n_{uv}^2$. Multiplying the last inequality with $d_u^{-2}d_v^{-2}$ and taking the square root, \eqref{eq:exLessDev2} follows.
\end{proof}

Under the conditions of the Corollary above, when can generate the one-sigma interval
$$\left[\frac{n_{uv}}{d_ud_v}-\sqrt{\frac{1}{q}\left[\frac{n_{uu}n_{vv}}{d_u^2d_v^2}+\left(\frac{n_{uv}}{d_ud_v}\right)^2\right]},\frac{n_{uv}}{d_ud_v}+\sqrt{\frac{1}{q}\left[\frac{n_{uu}n_{vv}}{d_u^2d_v^2}+\left(\frac{n_{uv}}{d_ud_v}\right)^2\right]}\right]$$
in which $(X_{u*}X_{v*}^\top \, : \, v\in L_c)$ will take values with probability of less than $69\%$. This means that getting a value outside that interval is not unlikely. In that case, we know from \eqref{eq:exLessDev2} that such values will either double $\frac{n_{uv}}{d_ud_v}$ or be less than $0$, which is clearly a poor approximation. 

We now interpret the finite-sample result, and set $q=\lceil(1+\varepsilon)/\varepsilon\log n(n-1)/\delta]\rceil$ such that part (b) of Theorem~\ref{thm:graphDotProductT} holds. The expression $n_{uv}/(d_ud_v)$ in the numerator of \eqref{eq:graphJLDotT} depends on $d_u$ with order $(d_u)^{-1}$, and denominator has a lower-order dependence $(d_u)^{-\frac{1}{2}}$, as described in Lemma \ref{lemma:similarityEstimates} in the Appendix. Thus, higher values of $d_u$ lead to an unfavorable regime in which the guarantees from Theorem~\ref{thm:graphDotProductT} become increasingly weaker. As a simple numerical illustration, we assume $A\in \{0,1\}^{n\times n}$ so that $n_{uu}=d_u$ and $n_{vv}=d_v$, and let $\varepsilon =10^{-2}$ , $d_u=10^7$, $d_v=10$ and $n_{uv}=1$. Then, \eqref{eq:graphJLDotT} yields $X_{u*}X_{v*}^\top \in 10^{-8}(1-100,1+100)$, which is clearly a severely biased estimate.


Similar results hold in the case $P=A$. For this case, we provide analogous asymptotic normality and finite sample results in \S\ref{subsec:dotA} of the Appendix. There, we also express similarities in the terms of values of $(n_{uv})$ and $(d_u)$. In the absence of normalization, the disruption of the assumption $X_{u*}X_{v*}^\top \approx P_{u*}P_{v*}^\top $ is more obvious.
 


\subsection{\cosineSim}\label{subsec:cosineP}

In this section, we present asymptotic normality and finite-sample analysis to show how \cosineSim{} yields significantly better approximations for vertices~$u \in H^\gamma_c$ and $v \in L_c$ than those of RP Dot Product (both when $P=T$ and $P=A$). We begin by showing that it will not matter for \cosineSim{} if we take $P=A$ or $P=T$.
\begin{lemma}\label{lem:cosAT}
For all $u,v\in V$ we have 
$$\cos (T_{u*}R^\top ,T_{v*}R^\top )=\cos (A_{u*}R^\top ,A_{v*}R^\top )\quad \textrm{and}\quad \cos (T_{u*},T_{v*})=\cos (A_{u*},A_{v*})=\frac{n_{uv}}{\sqrt{n_{uu}n_{vv}}}.$$ 
\end{lemma}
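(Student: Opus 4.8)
The plan is to reduce everything to the single fact, already recorded in the proof of Lemma~\ref{lemma:degreeAndNeighbor}, that $T_{u*} = d_u^{-1} A_{u*}$ for every $u \in V$ (assuming, as always when $T$ is invoked, that $d_u > 0$ so that $T_{u*}$ is defined), combined with the scale-invariance of cosine similarity.

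First I would note that for nonzero vectors $x,y$ and positive scalars $\alpha,\beta$ one has $\cos(\alpha x,\beta y)=\cos(x,y)$, since the common factor $\alpha\beta$ cancels between the numerator $\alpha\beta\, xy^\top$ and the product of norms $\alpha\beta\,\|x\|\,\|y\|$. Because $T_{u*}R^\top = d_u^{-1}(A_{u*}R^\top)$ and $T_{v*}R^\top = d_v^{-1}(A_{v*}R^\top)$ with $d_u,d_v>0$, applying this remark with $x=A_{u*}R^\top$, $y=A_{v*}R^\top$ immediately gives $\cos(T_{u*}R^\top,T_{v*}R^\top)=\cos(A_{u*}R^\top,A_{v*}R^\top)$; applying the same remark to $x=A_{u*}$, $y=A_{v*}$ (i.e.\ the case $R$ equal to the identity) gives $\cos(T_{u*},T_{v*})=\cos(A_{u*},A_{v*})$.

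For the closed form I would simply expand the definition of cosine and substitute $A_{u*}A_{v*}^\top=n_{uv}$, $\|A_{u*}\|^2=A_{u*}A_{u*}^\top=n_{uu}$ and $\|A_{v*}\|^2=n_{vv}$, obtaining $\cos(A_{u*},A_{v*})=n_{uv}/\sqrt{n_{uu}n_{vv}}$. There is essentially no obstacle here: the entire content is the observation that cosine is blind to the per-row rescaling that distinguishes $A$ from $T$, and the only point needing (minor) care is the standing well-definedness assumption $d_u,d_v>0$. The role of the lemma is organizational—it lets the rest of Section~\ref{subsec:cosineP} be developed for $P=A$ with no loss of generality relative to $P=T$.
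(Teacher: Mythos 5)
Your proof is correct and follows essentially the same route as the paper's: both cancel the positive scalars $d_u^{-1},d_v^{-1}$ in the cosine ratio after substituting $T_{u*}=d_u^{-1}A_{u*}$, and both obtain the closed form by expanding the definition via $n_{uv}=A_{u*}A_{v*}^\top$ from Lemma~\ref{lemma:degreeAndNeighbor}. Your explicit remark about the standing assumption $d_u,d_v>0$ is a small but welcome addition of care that the paper leaves implicit.
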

\begin{proof}
First equality follows from
$$
\frac{T_{u*}RR^\top T_{v*}^\top }{\|T_{u*}R^\top \|\|T_{v*}R^\top \|}=\frac{(d_u^{-1}A_{u*})R^\top R(d_v^{-1} A_{v*}^\top )}{\|d_u^{-1}A_{u*}R^\top \|\|d_v^{-1}A_{v*}R^\top \|}=\frac{A_{u*}R^\top RA_{v*}^\top }{\|A_{u*}R\|\|A_{v*}R\|}
$$
Using parts (a) and (b) of Lemma \ref{lemma:degreeAndNeighbor} we can calculate that $\cos (A_{u*},A_{v*})=\frac{n_{uv}}{\sqrt{n_{uu}n_{vv}}}$ and $\cos (T_{u*},T_{v*})=\frac{n_{uv}}{\sqrt{n_{uu}n_{vv}}}$.
\end{proof}

We will state our main result for this method now.

\begin{thm}\label{thm:GraphCosineSimilarity}
For $P\in \{A,T\}$ and $X=PR^\top $ the following claims hold:\\
(a) Asymptotic result. For $u,v\in V$ and large $q$
\begin{equation}\label{eq:asymptCosine}
\cos(X_{u*}, X_{v*})\stackrel{a}{\sim} \cN\left(\frac{n_{uv}}{\sqrt{n_{uu}n_{vv}}}, \frac{1}{q}\left(1-\frac{n_{uv}^2}{n_{uu}n_{vv}}\right)^2\right)
\end{equation}
(b) Finite-sample result. Let $\varepsilon\in (0,0.05]$ and $\delta\in (0,1)$. If 
$q\geq \frac{2\ln \left[ \frac{2n(n-1)\left(1+\frac{\varepsilon^2}{4}\right)}{\delta}\right]}{\ln \left[ 1+\frac{\varepsilon^2}{2(1+\varepsilon\sqrt{2})}\right]}
$, then
$$
\left|\cos(X_{u*}, X_{v*})- \frac{n_{uv}}{\sqrt{n_{uu}n_{vv}}}\right| 
\leq\varepsilon \left(1-\frac{n_{uv}^2}{n_{uu}n_{vv}}\right)
$$
holds for all $u, v\in V$ with probability at least $1 -\delta$.
\end{thm}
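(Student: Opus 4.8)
The plan is to exploit rotational invariance to reduce the cosine-similarity statistic to a low-dimensional object. Since $\cos(X_{u*},X_{v*})$ depends only on the $2\times 2$ Gram matrix of the two rows of $X$, and by Lemma \ref{lem:cosAT} we may assume $P=A$, I would first note that $\cos(A_{u*}R^\top, A_{v*}R^\top)$ depends on $R$ only through the projections of $R$ onto the two-dimensional span of $A_{u*}$ and $A_{v*}$ (or a one-dimensional span in the degenerate case). Applying the rotation argument of \S\ref{sec:rotationArgument}, I would choose an orthonormal basis adapted to this plane: write $A_{u*} = \|A_{u*}\| e_1$ and $A_{v*} = \|A_{v*}\|(\rho\, e_1 + \sqrt{1-\rho^2}\, e_2)$ where $\rho = n_{uv}/\sqrt{n_{uu}n_{vv}}$ is the true cosine. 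Then with $(g_i, h_i)$ the coordinates of the $i$-th row of $R$ along $e_1, e_2$ — i.i.d.\ $\cN(0,1/q)$ — the projected vectors become $\|A_{u*}\|\, \mathbf g$ and $\|A_{v*}\|(\rho\,\mathbf g + \sqrt{1-\rho^2}\,\mathbf h)$, and the norms $\|A_{u*}\|, \|A_{v*}\|$ cancel in the cosine. So
$$
\cos(X_{u*},X_{v*}) = \frac{\rho\,\|\mathbf g\|^2 + \sqrt{1-\rho^2}\,\mathbf g^\top \mathbf h}{\|\mathbf g\|\,\|\rho\,\mathbf g + \sqrt{1-\rho^2}\,\mathbf h\|},
$$
a quantity depending only on $q$ and $\rho$, not on the graph otherwise. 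This is the key structural reduction and I expect it is already packaged in the Appendix (Proposition \ref{propo:dotProduct}/Theorem \ref{representationTheorem}); the Gram–Schmidt step mentioned in the introduction is exactly the passage to $(\mathbf g, \mathbf h)$.

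For part (a), I would apply Gram–Schmidt once more at the level of the random vectors: decompose $\mathbf h = (\mathbf g^\top\mathbf h/\|\mathbf g\|^2)\mathbf g + \mathbf h^\perp$, so that conditionally on $\mathbf g$ the component $\mathbf g^\top \mathbf h / \|\mathbf g\|$ is $\cN(0, 1/q)$ and $\|\mathbf h^\perp\|^2$ is (approximately) $1$ for large $q$ by the law of large numbers. A direct computation then shows $\cos(X_{u*},X_{v*}) - \rho$ equals $\sqrt{1-\rho^2}$ times a ratio that, after a first-order (delta-method) expansion in $1/\sqrt q$, behaves like $\sqrt{1-\rho^2}\cdot Z/\sqrt q$ with $Z\sim\cN(0,1)$; more precisely the leading stochastic term is $(1-\rho^2)$ times a standard normal scaled by $q^{-1/2}$, since one factor of $\sqrt{1-\rho^2}$ comes from the numerator coefficient and the normalization in the denominator contributes the second. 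Combining with the CLT for the relevant bilinear forms in $(\mathbf g,\mathbf h)$ gives the stated asymptotic variance $\frac1q(1-\rho^2)^2$. I would cite Slutsky's theorem to handle the denominator terms that converge in probability to constants.

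For part (b), the finite-sample bound, I would work directly with the exact expression above rather than the delta-method approximation. Writing $t := \mathbf g^\top \mathbf h^\perp/\|\mathbf g\|\cdot$(something) — more cleanly, conditioning on $\mathbf g$ and on $\|\mathbf h^\perp\|$, the ratio is an explicit algebraic function whose deviation from $\rho$ I can bound by: (i) a concentration inequality forcing $\|\mathbf g\|^2$ and $\|\mathbf h^\perp\|^2$ into $[1-\eta, 1+\eta]$ (chi-squared tail bounds, i.e.\ the multiplicative Bernstein/Laurent–Massart bound), and (ii) a Gaussian tail bound on the cross term $\mathbf g^\top \mathbf h$. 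The quantity $\frac{\varepsilon^2}{2(1+\varepsilon\sqrt2)}$ in the sample-complexity bound strongly suggests the proof goes through a moment generating function / Chernoff argument on a chi-squared-type variable of the form $\|\mathbf g + \text{something}\|^2$, with the $\ln[1+\cdot]$ in the denominator being the exponent one extracts from $\E e^{\lambda(\chi^2_q - q)}$; the factor $2n(n-1)$ and the union bound over all $\binom n2$ pairs account for the $2$ (two one-sided events per pair) and the $\delta$. I would assemble these into: with probability $\ge 1-\delta$, simultaneously over all pairs, $\|\mathbf g\|,\|\mathbf h^\perp\|,|\mathbf g^\top\mathbf h|$ are controlled, and then verify by elementary algebra (bounding numerator and denominator of the ratio) that this forces $|\cos(X_{u*},X_{v*})-\rho|\le \varepsilon(1-\rho^2)$, using $\varepsilon \le 0.05$ to discard higher-order terms.

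The main obstacle I anticipate is part (b): the cosine is a ratio of correlated quadratic/bilinear forms, so a naive union bound over the "bad" events for numerator and denominator gives a bound with the wrong $\varepsilon$-dependence. The delicate point is choosing the right auxiliary variable — likely $\|\rho\,\mathbf g + \sqrt{1-\rho^2}\,\mathbf h\|^2$ together with the inner product — so that a single Chernoff bound on a non-central chi-squared captures the deviation, yielding the clean $\ln[1+\varepsilon^2/(2(1+\varepsilon\sqrt2))]$ rate; getting the algebra of "control of these three quantities $\Rightarrow$ $\varepsilon(1-\rho^2)$ bound on the cosine error" tight enough (rather than off by a constant) is where the care goes. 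The restriction $\varepsilon\in(0,0.05]$ is presumably exactly what makes those elementary inequalities close.
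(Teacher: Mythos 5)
Your proposal follows essentially the same route as the paper: the two-dimensional reduction you describe, followed by the second Gram--Schmidt step on the random vectors themselves, is exactly the content of Theorem~\ref{representationTheorem} (the paper packages the conditioning on $\mathbf g$ as an explicit rotation $U_{\tilde N}$, Lemma~\ref{lem:rotation_argument}, so that $M_1 = \mathbf g^\top\mathbf h/\|\mathbf g\|$ and $\|\mathbf h^\perp\|=\|M_{2\ldots q}\|$ become unconditionally independent of $\|\mathbf g\|=\|N\|$); part~(a) is then the CLT/Slutsky computation you sketch (Theorem~\ref{thm:asymptoticNormality}), and part~(b) is a Chernoff-plus-union-bound argument (Theorem~\ref{thm:cosineConcentration}). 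The one place where your plan as written would lose the stated constants is the step you yourself flag: controlling $\|\mathbf g\|^2$, $\|\mathbf h^\perp\|^2$ and $\mathbf g^\top\mathbf h$ by three separate concentration events. The paper's resolution is to observe that the cosine, being jointly scale-invariant in $(\mathbf g,\mathbf h)$, depends only on the \emph{two} ratios $M_1/\|M_{2\ldots q}\|$ and $\|N\|/\|M_{2\ldots q}\|$; each ratio admits a clean MGF bound (a conditional Gaussian tail for the first, a Chernoff bound on $\lambda\|N\|^2-\lambda(1+\varepsilon)\|M_{2\ldots q}\|^2$ for the second), which is where $\ln\bigl[1+\varepsilon^2/(2(1+\varepsilon\sqrt2))\bigr]$ comes from. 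The final implication ``control of the two ratios $\Rightarrow$ $|\cos - \rho|\le \varepsilon(1-\rho^2)$'' is Lemma~\ref{lemma:cosineConcentration}, proved via monotonicity of $\psi(h)=h/\sqrt{1+h^2}$ and a case split on $\rho^2\lessgtr\varepsilon$; this is indeed where the restriction $\varepsilon\le 0.05$ enters, as you guessed.
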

\begin{proof}
Using the rotation argument, we can get a very useful representation of $\cos(X_{u*}, X_{v*})$ stated in part (b) of Theorem \ref{representationTheorem} in the Appendix.

From the Central Limit Theorem and the Law of Large Numbers, we show that
$\cos(X_{u*}, X_{v*}) \stackrel{a}{\sim} \cN\left(\cos(P_{u*},P_{v*}), \frac{1}{q}\left(1-\cos^2(P_{u*},P_{v*})\right)^2\right)$. For details, see part (a) of Proposition \ref{propo:cosineSimilairty} in the Appendix. Claim (a) follows from Lemma~\ref{lem:cosAT}.

Using representation mentioned at the beginning, we can get concentration results and a JL Lemma-type result. This is stated part (b) of Proposition \ref{propo:cosineSimilairty} in the Appendix, whence we have $|\cos(X_{u*}, X_{v*})-\cos(P_{u*},P_{v*}) |<\varepsilon (1-\cos^2(P_{u*},P_{v*})$ under the same assumptions and conditions. Claim (b) follows from Lemma~\ref{lem:cosAT}. 
\end{proof}

Let us compare the asymptotic result for cosine similarity with the previous cases. From \eqref{eq:asymptCosine}, we can produce the approximate three-sigma interval
$$
    \left[\cos(P_{u*},P_{v*})-\frac{3}{\sqrt{q}}(1-\cos^2(P_{u*}, P_{v*})),\right.  \left.\cos(P_{u*},P_{v*})+\frac{3}{\sqrt{q}}(1-\cos^2(P_{u*}, P_{v*}))\right]
$$
where $\cos(X_{u*},X_{v*})$ is expected to take values with $99\%$ probability. Differently from the previous cases, we see that the interval endpoints \emph{do not depend on the node degrees}. Thus, for a reasonably large $q$, the value of $\cos(P_{u*},P_{v*})$ will be well-approximated. Further, the closer $\cos(P_{u*},P_{v*})$ is to 1, the smaller the standard deviation, and the narrower the confidence interval. The confidence interval is the widest when $\cos(P_{u*},P_{v*})=0$. In that case, it can happen that it takes negative values under random projections, but this is likely not to be lower than $-\frac{3}{\sqrt{q}}$.

With Theorem~\ref{thm:GraphCosineSimilarity}, we can state the following additional properties.

\begin{proposition}\label{propo:cosSimProperties}
\begin{enumerate}[(a)]
\item $\cos(X_{u*},X_{v*})\in [-1,1]$ for all $u,v\in V$.
\item Almost surely $\cos(X_{u*},X_{v*})=\pm 1$ if and only if $\cos(P_{u*},P_{v*})=\pm 1$.
\item For $\varepsilon\in (0,0.05]$ and $\delta\in (0,1)$,  if 
$q\geq \frac{2\ln \left[ \frac{2n(n-1)\left(1+\frac{\varepsilon^2}{4}\right)}{\delta}\right]}{\ln \left[ 1+\frac{\varepsilon^2}{2(1+\varepsilon\sqrt{2})}\right]}
$, then
$$
\cos(X_{u*}, X_{v*})
\in [\cos(P_{u*}, P_{v*})-\varepsilon,\cos(P_{u*}, P_{v*})+\varepsilon]
$$
for all $u,v\in V$ with probability $1-\delta$.
\end{enumerate}
\end{proposition}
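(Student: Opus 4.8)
I would prove the three parts in order, since they are of quite different character.

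Part (a) is the deterministic Cauchy--Schwarz bound: for nonzero $x,y\in\R^{1\times q}$ one always has $|xy^\top|\le\|x\|\,\|y\|$, so $\cos(x,y)\in[-1,1]$ whenever the ratio is defined. The only point to address is well-definedness, i.e.\ that $X_{u*}=P_{u*}R^\top$ is a.s.\ nonzero. Since every node has degree at least one we have $A_{u*}\neq 0$, hence $P_{u*}\neq 0$ for $P\in\{A,T\}$; then $P_{u*}R^\top$ is a Gaussian vector with nondegenerate covariance, so it is nonzero with probability one, and (a) follows.

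For part (b) I would argue directly with linear algebra (one may equally read this off the representation in part (b) of Theorem~\ref{representationTheorem}). Write $\rho:=\cos(P_{u*},P_{v*})=n_{uv}/\sqrt{n_{uu}n_{vv}}$. If $\rho=\pm1$, then $P_{v*}=\lambda P_{u*}$ for a scalar $\lambda$ with $\mathrm{sign}(\lambda)=\pm1$; applying $R^\top$ gives $X_{v*}=\lambda X_{u*}$, so $\cos(X_{u*},X_{v*})=\pm1$ surely. Conversely, if $|\rho|<1$, the rows $P_{u*},P_{v*}$ are linearly independent and span a $2$-dimensional subspace $S\subseteq\R^{1\times n}$; the linear map $w\mapsto wR^\top$ restricted to $S$ is a.s.\ injective once $q\ge2$, so $X_{u*},X_{v*}$ are a.s.\ linearly independent and therefore $|\cos(X_{u*},X_{v*})|<1$ a.s. Combining the two implications gives the stated equivalence, sign included.

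Part (c) I expect to be immediate from part (b) of Theorem~\ref{thm:GraphCosineSimilarity}: under the stated lower bound on $q$, with probability at least $1-\delta$ the bound $|\cos(X_{u*},X_{v*})-\rho|\le\varepsilon(1-\rho^2)$ holds simultaneously for all $u,v\in V$; since $\rho^2=n_{uv}^2/(n_{uu}n_{vv})\ge0$ we have $1-\rho^2\le1$, so the left-hand side is at most $\varepsilon$, and rearranging yields $\cos(X_{u*},X_{v*})\in[\rho-\varepsilon,\rho+\varepsilon]$. The one step that warrants care is the almost-sure injectivity in (b): for a fixed $2$-dimensional $S$, the set of matrices $R$ for which $w\mapsto wR^\top$ fails to be injective on $S$ is the vanishing locus of a nontrivial polynomial in the i.i.d.\ Gaussian entries of $R$ (for $q\ge2$), hence a Lebesgue-null, and therefore $\P$-null, event; this is routine but is the point I would spell out explicitly in the final write-up.
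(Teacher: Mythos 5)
Your proposal is correct, and parts (a) and (c) coincide with the paper's proof: (a) is the Cauchy--Schwarz bound (the paper simply says it holds ``by definition of cosine similarity''), and (c) is read off from Theorem~\ref{thm:GraphCosineSimilarity}(b) together with $0\le 1-\cos^2(P_{u*},P_{v*})\le 1$, exactly as you do. For part (b) you take a mildly different route from the paper, and arguably a cleaner one. The paper delegates to Proposition~\ref{propo:cosineEqual1} and Corollary~\ref{cor:cosineEqual1}, which argue: if $\cos(X_{u*},X_{v*})=\pm 1$ then $X_{u*}=\pm\alpha X_{v*}$ for some $\alpha>0$, hence $(P_{u*}\mp\alpha P_{v*})R^\top=0$, and a nondegenerate Gaussian vector vanishes with probability zero, forcing $P_{u*}=\pm\alpha P_{v*}$ and thus $\cos(P_{u*},P_{v*})=\pm1$ by the equality case of Cauchy--Schwarz. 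The subtle point there is that $\alpha$ is itself a function of $R$, so ``$(P_{u*}\mp\alpha P_{v*})R^\top$ is a Gaussian vector'' requires additional justification; your contrapositive formulation --- linear independence of $P_{u*},P_{v*}$ is a.s.\ preserved by $w\mapsto wR^\top$ when $q\ge 2$, since the event of dependence is the zero set of a nontrivial polynomial in the Gaussian entries of $R$ --- sidesteps the random-$\alpha$ issue entirely. Your explicit caveat $q\ge2$ is also genuinely necessary (for $q=1$ any two nonzero projected scalars have cosine $\pm1$, so (b) fails); the paper leaves this implicit, which is harmless only because $q=\mathcal{O}(\log n)$ is always taken large. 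Both arguments ultimately rest on the same two ingredients (equality case of Cauchy--Schwarz plus Gaussian vectors avoiding lower-dimensional sets almost surely), so the difference is one of bookkeeping rather than of substance.
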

\begin{proof}
Part (a) follows by definition of cosine similarity.
Part (b) follows from the Cauchy-Schwarz inequality. The full discussion can be found in the Corollary~\ref{cor:cosineEqual1} in the Appendix. 
Part (c) follows from Theorem~\ref{thm:GraphCosineSimilarity} part (b)  since 
$\displaystyle 0\leq 1-\frac{n_{uv}^2}{n_{uu}n_{vv}}= 1-\cos^2(P_{u*}, P_{v*})\leq 1$.
\end{proof}

Part (b) of Proposition~\ref{propo:cosSimProperties} tells us that random projection will preserve values of exactly 1. Part (c) Proposition~\ref{propo:cosSimProperties} provides the desired property of \emph{uniformity} for the absolute error on the difference between the estimate and the true value with respect to node degrees.

\section{Application to Ranking}\label{sec:Ranking}

Here, we specialize the results presented in Section~\ref{sec:RPGraphs} to a ranking application. Recall that, for polynomials $P=p(A)$ or $P=p(T)$, the interpretation is that for $u,v \in V$  $\rel_{uv}:=P_{v*}P_{u*}^\top $.
For computational reasons, we apply the random projection $X=PR^\top $ and work with $\relR_{uv}:=X_{v*}X_{u*}^\top$
under the assumption that $\rel_{uv}^R\approx \rel_{uv}$.

We will next examine what happens if we fix $w\in V$ and use the approximate relevance $(\relR_{wh}:h\in V)$ for ranking. We will use the well-known NDCG metric: $\NDCG_w@l:=\frac{\DCG^R_w@l}{\DCG_w@l}$ \textrm{where}
$$ \DCG^R_w@l:=\sum_{h : \rank_w^R(h)\leq l}\frac{\rel_{wh}}{\log(\rank_w^R(h)+1)}\ \, \textrm{and}\ \,
\DCG_w@l:=\sum_{h:\rank_w(h)\leq l}\frac{\rel_{wh}}{\log(\rank_w(h)+1)}.$$
The following Theorem will help us quantify how often can the approximation $\relR_{wh}$ \emph{flip} the order of relevance between nodes $u$ and $v$ with respect to $w$.

\begin{thm}\label{thm:flip}
    Let $u,v\in V$ be such that $\rel_{wu}> \rel_{wv}$. Then, 
    \begin{equation}\label{eq:Psignflip}
        \P(\rel_{wu}^R < \rel_{wv}^R) =  \P\left(\cT_q > \frac{\cos(P_{w*},P_{u*}-P_{v*})\sqrt{q}}{\sqrt{1-\cos^2(P_{w*},P_{u*}-P_{v*})}}\right)
    \end{equation}
    where $\cT_q$ is the $t$-distribution with parameter $q$.
\end{thm}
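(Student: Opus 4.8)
The plan is to recognize that $\relR_{wu}-\relR_{wv}$ equals the random bilinear form $P_{w*}R^\top R\,b^\top$ evaluated at the fixed vectors $a:=P_{w*}$ and $b:=P_{u*}-P_{v*}$, and then apply the rotation argument of \S\ref{sec:rotationArgument} to this form. First I would write
$$
\relR_{wu}-\relR_{wv}=X_{w*}(X_{u*}-X_{v*})^\top=P_{w*}R^\top R\,b^\top=\langle Ra^\top,\,Rb^\top\rangle,
$$
and note that the hypothesis $\rel_{wu}>\rel_{wv}$ is precisely $a\,b^\top>0$; in particular $a\neq 0$, $b\neq 0$, and $\cos(a,b)=\cos(P_{w*},P_{u*}-P_{v*})\in(0,1]$.

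The key step is to invoke the exact (not merely asymptotic) representation of $\langle Ra^\top,Rb^\top\rangle$ coming from the rotation/Gram--Schmidt argument, i.e.\ part (a) of Theorem~\ref{representationTheorem} applied with the vector $b$ in place of a single row of $P$. Concretely, split $b=\tfrac{ab^\top}{\|a\|^2}\,a+b_\perp$ with $b_\perp\perp a$; since $R$ has i.i.d.\ Gaussian entries, $Ra^\top$ and $Rb_\perp^\top$ are jointly Gaussian with cross-covariance $\tfrac1q(ab_\perp^\top)I_q=0$, hence independent, and
$$
\langle Ra^\top,Rb^\top\rangle=\frac{ab^\top}{\|a\|^2}\,\|Ra^\top\|^2+\langle Ra^\top,Rb_\perp^\top\rangle.
$$
By spherical symmetry of $Rb_\perp^\top$ the second summand equals $\|Ra^\top\|\cdot\tfrac{\|b_\perp\|}{\sqrt q}\,Z$ with $Z\sim\cN(0,1)$ independent of $\|Ra^\top\|$, while $\|Ra^\top\|^2=\tfrac{\|a\|^2}{q}\,W$ with $W\sim\chi^2_q$. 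This gives the clean identity
$$
q\,(\relR_{wu}-\relR_{wv})=a b^\top\,W+\|a\|\,\|b_\perp\|\,\sqrt{W}\,Z.
$$

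From here the conclusion is a short computation. Since $W>0$ a.s., dividing the event $\{\relR_{wu}<\relR_{wv}\}$ through by $\sqrt{W}$ gives
$$
\{\relR_{wu}<\relR_{wv}\}=\Bigl\{-\tfrac{Z}{\sqrt{W/q}}>\tfrac{ab^\top\sqrt q}{\|a\|\,\|b_\perp\|}\Bigr\},
$$
and $-Z/\sqrt{W/q}$ is by definition a $\cT_q$ random variable (a standard normal over an independent $\sqrt{\chi^2_q/q}$, symmetrized). Finally $\tfrac{ab^\top}{\|a\|}=\|b\|\cos(a,b)$ and $\|b_\perp\|^2=\|b\|^2\bigl(1-\cos^2(a,b)\bigr)$, so the threshold simplifies to $\cos(a,b)/\sqrt{1-\cos^2(a,b)}$, which is exactly \eqref{eq:Psignflip}. (The collinear case $\cos(a,b)=1$ is consistent: then $b_\perp=0$, the form is $\tfrac{ab^\top}{\|a\|^2}\|Ra^\top\|^2>0$ a.s., so the flip probability is $0$, matching $\P(\cT_q>+\infty)=0$.)

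The main obstacle is not the algebra but the probabilistic bookkeeping: justifying that $Ra^\top$ and $Rb_\perp^\top$ are genuinely independent, and that the normalized projection $\langle Ra^\top/\|Ra^\top\|,\,Rb_\perp^\top\rangle$ is $\cN(0,\|b_\perp\|^2/q)$ and independent of $\|Ra^\top\|$. Both are consequences of the i.i.d.\ Gaussian construction of $R$ together with rotational invariance, and they are exactly what the rotation argument of \S\ref{sec:rotationArgument} (Theorem~\ref{representationTheorem}) is built to supply; so in the write-up I would simply invoke that theorem with $b=P_{u*}-P_{v*}$ rather than re-derive the orthogonalization by hand.
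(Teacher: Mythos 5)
Your proof is correct and follows essentially the same route as the paper: reduce the flip event to the sign change of the bilinear form $P_{w*}R^\top R(P_{u*}-P_{v*})^\top$, apply the rotation/Gram--Schmidt representation (Theorem~\ref{representationTheorem}(a), as packaged in Proposition~\ref{propo:signChangeAppendix}), and read off the $t$-distribution. Your explicit handling of the collinear case $\cos(P_{w*},P_{u*}-P_{v*})=1$, which the appendix proposition excludes by hypothesis, is a small bonus.
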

\begin{proof}
 To have $\rel_{wu} < \rel_{wv}$, we need $P_{w*}(P_{u*}-P_{v*})^\top > 0$. The right hand side is the probability of this dot product changing the sign under the random projection i.e. $\P(P_{w*}R^\top R(P_{u*}-P_{v*})^\top <0)$. This is a simple consequence that follows from the representation in part (a) of Theorem \ref{representationTheorem}, obtained using the rotation argument and then using the definition of the $t$-distribution. For details, see Proposition~\ref{propo:signChangeAppendix} in the Appendix.
\end{proof}

\subsection{Instability of Ranking for DotProduct when $P = T$}\label{subsec:instabT}
%
For a given $u\in V$, we provide an upper bound for the set of relevance values $\{\rel_{uv} : v\in V\}$.
\begin{proposition}\label{propo:relTIneq}
    For all $u,v\in V$ we have $\rel_{uv}=\frac{n_{uv}}{d_ud_v}\leq \frac{\gamma}{d_u}$.
\end{proposition}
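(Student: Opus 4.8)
The plan is to reduce the claimed inequality to the defining property of the graph-wide constant $\gamma$ introduced in \eqref{eq:MConnection}. First I would recall that, in the ranking setup with $P=T$, we have by definition $\rel_{uv}=P_{v*}P_{u*}^\top=T_{v*}T_{u*}^\top$, and then invoke the identity \eqref{neighbors:degreeAT} of Lemma~\ref{lemma:degreeAndNeighbor}, which gives $T_{v*}T_{u*}^\top=\frac{n_{uv}}{d_ud_v}$. (Here symmetry of $n_{uv}=A_{u*}A_{v*}^\top$ in $u,v$ makes the order of the factors irrelevant.) This establishes the stated equality $\rel_{uv}=\frac{n_{uv}}{d_ud_v}$.

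For the inequality, I would simply factor $\frac{n_{uv}}{d_ud_v}=\frac{1}{d_u}\cdot\frac{n_{uv}}{d_v}$ and then bound the second factor using the definition $\gamma=\max_{u,v\in V}\frac{n_{uv}}{d_v}$, which immediately yields $\frac{n_{uv}}{d_v}\le\gamma$ for every pair $u,v\in V$. Multiplying by the nonnegative quantity $\frac{1}{d_u}$ gives $\frac{n_{uv}}{d_ud_v}\le\frac{\gamma}{d_u}$, which is the desired bound.

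There is essentially no obstacle here; the statement is an immediate consequence of the definition of $\gamma$ together with Lemma~\ref{lemma:degreeAndNeighbor}. The only things to be careful about are (i) making sure $d_u>0$ so that division is legitimate (true for any node with at least one incident edge, which is implicit throughout since $T$ is defined), and (ii) noting that the argument works verbatim for weighted adjacency matrices $A\in(\mathbb{Z}^+_0)^{n\times n}$, since \eqref{eq:MConnection} was designed precisely to accommodate that generality.
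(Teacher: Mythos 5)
Your proposal is correct and follows essentially the same route as the paper's proof, which likewise chains the identity \eqref{neighbors:degreeAT} with the definition \eqref{eq:MConnection} of $\gamma$ in a single line. Your extra remarks on $d_u>0$ and the weighted case are sensible but not needed beyond what the paper already assumes.
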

\begin{proof}
We have $\rel_{uv}=T_{u*}T_{v*}^\top \stackrel{\eqref{neighbors:degreeAT}}{=}\frac{n_{uv}}{d_ud_v}\stackrel{\eqref{eq:MConnection}}{\leq} \frac{\gamma}{d_u}$.
\end{proof}

Note that the upper bound in Proposition \ref{propo:relTIneq} can be particularly small for a high degree $d_u$. However, the estimate might be larger than that.
\begin{corollary}\label{cor:relVarianceTIssue}
    For $u\in H_c^M$ and $v\in L_c$, we have  $\rel_{uu}^R\stackrel{a}{\sim} \cN\left(\frac{n_{uu}}{d_u^2}, \frac{2n_{uu}^2}{d_u^4q}\right)$ 
    and that $\rel_{uv}^R$ will be asymptotically normal with non-negative expectation and the standard deviation greater than $\frac{\gamma}{d_u}$.
\end{corollary}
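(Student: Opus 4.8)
The plan is to derive both claims as direct specializations of the asymptotic normality results already established, namely part (a) of Theorem~\ref{thm:graphDotProductT} for $\rel^R_{uv} = X_{u*}X_{v*}^\top$, combined with the degree and connectivity inequalities from Lemma~\ref{lemma:degreeAndNeighbor} and the definitions of $\gamma$, $L_c$, and $H_c^\gamma$. The structure mirrors exactly the earlier Corollary that showed the standard deviation exceeds the expectation, so the work is to extract the slightly different conclusion stated here.

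First I would handle the $\rel^R_{uu}$ claim. Setting $u=v$ in \eqref{eq:dotANormalityPT} gives $\rel^R_{uu} \stackrel{a}{\sim} \cN\!\left(\frac{n_{uu}}{d_u^2}, \frac{1}{q}\!\left[\frac{n_{uu}^2}{d_u^4} + \frac{n_{uu}^2}{d_u^4}\right]\right)$, and the two terms in the bracket coincide, yielding variance $\frac{2 n_{uu}^2}{d_u^4 q}$ as stated. This needs only the substitution, so it is immediate. Second, for the cross term $\rel^R_{uv}$ with $u \in H_c^\gamma$ and $v \in L_c$, asymptotic normality and the expression for the mean $\frac{n_{uv}}{d_ud_v} \geq 0$ come straight from \eqref{eq:dotANormalityPT}; non-negativity of the expectation follows since $n_{uv}, d_u, d_v \geq 0$. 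It remains to bound the standard deviation below by $\frac{\gamma}{d_u}$, i.e.\ to show $\frac{1}{q}\left[\frac{n_{uu}n_{vv}}{d_u^2d_v^2} + \left(\frac{n_{uv}}{d_ud_v}\right)^2\right] \geq \frac{\gamma^2}{d_u^2}$. Dropping the second (nonnegative) summand, it suffices to show $\frac{n_{uu}n_{vv}}{q d_v^2} \geq \gamma^2$. Using $n_{uu} \geq d_u$ from \eqref{eq:degreeAT}, then $d_u \geq \gamma^2 c q$ from $u \in H_c^\gamma$, this lower-bounds the left side by $\frac{\gamma^2 c q n_{vv}}{q d_v^2} = \gamma^2 c \frac{n_{vv}}{d_v^2}$, and since $n_{vv} \geq d_v$ and $v \in L_c$ means $d_v \leq c$, we get $c \frac{n_{vv}}{d_v^2} \geq c \frac{d_v}{d_v^2} = \frac{c}{d_v} \geq 1$, closing the chain. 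Taking square roots and multiplying through by $d_u^{-1}$ completes the bound.

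I expect no serious obstacle: every inequality needed is already packaged in Lemma~\ref{lemma:degreeAndNeighbor} and the set definitions, and the argument is essentially a rerun of the existing Corollary's computation with the bracket's second term simply discarded rather than retained. The only point requiring slight care is being explicit that ``standard deviation greater than $\frac{\gamma}{d_u}$'' should be read with the (generically strict, or at worst non-strict) inequality coming from the chain above, and that the asymptotic-normality statements are understood in the $\stackrel{a}{\sim}$ sense of the Remark following Theorem~\ref{thm:graphDotProductT}, i.e.\ the usual CLT scaling as $q \to \infty$, inherited verbatim from that theorem rather than re-derived.
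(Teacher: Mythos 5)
Your proposal is correct and follows essentially the same route as the paper: the $u=v$ specialization of Theorem~\ref{thm:graphDotProductT}(a) for the first claim, and for the second the identical chain of inequalities ($n_{uu}\geq d_u\geq \gamma^2 cq$, $n_{vv}\geq d_v$, $d_v\leq c$, discarding the $n_{uv}^2$ term), which the paper merely packages as Lemma~\ref{lem:STDGeDegree} and Lemma~\ref{lemma:similarityEstimates}(b) rather than writing inline. No gaps.
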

\begin{proof}
    
Asymptotic result $\rel_{uu}^R\stackrel{a}{\sim} \cN\left(\frac{n_{uu}}{d_u^2}, \frac{2n_{uu}^2}{d_u^4q}\right)$ follows from Theorem \ref{thm:graphDotProductT} part(a), when we take $u=v$.
It can be shown using definitions of $\gamma$ and $c$, under the assumptions that $u\in H_c^\gamma$ and $v\in L_c$, that the variance in \eqref{eq:dotANormalityPT}, and hence the variance of $\rel_{uv}$, is greater than $(\gamma/d_u)^2$. Details can be found in Lemma~\ref{lemma:similarityEstimates} (b) in the Appendix.
\end{proof}


The last result tells us that $\rel_{uu}^R$ will have values in $\left[\frac{n_{uu}}{d_u^2}\left(1-3\sqrt{\frac{2}{q}}\right), \frac{n_{uu}}{d_u^2}\left(1+3\sqrt{\frac{2}{q}}\right)\right]$
with more than 99\%. This means, for example if $q\geq 100$, $\rel_{uv}^R\in [1.5 \frac{\gamma}{d_u},\infty) \subset  \left[\frac{n_{uu}}{d_u^2}\left(1+3\sqrt{\frac{2}{q}}\right),\infty\right)$ can happen with probability $\P(\cN(0,1)>1.5)\approx 6.7\%$.

For higher $q$, this probability will be higher. Roughly, we can anticipate that $\rel_{uu}^R<\rel_{uv}^R$ happens when $\rel_{uu}>\rel_{uv}$ frequently. The following result will show that this can happen more often than one would like.
\begin{corollary}\label{cor:RelevanceT}
  Let $u\in H_c^\gamma $ and $v\in L_c$ be two vertices with no common neighbors, i.e., such that $\rel_{uu}>0$ and $\rel_{uv}=0$.
  Then $\P(\rel_{uu}^R < \rel_{uv}^R)>\P(\cT_q>\gamma^{-1/2})\geq \P(\cT_q>1)\approx 15.8\%$.
\end{corollary}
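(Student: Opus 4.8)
The plan is to apply Theorem~\ref{thm:flip} with the substitution $u \mapsto u$, $v \mapsto v$, and the reference node $w \mapsto u$, so that the flip probability $\P(\rel_{uu}^R < \rel_{uv}^R)$ equals $\P\left(\cT_q > \frac{\cos(P_{u*},P_{u*}-P_{v*})\sqrt{q}}{\sqrt{1-\cos^2(P_{u*},P_{u*}-P_{v*})}}\right)$ with $P = T$. Since $x \mapsto \P(\cT_q > x)$ is decreasing, it suffices to produce an \emph{upper bound} on the argument $\frac{\cos(T_{u*},T_{u*}-T_{v*})\sqrt{q}}{\sqrt{1-\cos^2(T_{u*},T_{u*}-T_{v*})}}$, and since $t \mapsto \frac{t}{\sqrt{1-t^2}}$ is increasing on $[0,1)$, it is enough to upper bound $\cos(T_{u*}, T_{u*}-T_{v*})$ itself.

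First I would compute $\cos(T_{u*},T_{u*}-T_{v*}) = \frac{T_{u*}(T_{u*}-T_{v*})^\top}{\|T_{u*}\|\,\|T_{u*}-T_{v*}\|} = \frac{\|T_{u*}\|^2 - T_{u*}T_{v*}^\top}{\|T_{u*}\|\,\|T_{u*}-T_{v*}\|}$. Using $\rel_{uv} = 0$, we have $T_{u*}T_{v*}^\top = n_{uv}/(d_ud_v) = 0$, so the numerator is just $\|T_{u*}\|^2 = n_{uu}/d_u^2$. For the denominator, $\|T_{u*}-T_{v*}\|^2 = \|T_{u*}\|^2 + \|T_{v*}\|^2 \geq \|T_{u*}\|^2$ (again using orthogonality $T_{u*}T_{v*}^\top = 0$), so $\cos(T_{u*},T_{u*}-T_{v*}) \leq \frac{\|T_{u*}\|^2}{\|T_{u*}\|\cdot\|T_{u*}\|} = 1$; that bound is too weak. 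The useful direction is the reverse: I want to show the cosine is \emph{small}, i.e., bounded by $(1+\gamma)^{-1/2}$ or so, which forces $\frac{t}{\sqrt{1-t^2}} \leq \gamma^{-1/2}$. Concretely, $\frac{t^2}{1-t^2} = \frac{\|T_{u*}\|^2}{\|T_{v*}\|^2} = \frac{n_{uu}/d_u^2}{n_{vv}/d_v^2} = \frac{n_{uu}d_v^2}{n_{vv}d_u^2}$. Now apply the degree bounds from \eqref{eq:degreeAT}: $n_{uu} \leq d_u^2$ and $n_{vv} \geq d_v$, giving $\frac{n_{uu}d_v^2}{n_{vv}d_u^2} \leq \frac{d_u^2 d_v^2}{d_v d_u^2} = d_v$. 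Since $v \in L_c$ we might naively hope $d_v \leq c$, but the claimed bound involves $\gamma$, so instead I would use $n_{uu} \leq \gamma d_u$ via the definition \eqref{eq:MConnection} of $\gamma$ (which gives $n_{uu}/d_u \leq \gamma$), together with $n_{vv} \geq d_v$ from \eqref{eq:degreeAT}; this yields $\frac{n_{uu}d_v^2}{n_{vv}d_u^2} = \frac{n_{uu}}{d_u}\cdot\frac{d_v^2}{n_{vv}d_u} \leq \gamma \cdot \frac{d_v^2}{d_v d_u} = \gamma\frac{d_v}{d_u} \leq \gamma$ since $d_v \leq d_u$ (as $v$ is low and $u$ is high degree). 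Hence $\frac{t}{\sqrt{1-t^2}} \leq \sqrt{\gamma}$, and by monotonicity $\P(\cT_q > \frac{t\sqrt{q}}{\sqrt{1-t^2}}) \geq \P(\cT_q > \sqrt{\gamma})$. Wait --- the displayed claim has $\gamma^{-1/2}$, not $\gamma^{1/2}$; so I must instead be bounding $\frac{1-t^2}{t^2} = \frac{\|T_{v*}\|^2}{\|T_{u*}\|^2} = \frac{n_{vv}d_u^2}{n_{uu}d_v^2} \geq \frac{d_v d_u^2}{\gamma d_u d_v^2} = \frac{d_u}{\gamma d_v} \geq \gamma^{-1}$ (using $d_u \geq d_v$), so $\frac{t^2}{1-t^2} \leq \gamma$ and $\frac{t}{\sqrt{1-t^2}} \leq \sqrt{\gamma}$ --- still $\sqrt\gamma$. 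To land on $\gamma^{-1/2}$ I would instead only use $d_u \geq \gamma^2 c q$ and $d_v \leq c$ more carefully: $\frac{1-t^2}{t^2} \geq \frac{d_u}{\gamma d_v} \geq \frac{\gamma^2 c q}{\gamma c} = \gamma q \geq \gamma$, giving $\frac{t\sqrt q}{\sqrt{1-t^2}} \leq \sqrt{q/(\gamma q)}\cdot\sqrt q = \sqrt{q}/\sqrt{\gamma q} \cdot \sqrt q$; cleaner: $\frac{t\sqrt q}{\sqrt{1-t^2}} = \sqrt{q}\cdot\sqrt{\frac{t^2}{1-t^2}} \leq \sqrt{q}\cdot\sqrt{\frac{\gamma d_v}{d_u}} \leq \sqrt{q}\cdot\sqrt{\frac{\gamma c}{\gamma^2 c q}} = \sqrt{q}\cdot\frac{1}{\sqrt{\gamma q}} = \gamma^{-1/2}$. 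That gives exactly $\P(\cT_q > \gamma^{-1/2})$, and since $\gamma \geq 1$ implies $\gamma^{-1/2} \leq 1$, monotonicity gives $\P(\cT_q > \gamma^{-1/2}) \geq \P(\cT_q > 1) \approx 0.158$. The strict inequality $\P(\rel_{uu}^R < \rel_{uv}^R) > \P(\cT_q > \gamma^{-1/2})$ should come from the chain of inequalities being strict somewhere, e.g. because $d_u > d_v$ strictly (a high-degree node cannot coincide with a low-degree node when $\gamma^2 cq > c$) or because $n_{uu} > 0$ forces $t > 0$ strictly; I would trace which inequality is strict and invoke strict monotonicity of $\P(\cT_q > \cdot)$.

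The main obstacle I anticipate is bookkeeping the direction of the monotonicity correctly and pinning down exactly which degree bounds ($n_{uu} \leq \gamma d_u$ from \eqref{eq:MConnection}, $n_{vv} \geq d_v$ from \eqref{eq:degreeAT}, $d_u \geq \gamma^2 c q$, $d_v \leq c$) are needed to extract the constant $\gamma^{-1/2}$ rather than a weaker or stronger bound; the numerical evaluation $\P(\cT_q > 1) \approx 15.8\%$ and the final comparison $\gamma^{-1/2} \leq 1$ are then routine. A secondary point to check is that Theorem~\ref{thm:flip} genuinely applies with $w = u$ and $v$ distinct (the hypothesis $\rel_{wu} > \rel_{wv}$ becomes $\rel_{uu} > \rel_{uv}$, which holds since $\rel_{uu} = n_{uu}/d_u^2 > 0 = \rel_{uv}$ by assumption), and that the argument of $\cos$ in \eqref{eq:Psignflip}, namely $\cos(P_{u*}, P_{u*} - P_{v*})$, is nonnegative so the $t$-distribution tail bound is meaningful --- which follows since its numerator $\|T_{u*}\|^2 - 0 > 0$.
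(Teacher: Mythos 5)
Your proposal is correct and follows essentially the same route as the paper's proof: apply Theorem~\ref{thm:flip} with $w=u$, use orthogonality to reduce the $t$-statistic argument to $\frac{\|T_{u*}\|}{\|T_{v*}\|}\sqrt{q}$, and then chain the bounds $n_{uu}\leq \gamma d_u$, $n_{vv}\geq d_v$, $d_v\leq c$, $d_u\geq \gamma^2 cq$ to obtain $\gamma^{-1/2}$. Your closing remark about where the strict inequality comes from is a point the paper's own proof glosses over (it only establishes $\geq$), so your extra care there is warranted rather than a defect.
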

\begin{remark}\label{remark:estimateTN}
  The previous estimate is based on the fact that for $q\geq 30$, in practice, $\cN(0,1)$ is taken as an approximation for $\cT_q$.
\end{remark}
\begin{proof}
Since $P_{u*}P_{v*}^\top =0$, we have $\cos(P_{u*},P_{u*}-P_{v*})=\frac{\|P_{u*}\|}{\sqrt{\|P_{u*}\|^2+\|P_{v*}\|^2}}$ and \eqref{eq:Psignflip} becomes 
\begin{equation}
\P(\rel_{uu}^R < \rel_{uv}^R)=\P\left(\cT_q > \frac{\|P_{u*}\|}{\|P_{v*}\|}\sqrt{q}\right).\label{eq:tempProbEstimateT}    
\end{equation}
Further,
\begin{align*}
    \frac{\|P_{u*}\|}{\|P_{v*}\|}\sqrt{q}&= \frac{\sqrt{n_{uu}/d_u^2}}{\sqrt{n_{vv}/d_v^2}}\sqrt{p}=\frac{d_v\sqrt{n_{uu}}}{d_u\sqrt{n_{vv}}}\sqrt{q}
    \stackrel{\eqref{eq:degreeAT}}{\leq} \frac{d_v\sqrt{n_{uu}}}{d_u\sqrt{d_v}}\sqrt{q}= \frac{\sqrt{d_v n_{uu} q}}{d_u}\stackrel{\eqref{eq:MConnection}}{\leq}\frac{\sqrt{d_v \gamma d_u q}}{d_u}\\
    = \sqrt{\frac{d_v \gamma q}{d_u}}&\stackrel{v\in L_c}{\leq} \sqrt{\frac{ \gamma cq}{d_u}}\stackrel{u\in H_c^\gamma }{\leq} \frac{1}{\sqrt{\gamma}}.
\end{align*}
From the last estimate, we have $\P\left(\cT_q > \frac{\|P_{v*}\|}{\|P_{u*}\|}\sqrt{q}\right)\geq \P\left(\cT_q > \gamma^{-1/2}\right)$, and the claim follows from \eqref{eq:tempProbEstimateT} and Remark \ref{remark:estimateTN}. As mentioned in the paragraph after \eqref{eq:MConnection}, $\gamma \geq 1$. 
\end{proof}

Similar results would follow if we took two vertices $w,u$ of high degree with many common neighbors and a low degree vertex $v$ that has no common neighbors with $w$ and $u$. For simpler calculations, we took $w=u$.

We provide corresponding results for the case of $P=A$ in \S\ref{subsec:instabA} of the Appendix. The only notable difference is that, while Corollaries~\ref{cor:relVarianceTIssue} and \ref{cor:RelevanceT} will result in poor ranking results for high degree nodes, their analogous versions (Corollaries \ref{cor:rankingVarianceAIssue} and \ref{cor:RelevanceA}) will result in poor ranking for low degree nodes. This is illustrated on Figure~\ref{fig:wiki1}. Due to the high number of low degree nodes, this phenomenon is easier to notice empirically. Conversely, it is harder to note such a pathology for higher degree nodes, since they are usually scarcer.

\subsection{Stability of Ranking for Cosine Similarity when $P = A$ or $P = T$}\label{subsec:stabCos}

In both cases, for the dot product with respect to a given node, when using random projection for relevance estimation, a node of low relevance can be estimated with higher relevance than the node of low relevance. We will show that this is less likely to happen with cosine similarity.

In the case of cosine similarity, for relevance we use $\rel_{uv}=\cos(P_{u*},P_{v*})$ and $\relR_{uv}=\cos(X_{u*},X_{v*})$. From Proposition \ref{propo:cosSimProperties}, the next result follows.
\begin{proposition}\label{propo:cosineRelevance}
For all $u\in V$ $\rel_{uu}=\rel_{uu}^R=1$. Further, both the relevance $(\rel_{uv} :u,v\in V)$ and the approximation $(\relR_{uv} :u,v\in V)$ can be at most $1$.   
\end{proposition}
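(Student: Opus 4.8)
The plan is to derive Proposition~\ref{propo:cosineRelevance} directly from Proposition~\ref{propo:cosSimProperties}, since almost all the work has already been done there. First I would recall the definitions specialized to the cosine-similarity setting: here the relevance is $\rel_{uv}=\cos(P_{u*},P_{v*})$ and its random-projection approximation is $\relR_{uv}=\cos(X_{u*},X_{v*})$, where $X=PR^\top$ for $P\in\{A,T\}$. The claim $\rel_{uu}=1$ is then immediate: $\cos(P_{u*},P_{u*})=\frac{P_{u*}P_{u*}^\top}{\|P_{u*}\|\|P_{u*}\|}=1$ whenever $P_{u*}\neq 0$ (which holds since every node has degree at least one, so $A_{u*}\neq 0$ and hence $T_{u*}\neq 0$). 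Equivalently, one can invoke Lemma~\ref{lem:cosAT}, which gives $\cos(P_{u*},P_{u*})=\frac{n_{uu}}{\sqrt{n_{uu}n_{uu}}}=1$.

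Next I would handle $\relR_{uu}=1$. The cleanest route is part (b) of Proposition~\ref{propo:cosSimProperties}: almost surely $\cos(X_{u*},X_{v*})=\pm1$ if and only if $\cos(P_{u*},P_{v*})=\pm1$. Taking $v=u$ and using $\cos(P_{u*},P_{u*})=1$ (just established), we conclude $\cos(X_{u*},X_{u*})=1$ almost surely. Alternatively, and perhaps more transparently, one can argue directly: $\cos(X_{u*},X_{u*})=\frac{X_{u*}X_{u*}^\top}{\|X_{u*}\|^2}=1$ provided $X_{u*}=P_{u*}R^\top\neq 0$, and this holds almost surely because $R$ has i.i.d.\ Gaussian entries, so for a fixed nonzero vector $P_{u*}$ the image $P_{u*}R^\top$ is a nondegenerate Gaussian vector in $\mathbb{R}^{1\times q}$ and is zero with probability zero.

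Finally, the uniform upper bound $\rel_{uv}\le 1$ and $\relR_{uv}\le 1$ for all $u,v$ follows from part (a) of Proposition~\ref{propo:cosSimProperties}, which states $\cos(X_{u*},X_{v*})\in[-1,1]$ by definition of cosine similarity (Cauchy--Schwarz), and the same Cauchy--Schwarz bound applied to $P$ gives $\cos(P_{u*},P_{v*})\in[-1,1]$. Assembling these three observations completes the proof.

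There is no real obstacle here; the statement is essentially a corollary bookkeeping step. The only point requiring a sliver of care is the almost-sure qualifier on $\relR_{uu}=1$, which rests on the nondegeneracy of the Gaussian projection (equivalently on Proposition~\ref{propo:cosSimProperties}(b), whose own proof is deferred to Corollary~\ref{cor:cosineEqual1} in the Appendix); one should make sure to state it with ``almost surely'' rather than ``surely,'' consistent with the phrasing used in Proposition~\ref{propo:cosSimProperties}(b).
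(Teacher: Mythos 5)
Your proof is correct and takes essentially the same route as the paper, whose entire argument is the single sentence that the result ``follows from Proposition~\ref{propo:cosSimProperties}''; you have simply filled in the bookkeeping (part (a) for the upper bound, part (b) or the nondegeneracy of $X_{u*}=P_{u*}R^\top$ for $\relR_{uu}=1$, and $\cos(P_{u*},P_{u*})=1$ for the exact value) that the paper leaves implicit. Your observation that $\relR_{uu}=1$ should carry an almost-sure qualifier is a legitimate point of care that the paper's statement glosses over.
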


This gives us a clear interpretation of the relevance: a value of $1$ implies a strong connection between node, and one that is is preserved; a value of $0$ implies no neighbor overlap.

\begin{corollary}\label{cor:rankingCosine}
For $u,v,w,h \in V$, the following holds. 
   \begin{enumerate}[(a)]
       \item If $r_{uv}=0$, then $\rel_{uu}=1>0 =r_{uv}$ and $\relR_{uu} > \relR_{uv}$ almost surely.
       \item Under the conditions of Proposition \ref{propo:cosSimProperties} (c), if $\rel_{uv}-\rel_{wh}>2\varepsilon$, then
       $\P(\relR_{uv}>\relR_{wh})\geq 1-\delta$.
   \end{enumerate} 
\end{corollary}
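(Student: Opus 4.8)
The plan is to prove the two parts directly from the already-established properties of cosine similarity under random projections, namely Proposition~\ref{propo:cosSimProperties} and Proposition~\ref{propo:cosineRelevance}.

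For part (a), the statement $\rel_{uu}=1>0=\rel_{uv}$ is immediate: Proposition~\ref{propo:cosineRelevance} gives $\rel_{uu}=\cos(P_{u*},P_{u*})=1$, and the hypothesis $r_{uv}=0$ supplies the other equality, so the strict inequality is just $1>0$. For the almost-sure claim $\relR_{uu}>\relR_{uv}$, I would again invoke Proposition~\ref{propo:cosineRelevance}, which states $\relR_{uu}=1$ for all $u$ (equivalently, this follows from part (b) of Proposition~\ref{propo:cosSimProperties}, since $\cos(P_{u*},P_{u*})=1$ forces $\cos(X_{u*},X_{u*})=1$ a.s.). It then suffices to observe that $\relR_{uv}=\cos(X_{u*},X_{v*})<1$ almost surely: by part (b) of Proposition~\ref{propo:cosSimProperties}, $\cos(X_{u*},X_{v*})=1$ a.s.\ would require $\cos(P_{u*},P_{v*})=1$, contradicting $\rel_{uv}=\cos(P_{u*},P_{v*})=0$. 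Hence $\relR_{uu}=1>\relR_{uv}$ almost surely.

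For part (b), I would work on the event of probability at least $1-\delta$ furnished by Proposition~\ref{propo:cosSimProperties}(c) under the stated condition on $q$. On that event, simultaneously for all pairs, $|\relR_{uv}-\rel_{uv}|\le\varepsilon$ and $|\relR_{wh}-\rel_{wh}|\le\varepsilon$, so $\relR_{uv}\ge\rel_{uv}-\varepsilon$ and $\relR_{wh}\le\rel_{wh}+\varepsilon$. Combining these with the hypothesis $\rel_{uv}-\rel_{wh}>2\varepsilon$ gives $\relR_{uv}-\relR_{wh}\ge(\rel_{uv}-\varepsilon)-(\rel_{wh}+\varepsilon)=(\rel_{uv}-\rel_{wh})-2\varepsilon>0$, so $\relR_{uv}>\relR_{wh}$ holds on that event, which occurs with probability at least $1-\delta$.

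The argument is essentially a chaining of prior results and carries no real obstacle; the only point requiring a moment's care is the almost-sure strict inequality in part (a), where one must rule out the boundary case $\relR_{uv}=1$ using the ``only if'' direction of Proposition~\ref{propo:cosSimProperties}(b) rather than merely citing $\relR_{uv}\le 1$ from part (a) of that proposition. (Note also that the numbering in (b) should likely reference Proposition~\ref{propo:cosSimProperties}(c); I would keep the intended reference to the uniform $\varepsilon$-bound.)
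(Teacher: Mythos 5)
Your proof is correct and, in structure, essentially the paper's own: part (b) is the identical triangle-inequality chaining on the probability-$(1-\delta)$ event from Proposition~\ref{propo:cosSimProperties}(c), and part (a) rests on $\relR_{uu}=1$ together with $\P(\relR_{uv}=1)=0$. The single point of divergence is how that last probability-zero statement is justified: the paper argues that $\relR_{uv}$ is a continuous (atomless) random variable, whereas you invoke the rigidity in Proposition~\ref{propo:cosSimProperties}(b). Note that, as literally worded (``almost surely $=\pm1$ if and only if\,\ldots''), part (b) only excludes $\P(\relR_{uv}=1)=1$, which is not enough for an almost-sure strict inequality; what you actually need is the stronger form established in Proposition~\ref{propo:cosineEqual1} and Corollary~\ref{cor:cosineEqual1} of the appendix, namely that $\cos(P_{u*},P_{v*})\neq\pm1$ forces $\P(\cos(X_{u*},X_{v*})=\pm1)=0$. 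Since that is indeed what the appendix proves, your argument goes through, and it has the small advantage of not requiring the (implicit, and only valid when $\rel_{uv}\neq\pm1$) claim that $\relR_{uv}$ has an atomless distribution.
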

\begin{proof}
    (a) From Proposition \ref{propo:cosineRelevance} (c), we have $\rel_{uu}=\relR_{uu}=1$. Since $\relR_{uv}$ is a continuous random variable, $\P(\relR_{uv} =1)=0$. Hence,
$$
    1=\P(\relR_{uv} \leq1) =\P(\relR_{uv} <1)+\P(\relR_{uv} =1)=\P(\relR_{uv} <1)+0=\P(\relR_{uv} <1).    
$$
    (b) With probability $1-\delta$, we have  $\relR_{uv}>\rel_{uv}-\varepsilon$ and $\rel_{wh}+\varepsilon>\relR_{wh}$. Therefore,
$\relR_{uv}>\rel_{uv}-\varepsilon> (\rel_{wh}+2\varepsilon)-\varepsilon=
         \rel_{wh}+\varepsilon> \relR_{wh}$.
\end{proof}

Part (a) of Corollary \ref{cor:rankingCosine} shows that the phenomenon from 
Corollary \ref{cor:RelevanceT} does not happen in the case of cosine similarity (see also a similar result for the adjacency matrix $A$ in Corollary \ref{cor:RelevanceA} of the Appendix). Part (b) of Corollary \ref{cor:rankingCosine} shows stability, i.e., if two relevance values are not close, their estimate will highly likely keep their order. Part (a) of Theorem \ref{thm:GraphCosineSimilarity} provides similar asymptotic guarantees for stability.

The computational experiments in the next section will show, using real graphs, that the standard measure for ranking NDCG will depend on node degrees in the case of the dot product for both $P=A$ and $P=T$, while it will be stable in case of the cosine similarity.

\subsection{Computational Experiments}

\begin{figure*}[ht]
\begin{minipage}{0.47\linewidth}
\centering
\includegraphics[width=0.99\linewidth]{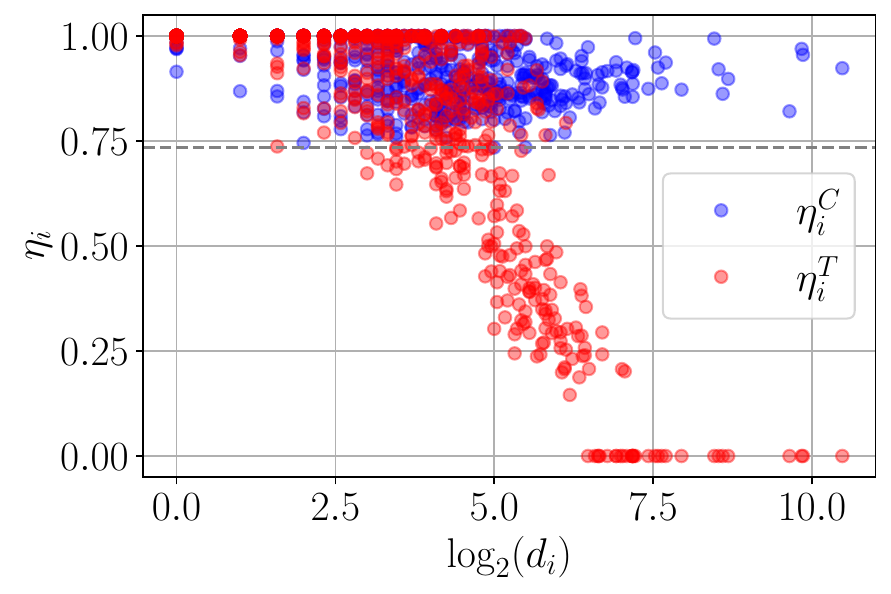}    
\end{minipage}
\begin{minipage}{0.47\linewidth}
\centering
\includegraphics[width=0.99\linewidth]{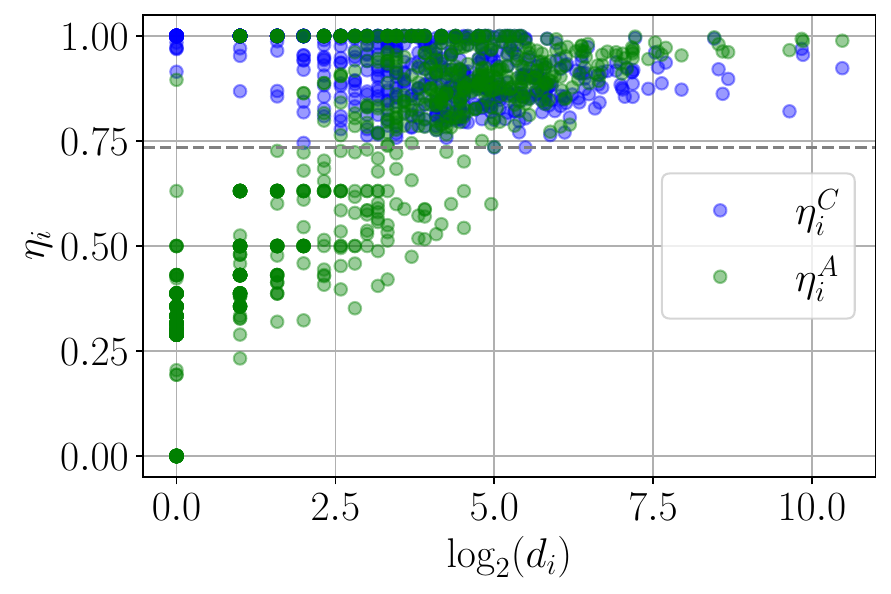}  
\end{minipage}\hfill
\caption{\emph{Distribution of $\eta_i = \NDCG(r_i^K, \hat r_i^K)$, $K=10$, versus node degree $d_i$}. On the left, we consider \emph{\dotproductT}, denoted by $\eta_i^T$. We compare it with \emph{\cosineSim}, denoted by $\eta_i^C$. Both scores are computed using the same random projection matrix, with dimension $q=256$. The dotted line marks the lowest value observed for $\eta_i^C$, of approximately 0.75. It can be seen that $\eta_i^T$ often takes low values for \emph{higher} degrees (i.e., region below dotted line), especially when $\log_2(d_i) \geq 4$. On the right, we display the equivalent plot for \emph{\dotproductA}, denoted by $\eta_i^A$. The NDCG scores often take low values for \emph{lower} degrees, especially when $\log_2(d_i) \leq 6$.
}
\label{fig:wiki1}
\end{figure*}

\newrobustcmd{\B}{\bfseries}
\begin{table}[H]
\caption{Empirical mean (std) of $\eta_i = \NDCG(r_i^K, \hat r_i^K)$, $K \in \{1, 5, 10\}$, and of $\log_2(d_i)$ (last row), for $i \in S_{\mathrm{high}}$ (left) and $i \in S_{\mathrm{low}}$ (right). Bold face denotes lower mean values in set.}
\small
\begin{tabular}{rccc|ccc}
\toprule
K & $\eta_i^T$, $i \in S_{\mathrm{high}}$  & $\eta_i^A $, $i \in S_{\mathrm{high}}$ & $\eta_i^C$, $i \in S_{\mathrm{high}}$ & $\eta_i^T$, $i \in S_{\mathrm{low}}$  & $\eta_i^A$, $i \in S_{\mathrm{low}}$  & $\eta_i^C$, $i \in S_{\mathrm{low}}$ \\
\midrule
   2 & \B 0.589 (0.426)     & 0.954 (0.061)    & 0.964 (0.025) & 1.000 (0.003)    & \B 0.057 (0.200)   & 0.999 (0.008) \\
   5 & \B 0.607 (0.344)     & 0.920 (0.072)    & 0.924 (0.047) & 0.999 (0.004)    & \B 0.168 (0.250)   & 0.999 (0.010) \\
  10 & \B 0.602 (0.315)     & 0.899 (0.078)    & 0.895 (0.063) & 0.999 (0.004)    & \B 0.327 (0.188)   & 0.999 (0.010) \\
\midrule
    & \multicolumn{3}{c|}{$\log_2(d_i)$, $i \in S_{\mathrm{high}}$: 
5.274 (1.103)
    }
    & \multicolumn{3}{c}{$\log_2(d_i)$, $i \in S_{\mathrm{low}}$: 
0.260 (0.439)
    }\\
\bottomrule

\end{tabular}
\label{tab:table1}
\end{table}
\normalsize


We illustrate the results developed in the previous section for a ranking application over a real graph. To that end, we consider the Gleich/wikipedia-20060925 dataset\footnote{ https://www.cise.ufl.edu/research/sparse/matrices/Gleich/wikipedia-20060925.html} from the University of Florida Sparse Matrix Collection~\cite{davis2011university}. The dataset consists of a web crawling extraction of Wikipedia with $2,983,494$ web pages and $37,269,096$ web links. We define $A(i,j) = 1$ if page $i$ links to or is linked by page $j$, and $0$ otherwise.

We wish to evaluate the effect of $R$ on the ranking of nodes induced by their connectivity through common neighbors. The original relevance between nodes $u$ and $v$, denoted $\rel_{uv}$, is computed as a function of vectors $P_u, P_v \in \mathbb{R}^n$ obtained from their respective rows of $A$ (or $T$). We denote the approximated relevance obtained through $R$ as $\relR_{uv}$, and consider the three variants of similarity discussed in this paper: \emph{($T$)}~\dotproductT, \emph{(A)}~\dotproductA, and \emph{(C)}~\cosineSim. 

To compare the rankings induced the original and approximated relevances, we proceed as follows. For a node $i$, we define a ranking vector $r_i^K :=  [\rel_{i,(\ell)}: \ell=1,\ldots, K]$, where the parenthesized index $\ell$ denotes the $\ell$-th largest element according to the relative order induced by the relevances. Likewise, we set
$\hat r_i^K := [\relR_{i,(\ell)}: \ell=1,\ldots, K]$ to consider the ranking from the approximated relevances.
Then, we compare $r_i^K$ with $\hat r_i^K$ by computing their normalized discounted cumulative gain at $K=10$, which we define as $\eta_i := \NDCG(r_i^K, \hat r_i^K)$.

Because the number of possible pairwise node combinations is large, we evaluate a representative sample of the nodes using a stratified sampling strategy. To assure a meaningful mixture of node degrees, we split the set of nodes into three segments of size $L=1 \times 10^6$ (low, medium, and high) based on their ordered degrees. Then, we select nodes into three subsets by sampling $m = 300$ out of $L$ nodes without replacement from each of these segments, and take $\mathcal S$ = $\mathcal S_{\mathrm{low}} \cup \mathcal S_{\mathrm{med}} \cup \mathcal S_{\mathrm{high}}$ as our evaluation sample. Finally, for each node $i$ in $S$ (and with respect to every other node $j$ in $S$), we compute $\eta_i^T$, $\eta_i^A $, and $\eta_i^C$ considering the types of similarity \emph{(T, A, C)} defined previously.

With the previous definitions, we can compute the empirical distributions of $\eta_i$ over the node degrees~$d_i$ for the three variants of similarity considered, with $q=256$ for the random projection dimension.
In Figure~\ref{fig:wiki1} (left), we compare \dotproductT{} with \cosineSim{}. As expected (c.f., \S\ref{subsec:instabT}), we can see a strong disruption of NDCG when the degrees are \emph{high}. Correspondingly, in Figure~\ref{fig:wiki1} (right), we display the corresponding comparison for \dotproductA{}. As expected (c.f., \S\ref{subsec:instabA}), we can see a strong disruption of NDCG when the degrees are \emph{low}.
\cosineSim{} is observed to be largely immune to both of the aforementioned effects, and is able to preserve the quality of the rankings, as anticipated in \S\ref{subsec:stabCos}. These results are summarized numerically on Table~\ref{tab:table1}, considering the samples taken from $S_{\mathrm{high}}$~(left) and $\mathcal S_{\mathrm{low}}$~(right), for $K \in \{1, 5, 10\}$.

\section{Conclusion}

Based on the discussion in \S\ref{subsec:cosineP} and the numerical results presented in this paper, we propose that practitioners of random projection methods first use the embeddings $u\mapsto X_{u*}/\|X_{u*}\|$, in which case cosine similarity will be well preserved under the random projection both when $X=AR^\top $ and when $X=TR^\top $.
If the practitioners do not have a need for embeddings of high degree nodes (or those might not exist), they can use the embeddings $u\mapsto X_{u*}$ where $X = TR^\top $. The dot product for the lower degree nodes will be well preserved as shown in \S\ref{subsec:dotP}.
Use of $u\mapsto X_{u*}$ where $X = AR^\top $ is discouraged, since it very difficult to control the possible error as we shown in \S\ref{subsec:dotA}.

\section*{Acknowledgments}
This paper is an extension of the work conducted by the Graph Intelligence Sciences team at Microsoft, focusing on the relationships and connections of Office 365 entities. The authors wish to express gratitude to their colleagues for their unwavering support. Additionally, the first author would like to acknowledge the hospitality of the Department of Mathematics at the University of Washington in Seattle during the work on this paper.


\small
\bibliographystyle{ieeetr}
\bibliography{literature}

\normalsize

\appendix
\newpage 
\appendix

\renewcommand\thesection{\Alph{section}}
\renewcommand\thesubsection{\thesection.\Roman{subsection}}

\hrulefill
\begin{center}
\Large{\textbf{Appendix}}
\end{center}
\hrulefill

\normalsize{}

This appendix is divided in two parts:
\begin{itemize}
\item In Part~\ref{appendix_A}, we provide detailed treatment and proofs for the general results stated and used in the main paper;
\item In Part~\ref{appendix_B}, we provide results for RP DotProduct when $P=A$. This consists of a parallel development to that presented in \S\ref{subsec:dotP} and \S\ref{subsec:instabT} of the main paper, when we had $P=T$.
\end{itemize}

\section{Dot Product and Cosine Similarity under Random Projections}\label{appendix_A}
In this part, we provide an in-depth study into the random projection mapping results discussed in the main paper. Our presentation focuses on general properties that are not exclusive to graphs. 

\begin{itemize}
\item In \S\ref{sec:general}, we introduce the notation and define random projection as a mapping. 

\item In \S\ref{sec:rotationArgument}, we use the rotation argument to provide new representation for the dot product and cosine similarity under the random projections. 

\item In \S\ref{sec:dot_product_results}, we provide several asymptotic normality results for the dot product. We also use the rotation argument to derive an improved JL Lemma, and compare it with existing versions in the literature.

\item In \S\ref{sec:cosine_similarity_results}, we provide several asymptotic normality results for the cosine similarity, as well novel JL Lemmas, based on the rotation argument. Cosine similarity is a popular quantity to measure similarity. Although random projections have been studied extensively, there are not many papers that include consideration of the cosine similarity.

\item In \S\ref{appendixA}, we provide a few technical inequalities and concentration results that were used in other sections, for a complete reference to the reader.
\end{itemize}


We begin by summarizing the main results of Part~\ref{appendix_A} in the following proposition.

\begin{proposition}\label{propo:dotProduct}
(a) For every $i,j$ we have
\begin{equation}
(Rp_i,Rp_j)\stackrel{a}{\sim} \cN\left((p_i,p_j), \frac{1}{q}[\|p_j\|^2\|p_i\|^2+(p_j,p_i)^2]\right). \label{eq:dotANormality}
\end{equation}
(b) Let $\varepsilon\in (0,1)$ and $\delta\in (0,1)$, then for $q\geq 4\cdot\frac{1+\varepsilon}{\varepsilon^2}\log \frac{k(k-1)}{\delta}$ we have that 
$$\left|(R p_j,R p_i)-( p_j, p_i)\right| <\varepsilon \|p_j\|\|p_i\|$$
 for $i< j$ with probability at least $1-\delta$.
\end{proposition}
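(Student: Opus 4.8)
Both parts will follow from one representation of $(Rp_i,Rp_j)$ that I would extract first via the rotation argument of \S\ref{sec:rotationArgument} (this is part (a) of Theorem~\ref{representationTheorem}). Since the rows of $R$ are i.i.d.\ $\cN(0,\tfrac{1}{q}I_n)$, the law of $RO^\top$ equals that of $R$ for every orthogonal $O$, so $(Rp_i,Rp_j)$ depends on $p_i,p_j$ only through $\|p_i\|$, $\|p_j\|$ and $(p_i,p_j)$. Choosing $O$ sending $p_i\mapsto\|p_i\|e_1$ and $p_j$ into $\mathrm{span}\{e_1,e_2\}$, and letting $\xi,\eta$ be the (independent, $\cN(0,\tfrac{1}{q}I_q)$) first two columns of $R$, one gets $Rp_i=\|p_i\|\xi$, $Rp_j=a\xi+b\eta$ with $\|p_i\|a=(p_i,p_j)$ and $\|p_i\|b=c:=\sqrt{\|p_i\|^2\|p_j\|^2-(p_i,p_j)^2}$. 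Hence, writing $\xi=q^{-1/2}(Z_1,\dots,Z_q)$ and $\eta=q^{-1/2}(W_1,\dots,W_q)$ with $(Z_l,W_l)$ i.i.d.\ standard Gaussian pairs,
\begin{equation*}
(Rp_i,Rp_j)=(p_i,p_j)\|\xi\|^2+c\,\langle\xi,\eta\rangle=\frac1q\sum_{l=1}^q\chi_l,\qquad \chi_l:=(p_i,p_j)Z_l^2+c\,Z_lW_l,
\end{equation*}
or, diagonalizing the quadratic form, $(Rp_i,Rp_j)=\tfrac{\mu_+}{q}S_++\tfrac{\mu_-}{q}S_-$ with $\mu_\pm=\tfrac12\big((p_i,p_j)\pm\|p_i\|\|p_j\|\big)$ and $S_\pm$ independent $\chi^2_q$. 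The degenerate cases $p_i=0$ and $p_i\parallel p_j$ I would dispatch directly; they are consistent with what follows.

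For part (a): the $\chi_l$ are i.i.d.\ with finite moments and $\E\chi_l=(p_i,p_j)$, and a short computation (using $\E Z^4=3$, $\E[Z^3W]=0$, $\E[Z^2W^2]=1$) gives $\mathrm{Var}(\chi_l)=2(p_i,p_j)^2+c^2=(p_i,p_j)^2+\|p_i\|^2\|p_j\|^2$. The classical CLT then yields $\sqrt q\big((Rp_i,Rp_j)-(p_i,p_j)\big)\stackrel{d}{\to}\cN\big(0,\,(p_i,p_j)^2+\|p_i\|^2\|p_j\|^2\big)$, which is exactly \eqref{eq:dotANormality}.

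For part (b): from the product representation the centered moment generating function is explicit,
\begin{equation*}
\E\!\left[e^{\lambda((Rp_i,Rp_j)-(p_i,p_j))}\right]=e^{-\lambda q(p_i,p_j)}\Big(1-2\lambda(p_i,p_j)-\lambda^2\big(\|p_i\|^2\|p_j\|^2-(p_i,p_j)^2\big)\Big)^{-q/2},
\end{equation*}
valid wherever the quadratic in $\lambda$ stays positive. I would Chernoff-bound each one-sided deviation $\pm\big((Rp_i,Rp_j)-(p_i,p_j)\big)\ge\varepsilon\|p_i\|\|p_j\|$, substitute $\lambda=s/(\|p_i\|\|p_j\|)$ so the bound depends only on the correlation $\rho=(p_i,p_j)/(\|p_i\|\|p_j\|)\in[-1,1]$, optimize over $s$, and invoke $\varepsilon-\log(1+\varepsilon)\ge\tfrac{\varepsilon^2}{2(1+\varepsilon)}$ to reach the per-pair bound $\P\big(|(Rp_i,Rp_j)-(p_i,p_j)|\ge\varepsilon\|p_i\|\|p_j\|\big)\le 2\exp\!\big(-\tfrac{q\varepsilon^2}{4(1+\varepsilon)}\big)$. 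A union bound over the $\binom{k}{2}=\tfrac{k(k-1)}{2}$ pairs $i<j$ then caps the failure probability by $k(k-1)\exp\!\big(-\tfrac{q\varepsilon^2}{4(1+\varepsilon)}\big)$, which is $\le\delta$ precisely when $q\ge 4\tfrac{1+\varepsilon}{\varepsilon^2}\log\tfrac{k(k-1)}{\delta}$, as claimed.

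The step I expect to be the real obstacle is the optimization in (b): the Chernoff exponent still carries the unknown $\rho$, and one must check it is at least $\tfrac{\varepsilon^2}{4(1+\varepsilon)}$ \emph{uniformly} in $\rho\in[-1,1]$ (the extreme $|\rho|=1$, a single scaled $\chi^2_q$, being the worst case and recovering the classical JL exponent). Obtaining the clean constant $\tfrac{1+\varepsilon}{\varepsilon^2}$ rather than the looser $\tfrac{1}{\varepsilon^2(1-\varepsilon)}$ requires both the sharp $\log(1+\varepsilon)$ inequality and treating the two independent $\chi^2_q$ jointly through the MGF above, rather than splitting the deviation and bounding $S_+$ and $S_-$ separately (which would insert an extra factor $2$ inside the logarithm).
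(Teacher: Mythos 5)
Your proposal is correct and follows essentially the same route as the paper: the rotation/orthogonal-invariance reduction to a two-column Gaussian representation (the paper's Theorem~\ref{representationTheorem}, whose $\rho\|N\|^2+M_1\|N\|\sqrt{1-\rho^2}$ is distributionally the same as your $\rho\|Z\|^2+\sqrt{1-\rho^2}\langle Z,W\rangle$), the CLT with variance $2(p_i,p_j)^2+c^2$ for part (a), and for part (b) exactly the MGF the paper computes in \eqref{mgf:dotprod} (your diagonalization into $\tfrac{\mu_+}{q}S_++\tfrac{\mu_-}{q}S_-$ is precisely the probabilistic reading of the paper's factorization $[(1-\lambda(1+\rho))(1-\lambda(\rho-1))]^{-q/2}$), followed by the uniform-in-$\rho$ Chernoff bound $2\exp(-q\varepsilon^2/(4(1+\varepsilon)))$ of Proposition~\ref{propo:EstimatesNoRho} and the union bound over $\binom{k}{2}$ pairs. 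The one blemish is a normalization slip in your displayed MGF: the left-hand side is written for $R$ but the right-hand side (note the $e^{-\lambda q(p_i,p_j)}$ and the unscaled $\lambda$ inside the power) is the MGF of the $Q=\sqrt{q}R$ normalization, i.e.\ of $(Qp_i,Qp_j)-q(p_i,p_j)$; this is harmless since the subsequent Chernoff computation is carried out at that scale, as in the paper.
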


Recall that we used a random matrix $R=[R_{ij}]\in \mathbb{R}^{q \times n}$, where $(R_{ij} : i =1,\ldots,n, j=1,\ldots,d)$ are i.i.d. normal random variables with mean zero and variance $1/q$. Here, $p_1,\ldots,p_k$ will be vectors in $\mathbb{R}^n$ who are then mapped to vectors $Rp_1,\ldots,Rp_k$ in $\mathbb{R}^q$. Other notation in this appendix is described in Table \ref{tableAppA}.

\begin{proof}
Item (a) is the statement of Corollary \ref{cor:CLTDotProd}. Item (b) is the statement of Theorem \ref{JLTheoremDotProd}.
\end{proof}

We note that if $p_i$ and $p_j$ are orthogonal or $(p_i,p_j)$ is small, then $\|p_i\|\|p_j\|$ can still be large.
Hence, by only the dot product we have very little control on the size of the error in both parts (a) and (b) of Proposition \ref{propo:dotProduct}.

In the main part, $p_i$ and $p_j$ had non-negative entries, and therefore, the dot $(p_i,p_j)$ products are always non-negative. What is the probability that $(Rp_i,Rp_j)$ will be negative? Remarkably, we are able to provide a closed-formula solution dependent only on the cosine similarity $\dfrac{(p_j,p_i)}{\|p_j\|\|p_i\|} =: \cos(p_j,p_i)$
 of $p_j$ and $p_i$:

\begin{proposition}\label{propo:signChangeAppendix}
If $(p_j,p_i)> 0$ and $\cos(p_j,p_i) \neq 1$, then 
$$\P\left((Rp_j,Rp_i)<0\right)= \P\left(\cT_q>\frac{|\cos(p_j,p_i)|\sqrt{q}}{\sqrt{1-\cos(p_j,p_i)^2}}\right),$$
where $\cT_q$ is the $t$-distribution with the parameter $q$.
\end{proposition}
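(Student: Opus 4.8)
The plan is to evaluate this probability exactly by reducing the problem to two dimensions with the rotation argument of \S\ref{sec:rotationArgument}, and then identifying the relevant quantity as a $t$-distributed random variable. First I would invoke part~(a) of Theorem~\ref{representationTheorem}: since the entries of $R$ are i.i.d.\ $\cN(0,1/q)$, the law of $R$ is invariant under right-multiplication by orthogonal matrices, so I may rotate the pair $(p_i,p_j)$ into the coordinate plane, writing $p_i=\|p_i\|e_1$ and $p_j=\|p_j\|(\cos\theta\,e_1+\sin\theta\,e_2)$ with $\theta$ the angle between them and $\cos\theta=\cos(p_j,p_i)$. Writing $R_{*1},R_{*2}\in\R^q$ for the first two columns of $R$, this gives the representation
$$
(Rp_j,Rp_i)\stackrel{d}{=}\|p_i\|\,\|p_j\|\Big(\cos\theta\,\|R_{*1}\|^2+\sin\theta\,(R_{*1},R_{*2})\Big).
$$
The two hypotheses enter here: $(p_j,p_i)>0$ forces $\cos\theta>0$, and $\cos(p_j,p_i)\neq 1$ forces $\sin\theta\neq0$, so after possibly flipping the sign of $e_2$ I may assume $\sin\theta>0$.

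Next I would simplify the event $\{(Rp_j,Rp_i)<0\}$. Dividing by the a.s.\ positive constants $\|p_i\|\|p_j\|$ and $\|R_{*1}\|$, and then by $\sin\theta>0$, this event equals $\{(R_{*1},R_{*2})/\|R_{*1}\|<-\cot\theta\,\|R_{*1}\|\}$. The crucial step is to condition on $R_{*1}$: given $R_{*1}$, the inner product $(R_{*1},R_{*2})=\sum_k R_{k1}R_{k2}$ is centered Gaussian with variance $\|R_{*1}\|^2/q$ and is independent of $R_{*1}$, so $Z:=\sqrt{q}\,(R_{*1},R_{*2})/\|R_{*1}\|\sim\cN(0,1)$ is independent of $W:=q\|R_{*1}\|^2\sim\chi^2_q$. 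Substituting $\|R_{*1}\|=\sqrt{W/q}$, the event rewrites as $\{Z/\sqrt{W/q}<-\cot\theta\sqrt{q}\}$. By the definition of the $t$-distribution $Z/\sqrt{W/q}\sim\cT_q$, and by the symmetry of $\cT_q$ we get $\P(\cT_q<-\cot\theta\sqrt{q})=\P(\cT_q>\cot\theta\sqrt{q})$. Since $\cos\theta>0$ and $\sin\theta>0$, $\cot\theta\sqrt{q}=|\cos(p_j,p_i)|\sqrt{q}/\sqrt{1-\cos^2(p_j,p_i)}$, which is the asserted expression.

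I expect the main obstacle to be the conditioning step: one must verify carefully that, conditionally on $R_{*1}$, the pair $\big((R_{*1},R_{*2}),\|R_{*1}\|\big)$ decomposes, after normalization, into a standard normal times an \emph{independent} scaled chi-square with $q$ degrees of freedom — this independence is exactly what makes the ratio a genuine $\cT_q$ rather than some other distribution. The remaining items are routine: tracking the signs of $\cos\theta$ and $\sin\theta$ supplied by the hypotheses, dividing through only by quantities that are positive almost surely, and noting that the excluded case $\cos(p_j,p_i)=1$ is precisely the degenerate one where $\sin\theta=0$, $(Rp_j,Rp_i)=\|p_i\|\|p_j\|\,\|R_{*1}\|^2\ge0$, and the formula correctly degenerates to $\P(\cT_q>\infty)=0$.
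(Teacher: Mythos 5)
Your proof is correct and takes essentially the same route as the paper's: the paper reduces $(Rp_j,Rp_i)$ to the representation $\|p_i\|\|p_j\|(\rho\|N\|^2+M_1\|N\|\sqrt{1-\rho^2})$ of Theorem~\ref{representationTheorem} and identifies $M_1/\sqrt{\|N\|^2/q}$ as a $\cT_q$ variable before using symmetry (Proposition~\ref{propo:signChange}), which is exactly your decomposition $\cos\theta\,\|R_{*1}\|^2+\sin\theta\,(R_{*1},R_{*2})$ followed by the $t$-statistic $Z/\sqrt{W/q}$. The only presentational difference is that the paper establishes independence of the standard-normal numerator from the $\chi^2_q$ denominator via an explicit orthogonal rotation in $\R^q$ (Lemma~\ref{lem:rotation_argument}), whereas you obtain it by conditioning on $R_{*1}$; both are valid.
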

\begin{proof}
Fact $(p_j,p_i)> 0$ implies $\cos(p_j,p_i)\neq -1$. Therefore, the claim follows from Proposition \ref{propo:signChange}.
\end{proof}

\begin{proposition}\label{propo:cosineSimilairty}
(a) For all $i,j$ such that $p_i\neq 0$  and $p_j\neq 0$, we have:
\begin{equation}
\cos(Rp_i,Rp_i) 
\stackrel{a}{\sim} \cN\left(\cos(p_i,p_j), \frac{1}{q}(1-\cos^2(p_i,p_j))^2\right).\label{eq:cosineAsympInterpretation}
\end{equation}
(b) Let $\varepsilon\in (0,0.05]$, $\delta\in (0,1)$ and $p_1,\ldots, p_k$ be non-zero vectors in $\R^n$. For 
\begin{equation}\label{ineq:cosp1}
q\geq \frac{2\ln \left[ \frac{2k(k-1)\left(1+\frac{\varepsilon^2}{4}\right)}{\delta}\right]}{\ln \left[ 1+\frac{\varepsilon^2}{2(1+\varepsilon\sqrt{2})}\right]}
\end{equation}
the inequality
$$
\left|\cos(R p_j,R p_i)-\cos( p_j, p_i)\right| 
\leq\varepsilon(1-\cos( p_j, p_i)^2)
$$
holds for all $i< j$ with probability at least $1-\delta$.
\end{proposition}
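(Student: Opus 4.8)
The plan is to prove Proposition~\ref{propo:cosineSimilairty} via the rotation argument already announced in the paper. First I would invoke the representation of $\cos(Rp_i, Rp_j)$ from part (b) of Theorem~\ref{representationTheorem} (the rotation + Gram--Schmidt construction): after applying an orthogonal change of coordinates that sends $p_i$ and $p_j$ into the span of the first two coordinate axes, the cosine similarity $\cos(Rp_i, Rp_j)$ becomes a function of only a handful of i.i.d.\ Gaussian-derived quantities — concretely, it reduces to an expression of the form $\frac{a\,\|u\|\cos\theta + b\,(u \cdot w)}{\|u\|\sqrt{a^2\|u\|^2 + 2ab(u\cdot w) + b^2\|w\|^2}}$ built from two independent $q$-dimensional standard Gaussian vectors $u,w$ (or equivalently, a chi-squared variable and a correlated normal), where $\cos\theta = \cos(p_i,p_j)$. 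The point is that the degree-dependent scale factors in $p_i,p_j$ cancel entirely, leaving a one-parameter family indexed by $\rho := \cos(p_i,p_j)$.

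Next I would establish part (a): feed this representation into the Central Limit Theorem and the Law of Large Numbers. Writing $\cos(Rp_i,Rp_j) = g(S_q/q)$ where $S_q$ is a sum of $q$ i.i.d.\ bounded-moment random vectors with mean $\mu$, the delta method gives $\sqrt{q}\big(\cos(Rp_i,Rp_j) - g(\mu)\big) \stackrel{d}{\to} \cN(0, \nabla g(\mu)^\top \Sigma \nabla g(\mu))$. A direct computation of $g(\mu)$ recovers $\rho$, and evaluating the asymptotic variance yields $(1-\rho^2)^2$; this is a finite but slightly tedious calculation that I would relegate to the appendix proposition. For part (b), the finite-sample bound, I would use the exact representation to write the event $\big|\cos(Rp_i,Rp_j) - \rho\big| > \varepsilon(1-\rho^2)$ in terms of concentration of the underlying chi-squared and normal quantities, apply the standard chi-squared tail bounds (Laurent--Massart type, or the $\chi^2$ moment-generating-function bound collected in \S\ref{appendixA}), obtain a per-pair failure probability of the form $2(1+\varepsilon^2/4)\exp\!\big(-\tfrac{q}{2}\ln[1+\tfrac{\varepsilon^2}{2(1+\varepsilon\sqrt 2)}]\big)$, and finish with a union bound over the $\binom{k}{2}$ pairs; solving the resulting inequality for $q$ gives exactly \eqref{ineq:cosp1}. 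The restriction $\varepsilon \le 0.05$ is what makes the various linearizations (e.g.\ replacing $\sqrt{1+t}$ bounds, controlling cross terms) valid with the stated clean constants.

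The main obstacle I anticipate is the finite-sample part: unlike the dot-product case, $\cos(Rp_i,Rp_j)$ is a \emph{ratio}, so a deviation bound requires simultaneously controlling the numerator's deviation and bounding the denominator away from zero (and away from blowing up), and these are correlated. The trick is that after the rotation the denominator is $\|Rp_i\|\,\|Rp_j\|$ with $Rp_i, Rp_j$ jointly Gaussian with a known $2\times 2$ covariance determined by $\rho$, so one can reduce everything to: (i) two chi-squared concentration statements for $\|Rp_i\|^2/\|p_i\|^2$ and $\|Rp_j\|^2/\|p_j\|^2$, and (ii) concentration of the inner product $Rp_i \cdot Rp_j$, and then algebraically combine them so that the $(1-\rho^2)$ factor emerges naturally as the relevant "variance scale" of the numerator after projecting out the component of $Rp_i$ along $Rp_j$. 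Getting the constants to land on precisely the clean form in \eqref{ineq:cosp1} — rather than a messier sufficient condition — is the delicate bookkeeping, but it is routine once the reduction is in place; I would carry it out in Proposition~\ref{propo:cosineSimilairty}'s appendix proof and only state the result here.
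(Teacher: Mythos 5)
Your overall architecture matches the paper's: part (a) comes from the representation in Theorem~\ref{representationTheorem} together with the CLT and LLN (the paper notes the delta method would work but instead computes the limit directly from the representation, via Lemma~\ref{lemma:T_qB_q}), and part (b) is a per-pair concentration bound followed by a union bound over the $k(k-1)/2$ pairs. The gap is in the concrete reduction you commit to for part (b). Separately controlling (i) $\|Rp_i\|^2/\|p_i\|^2$ and $\|Rp_j\|^2/\|p_j\|^2$ by chi-squared tails and (ii) the inner product $(Rp_i,Rp_j)$, and then combining the three deviation bounds algebraically, cannot yield an error of the form $\varepsilon(1-\rho^2)$ with $q$ independent of $\rho$. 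Writing $(Rp_i,Rp_j)/(\|p_i\|\|p_j\|)=\rho+e_1$ and $\|Rp_\ell\|^2/\|p_\ell\|^2=1+e'_\ell$, the cosine error is $e_1-\tfrac{\rho}{2}(e'_i+e'_j)+O(e^2)$; with only marginal bounds $|e_1|\le\varepsilon_1$ and $|e'_\ell|\le\varepsilon_2$ the worst case is $\varepsilon_1+|\rho|\varepsilon_2$, which does not tend to $0$ as $\rho\to\pm1$, whereas the target $\varepsilon(1-\rho^2)$ does. The result is only true because of an exact cancellation (at $\rho=\pm1$ one has $e_1=e'_i=e'_j$ and the error vanishes identically), and that cancellation is destroyed the moment you union-bound the three marginal events. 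So the step ``combine the three concentration statements'' is precisely the step that fails.

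The paper avoids this by keeping the joint structure: after the second rotation in Theorem~\ref{representationTheorem}, the projected cosine equals $\psi\bigl(\tfrac{\rho}{\sqrt{1-\rho^2}}\tfrac{\|N\|}{\|M_{2\ldots q}\|}+\tfrac{M_1}{\|M_{2\ldots q}\|}\bigr)$ with $\psi(h)=h/\sqrt{1+h^2}$, so the only random quantities that need concentrating are the two $\rho$-free ratios $M_1/\|M_{2\ldots q}\|$ (a $t$-type tail, Lemma~\ref{lemma:tailBound1}) and $\|N\|/\|M_{2\ldots q}\|$ (a chi-ratio tail, Lemma~\ref{lemma:tailBound2}); a purely deterministic argument (Lemma~\ref{lemma:cosineConcentration}, using monotonicity of $\psi$ and a mean-value bound on $\psi'$) then shows that control of these two ratios forces the cosine deviation to be at most $\varepsilon(1-\rho^2)/\sqrt{2}$, with the $(1-\rho^2)$ factor emerging from the calculus rather than from the probability. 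Your plan would need to be reorganized around those two statistics (or an equivalent joint treatment) before the constant-chasing you defer could even begin. Two minor points: your stated per-pair failure probability $2(1+\varepsilon^2/4)[\,\cdot\,]^{-q/2}$ is half of the paper's $(4+\varepsilon^2)[\,\cdot\,]^{-q/2}$ from Theorem~\ref{thm:cosineConcentration}, which is what the union bound actually needs to reproduce the numerator $2k(k-1)(1+\varepsilon^2/4)$ in the displayed threshold for $q$; and part (a) is correct as you describe it.
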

\begin{proof}
Item (a) is the statement of Theorem \ref{thm:asymptoticNormality}. Item (b) is a modified statement of Theorem \ref{thm:cosineSimilairtyJL}.
\end{proof}

Note that Proposition \ref{propo:cosineSimilairty} give us guarantees that the error between the projected value and original value is absolute.
Additionally, the closer $|\cos( p_j, p_i)|$ is to 1, the smaller the error.

\begin{table}[t]
\caption{Notation and symbols used in Appendix A}
\begin{tabular}{cc}
    \begin{minipage}[t]{.45\linewidth}
        \begin{tabular}{ll}
            \hline
            $n$ & original dimension\\
            $q$ & dimension of the projection space \\
            $R$ & $n\times q$ random projection matrix \\
            $Q$ & equal to or distributed as $\sqrt{q}R$\\
            $M,N$ & std. normal random vectors in $\R^q$\\
            \hline
        \end{tabular}
    \end{minipage} &

    \begin{minipage}[t]{.55\linewidth}
        \begin{tabular}{ll}
            \hline
            $M_{2\ldots q}$ & vector with entries $(M_2,\ldots, M_q)$ \\
            $x,y, p_1\ldots p_k$ & vectors in $\R^n$\\
            $(\cdot,\cdot)$ & dot product in $\R^n$ or $\R^q$\\
            $\rho$, $\rho_{x,y}$ & cosine similarity\\
            $\mathcal T_q$ & $t$-distribution with parameter $q$ \\
            \hline
        \end{tabular}
    \end{minipage} 
\end{tabular}\label{tableAppA}
\end{table}

\hrulefill
\subsection{General Results and Remarks}\label{sec:general}
\subsubsection{Vector Representation}
For any vector $x$ in $\R^n$ we will denote $\tilde{x}:=x/\|x\|.$ 

Note that for non-zero vectors $x$ and $y$ in $\R^n$, cosine similarity has the property 
\begin{equation}
\frac{(x,y)}{\|x\|\|y\|}=(x/\|x\|,y/\|y\|)= (\tilde{x}, \tilde{y}).
\end{equation}
Also, for any matrix $Q\in \R^{q\times n}$
\begin{equation}\label{eq:directionVectorsProjected}
\frac{(Qx,Qy)}{\|Qx\|\|Qy\|}= \frac{(Q(x/\|x\|),Q(y/\|y\|))}{\|Q(x/\|x\|)\|\|Q(y/\|y\|)\|} = \frac{(Q\tilde{x}, Q\tilde{y})}{\|Q\tilde{x}\|\|Q\tilde{y}\|}
\end{equation}

For unit vectors $\tilde{x}$ and $\tilde{y}$, we know from the Gram-Schmidt process that there exists a unique vector $\tr$ such that
$$\tilde{x}-(\tilde{x},\tilde{y})\tilde{y}=:r_{\tilde{x},\tilde{y}}.$$
Recall that $\tr$ is perpendicular to $\tilde{y}$. Hence, we have
$\|\tr\| = \sqrt{1-(\tilde{x},\tilde{y})^2}$. Hence, we can set $\ttr = \tr/\|\tr\|$ and we have
\begin{equation}\label{eq:rawRepresentation}
\tilde{x}= (\tilde{x},\tilde{y})\tilde{y}+\sqrt{1-(\tilde{x},\tilde{y})^2}\ttr.
\end{equation}
Note that $\tilde{y}$ and $\ttr$ are unit and orthogonal vectors.

We will often use $\rho=\rho_{xy}=\frac{(x,y)}{\|x\|\|y\|}$ to denote the cosine similarity of $x$ and $y$.

We can now summarize our findings.
\begin{proposition}\label{propo:vectorRepresentation}
For any two non-zero vectors $x$ and $y$ in $\R^n$ there exists a unique unit vector $\tr$ orthogonal to $\tilde{y}$ such that
\begin{equation*}
\tilde{x}= \rho\tilde{y}+\sqrt{1-\rho^2}\tr,
\end{equation*}
where $\rho$ is the cosine similarity between $x$ and $y$.
\end{proposition}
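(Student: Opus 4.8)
The plan is to recognize this statement as precisely the Gram--Schmidt orthogonalization of $\tx$ against $\tilde{y}$, combined with a Pythagorean norm computation, so that the proof amounts to equation~\eqref{eq:rawRepresentation} together with a short bookkeeping argument. First I would reduce to the unit sphere: writing $\tx := x/\|x\|$ and $\tilde{y} := y/\|y\|$, the cosine similarity satisfies $\rho = (\tx,\tilde{y})$, so it suffices to produce the claimed decomposition of the unit vector $\tx$ in terms of $\tilde{y}$.

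Next I would set $r := \tx - \rho\,\tilde{y}$ (this is the vector called $r_{\tilde{x},\tilde{y}}$ in the discussion preceding the proposition) and verify that $(r,\tilde{y}) = (\tx,\tilde{y}) - \rho(\tilde{y},\tilde{y}) = \rho - \rho = 0$, so $r$ is orthogonal to $\tilde{y}$. Since $\rho\,\tilde{y}$ and $r$ are orthogonal, the Pythagorean identity gives $1 = \|\tx\|^2 = \rho^2\|\tilde{y}\|^2 + \|r\|^2 = \rho^2 + \|r\|^2$; here the Cauchy--Schwarz inequality guarantees $\rho^2 \le 1$, so $\|r\| = \sqrt{1-\rho^2}$ is well defined. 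When $\rho^2 < 1$ I would then put $\tr := r/\|r\|$, a unit vector orthogonal to $\tilde{y}$, and substituting back yields $\tx = \rho\,\tilde{y} + \sqrt{1-\rho^2}\,\tr$, the asserted form. Uniqueness in this case is immediate: if $\tr'$ is another unit vector orthogonal to $\tilde{y}$ with $\tx = \rho\,\tilde{y} + \sqrt{1-\rho^2}\,\tr'$, then subtracting the two representations gives $\sqrt{1-\rho^2}\,(\tr - \tr') = 0$, hence $\tr = \tr'$.

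The only delicate point — and the place where the statement must be read with care — is the degenerate case $|\rho| = 1$. There the equality case of Cauchy--Schwarz forces $\tx = \pm\tilde{y}$, so $r = 0$ and the coefficient $\sqrt{1-\rho^2}$ multiplying $\tr$ vanishes; the identity $\tx = \rho\,\tilde{y}$ then holds with $\tr$ being \emph{any} unit vector orthogonal to $\tilde{y}$, so existence still holds but strict uniqueness fails (as soon as the orthogonal complement of $\tilde{y}$ has dimension at least one and more than two unit vectors, i.e. $n \ge 3$). I would therefore either append the standing hypothesis $\rho \ne \pm 1$ to the uniqueness clause, or simply remark that the representation itself remains valid in the degenerate case with $\tr$ chosen arbitrarily among the unit vectors orthogonal to $\tilde{y}$. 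No step requires a computation beyond those just sketched, so there is no substantial obstacle; the content is entirely the normalization, the orthogonality check, and the Pythagorean norm identity.
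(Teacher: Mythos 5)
Your proof is correct and follows exactly the route the paper takes (the informal derivation of equation~\eqref{eq:rawRepresentation} preceding the proposition): Gram--Schmidt against $\tilde{y}$, the orthogonality check, the Pythagorean identity $\|r\|^2 = 1-\rho^2$, and normalization. Your observation about the degenerate case $|\rho|=1$ --- where $r=0$, the normalization is undefined, and uniqueness of $\tr$ fails --- is a legitimate point that the paper silently glosses over, so flagging it is a small improvement rather than a deviation.
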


\subsubsection{Random Projection Properties}
In this paper, $Q$ will denote a random $q\times n$ matrix whose entries $Q_{ij}\sim \cN(0,1)$ are i.i.d. 
Random vectors consisting of i.i.d. $\cN(0,1)$ random variables we will call standard normal random vectors. Note, $Q$ can be obtained by setting $Q:=\sqrt{q}P$.

\begin{lemma}\label{lemma:orthogonalIndependence}
Let $x$ and $y$ be two orthogonal unit vectors in $\R^n$, i.e. $(x,y)=0$ and $\|x\|=\|y\|=1$. Then
$Q x$ and $Q y$ are two independent identically distributed standard normal random vectors of dimension $q$ . 
\end{lemma}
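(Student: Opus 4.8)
The statement to prove is Lemma~\ref{lemma:orthogonalIndependence}: if $x,y \in \R^n$ are orthogonal unit vectors, then $Qx$ and $Qy$ are independent i.i.d. standard normal random vectors in $\R^q$, where $Q$ has i.i.d. $\cN(0,1)$ entries.

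\textbf{Proof plan.} The plan is to work coordinate by coordinate on the output. Write $Q$ in terms of its rows $Q_{1*},\ldots,Q_{q*}$, each an independent standard normal vector in $\R^n$. Then the $i$-th coordinate of $Qx$ is $(Q_{i*},x)$ and the $i$-th coordinate of $Qy$ is $(Q_{i*},y)$. Since the rows are independent, the pairs $\bigl((Q_{i*},x),(Q_{i*},y)\bigr)$ for $i=1,\ldots,q$ are independent across $i$, so it suffices to prove the claim for a single row: if $Z$ is a standard normal vector in $\R^n$ and $x\perp y$ are unit vectors, then $(Z,x)$ and $(Z,y)$ are independent $\cN(0,1)$ random variables.

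For the single-row statement, I would invoke the standard fact that a linear image of a standard Gaussian vector is Gaussian: the vector $\bigl((Z,x),(Z,y)\bigr) = B Z$ where $B \in \R^{2\times n}$ has rows $x^\top$ and $y^\top$, hence it is a centered Gaussian vector in $\R^2$ with covariance matrix $B B^\top$. Compute $B B^\top$: the $(1,1)$ entry is $\|x\|^2 = 1$, the $(2,2)$ entry is $\|y\|^2 = 1$, and the off-diagonal entries are $(x,y) = 0$. So the covariance is the $2\times 2$ identity, which means $(Z,x)$ and $(Z,y)$ are uncorrelated jointly Gaussian, hence independent, each $\cN(0,1)$. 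Reassembling: the coordinates of $Qx$ are i.i.d. $\cN(0,1)$ (taking $y=x$ in the above, or just noting $(Q_{i*},x)\sim\cN(0,1)$ and independence across $i$), similarly for $Qy$, and the full collection $\{(Q_{i*},x)\}_i \cup \{(Q_{i*},y)\}_i$ is jointly Gaussian with identity covariance by the per-row computation plus independence across rows, giving the desired mutual independence of all $2q$ entries.

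I do not expect any serious obstacle here — the only things to be careful about are (i) stating cleanly why "jointly Gaussian + uncorrelated $\Rightarrow$ independent" applies (it is the standard characterization of the multivariate normal, and here the joint vector $(Qx, Qy)$ is genuinely jointly Gaussian because it is a fixed linear map applied to the Gaussian vector $\mathrm{vec}(Q)$), and (ii) making the bookkeeping between "rows of $Q$" and "coordinates of the outputs" explicit so the independence across the index $i$ and the independence between the $x$- and $y$-streams are both accounted for. The whole argument is a couple of lines once the linear-image-of-Gaussian fact is cited.
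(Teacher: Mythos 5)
Your proposal is correct and follows essentially the same route as the paper's proof: both reduce the claim to checking that the $2q$ output coordinates are jointly Gaussian with zero cross-covariances, the key computation being $\E[(Qx)_k(Qy)_k]=(x,y)=0$ together with $\|x\|=\|y\|=1$ (your $BB^\top=I_2$ is exactly this). If anything, your write-up is slightly more careful than the paper's, since you state explicitly that the full collection is jointly Gaussian (as a linear image of $\mathrm{vec}(Q)$) before invoking "uncorrelated $\Rightarrow$ independent," a step the paper leaves implicit.
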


\begin{proof}
We can show that $(Qx)_1$, \ldots, $(Qx)_q$, $(Qy)_1$, \ldots, $(Qy)_q$ are i.i.d and $\cN(0,1)$.

Note that
\begin{equation}\label{eq:(Qx)_k}
(Qx)_k=\sum_{j=1}^n Q_{kj}x_j\sim \cN(0,\|x\|^2),
\end{equation}
and 
\begin{equation}\label{eq:(Qy)_k}
(Qy)_k=\sum_{j=1}^n Q_{kj}y_j\sim \cN(0,\|y\|^2).
\end{equation}
Therefore, all of these random variables have the distribution $\cN(0,1)$.
From the definition of $Q$, \eqref{eq:(Qx)_k} and \eqref{eq:(Qy)_k} we can conclude that the sequences $(Qx)_1, (Qx)_2, \ldots, (Qx)_q$ and $(Qy)_1, (Qy)_2, \ldots, (Qy)_q$ are i.i.d.

Finally, let us look at the covariance between $(Qx)_k$  and $(Qy)_l$.
\begin{equation}
\E[(Qx)_k(Qy)_l]=\E\left[\left(\sum_{j=1}^n Q_{kj}x_j\right)\left(\sum_{j=1}^n Q_{lj}y_j\right)\right]\label{eq:prodExpectation}
\end{equation}
If $l\neq k$ then two sums under the expectation are independent and hence
$$
\eqref{eq:prodExpectation} = \E\left[\left(\sum_{j=1}^n Q_{kj}x_j\right)\right]\E\left[\left(\sum_{j=1}^n Q_{lj}y_j\right)\right]= 0 \cdot 0.
$$
Else, if $l=k$ then:
\begin{align*}
\eqref{eq:prodExpectation}&=\E\left[\sum_{j=1}^n Q_{kj}^2x_jy_j +2\sum_{i\neq j}Q_{kj}Q_{ki}x_jy_i\right]\\
&=\sum_{j=1}^n \underbrace{\E[Q_{kj}^2]}_1x_jy_j +2\sum_{i\neq j}\underbrace{\E[Q_{kj}Q_{ki}]}_{0}x_jy_i\\
&=\sum_{j=1}^n x_jy_j = (x,y) =0.
\end{align*} 
The claim now follows.
\end{proof}

Recall that $\tR=q^{-1/2}Q$ is a random $q\times n$ matrix whose entries $\tR_{ij}\sim \cN(0,1/q)$ are i.i.d. Note that, due to the scaling properties,
\begin{equation}\label{eq:similarityMatrixScaling}
\frac{(Qx,Qy)}{\|Qx\|\|Qy\|}=\frac{(\tR x,\tR y)}{\|\tR x\|\|\tR y\|}.
\end{equation}
We will state all the main results in the term of the matrix $R$, however $Q$ will let us simplify some proofs.

\hrulefill
\subsection{Rotation Argument and Orthogonal Representation}\label{sec:rotationArgument}

Let $x$ and $y$ be vectors in $\R^n$ and $\rho =\frac{(x,y)}{\|x\|\|y\|} $. 
We will show the following result.

\begin{thm}\label{representationTheorem}
There exist independent standard normal random vectors $M, N$ of length $q$, such that\par
(a) For the dot product we have
\begin{equation}
(Qx,Qy)=(\sqrt{q}Rx,\sqrt{q}Ry) =\|x\|\|y\| (\rho\|N\|^2 + M_1\|N\|\sqrt{1-\rho^2}).\label{eq:dotProductRepresentation}
\end{equation}

(b) For the cosine similarity
\begin{equation}\label{eq:cosSimilaritytRepresentation}
\frac{(Qx,Qy)}{\|Qx\|\|Qy\|}=\frac{(Rx,Ry)}{\|Rx\|\|Ry\|}\\= \frac{\rho\|N\| + M_1\sqrt{1-\rho^2}}{\sqrt{(\rho\|N\|+M_1\sqrt{1-\rho^2})^2 + (1-\rho^2)\|M_{2\ldots q}\|^2}},
\end{equation}
where $M_j$ is the $j$-th component of the vector $M$ and $M_{2\ldots q}=(M_2, \ldots, M_q)$.
\end{thm}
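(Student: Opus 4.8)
The plan is to prove Theorem~\ref{representationTheorem} by combining the orthogonal decomposition of Proposition~\ref{propo:vectorRepresentation} with the rotational invariance of Gaussian random vectors (Lemma~\ref{lemma:orthogonalIndependence}). First I would write $x = \|x\|\tx$ and $y = \|y\|\ty$, and use Proposition~\ref{propo:vectorRepresentation} to get $\tx = \rho\ty + \sqrt{1-\rho^2}\,\tr$ where $\tr$ is a unit vector orthogonal to $\ty$. By Lemma~\ref{lemma:orthogonalIndependence} applied to the orthonormal pair $(\ty,\tr)$, the vectors $N := Q\ty$ and $M := Q\tr$ are independent standard normal random vectors in $\R^q$. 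By linearity, $Qx = \|x\|(\rho\, N + \sqrt{1-\rho^2}\,M)$ and $Qy = \|y\|N$.

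For part (a), I would compute the dot product directly:
\begin{equation*}
(Qx,Qy) = \|x\|\|y\|\,\bigl(\rho\,(N,N) + \sqrt{1-\rho^2}\,(M,N)\bigr) = \|x\|\|y\|\,\bigl(\rho\|N\|^2 + \sqrt{1-\rho^2}\,(M,N)\bigr).
\end{equation*}
The only nontrivial point is to rewrite $(M,N)$ in the claimed form $M_1\|N\|$. Here I would invoke rotational symmetry: conditionally on $N$, the vector $M$ is still standard normal (by independence of $M$ and $N$), so $(M,N)$ is conditionally $\cN(0,\|N\|^2)$, and one may realize it as $M_1\|N\|$ by choosing an orthogonal rotation sending $N/\|N\|$ to the first coordinate vector $e_1$ — such a rotation fixes the law of $M$ and fixes $N$, so it produces a version of $(M,N,M_1)$ with $(M,N) = \|N\| M_1$ and $M$ still standard normal independent of $N$. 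This gives \eqref{eq:dotProductRepresentation}. Since $\tR = q^{-1/2}Q$, the identity $(Qx,Qy) = (\sqrt{q}Rx,\sqrt{q}Ry)$ is immediate.

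For part (b), I would use the same $M,N$ (after the rotation) and the scaling invariance \eqref{eq:directionVectorsProjected}, \eqref{eq:similarityMatrixScaling}, so that the $\|x\|,\|y\|$ factors cancel and it suffices to compute $\frac{(Q\tx, Q\ty)}{\|Q\tx\|\|Q\ty\|}$. The numerator is $\|N\|(\rho\|N\| + \sqrt{1-\rho^2}\,M_1)$ from part (a). For $\|Q\tx\|$ I would expand
\begin{equation*}
\|Q\tx\|^2 = \|\rho N + \sqrt{1-\rho^2}\,M\|^2 = \rho^2\|N\|^2 + 2\rho\sqrt{1-\rho^2}\,(M,N) + (1-\rho^2)\|M\|^2,
\end{equation*}
then substitute $(M,N) = \|N\|M_1$ and split $\|M\|^2 = M_1^2 + \|M_{2\ldots q}\|^2$, which after regrouping yields $(\rho\|N\| + \sqrt{1-\rho^2}\,M_1)^2 + (1-\rho^2)\|M_{2\ldots q}\|^2$. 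Dividing numerator by $\|N\|$ times the square root of this expression gives \eqref{eq:cosSimilaritytRepresentation}.

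The main obstacle — really the only subtle point — is justifying the reduction $(M,N) = \|N\| M_1$ with $M$ remaining standard normal and independent of $N$. I would handle this carefully by the rotation argument: let $U = U(N)$ be a (measurable) orthogonal matrix with $U\,N/\|N\| = e_1$; since $M$ is independent of $N$ and standard normal, $UM$ is again standard normal and independent of $N$ (rotational invariance of the isotropic Gaussian, conditioning on $N$), and $(M,N) = (UM, UN) = \|N\|\,(UM)_1$. Replacing $M$ by $UM$ throughout — legitimate because everything in the statement depends on $M$ only through $M_1$, $\|M_{2\ldots q}\|^2$ (equivalently $\|M\|$), and $(M,N)$, all of which are preserved — delivers the stated representation. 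The rest is bookkeeping.
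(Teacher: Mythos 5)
Your proposal is correct and follows essentially the same route as the paper: the Gram--Schmidt decomposition $\tx=\rho\ty+\sqrt{1-\rho^2}\,\tr$, Lemma~\ref{lemma:orthogonalIndependence} to get independent standard normal vectors $N=Q\ty$ and $M=Q\tr$, and the rotation $U_{\tilde N}$ (Lemma~\ref{lem:rotation_argument}) to realize $(M,N)=\|N\|M_1$ while preserving the joint law. The only difference is presentational --- the paper applies $U_{\tilde N}$ to all arguments of the inner products and norms at once, while you first expand and then rotate --- and you correctly identify and justify the one subtle step.
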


The rotation argument is powered by the the following Lemma. Recall that $\tilde{N}=N/\|N\|$.

\begin{lemma}\label{lem:rotation_argument}
Let $H$ and $N$ be independent standard normal random vectors of dimension $q$, and $U_{\tilde{N}}$ an orthogonal $q\times q$ matrix  such that $U_{\tilde{N}}\tilde{N}=e_1$. 
Then random vectors $U_{\tilde{N}}H$ and $N$ are independent standard normal. 
\end{lemma}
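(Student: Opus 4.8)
The statement to prove is Lemma~\ref{lem:rotation_argument}: if $H$ and $N$ are independent standard normal random vectors in $\R^q$, and $U_{\tilde N}$ is an orthogonal matrix (depending measurably on $N$ through $\tilde N = N/\|N\|$) with $U_{\tilde N}\tilde N = e_1$, then $U_{\tilde N}H$ and $N$ are independent, each standard normal.

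The plan is to condition on $N$ and exploit the rotational invariance of the standard Gaussian measure on $\R^q$. First I would fix an arbitrary orthogonal matrix $O\in\R^{q\times q}$ and recall the elementary fact that $OH \stackrel{d}{=} H$: since $H\sim\cN(0,I_q)$, the vector $OH$ has mean zero and covariance $O I_q O^\top = I_q$, and being a linear image of a Gaussian it is Gaussian, hence $OH\sim\cN(0,I_q)$. Next I would use independence of $H$ and $N$: for any bounded measurable $f:\R^q\to\R$ and $g:\R^q\to\R$,
\begin{equation*}
\E\!\left[f(U_{\tilde N}H)\,g(N)\right] = \E\!\left[\,\E\!\left[f(U_{\tilde N}H)\mid N\right] g(N)\right].
\end{equation*}
Conditionally on $N$, the matrix $U_{\tilde N}$ is a (deterministic) orthogonal matrix while $H$ retains its unconditional $\cN(0,I_q)$ law by independence, so $\E[f(U_{\tilde N}H)\mid N] = (\E[f(OH)])\big|_{O = U_{\tilde N}} = \E[f(H)]$ by the invariance fact above, and crucially this value does not depend on $N$.

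Substituting back, $\E[f(U_{\tilde N}H)g(N)] = \E[f(H)]\,\E[g(N)]$, which simultaneously shows (taking $g\equiv 1$) that $U_{\tilde N}H\sim\cN(0,I_q)$ and (comparing with the product form) that $U_{\tilde N}H$ and $N$ are independent. I would also note in passing that $N$ itself is standard normal by hypothesis, so all three assertions — $U_{\tilde N}H$ standard normal, $N$ standard normal, and the two independent — are established. The existence of a measurable choice $N\mapsto U_{\tilde N}$ with $U_{\tilde N}\tilde N=e_1$ is a standard Gram–Schmidt/Householder construction and can be taken for granted.

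The main obstacle, such as it is, is the conditioning step: one must be careful that $U_{\tilde N}$ is $\sigma(N)$-measurable so that conditioning on $N$ "freezes" $U_{\tilde N}$ into a constant orthogonal matrix, while the independence of $H$ from $N$ guarantees the conditional law of $H$ is still $\cN(0,I_q)$. This is a routine application of the "freezing lemma" (or disintegration) for conditional expectations with an independent factor, but it is the one place where rigor is needed rather than a bare symmetry appeal. Everything else is the orthogonal invariance of the isotropic Gaussian.
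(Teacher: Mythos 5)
Your proposal is correct and follows essentially the same route as the paper: the paper's proof also combines orthogonal invariance of the isotropic Gaussian ($U H \stackrel{d}{=} H$ for any fixed orthogonal $U$) with a disintegration over the law of $N$, written there as an explicit integral $\P(U_{\tilde N}H\in A, N\in B)=\int_{z\in B}\P(U_{\tilde z}H\in A)f_N(z)\,dz$ rather than via the conditional-expectation ``freezing'' formulation you use. The two presentations are equivalent, and your attention to the measurability of $N\mapsto U_{\tilde N}$ is a point the paper leaves implicit.
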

\begin{proof}
Note that for every orthogonal matrix $U$, vector $UH$ is normal random with expectation $0$ and identity covariance matrix. Therefore, it has the same distribution as $H$. 

We will show that  $(U_{\tilde{N}}H,N)$ has the same distribution as $(H,N)$. Let $f_N$ be the density of $N$ and since $N$ is independent of $H$ we have
\begin{align*}
 \P(U_{\tilde{N}}H\in A,N\in B)&=\int_{z\in B} \P(U_{\tilde{z}}H\in A) f_N(z)dz\\
 &\stackrel{U_{\tilde{z}}H\stackrel{d}{=}H}{=}\int_{z\in B} \P(H\in A) f_N(z)dz= \P(H\in A,N\in B) 
\end{align*}
for all measurable subsets $A,B$ of $\R^q$.
\end{proof}

\begin{proof}[Proof of Theorem \ref{representationTheorem}]
Note that $Qx =\rho Qy +\sqrt{1-\rho^2}Q\ttr$. Since, $(y,\ttr) =0$, by Lemma \ref{lemma:orthogonalIndependence} we know that $H=R\ttr$ and $N=Ry$ are independent standard normal random variables.
Hence, we have 
\begin{equation}
\frac{(Qx,Qy)}{\|Qx\|\|Qy\|}  = \frac{(\rho N + \sqrt{1-\rho^2} H, N)}{\|\rho N + \sqrt{1-\rho^2} H\|\|N\|}= \frac{(\rho N + \sqrt{1-\rho^2} H, \tilde{N})}{\|\rho N + \sqrt{1-\rho^2} H\|}.
\label{eq:preRotation}
\end{equation}
We now pick the orthogonal $q\times q$ matrix $U_{\tilde{N}}$ for which $U_{\tilde{N}}\tilde{N} = e_1$. Applying the $U_{\tilde{N}}$-transformation to all arguments in \eqref{eq:preRotation} we get
$$
\frac{(Qx,Qy)}{\|Qx\|\|Qy\|}= \frac{(U_{\tilde{N}}(\rho \|N\|\tilde{N} + \sqrt{1-\rho^2} H), U_{\tilde{N}}(\tilde{N}))}{\|U_{\tilde{N}}(\rho \|N\|\tilde{N} + \sqrt{1-\rho^2} H)\|}= \frac{(\rho \|N\|e_1 + \sqrt{1-\rho^2} U_{\tilde{N}}H, e_1)}{\|\rho \|N\|e_1 + \sqrt{1-\rho^2} U_{\tilde{N}}H\|}.
$$
Setting $M:=U_{\tilde{N}}H$. By Lemma \ref{lem:rotation_argument}, $M$ remains standard normal random vector independent of $N$. 

We can apply the same arguments to the dot product:
\begin{align*}
(Qx,Qy)& = (\rho N + \sqrt{1-\rho^2} H, N)= \|N\|(\rho N + \sqrt{1-\rho^2} H, \tilde{N})\\
&= \|N\|(U_{\tilde{N}}(\rho \|N\|\tilde{N} + \sqrt{1-\rho^2} H), U_{\tilde{N}}(\tilde{N}))= \|N\|(\rho \|N\|e_1 + \sqrt{1-\rho^2} U_{\tilde{N}}H, e_1)\\
&=\rho\|N\|^2+\sqrt{1-\rho^2}(U_{\tilde{N}}H, e_1)\|N\|=\rho\|N\|^2+\sqrt{1-\rho^2}M_1\|N\|.
\end{align*}
\end{proof}

\hrulefill
\subsection{Asymptotic and Finite-sample Results for the Dot product}\label{sec:dot_product_results}

In this section we analyze how the dot product changes under random projection. Some of the results here are well known, but for completeness and to demonstrate the power of the representation in Theorem \ref{representationTheorem}, we will prove them.

\subsubsection{Probability of sign change}

One of the practical questions when doing the random projections is will the dot product or cosine similarity change the sign. In practical settings vectors $x$ and $y$ can have all non-negative components and as such their dot product is also non-negative. 
Can the projection change that?
The representation for the dot product gives us an exact answer to that question.

\begin{proposition}\label{propo:signChange}
(a) If $(x,y)=0$ and $\|x\|\|y\|\neq 0$ then $\P((Rx,Ry)<0)=\P((Rx,Ry)>0)=\frac{1}{2}$.

(b) If $(x,y)\neq 0$ and $|\rho| \neq 1$ then 
$$\P\left(\frac{(Rx,Ry)}{(x,y)}<0\right)= \P\left(\cT_q>\frac{|\rho|\sqrt{q}}{\sqrt{1-\rho^2}}\right),$$
where $\cT_q$ is the $t$-distribution with the parameter $q$.
\end{proposition}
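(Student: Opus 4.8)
The plan is to use the orthogonal representation from Theorem~\ref{representationTheorem}(a), which writes
\[
(Qx,Qy) = \|x\|\|y\|\bigl(\rho\|N\|^2 + M_1\|N\|\sqrt{1-\rho^2}\bigr),
\]
with $M,N$ independent standard normal random vectors of length $q$, and $Q$ distributed as $\sqrt{q}R$; by the scaling property \eqref{eq:similarityMatrixScaling} the sign of $(Rx,Ry)$ agrees with the sign of $(Qx,Qy)$. Since $\|x\|\|y\| > 0$ throughout, in all cases we only need to analyze the sign of $\rho\|N\|^2 + M_1\|N\|\sqrt{1-\rho^2}$. Dividing by $\|N\| > 0$ (which is almost surely nonzero), this has the same sign as $\rho\|N\| + M_1\sqrt{1-\rho^2}$.

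For part (a), when $(x,y)=0$ we have $\rho = 0$, so the expression reduces to $M_1\sqrt{1-\rho^2}\|N\| = M_1\|N\|$, whose sign is the sign of $M_1 \sim \cN(0,1)$; hence both events have probability $\tfrac12$, and the probability of the value being exactly $0$ is zero. For part (b), assume $(x,y)\neq 0$ and $|\rho|\neq 1$; then $\sqrt{1-\rho^2} \in (0,1)$. We want $\P\bigl((Rx,Ry)/(x,y) < 0\bigr)$, i.e. the probability that $\rho\|N\| + M_1\sqrt{1-\rho^2}$ has sign opposite to $\rho$. Writing this out and dividing by $|\rho| > 0$, this event is
\[
\Bigl\{\operatorname{sgn}(\rho)\bigl(\rho\|N\| + M_1\sqrt{1-\rho^2}\bigr) < 0\Bigr\}
= \Bigl\{|\rho|\,\|N\| < -\operatorname{sgn}(\rho)M_1\sqrt{1-\rho^2}\Bigr\}.
\]
Since $M_1$ is symmetric and independent of $N$, $-\operatorname{sgn}(\rho)M_1 \stackrel{d}{=} M_1$, so this probability equals $\P\bigl(|\rho|\,\|N\| < M_1\sqrt{1-\rho^2}\bigr) = \P\bigl(M_1 > \tfrac{|\rho|}{\sqrt{1-\rho^2}}\|N\|\bigr)$.

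It remains to identify this with a $t$-distribution tail. Recall $\|N\|^2 = \sum_{j=1}^q N_j^2$ is a $\chi^2_q$ variable independent of $M_1$, so by the standard definition $M_1 / (\|N\|/\sqrt{q}) = \sqrt{q}\, M_1/\|N\| \sim \cT_q$. Dividing the inequality inside the last probability by $\|N\|/\sqrt{q} > 0$ gives
\[
\P\left(\sqrt{q}\,\frac{M_1}{\|N\|} > \frac{|\rho|\sqrt{q}}{\sqrt{1-\rho^2}}\right) = \P\left(\cT_q > \frac{|\rho|\sqrt{q}}{\sqrt{1-\rho^2}}\right),
\]
as claimed. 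The only mild subtlety — really the one point needing care rather than a genuine obstacle — is checking the independence of $M_1$ and $\|N\|$ (immediate since $M$ and $N$ are independent by Theorem~\ref{representationTheorem}) and confirming that the sign manipulations using $\operatorname{sgn}(\rho)$ and the symmetry of $M_1$ correctly turn the "sign flip relative to $(x,y)$" event into the clean one-sided tail; once the representation theorem is in hand, everything else is bookkeeping.
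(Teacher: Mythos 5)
Your proposal is correct and follows essentially the same route as the paper: both rest on the representation $(Qx,Qy)=\|x\|\|y\|(\rho\|N\|^2+M_1\|N\|\sqrt{1-\rho^2})$ from Theorem~\ref{representationTheorem}(a) and the identification of $\sqrt{q}\,M_1/\|N\|$ as a $\cT_q$ variable. The only cosmetic differences are that the paper reduces to $\rho>0$ without loss of generality and invokes the symmetry of the $t$-distribution at the end, whereas you carry $\operatorname{sgn}(\rho)$ through and use the symmetry of $M_1$ instead; these are equivalent bookkeeping choices.
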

\begin{proof}
In the case (a) $\rho =0$ and the claim follows from the \eqref{eq:dotProductRepresentation}, since the sign depends on the sign of $M_1$ which can be positive or negative with probability $\frac{1}{2}$. In (b) we can assume $(x,y)>0$ and therefore $\rho >0$. Hence,
\begin{align*}
& \P\left(\frac{(Rx,Ry)}{(x,y)}<0\right) = \P\left((Rx,Ry)<0\right)=\P\left((q^{-1/2}Qx,q^{-1/2}Qy)<0\right)\\
&=\P\left((Qx,Qy)<0\right)=\P\left(\rho\|N\|^2 + M_1\|N\|\sqrt{1-\rho^2}<0\right) \\
&=\P\left( \frac{M_1}{\|N\|}<-\frac{\rho}{\sqrt{1-\rho^2}}\right) =\P\left( \frac{M_1}{\sqrt{\|N\|^2/q}}<-\frac{\rho\sqrt{q}}{\sqrt{1-\rho^2}}\right).
\end{align*}
Since $\cT_q:=\frac{M_1}{\sqrt{\|N\|^2/p}}\sim t(q)$ the claim follows from the fact that $t(q)$ is symmetric random variable for the last expression we have
\begin{align*}
&=\P\left( \cT_q<-\frac{\rho\sqrt{q}}{\sqrt{1-\rho^2}}\right)=\P\left( \cT_q>\frac{\rho\sqrt{q}}{\sqrt{1-\rho^2}}\right).
\end{align*}
\end{proof}

As we can see from the graph on Figure \ref{Figure:signChange} the probability of changing a sign is the highest - $\frac{1}{2}$ for $\rho=0$ and and quickly becomes really low.  From \cite[Corollary 3.2.2.]{KabanDotProduct} we know that the $ \P\left(\frac{(Rx,Ry)}{(x,y)}<0\right) \leq \exp(-q\rho^2/8)$ and the result here gives the exact value of the probability.

\begin{figure}[ht]
\begin{center}
\includegraphics[width=70mm]{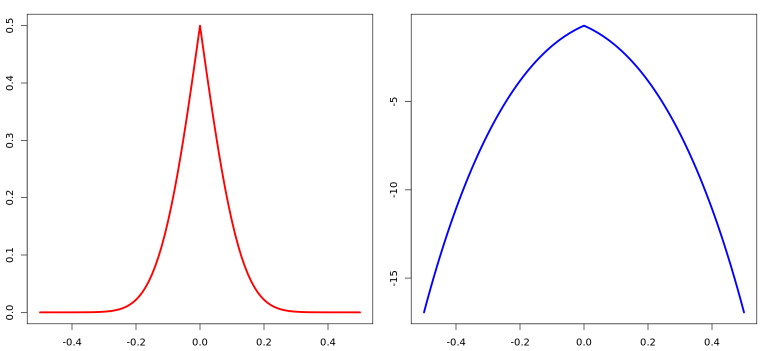}
\caption{Graphs of functions $\rho \mapsto \P\left(\cT_q>\frac{|\rho|\sqrt{q}}{\sqrt{1-\rho^2}}\right) $ and  $\rho \mapsto \log\P\left(\cT_q>\frac{|\rho|\sqrt{q}}{\sqrt{1-\rho^2}}\right) $ for $q=100$ on the interval $[-0.5,0.5]$.}\label{Figure:signChange}
\end{center}
\end{figure}

This results tells us that the probability of the dot product sign change under random projection is a function of $\rho$, which in practice we don't know. If we wanted to calculate a probability of a cosine similarity or dot product changing signs in case of a large number of vectors it would be hard to do so. However we can still conclude that if $\rho$ is close to zero we can expect the sign to change, while if it is reasonably far from 0 it is unlikely that the sign will change.

\subsubsection{Central Limit Theorem}
The representation given in \eqref{eq:dotProductRepresentation} has many consequences. For example, we can simply calculate the expectation and the variance of the dot product.

\begin{proposition}
We have $\E\left[\frac{(\tR x,\tR y)}{\|x\|\|y\|}\right]= \rho$ and ${\rm Var}\left[\frac{(\tR x,\tR y)}{\|x\|\|y\|}\right] = \frac{1+\rho^2}{q}$.
\end{proposition}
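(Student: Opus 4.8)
The plan is to compute the first two moments directly from the orthogonal representation of the dot product in equation~\eqref{eq:dotProductRepresentation}. Recall from Theorem~\ref{representationTheorem}(a) that, since $\tR = q^{-1/2}Q$, we have $(\tR x,\tR y)/(\|x\|\|y\|) = \frac{1}{q}\big(\rho\|N\|^2 + M_1\|N\|\sqrt{1-\rho^2}\big)$, where $M$ and $N$ are independent standard normal random vectors of dimension $q$. So the whole computation reduces to evaluating $\E$ and $\mathrm{Var}$ of $\rho\|N\|^2 + M_1\|N\|\sqrt{1-\rho^2}$ and then dividing by $q$ (for the mean) or $q^2$ (for the variance).

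For the expectation: by independence and $\E[M_1]=0$, the cross term vanishes, leaving $\E[\rho\|N\|^2] = \rho\,\E\|N\|^2 = \rho q$, since $\|N\|^2 \sim \chi^2_q$ has mean $q$. Dividing by $q$ gives $\E\big[(\tR x,\tR y)/(\|x\|\|y\|)\big] = \rho$. For the variance, I would write $Z := \rho\|N\|^2 + M_1\|N\|\sqrt{1-\rho^2}$ and compute $\E[Z^2]$, then use $\mathrm{Var}(Z) = \E[Z^2] - (\E Z)^2 = \E[Z^2] - \rho^2 q^2$. Expanding $Z^2$ gives three terms: $\rho^2\|N\|^4$, $2\rho\sqrt{1-\rho^2}\,M_1\|N\|^3$, and $(1-\rho^2)M_1^2\|N\|^2$. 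The middle term has expectation zero since $\E[M_1]=0$ and $M_1 \perp N$. For the first term, $\E\|N\|^4 = \E[(\chi^2_q)^2] = q(q+2)$ (variance of $\chi^2_q$ is $2q$, so second moment is $2q + q^2$). For the third term, $\E[M_1^2] = 1$ and $\E\|N\|^2 = q$, giving $(1-\rho^2)q$. Hence $\E[Z^2] = \rho^2 q(q+2) + (1-\rho^2)q = \rho^2 q^2 + 2\rho^2 q + q - \rho^2 q$, so $\mathrm{Var}(Z) = 2\rho^2 q + q - \rho^2 q = q(1 + \rho^2)$. Dividing by $q^2$ yields $\mathrm{Var}\big[(\tR x,\tR y)/(\|x\|\|y\|)\big] = (1+\rho^2)/q$, as claimed.

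There is no real obstacle here — the only things to be careful about are the standard moments of the chi-squared distribution ($\E[\chi^2_q] = q$, $\E[(\chi^2_q)^2] = q^2 + 2q$) and making sure the independence of $M_1$ and $\|N\|$ is invoked correctly to kill the cross terms. If one wanted to avoid citing chi-squared moments, one could instead expand $\|N\|^2 = \sum_{i=1}^q N_i^2$ componentwise and use $\E[N_i^2]=1$, $\E[N_i^4]=3$, but the chi-squared route is cleaner. The identity $\mathrm{Var}(Z) = \E[Z^2] - (\E Z)^2$ does all the bookkeeping.
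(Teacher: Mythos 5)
Your proof is correct and follows essentially the same route as the paper: both start from the representation $(\tR x,\tR y)/(\|x\|\|y\|)=q^{-1}(\rho\|N\|^2+M_1\|N\|\sqrt{1-\rho^2})$ of Theorem~\ref{representationTheorem}(a), kill the cross terms via independence and $\E[M_1]=0$, use $\E\|N\|^2=q$ and $\E\|N\|^4=q(q+2)$, and finish with $\mathrm{Var}(Z)=\E[Z^2]-(\E Z)^2$. (Incidentally, your computation also makes clear that the paper's stated ``$\E[M_1]=1$'' is a typo for $\E[M_1]=0$.)
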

\begin{proof}
Recall that $\E[M_1]=1$ and $\E[M_1^2]=1$; $\E[\|N\|^2]=q$ and $\E[\|N\|^4]=q(q+2)$. Using \eqref{eq:dotProductRepresentation} we have 
\begin{align*}
\E\left[\frac{(\tR x,\tR y)}{\|x\|\|y\|}\right]&
=q^{-1}(\rho \E[\|N\|^2]+\E[\|N\|]\E[M_1]\sqrt{1-\rho^2})\\
&=q^{-1}\rho q +\E[\|N\|]\cdot  0 \cdot  \sqrt{1-\rho^2}= \rho,
\end{align*}
and 
\begin{align*}
&\E\left[\left(\frac{(\tR x,\tR y)}{\|x\|\|y\|}\right)^2\right]\\
&= q^{-2}\rho^2\E[ \|N\|^4]+2q^{-2}\E[\|N\|^3]\E[M_1]\rho\sqrt{1-\rho^2}+q^{-2}\E[\|N\|^2]\E[M_1^2](1-\rho^2)]\\
&=q^{-2}\rho^2q(q+2)+0+q^{-2}\cdot q\cdot 1 \cdot (1-\rho^2)= \rho^2 +\frac{1+\rho^2}{q} .
\end{align*}
Now, since ${\rm Var}\left[\frac{(\tR x,\tR y)}{\|x\|\|y\|}\right] = \E\left[\left(\frac{(\tR x,\tR y)}{\|x\|\|y\|}\right)^2\right]- \left(\E\left[\frac{(\tR x,\tR y)}{\|x\|\|y\|}\right]\right)^2$, the claim follows.
\end{proof}

We can also get the following asymptotic result relatively easy.
\begin{thm}
We have 
$$
\sqrt{q}\left[\frac{(Rx,Ry)}{\|x\|\|y\|}-\rho\right]
 \stackrel{d}{\to} \cN(0, 1+\rho^2),
$$
as $q\to \infty$.
\end{thm}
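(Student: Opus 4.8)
The plan is to read the exact distributional identity off Theorem~\ref{representationTheorem}(a) and then split the fluctuation into a piece handled by the classical CLT and a piece handled by the law of large numbers together with Slutsky's theorem. Since $Q=\sqrt{q}R$, identity \eqref{eq:dotProductRepresentation} gives
\begin{equation*}
\frac{(Rx,Ry)}{\|x\|\|y\|}=\frac{1}{q}\left(\rho\,\|N\|^2+\sqrt{1-\rho^2}\,M_1\|N\|\right)
\end{equation*}
for independent standard normal vectors $M,N\in\R^q$. Subtracting $\rho$ and multiplying by $\sqrt{q}$ I would write
\begin{equation*}
\sqrt{q}\left[\frac{(Rx,Ry)}{\|x\|\|y\|}-\rho\right]=\rho\,\frac{\|N\|^2-q}{\sqrt{q}}+\sqrt{1-\rho^2}\,\frac{\|N\|}{\sqrt{q}}\,M_1=:A_q+B_q,
\end{equation*}
so it remains to identify the joint limit of $(A_q,B_q)$ and add the two components.

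Next I would analyze the two pieces. Writing $\|N\|^2=\sum_{i=1}^q N_i^2$ as a sum of i.i.d.\ $\chi^2_1$ summands with mean $1$ and variance $\E[N_1^4]-1=2$, the classical CLT gives $\frac{\|N\|^2-q}{\sqrt{q}}\stackrel{d}{\to}\cN(0,2)$, hence $A_q\stackrel{d}{\to}\cN(0,2\rho^2)$; the law of large numbers applied to the same sum gives $\|N\|^2/q\stackrel{p}{\to}1$, so $\|N\|/\sqrt{q}\stackrel{p}{\to}1$ by the continuous mapping theorem.

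The only genuine subtlety is that $A_q$ and $B_q$ are dependent, since both are built from $N$; I would circumvent this by exploiting that $M_1$ is independent of $N$. For each fixed $q$, $A_q$ and $\|N\|/\sqrt{q}$ are $\sigma(N)$-measurable and hence independent of $M_1\sim\cN(0,1)$, so the characteristic functions factor and $(A_q,M_1)\stackrel{d}{\to}(Z_1,Z_2)$ with $Z_1\sim\cN(0,2\rho^2)$ and $Z_2\sim\cN(0,1)$ \emph{independent}. Combining this with $\|N\|/\sqrt{q}\stackrel{p}{\to}1$ through Slutsky's theorem yields $(A_q,M_1,\|N\|/\sqrt{q})\stackrel{d}{\to}(Z_1,Z_2,1)$, and the continuous map $(a,m,t)\mapsto a+\sqrt{1-\rho^2}\,m\,t$ then gives $A_q+B_q\stackrel{d}{\to}Z_1+\sqrt{1-\rho^2}\,Z_2$. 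Since $Z_1$ and $Z_2$ are independent Gaussians, the limit is $\cN(0,\,2\rho^2+(1-\rho^2))=\cN(0,1+\rho^2)$, which is the claim. The main obstacle is exactly this $A_q$--$B_q$ dependence; everything else is routine moment bookkeeping plus standard limit theorems, and as a sanity check the limiting variance $1+\rho^2$ matches the exact variance $\tfrac{1+\rho^2}{q}$ of $\frac{(\tR x,\tR y)}{\|x\|\|y\|}$ computed in the preceding proposition.
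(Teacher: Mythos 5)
Your proposal is correct and follows essentially the same route as the paper: the decomposition of $\sqrt{q}\bigl[\tfrac{(Rx,Ry)}{\|x\|\|y\|}-\rho\bigr]$ into $\rho\,\tfrac{\|N\|^2-q}{\sqrt{q}}$ plus $\sqrt{1-\rho^2}\,M_1\tfrac{\|N\|}{\sqrt{q}}$ via Theorem~\ref{representationTheorem}(a), followed by the CLT, the LLN, and independence of $M_1$ from $N$. Your explicit treatment of the joint convergence of $(A_q, M_1, \|N\|/\sqrt{q})$ via Slutsky and the continuous mapping theorem is a welcome tightening of a step the paper's proof passes over with only the phrase ``since $M_1$ is independent.''
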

\begin{proof}
Form the Central Limit Theorem we have 
$\frac{\|N\|^2-q}{\sqrt{q}}=q^{-1/2}\sum_{j=1}^q(N_1^2-1)\stackrel{d}{\to} \cN(0,2)$ and by the Law of Large Numbers we have 
$\frac{1}{\sqrt{q}}\|N\| =\sqrt{\frac{N_1^2+\ldots+N_q^2}{q}}\to 1$ almost surely. Since $M_1$ is independent, we have 
\begin{align*}
\sqrt{q}\left[\frac{(Rx,Ry)}{\|x\|\|y\|}-\rho\right] &= \left[\rho\cdot \frac{\|N\|^2-q}{\sqrt{q}} + M_1\frac{\|N\|}{\sqrt{q}}\sqrt{1-\rho^2}\right]\\
&\stackrel{d}{\to}  \hat{\cN}(0,2\rho^2) + \cN(0,1-\rho^2) \stackrel{d}{=} \cN(0,1+\rho^2).
\end{align*}
\end{proof}

We can write the last result in the following form, which we stated in part (a) of Proposition \ref{propo:dotProduct}.

\begin{corollary}\label{cor:CLTDotProd}
We have 
$$
\sqrt{q}[(\tR x, \tR y)- (x,y)] \stackrel{d}{\to} \cN(0,\|x\|^2\|y\|^2+(x,y)^2),
$$
as $q\to \infty$.
\end{corollary}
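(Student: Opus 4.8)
The plan is to obtain this as an immediate corollary of the preceding central limit theorem, which states $\sqrt{q}\,\bigl[(Rx,Ry)/(\|x\|\|y\|) - \rho\bigr] \stackrel{d}{\to} \cN(0, 1+\rho^2)$ with $\rho = (x,y)/(\|x\|\|y\|)$. Since $\tR = R$ in our notation and $\|x\|\|y\|$ is a fixed constant, the corollary should follow by simply clearing the normalization factor. I would first dispose of the degenerate case $x = 0$ or $y = 0$: there $(\tR x,\tR y) = (x,y) = 0$ identically, so the left-hand side is the constant $0$, which converges in distribution to $\cN(0,0) = \delta_0$, matching the right-hand side. So assume $x,y \neq 0$.

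The first substantive step is to invoke the elementary fact that multiplication by a deterministic positive constant commutes with convergence in distribution: if $Z_q \stackrel{d}{\to} \cN(0,\sigma^2)$ and $a > 0$, then $a Z_q \stackrel{d}{\to} \cN(0, a^2\sigma^2)$ (a one-line consequence of the continuous mapping theorem, or of Slutsky). Applying this with $a = \|x\|\|y\|$ and $Z_q = \sqrt{q}\,\bigl[(Rx,Ry)/(\|x\|\|y\|) - \rho\bigr]$ turns the previous theorem into $\sqrt{q}\,\bigl[(Rx,Ry) - \rho\|x\|\|y\|\bigr] \stackrel{d}{\to} \cN\bigl(0, (1+\rho^2)\|x\|^2\|y\|^2\bigr)$.

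The second step is purely algebraic. For the centering term, $\rho\|x\|\|y\| = (x,y)$ by definition of $\rho$. For the variance, expand $(1+\rho^2)\|x\|^2\|y\|^2 = \|x\|^2\|y\|^2 + \rho^2\|x\|^2\|y\|^2$ and note $\rho^2\|x\|^2\|y\|^2 = \bigl((x,y)/(\|x\|\|y\|)\bigr)^2\|x\|^2\|y\|^2 = (x,y)^2$. Substituting these into the displayed limit yields exactly $\sqrt{q}\,[(\tR x,\tR y) - (x,y)] \stackrel{d}{\to} \cN(0, \|x\|^2\|y\|^2 + (x,y)^2)$, which is the claim.

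I do not anticipate any real obstacle: the corollary is essentially a rescaled restatement of the theorem already proved. The only thing to be careful about is handling the degenerate vectors (for which $\rho$ is undefined) and stating the commutation-with-scaling fact cleanly. If a self-contained argument were preferred instead of quoting the previous theorem, one could redo it directly from the representation $(Qx,Qy) = \|x\|\|y\|\,(\rho\|N\|^2 + M_1\|N\|\sqrt{1-\rho^2})$ in Theorem \ref{representationTheorem}(a), using the CLT for $\|N\|^2$ together with the law of large numbers for $\|N\|/\sqrt{q}$ and independence of $M_1$ — but routing through the already-established theorem is shorter.
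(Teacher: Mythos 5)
Your proposal is correct and matches the paper's route exactly: the paper presents this corollary as a direct restatement of the preceding theorem $\sqrt{q}\bigl[(Rx,Ry)/(\|x\|\|y\|)-\rho\bigr] \stackrel{d}{\to} \cN(0,1+\rho^2)$, obtained by multiplying through by the constant $\|x\|\|y\|$ and substituting $\rho\|x\|\|y\|=(x,y)$ and $\rho^2\|x\|^2\|y\|^2=(x,y)^2$. Your additional care with the degenerate case $x=0$ or $y=0$ (where $\rho$ is undefined) is a small but welcome tightening that the paper leaves implicit.
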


Although, $(\tR x, \tR y)$ is an unbiased and asymptotically normal estimator for $(x,y)$, the control of the standard deviation can be a problem.
$(x,y)$ could be small compared to the value of $\sigma=\sqrt{\frac{\|x\|^2\|y\|^2+(x,y)^2}{q}}$.
Let $x=(1,0,1,0, \ldots)$ and $y=(0,1,0,1, \ldots)$, then $(x,y)=0$ and $\sigma = \frac{n}{2\sqrt{q}}$. However, since $q \ll n$ the random projection might be nowhere near the value of $(x,y)$. Simulation in Figure \ref{Figure:dotproduct} illustrates this issue. 

\begin{figure}[ht]
\begin{center}
\includegraphics[width=60mm]{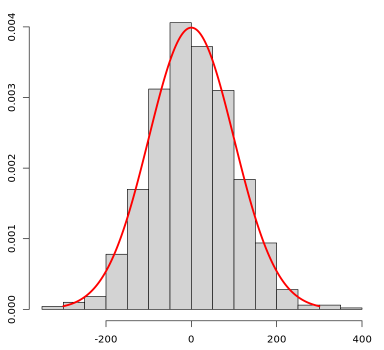}
\caption{Simulated dot product $(\tR \tilde{x}, \tR \tilde{y})$ for $x=(1,0,1,0, \ldots)$ and $y=(0,1,0,1, \ldots)$ for $n=2000$ and $q=100$. }\label{Figure:dotproduct}
\end{center}
\end{figure}

\subsubsection{Concentration Results}
So far we have shown that $\frac{(\tR x,\tR y)}{\|x\|\|y\|}$ is an unbiased and asymptotically normal estimator for $\rho$. However when we have many vectors and we want to be sure that all of their similarities get randomly projected to approximate values we will need to use concentration results.

First, using \eqref{eq:dotProductRepresentation} we  calculate the Laplace transform.

\begin{proposition}
We have 
\begin{equation}\label{mgf:dotprod}
\E[\exp(\lambda (Q\tilde{x},Q\tilde{y}))]=
[(1-\lambda(1+\rho))(1-\lambda(\rho-1))]^{-q/2}
\end{equation}
for $\lambda \in (-(1-\rho)^{-1}, (1+\rho)^{-1})$
\end{proposition}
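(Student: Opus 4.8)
The plan is to feed the orthogonal representation of Theorem~\ref{representationTheorem}(a) into a two-stage conditioning argument: first integrate out the Gaussian scalar $M_1$, then recognize a chi-squared moment generating function in $\|N\|^2$, and finally check the algebra and the domain of $\lambda$.

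\textbf{Step 1 (reduce to the representation).} I would apply Theorem~\ref{representationTheorem}(a) with $x,y$ replaced by the unit vectors $\tilde x,\tilde y$. Since $\|\tilde x\|=\|\tilde y\|=1$ and the cosine similarity is unchanged, this yields independent standard normal vectors $M,N\in\R^q$ with
$$
(Q\tilde x,Q\tilde y)=\rho\|N\|^2+M_1\|N\|\sqrt{1-\rho^2},
$$
so the task becomes the evaluation of $\E\bigl[\exp\bigl(\lambda\rho\|N\|^2+\lambda\sqrt{1-\rho^2}\,M_1\|N\|\bigr)\bigr]$.

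\textbf{Step 2 (integrate out $M_1$).} Conditioning on $N$, the variable $M_1\sim\cN(0,1)$ is independent of $N$, so by the Gaussian MGF $\E[e^{tM_1}]=e^{t^2/2}$ with $t=\lambda\sqrt{1-\rho^2}\,\|N\|$ we get
$$
\E\bigl[\exp(\lambda\sqrt{1-\rho^2}\,M_1\|N\|)\mid N\bigr]=\exp\!\Bigl(\tfrac12\lambda^2(1-\rho^2)\|N\|^2\Bigr).
$$
Since all integrands are positive, Tonelli's theorem lets me take the outer expectation without restriction, giving
$$
\E[\exp(\lambda (Q\tilde x,Q\tilde y))]=\E\bigl[\exp(s\|N\|^2)\bigr],\qquad s:=\lambda\rho+\tfrac12\lambda^2(1-\rho^2).
$$

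\textbf{Step 3 (chi-squared MGF, algebra, and domain).} Because $\|N\|^2=\sum_{j=1}^q N_j^2\sim\chi^2_q$, its MGF equals $(1-2s)^{-q/2}$ for $s<\tfrac12$ and $+\infty$ otherwise. Substituting $s$ and expanding,
$$
1-2s=1-2\lambda\rho-\lambda^2(1-\rho^2)=(1-\lambda(1+\rho))(1-\lambda(\rho-1)),
$$
which is exactly the asserted formula. The finiteness condition $s<\tfrac12$ is equivalent to $(1-\lambda(1+\rho))(1-\lambda(\rho-1))>0$; for $|\rho|<1$ the two roots in $\lambda$ are $\tfrac{1}{1+\rho}>0$ and $\tfrac{1}{\rho-1}<0$, so the product is positive precisely on $\lambda\in\bigl(-(1-\rho)^{-1},(1+\rho)^{-1}\bigr)$, as stated. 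I do not expect a genuine obstacle: the only points requiring care are the (routine) positivity justification for swapping expectation and conditioning, and the bookkeeping on the $\lambda$-interval, including the degenerate boundary cases $\rho=\pm1$, where one linear factor collapses to $1\mp2\lambda$ and the admissible interval becomes one-sided.
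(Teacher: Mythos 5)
Your proposal is correct and follows essentially the same route as the paper's proof: condition on $N$, integrate out $M_1$ via the Gaussian MGF, then apply the $\chi^2_q$ MGF to $\|N\|^2$ and factor $1-2s$. The only additions are your (welcome) explicit justification of the interchange via Tonelli and the verification that the stated $\lambda$-interval is exactly where $s<\tfrac12$, which the paper leaves implicit.
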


\begin{proof}
Recall that $\E(e^{\lambda M_1})=e^{\lambda^2/2}$ for all $\lambda \in \R$ and $\E(e^{\lambda \|N\|^2})=(1-2\lambda)^{-q/2}$ for $\lambda <1/2$.
By applying the Laplace transform first to $M_1$ and then to $\|N\|^2$ we get
\begin{align*}
&\E[\exp(\lambda (Q\tilde{x},Q\tilde{y}))]= \E\left[e^{\lambda \rho\|N\|^2} \E\left[ e^{\lambda\|N\|\sqrt{1-\rho^2}M_1}|\|N\|\right]\right]\\
&= \E\left[e^{\lambda \rho\|N\|^2}  e^{\lambda^2\|N\|^2(1-\rho^2)/2}\right]= \E\left[e^{(\lambda \rho+\lambda^2(1-\rho^2)/2)\|N\|^2}\right]\\
& =(1-(2\lambda \rho-\lambda^2(1-\rho^2)))^{-q/2} =[(1-\lambda(1+\rho))(1-\lambda(\rho-1))]^{-q/2}
\end{align*}

\end{proof}
We will now use the the Laplace Transform to obtain Chernhov bounds.
The following Lemma will be a useful estimate. 

\begin{lemma}
For $\lambda\in (0, (1+\rho)^{-1})$ we have 
\begin{equation}
\label{eq:upperSideLaplace}
\log \E[\exp(\lambda[ (Q\tilde{x},Q\tilde{y})-q\rho])] \leq \frac{\lambda^2q(1+\rho^2)}{2(1-\lambda (1+\rho))},
\end{equation}
and $\lambda\in (0, (1-\rho)^{-1})$
\begin{equation}\label{eq:lowerSideLaplace}
\log \E[\exp(-\lambda [(Q\tilde{x},Q\tilde{y})-q\rho)]]\\ \leq \frac{\lambda^2q(1+\rho^2)}{2(1-\lambda (1-\rho))}.
\end{equation}

\end{lemma}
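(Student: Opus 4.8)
The plan is to start from the closed-form Laplace transform in \eqref{mgf:dotprod}, take logarithms, and reduce each of the two claimed bounds to a pair of elementary one-variable inequalities. For the upper-side bound, fix $\lambda\in(0,(1+\rho)^{-1})$, take the logarithm of \eqref{mgf:dotprod}, and subtract $\lambda q\rho$ to get
$$
\log\E[\exp(\lambda[(Q\tilde x,Q\tilde y)-q\rho])]=-\frac{q}{2}\Big(2\lambda\rho+\log(1-\lambda(1+\rho))+\log(1-\lambda(\rho-1))\Big).
$$
Introduce $a:=\lambda(1+\rho)$ and $b:=\lambda(1-\rho)$; because $\rho$ is a cosine similarity we have $\rho\in[-1,1]$, so $a,b\ge 0$, and $a\in[0,1)$ by the constraint on $\lambda$. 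The key bookkeeping step is the identity $a-b=2\lambda\rho$, which lets us rewrite the bracket as $(a-b)+\log(1-a)+\log(1+b)=-\big[(-\log(1-a))-a\big]-\big[b-\log(1+b)\big]$, hence
$$
\log\E[\exp(\lambda[(Q\tilde x,Q\tilde y)-q\rho])]=\frac{q}{2}\Big(\big[(-\log(1-a))-a\big]+\big[b-\log(1+b)\big]\Big).
$$

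Next I would apply two elementary inequalities: first, $(-\log(1-t))-t\le \frac{t^2}{2(1-t)}$ for $t\in[0,1)$, which is immediate from the power series $\sum_{k\ge2}t^k/k\le \tfrac12\sum_{k\ge2}t^k$; second, $t-\log(1+t)\le \frac{t^2}{2}$ for $t\ge0$, since the difference vanishes at $0$ and has derivative $\frac{t^2}{1+t}\ge0$. These give $\log\E[\cdots]\le \frac{q}{2}\big(\frac{a^2}{2(1-a)}+\frac{b^2}{2}\big)$. Since $a\ge0$ implies $(1-a)^{-1}\ge1$, the right side is at most $\frac{q}{2}\cdot\frac{a^2+b^2}{2(1-a)}$, and finally $a^2+b^2=\lambda^2[(1+\rho)^2+(1-\rho)^2]=2\lambda^2(1+\rho^2)$, which yields exactly \eqref{eq:upperSideLaplace}.

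The lower-side bound \eqref{eq:lowerSideLaplace} is handled symmetrically. For $\lambda\in(0,(1-\rho)^{-1})$ I would apply \eqref{mgf:dotprod} with $-\lambda$ in place of $\lambda$ — this is legitimate since $-\lambda\in(-(1-\rho)^{-1},(1+\rho)^{-1})$ — and, after taking logarithms and adding $\lambda q\rho$, obtain the analogous identity with $a':=\lambda(1-\rho)\in[0,1)$, $b':=\lambda(1+\rho)\ge0$, and $b'-a'=2\lambda\rho$; the same two inequalities together with $a'^2+b'^2=2\lambda^2(1+\rho^2)$ give the claim with denominator $1-\lambda(1-\rho)$.

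The only real obstacle is keeping the algebra straight: one must pair the linear term $2\lambda\rho$ with the logarithms in the correct way via $a-b=2\lambda\rho$, and crucially one must use a bound for $t-\log(1+t)$ valid for \emph{all} $t\ge0$, not just for $t<1$, because when $|\rho|$ is close to $1$ exactly one of $\lambda(1+\rho)$, $\lambda(1-\rho)$ may exceed $1$ while the other stays below it. The hypothesis $\rho\in[-1,1]$ (inherited from $\rho$ being a cosine similarity) is precisely what makes both substitutions nonnegative and legitimizes the whole computation.
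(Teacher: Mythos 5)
Your proof is correct and takes essentially the same route as the paper's: you take the logarithm of the closed-form Laplace transform \eqref{mgf:dotprod}, absorb the $-\lambda q\rho$ term into the two logarithms (your substitution $a=\lambda(1+\rho)$, $b=\lambda(1-\rho)$ with $a-b=2\lambda\rho$ is exactly the paper's regrouping into $-\log(1-s)-s$ with $s=\lambda(1+\rho)$ and $s=\lambda(\rho-1)$), apply the same two elementary inequalities (the paper's Lemma \ref{lemma:inequality1}), and merge the two bounds over the common denominator $1-\lambda(1+\rho)$ using $a^2+b^2=2\lambda^2(1+\rho^2)$. No substantive differences.
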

\begin{proof}
In case $\rho =1$ or $\rho =-1$ expression in \eqref{mgf:dotprod} will simplify and we will be able to use Proposition \ref{propo:gammaTails} on $X$ and $-X$ respectively.

One can show that for $s\in (0, 1)$ we have (see Lemma \ref{lemma:inequality1} (a)):
\begin{equation}\label{eq:HelpInequality}
(-\log (1-s)-s)\leq \frac{s^2}{2(1-s)},
\end{equation}
and for $s<0$ (see Lemma \ref{lemma:inequality1} (a)):
\begin{equation}\label{eq:HelpInequality1}
(-\log (1-s)-s)\leq \frac{s^2}{2}.
\end{equation}

We will prove \eqref{eq:upperSideLaplace}, \eqref{eq:lowerSideLaplace} is shown in the same way.
\begin{align*}
&\log \E[\exp(\lambda (Q\tilde{x},Q\tilde{y})-q\rho)]\\
&=-\lambda q\rho - \frac{q}{2}\log(1-\lambda(1+\rho))- \frac{q}{2}\log(1-\lambda(\rho-1))\\
&=\frac{q}{2}\left[-\log(1-\lambda(1+\rho))-\lambda(1+\rho)\right]+\frac{q}{2}\left[-\log(1-\lambda(\rho-1))-\lambda(\rho-1)\right]\\
&\stackrel{\eqref{eq:HelpInequality}, \eqref{eq:HelpInequality1}}{\leq} \frac{q}{2}\left[\frac{ \left(\lambda(1+\rho)\right)^2}{2(1-\lambda(1+\rho))}+\frac{ \left(\lambda(\rho-1)\right)^2}{2}\right]\\ &\leq \frac{q}{2}\left[\frac{ (\lambda(1+\rho))^2}{2(1-\lambda(1+\rho))}+\frac{ \left(\lambda(\rho-1)\right)^2}{2(1-\lambda(1+\rho))}\right]\\
&= \frac{q}{2}\left[\frac{ \left(\lambda(1+\rho)\right)^2+\left(\lambda(\rho-1)\right)^2}{2(1-\lambda(1+\rho))}\right]=\frac{\lambda^2q(1+\rho^2)}{2(1-\lambda (1+\rho))}
\end{align*}
This proves \eqref{eq:upperSideLaplace}, the equation \eqref{eq:lowerSideLaplace} can be proved in the same way.
\end{proof}

\begin{thm}\label{thm:upperLowerEstimates} For $t\geq 0$ we have
\begin{equation}\label{eq:estimateUpper}
\P((Q\tilde{x},Q\tilde{y})-q\rho > t ) \\ 
\leq \exp \left(\frac{-t^2}{2q(1+\rho^2)+2(1+\rho)t}\right),
\end{equation}
and
\begin{equation}\label{eq:estimateLower}
\P((Q\tilde{x},Q\tilde{y})-q\rho< -t ) \\ 
\leq \exp \left(\frac{-t^2}{2q(1+\rho^2)+2(1-\rho)t}\right).
\end{equation}
\end{thm}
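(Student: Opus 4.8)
The plan is to derive \eqref{eq:estimateUpper} and \eqref{eq:estimateLower} by the standard Chernoff/Cram\'er argument, feeding in the Laplace-transform bounds \eqref{eq:upperSideLaplace} and \eqref{eq:lowerSideLaplace} that were just established. For \eqref{eq:estimateUpper}, I would start from Markov's inequality applied to $\exp(\lambda[(Q\tilde{x},Q\tilde{y})-q\rho])$ for $\lambda\in(0,(1+\rho)^{-1})$, giving
\begin{equation*}
\P\big((Q\tilde{x},Q\tilde{y})-q\rho>t\big)\leq \exp\!\left(-\lambda t+\frac{\lambda^2 q(1+\rho^2)}{2(1-\lambda(1+\rho))}\right).
\end{equation*}
Then I would optimize the exponent over $\lambda$ in the admissible range. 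Rather than solving the exact first-order condition (which leads to a messy square root), I would use the classical trick for Bernstein-type tails: substitute the near-optimal choice $\lambda = t/\big(q(1+\rho^2)+(1+\rho)t\big)$, which lies in $(0,(1+\rho)^{-1})$, and check that it yields exactly the claimed bound $\exp\!\big(-t^2/[2q(1+\rho^2)+2(1+\rho)t]\big)$. The verification is a routine algebraic simplification: with that $\lambda$ one has $1-\lambda(1+\rho) = q(1+\rho^2)/\big(q(1+\rho^2)+(1+\rho)t\big)$, so the two terms $-\lambda t$ and $\frac{\lambda^2 q(1+\rho^2)}{2(1-\lambda(1+\rho))}$ collapse to $-\tfrac{1}{2}\lambda t$, and plugging back the value of $\lambda$ gives the result.

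For \eqref{eq:estimateLower}, the argument is symmetric: apply Markov to $\exp(-\lambda[(Q\tilde{x},Q\tilde{y})-q\rho])$ for $\lambda\in(0,(1-\rho)^{-1})$, use \eqref{eq:lowerSideLaplace} in place of \eqref{eq:upperSideLaplace}, and make the analogous substitution $\lambda = t/\big(q(1+\rho^2)+(1-\rho)t\big)$. Everything goes through verbatim with $(1+\rho)$ replaced by $(1-\rho)$, producing the bound with $2(1-\rho)t$ in the denominator. I would also note the degenerate cases $\rho=\pm1$ separately if needed, since then one of $(1+\rho)$ or $(1-\rho)$ vanishes and the relevant $\lambda$-range becomes a half-line; but in those cases $(Q\tilde{x},Q\tilde{y})$ reduces to $\pm\|N\|^2$, a (scaled) chi-squared, and the bound follows from the Gamma-tail estimate (Proposition~\ref{propo:gammaTails}) already invoked in the proof of the preceding lemma — or one can simply observe the stated inequality still holds by a limiting argument.

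I do not expect a genuine obstacle here; the only mild subtlety is confirming that the chosen $\lambda$ is admissible (strictly less than $(1+\rho)^{-1}$, resp.\ $(1-\rho)^{-1}$) for every $t\geq0$ and every $\rho\in(-1,1)$, which is immediate since $\lambda(1+\rho) = (1+\rho)t/\big(q(1+\rho^2)+(1+\rho)t\big)<1$, and handling $t=0$ trivially (both sides are $\leq 1$). The heart of the proof is therefore just the bookkeeping that turns the Bernstein-type Laplace bound into the quoted exponential tail, so I would present it compactly: state Markov, cite \eqref{eq:upperSideLaplace}/\eqref{eq:lowerSideLaplace}, exhibit the near-optimal $\lambda$, and verify the exponent.
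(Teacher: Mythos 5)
Your proposal is correct and is essentially the paper's argument: the paper also derives \eqref{eq:estimateUpper} and \eqref{eq:estimateLower} as Chernoff bounds from \eqref{eq:upperSideLaplace} and \eqref{eq:lowerSideLaplace}, just packaged through the general sub-gamma tail result (Theorem~\ref{thm:subgamma} applied with $X=\pm(Q\tilde{x},Q\tilde{y})$, $v=q(1+\rho^2)$, $c=1\pm\rho$), which performs the exact optimization over $\lambda$ and then bounds $1+s-\sqrt{1+2s}\geq \frac{s^2}{2(1+s)}$. Your direct substitution $\lambda = t/\big(q(1+\rho^2)+(1\pm\rho)t\big)$ attains exactly that same value of the exponent, so the two computations coincide.
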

\begin{proof}
Note that $\E[(Q\tilde{x},Q\tilde{y})]=q\rho$. 

Since \eqref{eq:upperSideLaplace} holds we can use Theorem \ref{thm:subgamma} by setting $X=(Q\tilde{x},Q\tilde{y})$ to obtain \eqref{eq:estimateUpper}. 

Since \eqref{eq:lowerSideLaplace} holds we can use Theorem \ref{thm:subgamma} by setting by setting $X=-(Q\tilde{x},Q\tilde{y})$ to obtain \eqref{eq:estimateLower}. 
\end{proof}

\begin{proposition}
For any function $f:[-1,1]\to [0,\infty)$ we have:
\begin{equation}
\P((Q\tilde{x},Q\tilde{y})-q\rho > q \varepsilon f(\rho))\\ \leq \exp \left(\frac{-q f(\rho)^2}{2(1+\rho^2)}\left[\varepsilon^2-\varepsilon^3 \frac{1+\rho}{1+\rho^2}f(\rho)\right]\right)
\end{equation}
\begin{equation}
\P((Q\tilde{x},Q\tilde{y})-q\rho <- q \varepsilon f(\rho))\\ \leq \exp \left(\frac{-q f(\rho)^2}{2(1+\rho^2)}\left[\varepsilon^2-\varepsilon^3 \frac{1-\rho}{1+\rho^2}f(\rho)\right]\right)
\end{equation}
\end{proposition}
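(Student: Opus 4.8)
The plan is to obtain both inequalities directly from Theorem~\ref{thm:upperLowerEstimates} by the substitution $t = q\varepsilon f(\rho)$, followed by one elementary scalar estimate. For the upper tail I would first plug $t = q\varepsilon f(\rho) \geq 0$ into \eqref{eq:estimateUpper}, obtaining
\begin{equation*}
\P\big((Q\tilde{x},Q\tilde{y})-q\rho > q\varepsilon f(\rho)\big) \leq \exp\left(\frac{-q^2\varepsilon^2 f(\rho)^2}{2q(1+\rho^2)+2(1+\rho)q\varepsilon f(\rho)}\right) = \exp\left(\frac{-q\varepsilon^2 f(\rho)^2}{2(1+\rho^2)+2(1+\rho)\varepsilon f(\rho)}\right),
\end{equation*}
after cancelling one factor of $q$. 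It then suffices to show that the exponent on the right is at most the exponent claimed in the Proposition (i.e. more negative), which, after dividing through by the negative quantity $-qf(\rho)^2$ (the case $f(\rho)=0$ being trivial, since then both probability bounds equal $1$), is equivalent to the scalar inequality
\begin{equation*}
\frac{\varepsilon^2}{2(1+\rho^2)+2(1+\rho)\varepsilon f(\rho)} \geq \frac{1}{2(1+\rho^2)}\left(\varepsilon^2 - \varepsilon^3\frac{1+\rho}{1+\rho^2}f(\rho)\right).
\end{equation*}

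Second, I would prove this scalar inequality by abbreviating $a := 1+\rho^2 > 0$ and $b := (1+\rho)\varepsilon f(\rho) \geq 0$, where $b \geq 0$ uses $1+\rho \geq 0$ on $[-1,1]$ and $\varepsilon, f(\rho) \geq 0$. The left-hand side is $\varepsilon^2/(2(a+b))$, and since $\varepsilon^3\frac{1+\rho}{1+\rho^2}f(\rho) = \varepsilon^2\, b/a$, the right-hand side is $\frac{\varepsilon^2}{2a}(1 - b/a) = \varepsilon^2(a-b)/(2a^2)$. Thus the claim reduces to $1/(a+b) \geq (a-b)/a^2$; multiplying by $1/a>0$ and using $1/(1+s)\geq 1-s$ with $s = b/a \geq 0$ (valid for all $s\geq -1$), this is just $a^2 \geq a^2 - b^2$, i.e. $b^2 \geq 0$. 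Combining this with the displayed bound and the monotonicity of $\exp$ gives the first inequality of the Proposition, for any $\varepsilon \geq 0$.

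Third, the lower-tail bound is entirely parallel: substituting $t = q\varepsilon f(\rho)$ into \eqref{eq:estimateLower} yields the same expression with $(1+\rho)$ replaced by $(1-\rho)$ in the denominator, and the identical scalar argument with $b := (1-\rho)\varepsilon f(\rho) \geq 0$ (now using $1-\rho\geq 0$) produces the second inequality. I do not expect any genuine obstacle here — the statement is essentially a repackaging of Theorem~\ref{thm:upperLowerEstimates} into a form whose exponent is a polynomial in $\varepsilon$, convenient for the subsequent choice of $q$. The only points requiring a little care are the inequality reversal when dividing the exponents by $-qf(\rho)^2$ and the (harmless) sign facts $1\pm\rho \geq 0$ for $\rho\in[-1,1]$.
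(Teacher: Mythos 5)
Your proof is correct: substituting $t=q\varepsilon f(\rho)$ into Theorem~\ref{thm:upperLowerEstimates} and then weakening the exponent via $\frac{1}{1+s}\geq 1-s$ with $s=(1\pm\rho)\varepsilon f(\rho)/(1+\rho^2)\geq 0$ yields exactly the claimed bounds, and your handling of the sign flip when dividing by $-qf(\rho)^2$ and of the trivial case $f(\rho)=0$ is sound. The paper states this Proposition without printing a proof, but your argument is precisely the intended one (it is the direct specialization of Theorem~\ref{thm:upperLowerEstimates}), so there is nothing to add.
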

\begin{remark}
Note that for $\rho =1$ and $f(\rho)=\varepsilon$ we get the usual bounds for $\chi^2(q)$ used to prove Johnson -Lindenstrauss Lemma as given, for example in, in \cite[Lemma 1.3.]{MR2073630} and \cite{ghojogh2021johnson}.
\end{remark}

\subsubsection{Johnson-Lindenstrauss-type Result}

\begin{thm}\label{JLTheoremDotProd}
Let $p_1,\ldots p_k$ be a finite set of vectors in $\mathbb{R}^n$ and let $\varepsilon\in (0,1)$ and $\delta\in (0,1)$, then for $q\geq 4\cdot\frac{1+\varepsilon}{\varepsilon^2}\log \frac{k(k-1)}{\delta}$
$$\left|\frac{(\tR  p_j,\tR  p_i)}{\|p_j\|\|p_i\|}-\frac{( p_j, p_i)}{\|p_j\|\|p_i\|}\right| <\varepsilon$$
for $i< j$ with probability at least $1-\delta$.
\end{thm}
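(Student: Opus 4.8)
The plan is to derive the statement directly from the two-sided concentration estimates of Theorem~\ref{thm:upperLowerEstimates} followed by a union bound over pairs. First I would fix $i<j$, set $\tilde{x}=p_i/\|p_i\|$, $\tilde{y}=p_j/\|p_j\|$, and let $\rho$ denote their cosine similarity. Using $Q=\sqrt{q}\,\tR$ together with \eqref{eq:directionVectorsProjected}, I would rewrite
$$
\frac{(\tR p_j,\tR p_i)}{\|p_j\|\|p_i\|}=(\tR\tilde{y},\tR\tilde{x})=\tfrac1q\,(Q\tilde{x},Q\tilde{y}),
$$
so that the event to be controlled, $\bigl|(\tR\tilde{y},\tR\tilde{x})-\rho\bigr|\ge\varepsilon$, is exactly $\bigl|(Q\tilde{x},Q\tilde{y})-q\rho\bigr|\ge q\varepsilon$.

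Next I would apply \eqref{eq:estimateUpper} with $t=q\varepsilon$, which gives
$$
\P\bigl((Q\tilde{x},Q\tilde{y})-q\rho> q\varepsilon\bigr)\le\exp\!\Bigl(\frac{-q\varepsilon^2}{2(1+\rho^2)+2(1+\rho)\varepsilon}\Bigr),
$$
and the analogous bound from \eqref{eq:estimateLower} for the lower tail, with $1+\rho$ replaced by $1-\rho$ in the denominator. The crucial step is to make the bound uniform in the unknown $\rho$: since $\rho\in[-1,1]$ we have $1+\rho^2\le 2$ and $1\pm\rho\le 2$, so both denominators are at most $4(1+\varepsilon)$; hence each one-sided probability is at most $\exp\bigl(-q\varepsilon^2/(4(1+\varepsilon))\bigr)$, and summing the two tails gives $\P\bigl(|(\tR\tilde{y},\tR\tilde{x})-\rho|\ge\varepsilon\bigr)\le 2\exp\bigl(-q\varepsilon^2/(4(1+\varepsilon))\bigr)$ for each fixed pair.

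Finally I would union-bound over the $\binom{k}{2}=\tfrac{k(k-1)}{2}$ pairs $i<j$; combined with the factor $2$ this gives a total failure probability at most $k(k-1)\exp\bigl(-q\varepsilon^2/(4(1+\varepsilon))\bigr)$. Requiring this to be $\le\delta$ and taking logarithms yields exactly $q\ge 4\,\frac{1+\varepsilon}{\varepsilon^2}\log\frac{k(k-1)}{\delta}$, which is the hypothesis, so under that condition the desired inequality holds simultaneously for all $i<j$ with probability at least $1-\delta$.

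I do not expect a genuine obstacle, since the analytic heavy lifting (the orthogonal representation of Theorem~\ref{representationTheorem} and the sub-gamma tail estimates of Theorem~\ref{thm:upperLowerEstimates}) is already available. The only points that need care are verifying the uniform-in-$\rho$ inequality $2(1+\rho^2)+2(1\pm\rho)\varepsilon\le 4(1+\varepsilon)$ over all of $[-1,1]$ (the worst case being $\rho=\pm1$), and the bookkeeping in the union bound, so that the factor $2$ from the two one-sided estimates and the $\binom{k}{2}$ pair count combine into the $k(k-1)$ appearing in the threshold on $q$.
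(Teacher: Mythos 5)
Your proposal is correct and follows essentially the same route as the paper: the uniform-in-$\rho$ bound you derive from Theorem~\ref{thm:upperLowerEstimates} via $2(1+\rho^2)+2(1\pm\rho)\varepsilon\le 4(1+\varepsilon)$ is exactly Proposition~\ref{propo:EstimatesNoRho}, and the paper's proof of Theorem~\ref{JLTheoremDotProd} is the same union bound over the $\binom{k}{2}$ pairs with the factor $2$ for the two tails. The only difference is that you inline that intermediate proposition rather than citing it separately.
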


We have $\binom{k}{2}$ cosine similarities in the statement above and in practice we will have no way of estimating\footnote{One exception is the case when $x_1, \ldots, x_k$ are all in $[0,\infty)^n$, i.e. have nonnegative coordinates. In that case we know cosine similarities will be in interval $[0,1]$.} them. To prove this result we need a probablity estimate that doesn't depend on the value of the cosine similarity. The following Proposition will help us in that by simplifying the statement of Theorem \ref{thm:upperLowerEstimates}.

\begin{proposition}\label{propo:EstimatesNoRho}
 For $\varepsilon>0$ we have
\begin{equation}\label{eq:estimateUpperNoRho}
\mathbb{P}\left(\frac{(\tR  x,\tR  y)}{\|x\|\|y\|}-\rho > \varepsilon \right) 
\leq \exp \left(\frac{-q\varepsilon^2}{4(1+\varepsilon)}\right),
\end{equation}
and
\begin{equation}\label{eq:estimateLowerNoRho}
\mathbb{P}\left(\frac{(\tR  x,\tR  y)}{\|x\|\|y\|}-\rho <- \varepsilon \right) 
\leq \exp \left(\frac{-q\varepsilon^2}{4(1+\varepsilon)}\right).
\end{equation}
\end{proposition}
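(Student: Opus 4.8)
The plan is to derive Proposition~\ref{propo:EstimatesNoRho} directly from the sharper, $\rho$-dependent concentration bounds in Theorem~\ref{thm:upperLowerEstimates} via a change of variables followed by a short algebraic estimate that exploits $\rho\in[-1,1]$. First I would record the scaling identity: since $\tR=q^{-1/2}Q$ and $Q\tilde{x}=Qx/\|x\|$, one has
$$\frac{(\tR x,\tR y)}{\|x\|\|y\|}=\frac{1}{q}\,(Q\tilde{x},Q\tilde{y}),$$
so that the event $\{(\tR x,\tR y)/(\|x\|\|y\|)-\rho>\varepsilon\}$ coincides with $\{(Q\tilde{x},Q\tilde{y})-q\rho>q\varepsilon\}$, and likewise for the lower tail. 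Applying \eqref{eq:estimateUpper} with $t=q\varepsilon$ then yields the bound $\exp\big(-q\varepsilon^2/(2(1+\rho^2)+2(1+\rho)\varepsilon)\big)$, while \eqref{eq:estimateLower} with $t=q\varepsilon$ yields $\exp\big(-q\varepsilon^2/(2(1+\rho^2)+2(1-\rho)\varepsilon)\big)$.

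It then remains to show that each of these denominators is at most $4(1+\varepsilon)$, uniformly over $\rho\in[-1,1]$ (recall that $\rho$ is a cosine similarity, hence $|\rho|\le1$ by Cauchy--Schwarz). For the upper tail this is $2(1+\rho^2)+2(1+\rho)\varepsilon\le 4(1+\varepsilon)$, which after rearrangement becomes $(\rho-1)(\rho+1+\varepsilon)\le 0$; this holds because $\rho-1\le 0$ while $\rho+1+\varepsilon\ge\varepsilon>0$. For the lower tail the corresponding inequality $2(1+\rho^2)+2(1-\rho)\varepsilon\le 4(1+\varepsilon)$ rearranges to $(\rho+1)(\rho-1-\varepsilon)\le 0$, which holds because $\rho+1\ge 0$ and $\rho-1-\varepsilon\le-\varepsilon<0$. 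Since $x\mapsto e^{-x}$ is decreasing, enlarging the denominator can only enlarge the exponential, so both probabilities are bounded by $\exp\big(-q\varepsilon^2/(4(1+\varepsilon))\big)$, which is exactly \eqref{eq:estimateUpperNoRho} and \eqref{eq:estimateLowerNoRho}.

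There is essentially no deep obstacle here; the only points requiring care are bookkeeping ones. One must enlarge (not shrink) the denominator to obtain a valid upper bound on the exponential, and one must observe that in each product $(\rho-1)(\rho+1+\varepsilon)$ and $(\rho+1)(\rho-1-\varepsilon)$ the two factors have opposite signs on the entire range $\rho\in[-1,1]$ — this is precisely where the hypothesis $|\rho|\le1$ is used. The boundary values $\rho=\pm1$ need no separate treatment, since Theorem~\ref{thm:upperLowerEstimates} already covers them and the rearranged inequalities remain valid (with equality) there.
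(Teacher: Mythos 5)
Your proposal is correct and follows essentially the same route as the paper: substitute $t=q\varepsilon$ into Theorem~\ref{thm:upperLowerEstimates} (after the scaling identity relating $Q$ and $R$) and then bound each denominator by $4(1+\varepsilon)$ using $|\rho|\le 1$. The only difference is cosmetic — you verify the denominator inequalities via the explicit factorizations $(\rho-1)(\rho+1+\varepsilon)\le 0$ and $(\rho+1)(\rho-1-\varepsilon)\le 0$, where the paper simply invokes $|\rho|\le 1$.
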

\begin{proof}
By setting $t=q\varepsilon$ in \eqref{eq:estimateUpper} we get
$$
\mathbb{P}\left(\frac{(\tR  x,\tR  y)}{\|x\|\|y\|}-\rho > \varepsilon \right) 
\leq \exp \left(\frac{-q\varepsilon^2}{2(1+\rho^2) + 2(1+\rho)\varepsilon}\right),
$$
Using the fact that $|\rho| \leq 1$ we get
$$\frac{-q\varepsilon^2}{2(1+\rho^2) + 2(1+\rho)\varepsilon} \leq \frac{-q\varepsilon^2}{4(1+\varepsilon)}.$$
This proves \eqref{eq:estimateUpperNoRho}. Bound in \eqref{eq:estimateLowerNoRho} can be shown in a same way.
\end{proof}

\begin{proof}[Proof of Theorem \ref{JLTheoremDotProd}.]
Define sets 
\begin{align*}
A_{ij}&=\left(\left|\frac{(\tR p_j,\tR p_i)}{\|p_j\|\|p_i\|}-\frac{( p_j, p_i)}{\|p_j\|\|p_i\|}\right| <\varepsilon\right)\\
A_{ij}^{c+}&=\left(\frac{(\tR p_j,\tR p_i)}{\|p_j\|\|p_i\|}-\frac{( p_j, p_i)}{\|p_j\|\|p_i\|} >\varepsilon\right)\\
A_{ij}^{c-}&=\left(\frac{(\tR p_j,\tR p_i)}{\|p_j\|\|p_i\|}-\frac{( p_j, p_i)}{\|p_j\|\|p_i\|} <-\varepsilon\right)
\end{align*}
Note that $A_{ij}^c = A_{ij}^{c+} \cup A_{ij}^{c-}$. Using the union bound tehnicque we get
$$
P\left(\bigcap_{i< j} A_{ij}\right)=1 -P\left(\bigcup_{i< j} A_{ij}^c\right)=1 -P\left(\bigcup_{i< j} A_{ij}^{c+} \cup A_{ij}^{c-}\right)\geq 1 - \sum_{i<j}[\P(A_{ij}^{c+}) +\P( A_{ij}^{c-})].
$$
From Proposition \ref{propo:EstimatesNoRho} and bound on $q$ we have that  
\begin{align*}
&\sum_{i<j}[\P(A_{ij}^{c+}) +\P( A_{ij}^{c-})]\leq k(k-1)\exp \left(-q\frac{\varepsilon^2}{4(1+\varepsilon)}\right)\\
&\leq k(k-1)\exp \left(-4\cdot\frac{1+\varepsilon}{\varepsilon^2}\log \frac{k(k-1)}{\delta}\cdot\frac{\varepsilon^2}{4(1+\varepsilon)}\right)= \delta.
\end{align*}
\end{proof}

\subsubsection{Comparison with Known Results}\label{subsec:comparisonDotProd}
Regarding known estimates for the dot product, Theorem 2.1. from \cite{KabanDotProduct} provides the following bounds for $\varepsilon \in (0,1)$:

$$\P\left(\frac{(\tR x,\tR y)}{\|x\|\|y\|}-\rho > \varepsilon \right) 
\leq e^{-\frac{q\varepsilon^2}{8}},
\quad 
\textrm{and}
\quad
\P\left(\frac{(\tR x,\tR y)}{\|x\|\|y\|}-\rho <- \varepsilon \right)  
\leq  e^{-\frac{q\varepsilon^2}{8}}.
$$
This a simple consequence of Proposition \ref{propo:EstimatesNoRho}.

Equation (4) in \cite{Kang2021} and Lemma 5.7. in \cite{MR2073630} provide the following estimate:
$$\P\left(\left|\frac{(\tR x,\tR y)}{\|x\|\|y\|}-\rho\right| \leq \varepsilon \right) 
\geq 1- 4 e^{-\frac{q(\varepsilon^2-\varepsilon^3)}{4}}.$$

Since $\varepsilon^2-\varepsilon^3 \leq \frac{\varepsilon^2}{1+\varepsilon}$, again, from Proposition \ref{propo:EstimatesNoRho} we can get a better estimate 
$$\P\left(\left|\frac{(\tR x,\tR y)}{\|x\|\|y\|}-\rho\right| \leq \varepsilon \right) 
\geq 1- 2 e^{-\frac{q\varepsilon^2}{4(1+\varepsilon)}}.$$

\hrulefill
\subsection{Asymptotic and Finite-sample Results for Cosine Similarity}\label{sec:cosine_similarity_results}
In this section, we will examine the behavior of cosine similarity under random projection. We will begin by examining some special cases and then develop a general approach.
\subsubsection{The case of $\rho =\pm 1$}\label{special_cases}
Recall that in the case of $\rho =\pm 1$, i.e.
$$\frac{(x,y)}{\|x\|\|y\|}=\pm 1$$
there exists $\alpha >0$ such that $x=\pm \alpha y$.  This is a known consequence of the Cauchy-Schwarz inequality.

In this case we have 
\begin{equation}
\frac{(\tR x,\tR y)}{\|\tR x\|\|\tR y\|}=\frac{(\tR (\pm \alpha y),Ry)}{\|\tR (\pm \alpha y)\|\| \tR y\|}=\\ 
=\frac{\pm \alpha(\tR y,\tR y)}{\alpha\|\tR y\|\|\tR y\|}=\pm \frac{\|\tR y\|^2}{\|\tR y\|^2} =\pm 1=\rho.
\end{equation}

It turns out the that this is preserved under random projections.
\begin{proposition}\label{propo:cosineEqual1}
If $\frac{(\tR x,\tR y)}{\|\tR x\|\|\tR y\|}= \pm 1$ then there exists $\alpha > 0$ such that $x =\pm \alpha y$ (almost surely).
\end{proposition}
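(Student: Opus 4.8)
The plan is to show that the event $\{(\tR x, \tR y)/(\|\tR x\|\|\tR y\|) = \pm 1\}$ forces $x$ and $y$ to be parallel (almost surely), which is the converse direction of the trivial calculation preceding the statement. I would argue by contraposition: assume $x$ and $y$ are \emph{not} parallel, i.e.\ $\rho := \rho_{xy} \in (-1,1)$, and show that then $\cos(\tR x, \tR y) = \pm 1$ happens with probability zero. Since the statement is vacuous when either vector is zero, I assume $x,y \neq 0$ throughout.

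The key tool is the orthogonal representation from Theorem~\ref{representationTheorem}(b): there exist independent standard normal vectors $M, N \in \R^q$ with
\begin{equation*}
\cos(\tR x, \tR y) = \frac{\rho\|N\| + M_1\sqrt{1-\rho^2}}{\sqrt{(\rho\|N\| + M_1\sqrt{1-\rho^2})^2 + (1-\rho^2)\|M_{2\ldots q}\|^2}}.
\end{equation*}
The ratio on the right equals $\pm 1$ if and only if the second term under the square root vanishes, i.e.\ $(1-\rho^2)\|M_{2\ldots q}\|^2 = 0$. Since we are assuming $1 - \rho^2 > 0$, this is equivalent to $\|M_{2\ldots q}\|^2 = 0$, that is $M_2 = \cdots = M_q = 0$. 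But $M$ is a standard normal vector, so $(M_2,\ldots,M_q)$ is an absolutely continuous random vector in $\R^{q-1}$ and $\P(M_2 = \cdots = M_q = 0) = 0$. (If $q = 1$ the vector $M_{2\ldots q}$ is empty; then $\|N\|\neq 0$ a.s.\ and the representation already gives $\cos(\tR x,\tR y)=\pm 1$ deterministically — but $q=1$ is degenerate and we may assume $q\geq 2$, consistent with the rest of the paper.) Hence $\P(\cos(\tR x, \tR y) = \pm 1) = 0$ when $\rho \in (-1,1)$.

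Putting the two directions together: if $\cos(\tR x, \tR y) = \pm 1$ occurs with positive probability — in particular, if it occurs on a given sample — then by the contrapositive we cannot have $\rho \in (-1,1)$, so $\rho = \pm 1$, and by Cauchy–Schwarz there is $\alpha > 0$ with $x = \pm\alpha y$. The phrase "almost surely" in the statement is then interpreted as: on the full-probability event where $M_{2\ldots q}\neq 0$, $\cos(\tR x,\tR y)=\pm 1$ implies $x=\pm\alpha y$. I do not expect any real obstacle here; the only mild subtlety is bookkeeping around the sign (matching $+1$ with $x = +\alpha y$ and $-1$ with $x = -\alpha y$, which follows from the sign of $\rho$ in the numerator) and the trivial edge cases ($q=1$, or $x$ or $y$ zero), all of which are easily dispatched.
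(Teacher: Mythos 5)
Your proof is correct, but it takes a genuinely different route from the paper's. The paper argues directly in the projected space: by Cauchy--Schwarz in $\R^q$, the hypothesis $\cos(Rx,Ry)=\pm 1$ forces $Rx=\pm\alpha Ry$ for some $\alpha>0$, hence $R(x\mp\alpha y)=0$; if $x\mp\alpha y\neq 0$ this vector's image under $R$ has i.i.d.\ nondegenerate Gaussian entries and so vanishes with probability zero. You instead go through the orthogonal representation of Theorem~\ref{representationTheorem}(b) and observe that for $\rho\in(-1,1)$ the event $\cos(Rx,Ry)=\pm 1$ forces $(1-\rho^2)\|M_{2\ldots q}\|^2=0$, a null event for $q\geq 2$. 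Both arguments ultimately rest on a nondegenerate Gaussian avoiding a lower-dimensional set, but yours has a real advantage: in the paper's proof the scalar $\alpha$ is itself a function of $R$, so the claim that $R(x\mp\alpha y)$ is Gaussian with a fixed variance is slightly informal as written (it needs to be rephrased as, e.g., the event that $Rx$ and $Ry$ are linearly dependent when $x,y$ are linearly independent), whereas your reduction to $\|M_{2\ldots q}\|=0$ involves no random coefficient and is airtight. The price you pay is reliance on the heavier representation theorem where the paper's argument is elementary and self-contained; you also correctly flag the two minor points the paper glosses over (the degenerate case $q=1$, and the $0/0$ possibility when the numerator also vanishes, which is itself a null event). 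The sign bookkeeping you describe is right: on the almost-sure event, $\cos(Rx,Ry)=\pm 1$ pins down $\rho=\pm 1$ respectively, and Cauchy--Schwarz in $\R^n$ gives $x=\pm\alpha y$ with the matching sign.
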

\begin{proof}
Since the cosine similarity of $Rx$ and $Ry$ is $\pm 1$, then there is an $\alpha >0$ such that $Rx= \pm \alpha Ry$.
Hence, $w=R(x \mp \alpha y)=0$. 
Let us assume $x \mp \alpha y\neq 0$ then $R(x \mp \alpha y)$ is $q$-dimensional vector with entries distributed as $\cN(0, \|x \mp \alpha y\|^2/q)$. 
The probability of which being a zero vector is $0$. Hence, we have $x =\pm \alpha y$ almost surely.
\end{proof}

Hence, the in the cosine similarity of $\pm1$ will be preserved under random projection $Q$.
\begin{corollary}\label{cor:cosineEqual1}
If $\frac{(\tR x,\tR y)}{\|\tR x\|\|\tR y\|}= \pm 1$ almost surely if and only if $\frac{(x,y)}{\|x\|\|y\|}=\pm 1$.
\end{corollary}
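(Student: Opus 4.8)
The plan is to deduce the Corollary from the vector-level equivalence, assembling three ingredients already available: the equality case of the Cauchy--Schwarz inequality, the deterministic identity displayed immediately before Proposition~\ref{propo:cosineEqual1}, and Proposition~\ref{propo:cosineEqual1} itself. Throughout, $x$ and $y$ denote the fixed non-zero vectors for which the cosine similarities are defined.

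First I would handle the ``if'' direction. Suppose $\frac{(x,y)}{\|x\|\|y\|} = \pm 1$. By the equality case of Cauchy--Schwarz there is an $\alpha > 0$ with $x = \pm\alpha y$, the sign matching that of the cosine similarity. Since $y \neq 0$, the vector $\tR y$ is a nondegenerate Gaussian random vector in $\R^q$, so $\P(\tR y = 0) = 0$; on the almost-sure event $\{\tR y \neq 0\}$ the computation preceding Proposition~\ref{propo:cosineEqual1} applies verbatim and yields $\frac{(\tR x,\tR y)}{\|\tR x\|\|\tR y\|} = \pm 1$. Hence the projected cosine similarity equals $\pm 1$ almost surely.

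For the ``only if'' direction, suppose $\frac{(\tR x,\tR y)}{\|\tR x\|\|\tR y\|} = \pm 1$ almost surely. This event then has positive probability, so the hypothesis of Proposition~\ref{propo:cosineEqual1} is satisfied and we obtain $x = \pm\alpha y$ for some $\alpha > 0$; since $x$ and $y$ are deterministic this is a genuine identity, not merely an a.s. statement. A one-line computation gives $\frac{(x,y)}{\|x\|\|y\|} = \frac{\pm\alpha\|y\|^2}{\alpha\|y\|^2} = \pm 1$, with the sign inherited from that of the projected cosine.

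I do not expect a real obstacle here: the substance is already contained in Proposition~\ref{propo:cosineEqual1} and the surrounding discussion, so this is genuinely a corollary. The only care needed is bookkeeping --- tracking the $\pm$ signs consistently through both equivalences, and noting explicitly that $\tR y \neq 0$ almost surely so that the deterministic cosine identity may legitimately be applied. If one prefers, the whole argument can be packaged as a single chain of equivalences: $\frac{(x,y)}{\|x\|\|y\|} = \pm 1$ iff $x = \pm\alpha y$ for some $\alpha > 0$ iff $\frac{(\tR x,\tR y)}{\|\tR x\|\|\tR y\|} = \pm 1$ almost surely; but the three-step version above already suffices.
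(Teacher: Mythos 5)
Your proof is correct and takes essentially the same route as the paper's: a chain of equivalences through the condition $x = \pm\alpha y$, using Proposition~\ref{propo:cosineEqual1} for the ``only if'' direction and the Cauchy--Schwarz equality case together with the deterministic computation for the ``if'' direction. The only difference is that you make explicit the (harmless) point that $\tR y \neq 0$ almost surely, which the paper leaves implicit.
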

\begin{proof}
$\frac{(\tR x,\tR y)}{\|\tR x\|\|\tR y\|}= \pm 1$, by Proposition \ref{propo:cosineEqual1}, holds if and only if $x =\pm \alpha y$ for some $\alpha>0$, and, Cauchy-Shcwarz inequality, this holds if and only if
$\frac{(x,y)}{\|x\|\|y\|}= \pm 1$.
\end{proof}

\subsubsection{Central Limit Theorem}
In this section we will show that the value of $\frac{(\tR x,\tR y)}{\|\tR x\|\|\tR y\|}$ is approximately normally distributed around $\rho$ with a variance $\frac{(1-\rho^2)^2}{q}$.

\begin{thm}\label{thm:asymptoticNormality}
For non-zero vectors $x$ and $y$ in $\R^n$ we have
\begin{equation}
\sqrt{q}\left[\frac{(\tR x,\tR y)}{\|\tR x\|\|\tR y\|}-\frac{(x,y)}{\|x\|\|y\|}\right] \stackrel {d}{\to} \cN\left(0,\left[1- \left(\frac{(x,y)}{\|x\|\|y\|}\right)^2\right]^2\right),
\end{equation}
as $q\to \infty$.
\end{thm}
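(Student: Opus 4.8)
The plan is to start from the orthogonal representation of the cosine similarity established in Theorem~\ref{representationTheorem}(b). By the scale invariance recorded in \eqref{eq:similarityMatrixScaling}, $\frac{(\tR x,\tR y)}{\|\tR x\|\|\tR y\|}=\frac{(Qx,Qy)}{\|Qx\|\|Qy\|}$, so by \eqref{eq:cosSimilaritytRepresentation} it suffices to analyze
$$g_q := \frac{\rho\|N\| + M_1\sqrt{1-\rho^2}}{\sqrt{\big(\rho\|N\|+M_1\sqrt{1-\rho^2}\big)^2 + (1-\rho^2)\|M_{2\ldots q}\|^2}}$$
for independent standard normal vectors $M,N\in\R^q$, where $\rho=\tfrac{(x,y)}{\|x\|\|y\|}$. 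Dividing numerator and denominator by $\sqrt q$, write $g_q = \psi\!\left(\tfrac{\|N\|}{\sqrt q},\, \tfrac{M_1}{\sqrt q},\, \tfrac{\|M_{2\ldots q}\|^2}{q}\right)$ with $\psi(n,\mu,m) = \frac{\rho n+\mu\sqrt{1-\rho^2}}{\sqrt{(\rho n+\mu\sqrt{1-\rho^2})^2+(1-\rho^2)m}}$, which is $C^1$ near $(1,0,1)$ (the denominator there equals $1$) and satisfies $\psi(1,0,1)=\rho$. The degenerate case $|\rho|=1$ is trivial: then $\sqrt{1-\rho^2}=0$ and $g_q\equiv\rho$, so the left-hand side is identically $0=\cN(0,0)$; assume henceforth $|\rho|<1$.

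Next I would establish the joint CLT for the three arguments. The vector $N$, the coordinate $M_1$, and the block $M_{2\ldots q}$ are built from disjoint families of i.i.d.\ $\cN(0,1)$ variables, hence are mutually independent. By the classical CLT applied to $\sum_{i=1}^q(N_i^2-1)$ and to $\sum_{j=2}^q(M_j^2-1)$ (the stray $-1/\sqrt q$ being negligible), together with the delta method for $x\mapsto\sqrt x$ at $x=1$,
$$\sqrt q\!\left(\tfrac{\|N\|}{\sqrt q}-1\right)\stackrel{d}{\to}\cN(0,\tfrac12),\qquad M_1\sim\cN(0,1),\qquad \sqrt q\!\left(\tfrac{\|M_{2\ldots q}\|^2}{q}-1\right)\stackrel{d}{\to}\cN(0,2),$$
and these limits are independent (joint convergence of independent sequences being automatic); note $\sqrt q(\tfrac{M_1}{\sqrt q}-0)=M_1$ is already exactly normal. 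The one point requiring care is that $M_1/\sqrt q\to 0$ while $M_1$ itself does not concentrate, so it must be retained as a genuine random input rather than frozen at its mean — this is the only departure from a textbook delta-method setup and is the step I would double-check most carefully (including that the remainder in the first-order expansion of $\psi$, once multiplied by $\sqrt q$, is still $o_P(1)$, which follows from $C^1$-smoothness near $(1,0,1)$ and Slutsky).

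Finally, apply the multivariate delta method to $\psi$ at $(1,0,1)$. A direct gradient computation gives $\partial_n\psi=\rho(1-\rho^2)$, $\partial_\mu\psi=(1-\rho^2)^{3/2}$, $\partial_m\psi=-\tfrac{\rho(1-\rho^2)}{2}$ at $(1,0,1)$, hence
$$\sqrt q\,(g_q-\rho)\ \stackrel{d}{\to}\ \rho(1-\rho^2)\,\zeta_1+(1-\rho^2)^{3/2}\,\zeta_2-\tfrac{\rho(1-\rho^2)}{2}\,\zeta_3,$$
where $\zeta_1\sim\cN(0,\tfrac12)$, $\zeta_2\sim\cN(0,1)$, $\zeta_3\sim\cN(0,2)$ are independent. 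The limit is centered Gaussian with variance
$$\rho^2(1-\rho^2)^2\cdot\tfrac12+(1-\rho^2)^3\cdot1+\tfrac{\rho^2(1-\rho^2)^2}{4}\cdot2=(1-\rho^2)^2\Big[\tfrac{\rho^2}{2}+(1-\rho^2)+\tfrac{\rho^2}{2}\Big]=(1-\rho^2)^2,$$
which is exactly the claimed $\big(1-(x,y)^2/(\|x\|^2\|y\|^2)\big)^2$. The cancellation that produces this clean variance — the bracket collapsing via $\tfrac{\rho^2}{2}+(1-\rho^2)+\tfrac{\rho^2}{2}=1$ — is the only genuinely non-mechanical part of the calculation.
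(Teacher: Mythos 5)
Your proof is correct, and it takes a genuinely different route from the paper's. Both arguments start from the representation in Theorem~\ref{representationTheorem}(b), but the paper then derives an exact algebraic factorization of $\frac{(\tR x,\tR y)}{\|\tR x\|\|\tR y\|}-\rho$ (Lemma~\ref{lemma:T_qB_q}, in terms of $T_q=\rho\|N\|+M_1\sqrt{1-\rho^2}$ and $B_q=\|M_{2\ldots q}\|$) and passes to the limit factor by factor, using the CLT for $\frac{T_q^2-q\rho^2}{\sqrt q}$ and $\frac{B_q^2-q}{\sqrt q}$ and the LLN for the remaining factors; you instead apply the multivariate delta method to $\psi$ at $(1,0,1)$ with an explicit gradient computation. (The paper itself remarks that a delta-method proof is possible but does not carry it out.) Your gradient values and the variance bookkeeping $\rho^2(1-\rho^2)^2\cdot\tfrac12+(1-\rho^2)^3+\tfrac{\rho^2(1-\rho^2)^2}{4}\cdot 2=(1-\rho^2)^2$ check out, and the point you flag about $M_1/\sqrt q$ is not actually a departure from the textbook setup: $\sqrt q\,(M_1/\sqrt q-0)=M_1$ is exactly $\cN(0,1)$ for every $q$, so the joint-convergence hypothesis of the delta method holds verbatim (joint convergence being automatic for independent components). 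One concrete advantage of your route is that the paper's final computation isolates a factor $\frac{1-\rho^2}{2\rho}$ after multiplying and dividing by $T_q+\rho\sqrt{T_q^2+(1-\rho^2)B_q^2}$, which degenerates into a $0/0$ form at $\rho=0$ and would need a separate argument there, whereas your gradient $(0,1,0)$ at $\rho=0$ yields the limit $\cN(0,1)$ with no special pleading; you also dispose of $|\rho|=1$ explicitly. The trade-off is that the paper's identity is fully explicit and elementary, while yours leans on the (standard but imported) multivariate delta method.
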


Theorem \ref{thm:asymptoticNormality} can be proven using the delta-method technique. We will use the representation given in  Theorem \ref{representationTheorem} to prove it.

\begin{lemma}\label{lemma:T_qB_q}
Define $T_q=\rho\|N\|+M_1\sqrt{1-\rho^2}$ and $B_q = \|M_{2\ldots q}\|$, then from \eqref{eq:cosSimilaritytRepresentation} we have:
\begin{equation}
\frac{(\tR x,\tR y)}{\|\tR x\|\|\tR y\|}-\rho=
(1-\rho^2)\cdot \frac{(T_q)^2 - \rho^2 B_q^2}{\sqrt{(T_q)^2 + (1-\rho^2)B_q^2}}
\cdot \frac{1}{T_q+\rho\sqrt{(T_q)^2 + (1-\rho^2)B_q^2}}
\end{equation}
\end{lemma}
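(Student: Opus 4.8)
The statement is a purely algebraic rewriting of the representation \eqref{eq:cosSimilaritytRepresentation}, and the plan is simply to carry out that rewriting cleanly while recording why the divisions are legitimate almost surely. First I would introduce the shorthand $S_q := \sqrt{(T_q)^2 + (1-\rho^2)B_q^2}$, so that \eqref{eq:cosSimilaritytRepresentation} becomes $\dfrac{(\tR x,\tR y)}{\|\tR x\|\|\tR y\|} = \dfrac{T_q}{S_q}$, and the object of interest is $\dfrac{T_q}{S_q} - \rho = \dfrac{T_q - \rho S_q}{S_q}$.

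Next I would rationalize by multiplying numerator and denominator by the conjugate $T_q + \rho S_q$, giving $\dfrac{(T_q)^2 - \rho^2 S_q^2}{S_q\,(T_q + \rho S_q)}$. The one computation that does real work is the numerator: substituting $S_q^2 = (T_q)^2 + (1-\rho^2)B_q^2$ gives $(T_q)^2 - \rho^2 S_q^2 = (1-\rho^2)(T_q)^2 - \rho^2(1-\rho^2)B_q^2 = (1-\rho^2)\big((T_q)^2 - \rho^2 B_q^2\big)$. Plugging this back in and expanding $S_q$ to its radical form in both the remaining factors yields precisely the asserted identity.

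The only point needing care is that $S_q \ne 0$ and $T_q + \rho S_q \ne 0$. For $|\rho|<1$ and $q \ge 2$ we have $B_q = \|M_{2\ldots q}\| > 0$ almost surely, hence $S_q > 0$; and $T_q + \rho S_q = 0$ would force $(T_q)^2 = \rho^2 S_q^2$, which after substitution of $S_q^2$ collapses to $(T_q)^2 = \rho^2 B_q^2$ — a null event, since conditionally on $B_q$ the variable $T_q$ has a continuous law. The degenerate case $|\rho| = 1$ is already disposed of in Proposition \ref{propo:cosineEqual1} and Corollary \ref{cor:cosineEqual1}, and there both sides equal $0$ anyway. I do not expect any genuine obstacle: the whole content is the conjugate trick plus tracking the $(1-\rho^2)$ factor, and the only step one must not omit is the almost-sure non-vanishing of the two denominators.
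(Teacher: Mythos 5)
Your proposal is correct and follows essentially the same route as the paper's proof: rewrite the representation as $T_q/S_q$, subtract $\rho$, and rationalize with the conjugate $T_q+\rho S_q$, after which the factor $(1-\rho^2)$ drops out of the numerator. The only difference is your added (and welcome) remark on the almost-sure non-vanishing of the two denominators, which the paper leaves implicit.
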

\begin{proof}
We have:
\begin{align*}
\frac{(\tR x,\tR y)}{\|\tR x\|\|\tR y\|}-\rho&= \frac{T_q}{\sqrt{(T_q)^2 + (1-\rho^2)B_q^2}} -\rho=\frac{T_q-\rho\sqrt{(T_q)^2 + (1-\rho^2)B_q^2}}{\sqrt{(T_q)^2 + (1-\rho^2)B_q^2}}\\ &=\frac{T_q-\rho\sqrt{(T_q)^2 + (1-\rho^2)B_q^2}}{\sqrt{(T_q)^2 + (1-\rho^2)B_q^2}}\cdot \frac{T_q+\rho\sqrt{(T_q)^2 +(1-\rho^2)B_q^2}}{T_q+\rho\sqrt{(T_q)^2 + (1-\rho^2)B_q^2}} \\ &=\frac{(1-\rho^2)(T_q)^2 - \rho^2 (1-\rho^2)B_q^2}{\sqrt{(T_q)^2 + (1-\rho^2)B_q^2}}\cdot \frac{1}{T_q+\rho\sqrt{(T_q)^2 + (1-\rho^2)B_q^2}}. 
\end{align*}
\end{proof}

\begin{proof}[Proof of Theorem \ref{thm:asymptoticNormality}.]
We will use notation and results from Lemma \ref{lemma:T_qB_q}. From the Law of Large Numbers, we have 
\begin{equation}
\frac{T_q}{\sqrt{q}}=\rho\sqrt{\frac{1}{q}\sum_{j=1}^qN_j^2}+q^{-1/2}M_1\sqrt{1-\rho^2} \to \rho \cdot 1  + 0 \cdot \sqrt{1-\rho^2}=\rho \label{eq:asConvergence1}
\end{equation}
almost surely, as $q\to \infty$. Using the same arguments, we have $\frac{B_q}{\sqrt{q}}\to 1$ almost surely.
Hence, 
\begin{align}
\nonumber q^{-1/2}\sqrt{(T_q)^2 + (1-\rho^2)B_q^2}&=\sqrt{\left(\frac{T_q}{\sqrt{q}}\right)^2 + (1-\rho^2)\left(\frac{B_q}{\sqrt{q}}\right)^2}\\
&\to \sqrt{\rho^2 + (1-\rho^2)\cdot 1^2}=1 \label{eq:asConvergence2}
\end{align}
almost surely as $q\to \infty$.
\eqref{eq:asConvergence1} and \eqref{eq:asConvergence2} now imply 
\begin{equation}
q^{-1/2}(T_q+\rho\sqrt{(T_q)^2 + (1-\rho^2)B_q^2})
\to \rho +\rho = 2\rho.
\end{equation}
Using the Central Limit Theorem, we get
$$\frac{B_q^2-q}{\sqrt{q}} = \frac{1}{\sqrt{q}}\sum_{j=2}^{q}(M_j^2-1)-\frac{1}{\sqrt{q}} \stackrel{d}{\to} \cN(0,2),$$
\begin{align*}
\frac{T_q^2-q\rho^2}{\sqrt{q}}&=\rho^2\frac{\|N\|^2-q}{\sqrt{q}}+2\rho\sqrt{1-\rho^2}\frac{\|N\|}{\sqrt{q}}M_1+(1-\rho^2)\frac{M_1^2}{\sqrt{q}}\\ & \stackrel{d}{\to} \rho^2\cN(0,2)+2\rho\sqrt{1-\rho^2}\cN(0,1)+0=\cN(0,4\rho^2-2\rho^4).
\end{align*}
Now, we have
\begin{align*}
&\sqrt{q}\left[\frac{(\tR x,\tR y)}{\|\tR x\|\|\tR y\|}-\rho\right]\\
&=\frac{\frac{T_q^2 - q\rho^2}{\sqrt{q}} - \rho^2 \frac{B_q^2-q}{\sqrt{q}}}{q^{-1/2}\sqrt{(T_q)^2 + (1-\rho^2)B_q^2}}  \cdot \frac{(1-\rho^2)}{q^{-1/2}(T_q+\rho\sqrt{(T_q)^2 + (1-\rho^2)B_q^2})}\\
&= \frac{\cN(0, 4\rho^2-2\rho^4) -\rho^2\cN(0,2)}{1}\cdot \frac{1-\rho^2}{2\rho}=(1-\rho^2)\frac{\cN(0, 4\rho^2-2\rho^4) +\cN(0,2\rho^4)}{2\rho}\\
&=(1-\rho^2)\frac{\cN(0, 4\rho^2)}{2\rho} = (1-\rho^2) \cN(0,1)= \cN(0,(1-\rho^2)^2).
\end{align*}
\end{proof}

For practical purposes, we interpret the result as given in \eqref{eq:cosineAsympInterpretation} - that for {\it large} $q$ we have the normal distribution with small variance.
Hence, it is likely that the cosine similarity of randomly projected vectors will be close to the one we would get without the random projection.  

\begin{figure}[ht]
\begin{center}
\includegraphics[width=60mm]{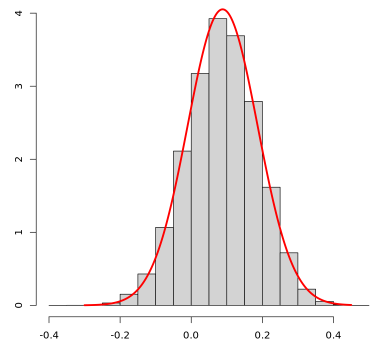}
\caption{Simulated random projection for $n=q=100$, where $(x,y)/\|x\|/\|y\|=\rho=0.154$}\label{Figure:1}
\end{center}
\end{figure}

From simulations, see Figure \ref{Figure:1}, even for $q=100$ we have {\it approximately} normal behavior, but the variance is large and the value might not be close to the original similarity. The result of Theorem \ref{thm:asymptoticNormality} will be more useful for larger values of $q$. 

The same result holds for the empirical correlation coefficient in linear regression and in this case it is known that this is not a practical result (see \cite[Example 3.6.]{MR1652247}). However, we are in a different setting here and the numerical simulations suggest that this result is stable and can be used in practice even for moderately large values of $q$. For example, for values of $q\geq 6400$ the error will be $\pm 0.05$ with the estimated probability of at least $0.95$. Concentration results to follow will further confirm this.

\subsubsection{Concentration Result}
Theorem \ref{thm:asymptoticNormality} guarantees that for large $q$ the cosine similarity random projection will be almost preserved with high probability for any pair of vectors in $\R^n$. However, this approach is not feasible to analyze the behavior of all pairs of vectors in a given set $y_1,\ldots, y_k$, as the computation becomes intractable and hard even for small values of $k$. If we want to get some guarantees that the similarity of all pairs of vectors is preserved within a small error with high probability under random projection, we need to use concentration inequalities.

The following theorem is the main result of this subsection.
\begin{thm}\label{thm:cosineConcentration}
Let $x$ and $y$ be non-zero vectors in $\R^n$ and $\varepsilon \in (0,0.055)$, then 
$$
\P\left(\left|\frac{(\tR x,\tR y)}{\|\tR x\|\|\tR y\|} -\rho\right|\geq \varepsilon  (1-\rho^2)\right)
\leq (4+\varepsilon^2)\left[1+\frac{\varepsilon^2}{2(1+\varepsilon\sqrt{2})} \right]^{-\frac{q}{2}}
$$
\end{thm}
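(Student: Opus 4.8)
The plan is to feed the orthogonal representation of Theorem~\ref{representationTheorem}(b) into the algebraic identity of Lemma~\ref{lemma:T_qB_q}, convert the event $\{|\tfrac{(\tR x,\tR y)}{\|\tR x\|\|\tR y\|}-\rho|\ge\varepsilon(1-\rho^2)\}$ into an event involving only a handful of independent $\chi^2$-type variables, and then estimate each piece by the Chernoff method. Concretely, write $\cos(Rx,Ry)=T_q/\sqrt{T_q^2+(1-\rho^2)B_q^2}$ with $T_q=\rho\|N\|+\sqrt{1-\rho^2}M_1$ and $B_q=\|M_{2\ldots q}\|$, where $M,N$ are independent standard normal vectors in $\R^q$; then $\|N\|^2\sim\chi^2_q$, $B_q^2\sim\chi^2_{q-1}$ and $M_1^2\sim\chi^2_1$ are mutually independent. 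Since replacing $y$ by $-y$ flips the signs of both $\rho$ and $\cos(Rx,Ry)$ while leaving the target event invariant (because $R(-y)=-Ry$), I may assume $\rho\ge 0$ throughout.

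\textbf{The deterministic core.} The heart of the argument is a pointwise implication: on the event $\{|\cos(Rx,Ry)-\rho|\ge\varepsilon(1-\rho^2)\}$, the auxiliary lengths must themselves deviate, in the sense that $(\|N\|-B_q)^2+M_1^2\ge c\,\varepsilon^2 B_q^2$ for an absolute constant $c$. To see this I would use that $t\mapsto t/\sqrt{1-t^2}$ is an increasing bijection and that $\cos(Rx,Ry)/\sqrt{1-\cos^2(Rx,Ry)}=T_q/(\sqrt{1-\rho^2}\,B_q)$, so that $\cos(Rx,Ry)-\rho$ is, after inverting this map, controlled by $T_q/(\sqrt{1-\rho^2}B_q)-\rho/\sqrt{1-\rho^2}=(\rho(\|N\|-B_q)+\sqrt{1-\rho^2}M_1)/(\sqrt{1-\rho^2}B_q)$; a Cauchy--Schwarz step on the coefficient pair $(\rho,\sqrt{1-\rho^2})$ then decouples the $\|N\|$-fluctuation from the $M_1$-fluctuation and makes the resulting bound uniform in $\rho$. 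Convexity of $t\mapsto t/\sqrt{1-t^2}$ on $[0,1)$ (this is exactly why the reduction to $\rho\ge0$ matters) together with the hypothesis $\varepsilon\le 0.055$ keeps the nonlinear remainder in the inversion under control. Splitting via $\{a+b\ge \kappa\}\subseteq\{a\ge\kappa/2\}\cup\{b\ge\kappa/2\}$, the event is contained in the four one-sided events $\{\|N\|\ge(1+\tfrac{\varepsilon}{\sqrt2})B_q\}$, $\{\|N\|\le(1-\tfrac{\varepsilon}{\sqrt2})B_q\}$, $\{M_1\ge\tfrac{\varepsilon}{\sqrt2}B_q\}$, $\{M_1\le-\tfrac{\varepsilon}{\sqrt2}B_q\}$.

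\textbf{The probabilistic estimate.} Each of these four probabilities is a tail of a ratio of independent $\chi^2$ variables, which I would bound by Chernoff's inequality using $\E[e^{\lambda\chi^2_k}]=(1-2\lambda)^{-k/2}$ and independence --- the same machinery already exploited for the dot product in \S\ref{sec:dot_product_results}. For instance $\P(\|N\|\ge(1+\tfrac{\varepsilon}{\sqrt2})B_q)=\P(\|N\|^2-(1+\tfrac{\varepsilon}{\sqrt2})^2B_q^2\ge0)\le(1-2\lambda)^{-q/2}(1+2\lambda(1+\tfrac{\varepsilon}{\sqrt2})^2)^{-(q-1)/2}$, and likewise for the other three; a single convenient (not fully optimal) choice of $\lambda$ --- chosen so as to avoid the spurious polynomial-in-$q$ prefactor that full optimisation would introduce --- bounds each term by $(1+\tfrac{\varepsilon^2}{4})\bigl[1+\tfrac{\varepsilon^2}{2(1+\varepsilon\sqrt2)}\bigr]^{-q/2}$, the $(q-1)$ exponents being absorbed into this. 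A union bound over the four events then gives the claimed $(4+\varepsilon^2)\bigl[1+\tfrac{\varepsilon^2}{2(1+\varepsilon\sqrt2)}\bigr]^{-q/2}$.

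\textbf{Main obstacle.} I expect the genuine difficulty to lie entirely in the deterministic reduction of the second step: establishing the implication ``$\cos(Rx,Ry)$ far from $\rho$ $\Rightarrow$ explicit deviation of $\|N\|,B_q,M_1$'' with constants sharp enough to land on exactly $(4+\varepsilon^2)\bigl[1+\tfrac{\varepsilon^2}{2(1+\varepsilon\sqrt2)}\bigr]^{-q/2}$, and \emph{uniformly} over $\rho\in[0,1)$. Both extremes are delicate --- $\rho$ near $0$, where $t\mapsto t/\sqrt{1-t^2}$ has derivative close to $1$, and $\rho$ near $1$, where it blows up --- and the factor $(1-\rho^2)$ in the tolerance has to be tracked exactly rather than bounded crudely; this is presumably the source of the numerical restriction $\varepsilon\le 0.055$. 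By contrast the $\chi^2$ Chernoff estimates are routine, though some bookkeeping is needed to reconcile the $\chi^2_q$ versus $\chi^2_{q-1}$ dimensions without losing the clean $q/2$ exponent.
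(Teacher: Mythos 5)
Your proposal follows essentially the same route as the paper: the orthogonal representation of Theorem~\ref{representationTheorem}(b), a reduction of the deviation event to the same four one-sided events on $M_1/\|M_{2\ldots q}\|$ (threshold $\varepsilon/\sqrt 2$) and $\|N\|/\|M_{2\ldots q}\|$, Chernoff bounds on the resulting independent $\chi^2$ ratios, and a union bound. The deterministic core you correctly identify as the crux is exactly the paper's Lemma~\ref{lemma:cosineConcentration}, carried out there via the increasing map $\psi(h)=h/\sqrt{1+h^2}$, the bound $\rho+\sqrt{1-\rho^2}\le\sqrt2$, and a case split on $\rho^2\le\varepsilon$ versus $\rho^2>\varepsilon$, which matches the uniform-in-$\rho$ difficulty you anticipate.
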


Recall the representation from Theorem \ref{representationTheorem}: For a given random $R$, for any vectors $x$ and $y$ in $\R^n$ there exist standard $q$-dimensional Gaussian vectors $N=N^{x,y}$ and $M=M^{x,y}$  such that 
$$
\frac{(\tR x,\tR y)}{\|\tR x\|\|\tR y\|}=
\frac{\rho\|N\| + M_1\sqrt{1-\rho^2}}{\sqrt{(\rho\|N\| + M_1\sqrt{1-\rho^2})^2 + (1-\rho^2)\|M_{2\ldots q}\|^2}}.
$$

\subsubsection*{Bounds on tails}
We will first determine some tail bounds. We start with the following lemma.

\begin{lemma}\label{lemma:tailBound1}
We have 
$$
\P\left(\frac{M_1}{\|M_{2\ldots q}\|}>\varepsilon \right)=\P\left(\frac{M_1}{\|M_{2\ldots q}\|}<-\varepsilon \right)
\leq \frac{1}{(1+\varepsilon^2)^{\frac{q-1}{2}}}
$$
\end{lemma}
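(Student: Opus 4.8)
The plan is to condition on the sub-vector $M_{2\ldots q}$, use independence of the coordinates of $M$ to collapse the conditional event to a one-dimensional Gaussian tail, and then average out using the chi-square Laplace transform. The ratio is well defined almost surely since $\P(M_{2\ldots q}=0)=0$.

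\textbf{Symmetry.} Since $M_1,\ldots,M_q$ are i.i.d.\ $\cN(0,1)$, the scalar $M_1$ is symmetric and independent of $\|M_{2\ldots q}\|$, so $M_1/\|M_{2\ldots q}\|$ has a symmetric law. This immediately gives the first equality $\P\!\left(\frac{M_1}{\|M_{2\ldots q}\|}>\varepsilon\right)=\P\!\left(\frac{M_1}{\|M_{2\ldots q}\|}<-\varepsilon\right)$, so it suffices to bound the right tail.

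\textbf{The bound.} I would rewrite $\P\!\left(\frac{M_1}{\|M_{2\ldots q}\|}>\varepsilon\right)=\P\bigl(M_1>\varepsilon\|M_{2\ldots q}\|\bigr)$ and condition on $M_{2\ldots q}$. By independence the conditional probability is $\P(\cN(0,1)>\varepsilon s)$ evaluated at $s=\|M_{2\ldots q}\|$, and since $\varepsilon s\ge 0$ the elementary one-line Chernoff estimate $\P(\cN(0,1)>a)\le e^{-a^2/2}$ for $a\ge 0$ gives
$$\P\bigl(M_1>\varepsilon\|M_{2\ldots q}\| \mid M_{2\ldots q}\bigr)\le \exp\!\Bigl(-\tfrac{\varepsilon^2}{2}\|M_{2\ldots q}\|^2\Bigr)\quad\text{a.s.}$$
Taking expectations and using that $\|M_{2\ldots q}\|^2$ is a sum of $q-1$ independent squared standard normals, the chi-square Laplace transform $\E\,e^{\lambda\|M_{2\ldots q}\|^2}=(1-2\lambda)^{-(q-1)/2}$ for $\lambda<1/2$ (the same identity used for $\|N\|^2$ in \S\ref{sec:dot_product_results}), evaluated at $\lambda=-\varepsilon^2/2$, yields
$$\P\!\left(\frac{M_1}{\|M_{2\ldots q}\|}>\varepsilon\right)\le \E\,\exp\!\Bigl(-\tfrac{\varepsilon^2}{2}\|M_{2\ldots q}\|^2\Bigr)=\bigl(1+\varepsilon^2\bigr)^{-(q-1)/2},$$
which is the claimed inequality.

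\textbf{Main difficulty.} There is essentially no obstacle here; the only points requiring care are applying the Gaussian tail bound only for the nonnegative argument $\varepsilon\|M_{2\ldots q}\|$ (where it holds) and recording that the ratio is a.s.\ defined. One could keep the sharper constant coming from $\P(\cN(0,1)>a)\le\tfrac12 e^{-a^2/2}$, but the stated bound does not need it, so I would omit that refinement.
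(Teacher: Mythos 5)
Your proposal is correct and follows essentially the same route as the paper's proof: symmetry of $M_1$ for the equality, conditioning on $\|M_{2\ldots q}\|$ to reduce to the Gaussian tail bound $\P(\cN(0,1)>a)\le e^{-a^2/2}$, and then the $\chi^2(q-1)$ Laplace transform at $\lambda=-\varepsilon^2/2$. The only cosmetic difference is that the paper labels the conditional tail step as ``Markov's inequality'' while you name it as a Chernoff estimate; the computation is identical.
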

\begin{proof}
Note that $M_1\sim \cN(0,1)$ and $\|M_{2\ldots q}\|^2\sim \chi^2(q-1)$ are independent.
The equality $\P\left(\frac{M_1}{\|M_{2\ldots q}\|}>\varepsilon \right)=\P\left(\frac{M_1}{\|M_{2\ldots q}\|}<-\varepsilon \right)$ follows from the fact that both $M_1$ and $-M_1$ have the same distribution.
Further, by Markov's inequality we get
\begin{align*}
&\P\left(\frac{M_1}{\|M_{2\ldots q}\|}>\varepsilon \right)= \E[\P\left(M_1>\varepsilon \|M_{2\ldots q}\| | \|M_{2\ldots q}\| \right)]\\
&\leq \E[\exp\left(-\varepsilon^2 \|M_{2\ldots q}\|^2/2\right)]= \frac{1}{(1+\varepsilon^2)^{\frac{q-1}{2}}}.
\end{align*}
The last equality follows from the fact that $\|M_{2\ldots q}\|^2$ has the $\chi^2(q-1)$ distribution and the Laplace transform for this distribution.
\end{proof}

\begin{lemma}\label{lemma:tailBound2}
For $\varepsilon>0$ we have:
\begin{equation}\label{eq:betaEstimateUpper}
\P\left(\frac{\|N\|}{\|M_{2\ldots q}\|}>\sqrt{1+\varepsilon}\right)\leq 
\sqrt{1+\frac{\varepsilon}{2}}\left[1+\frac{\varepsilon^2}{4(1+\varepsilon)} \right]^{-\frac{q}{2}}
\end{equation}
\begin{equation}\label{eq:betaEstimateLower}
\P\left(\frac{\|N\|}{\|M_{2\ldots q}\|}<\sqrt{1-\varepsilon}\right)\leq 
\sqrt{1-\frac{\varepsilon}{2}}\left[1+\frac{\varepsilon^2}{4(1-\varepsilon)} \right]^{-\frac{q}{2}}
\end{equation}
\end{lemma}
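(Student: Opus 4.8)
The plan is to reduce the ratio $\|N\|/\|M_{2\ldots q}\|$ to a statement about the independent chi-squared variables $U:=\|N\|^2\sim\chi^2(q)$ and $W:=\|M_{2\ldots q}\|^2\sim\chi^2(q-1)$, and then apply a Chernoff/Laplace-transform bound. For \eqref{eq:betaEstimateUpper}, I would write
$$
\P\!\left(\frac{\|N\|}{\|M_{2\ldots q}\|}>\sqrt{1+\varepsilon}\right)=\P\bigl(U-(1+\varepsilon)W>0\bigr),
$$
and for any $\lambda>0$ with $\lambda<1/2$ and $\lambda(1+\varepsilon)>-1/2$ (automatic here) apply Markov to $\exp(\lambda[U-(1+\varepsilon)W])$. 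Using the $\chi^2$ moment generating functions $\E[e^{\lambda U}]=(1-2\lambda)^{-q/2}$ and $\E[e^{-\mu W}]=(1+2\mu)^{-(q-1)/2}$ with $\mu=\lambda(1+\varepsilon)$, the bound becomes
$$
(1-2\lambda)^{-q/2}\bigl(1+2\lambda(1+\varepsilon)\bigr)^{-(q-1)/2}.
$$
Then I would optimize over $\lambda$; the natural choice is the value making the $q$-power terms balance, i.e. $\lambda$ chosen so that $(1-2\lambda)(1+2\lambda(1+\varepsilon))$ is maximized, which gives $\lambda=\varepsilon/\bigl(4(1+\varepsilon)\bigr)$ — one checks $1-2\lambda=\tfrac{2+\varepsilon}{2(1+\varepsilon)}$ and $1+2\lambda(1+\varepsilon)=\tfrac{2+\varepsilon}{2}$, so the two bases differ only by the factor $(1+\varepsilon)$. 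Substituting yields exactly $\sqrt{1+\varepsilon/2}\,[1+\varepsilon^2/(4(1+\varepsilon))]^{-q/2}$ after collecting the leftover half-power $(1+2\lambda(1+\varepsilon))^{1/2}=\sqrt{(2+\varepsilon)/2}$ against one factor of the base; the algebra is routine once the substitution is made. The identity $\tfrac{2(1+\varepsilon)^2}{(2+\varepsilon)^2}=\bigl[1+\tfrac{\varepsilon^2}{4(1+\varepsilon)}\bigr]^{-1}$ is the key simplification to verify.

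For \eqref{eq:betaEstimateLower} the argument is symmetric: write
$$
\P\!\left(\frac{\|N\|}{\|M_{2\ldots q}\|}<\sqrt{1-\varepsilon}\right)=\P\bigl((1-\varepsilon)W-U>0\bigr),
$$
and apply Markov to $\exp(\lambda[(1-\varepsilon)W-U])$ with $\lambda>0$ small enough that $1-2\lambda(1-\varepsilon)>0$. This gives the bound $(1+2\lambda)^{-q/2}(1-2\lambda(1-\varepsilon))^{-(q-1)/2}$; optimizing with $\lambda=\varepsilon/\bigl(4(1-\varepsilon)\bigr)$ produces $1+2\lambda=\tfrac{2-\varepsilon}{2(1-\varepsilon)}$ and $1-2\lambda(1-\varepsilon)=\tfrac{2-\varepsilon}{2}$, and the same collection of terms yields $\sqrt{1-\varepsilon/2}\,[1+\varepsilon^2/(4(1-\varepsilon))]^{-q/2}$, using now $\tfrac{2(1-\varepsilon)^2}{(2-\varepsilon)^2}=\bigl[1+\tfrac{\varepsilon^2}{4(1-\varepsilon)}\bigr]^{-1}$.

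The only genuine subtlety — and the part I would flag as the main obstacle — is bookkeeping the off-by-one in the degrees of freedom: $U$ has $q$ degrees of freedom but $W$ has only $q-1$, so the two Laplace-transform powers are $q/2$ and $(q-1)/2$ rather than both $q/2$. The clean way to handle this is to factor the product as $\bigl[(1-2\lambda)(1+2\lambda(1+\varepsilon))\bigr]^{-(q-1)/2}\cdot(1-2\lambda)^{-1/2}$, absorb the $(q-1)/2$-power into the stated $[\,\cdot\,]^{-q/2}$ bound (legitimate since $1+\varepsilon^2/(4(1+\varepsilon))>1$, so raising to a larger exponent only decreases the bound — one must check the direction of this inequality carefully, as it goes the "right" way here), and identify the residual $(1-2\lambda)^{-1/2}=\sqrt{2(1+\varepsilon)/(2+\varepsilon)}$ with the prefactor $\sqrt{1+\varepsilon/2}$; indeed $2(1+\varepsilon)/(2+\varepsilon)\le 1+\varepsilon/2$ for $\varepsilon>0$. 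Everything else is a direct computation with the chi-squared moment generating function, which we may invoke freely.
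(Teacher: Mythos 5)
Your approach is the same as the paper's: reduce to $\P(U-(1+\varepsilon)W>0)$ for independent $U=\|N\|^2\sim\chi^2(q)$ and $W=\|M_{2\ldots q}\|^2\sim\chi^2(q-1)$, apply Markov to the exponential, multiply the two moment generating functions, and optimize at $\lambda=\varepsilon/(4(1+\varepsilon))$ (resp.\ $\varepsilon/(4(1-\varepsilon))$). Your values $1-2\lambda=\tfrac{2+\varepsilon}{2(1+\varepsilon)}$ and $1+2\lambda(1+\varepsilon)=\tfrac{2+\varepsilon}{2}$ are correct, and their product is exactly $1+\tfrac{\varepsilon^2}{4(1+\varepsilon)}$. (Your stated ``key identity'' is off, though: $\bigl[1+\tfrac{\varepsilon^2}{4(1+\varepsilon)}\bigr]^{-1}=\tfrac{4(1+\varepsilon)}{(2+\varepsilon)^2}$, not $\tfrac{2(1+\varepsilon)^2}{(2+\varepsilon)^2}$; these agree only at $\varepsilon=1$.)

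The one step that would fail as written is your handling of the degrees-of-freedom mismatch. You propose to bound $B^{-(q-1)/2}$ by $B^{-q/2}$ where $B=1+\tfrac{\varepsilon^2}{4(1+\varepsilon)}>1$, asserting the inequality ``goes the right way.'' It does not: since $B>1$, we have $B^{-(q-1)/2}=B^{1/2}\,B^{-q/2}>B^{-q/2}$, so replacing the exponent $-(q-1)/2$ by $-q/2$ \emph{decreases} the quantity and is not a legitimate upper bound. Your proof survives only because the slack in your second inequality, $(1-2\lambda)^{-1/2}\le\sqrt{1+\varepsilon/2}$, is exactly the discarded factor $B^{1/2}$; indeed $B^{1/2}(1-2\lambda)^{-1/2}=\bigl(1+2\lambda(1+\varepsilon)\bigr)^{1/2}=\sqrt{1+\varepsilon/2}$ is an identity, so no inequality is needed at all. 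The clean fix (which is what the paper does) is to peel the leftover half-power off the factor that already carries the exponent $-(q-1)/2$, writing
\begin{equation*}
(1-2\lambda)^{-\frac{q}{2}}\bigl(1+2\lambda(1+\varepsilon)\bigr)^{-\frac{q-1}{2}}
=\bigl(1+2\lambda(1+\varepsilon)\bigr)^{\frac{1}{2}}\,\Bigl[(1-2\lambda)\bigl(1+2\lambda(1+\varepsilon)\bigr)\Bigr]^{-\frac{q}{2}}
=\sqrt{1+\tfrac{\varepsilon}{2}}\;B^{-\frac{q}{2}},
\end{equation*}
which is exact. The same correction applies to the lower tail, where the prefactor $\bigl(1-2\lambda(1-\varepsilon)\bigr)^{1/2}=\sqrt{1-\varepsilon/2}$ emerges identically. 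With that repair your argument coincides with the paper's proof.
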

\begin{proof}
Using Markov inequality, we have  for all $\lambda\in (0,1/2)$:
\begin{align*}
&\P\left(\frac{\|N\|^2}{\|M_{2\ldots q}\|^2}>1+\varepsilon\right)=\P(\lambda \|N\|^2> \lambda (1+\varepsilon) \|M_{2\ldots q}\|^2)\\
&=\P(\lambda \|N\|^2- \lambda (1+\varepsilon) \|M_{2\ldots q}\|^2>0)=\P(\exp[\lambda \|N\|^2- \lambda (1+\varepsilon) \|M_{2\ldots q}\|^2]>1)\\
&\leq \E[e^{\lambda \|N\|^2- \lambda (1+\varepsilon) \|M_{2\ldots q}\|^2}]\leq \frac{1}{(1-2\lambda)^{\frac{q}{2}}}\cdot \frac{1}{(1+2 (1+\varepsilon)\lambda)^{\frac{q-1}{2}}}.
\end{align*}
Note that,
\begin{align*}
&(1-2\lambda)(1+2 (1+\varepsilon)\lambda)\\
&=1+2\varepsilon\lambda -4(1+\varepsilon)\lambda^2\\
&=1+\frac{\varepsilon^2}{4(1+\varepsilon)} - \left(\frac{\varepsilon}{2\sqrt{1+\varepsilon}}- 2\sqrt{1+\varepsilon} \lambda\right)^2
\end{align*}
By setting $\lambda = \frac{\varepsilon}{4(1+\varepsilon)}$ we get 

\begin{align*}
&\frac{1}{(1-2\lambda)^{\frac{q}{2}}}\cdot \frac{1}{(1+2 (1+\varepsilon)\lambda)^{\frac{q-1}{2}}}\\
&=(1+2(1+\varepsilon)\lambda)^{\frac{1}{2}}[(1-2\lambda)(1+2 (1+\varepsilon)\lambda)]^{-\frac{q}{2}}\\
&=\sqrt{1+\frac{\varepsilon}{2}}\left[1+\frac{\varepsilon^2}{4(1+\varepsilon)} \right]^{-\frac{q}{2}}
\end{align*}
This proves \eqref{eq:betaEstimateUpper}.  In the similar way we prove \eqref{eq:betaEstimateLower}:
\begin{align*}
&\P\left(\frac{\|N\|^2}{\|M_{2\ldots q}\|^2}<1-\varepsilon\right)=\P(\lambda \|N\|^2< \lambda (1-\varepsilon) \|M_{2\ldots q}\|^2)\\
&=\P( \lambda (1-\varepsilon) \|M_{2\ldots q}\|^2-\lambda \|N\|^2>0)=\P(\exp[ \lambda (1-\varepsilon) \|M_{2\ldots q}\|^2-\lambda \|N\|^2]>1)\\
&\leq \E[e^{\lambda (1-\varepsilon) \|M_{2\ldots q}\|^2-\lambda \|N\|^2}]\leq \frac{1}{(1+2\lambda)^{\frac{q}{2}}}\cdot \frac{1}{(1-2 (1-\varepsilon)\lambda)^{\frac{q-1}{2}}}.
\end{align*}
Note that,
\begin{align*}
&(1+2\lambda)(1-2 (1-\varepsilon)\lambda)\\
&=1+2\varepsilon\lambda -4(1-\varepsilon)\lambda^2\\
&=1+\frac{\varepsilon^2}{4(1-\varepsilon)} - \left(\frac{\varepsilon}{2\sqrt{1-\varepsilon}}- 2\sqrt{1-\varepsilon} \lambda\right)^2
\end{align*}
By setting $\lambda = \frac{\varepsilon}{4(1-\varepsilon)}$ we get 

\begin{align*}
&\frac{1}{(1+2\lambda)^{\frac{p}{2}}}\cdot \frac{1}{(1-2 (1-\varepsilon)\lambda)^{\frac{q-1}{2}}}\\
&=(1-2(1-\varepsilon)\lambda)^{\frac{1}{2}}[(1+2\lambda)(1-2 (1-\varepsilon)\lambda)]^{-\frac{q}{2}}\\
&=\sqrt{1-\frac{\varepsilon}{2}}\left[1+\frac{\varepsilon^2}{4(1-\varepsilon)} \right]^{-\frac{q}{2}}
\end{align*}
This completes the proof.
\end{proof}

\begin{corollary}\label{cor:sumConcetration}
We have
\begin{equation}
\P\left(\frac{\|N\|}{\|M_{2\ldots q}\|}>\sqrt{1+\varepsilon}\right)
+\P\left(\frac{\|N\|}{\|M_{2\ldots q}\|}<\sqrt{1-\varepsilon}\right)
\leq 2\left[1+\frac{\varepsilon^2}{4(1+\varepsilon)} \right]^{-\frac{q}{2}}
\end{equation}
\end{corollary}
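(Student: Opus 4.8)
The plan is to deduce this directly from the two tail bounds already established in Lemma~\ref{lemma:tailBound2}, since those estimates are sharper than what we need here and only require some elementary massaging. First I would write
$$
\P\left(\frac{\|N\|}{\|M_{2\ldots q}\|}>\sqrt{1+\varepsilon}\right)
+\P\left(\frac{\|N\|}{\|M_{2\ldots q}\|}<\sqrt{1-\varepsilon}\right)
\leq \sqrt{1+\tfrac{\varepsilon}{2}}\left[1+\tfrac{\varepsilon^2}{4(1+\varepsilon)}\right]^{-q/2}
+\sqrt{1-\tfrac{\varepsilon}{2}}\left[1+\tfrac{\varepsilon^2}{4(1-\varepsilon)}\right]^{-q/2},
$$
which is just \eqref{eq:betaEstimateUpper} plus \eqref{eq:betaEstimateLower}. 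The goal is then to upper bound this by $2\left[1+\frac{\varepsilon^2}{4(1+\varepsilon)}\right]^{-q/2}$.

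The next step is a monotonicity observation: since $0<\varepsilon<1$ we have $1-\varepsilon<1+\varepsilon$, hence $\frac{\varepsilon^2}{4(1-\varepsilon)}>\frac{\varepsilon^2}{4(1+\varepsilon)}>0$, and because $t\mapsto(1+t)^{-q/2}$ is decreasing for $t>0$ this yields
$$
\left[1+\tfrac{\varepsilon^2}{4(1-\varepsilon)}\right]^{-q/2}\leq \left[1+\tfrac{\varepsilon^2}{4(1+\varepsilon)}\right]^{-q/2}.
$$
Substituting this into the bound above lets me factor out the common term $\left[1+\frac{\varepsilon^2}{4(1+\varepsilon)}\right]^{-q/2}$, reducing the claim to the purely scalar inequality $\sqrt{1+\frac{\varepsilon}{2}}+\sqrt{1-\frac{\varepsilon}{2}}\leq 2$.

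Finally I would verify this last inequality by squaring: $\bigl(\sqrt{1+\frac{\varepsilon}{2}}+\sqrt{1-\frac{\varepsilon}{2}}\bigr)^2 = 2 + 2\sqrt{(1+\frac{\varepsilon}{2})(1-\frac{\varepsilon}{2})} = 2 + 2\sqrt{1-\frac{\varepsilon^2}{4}} \leq 4$, so the sum of the roots is at most $2$ (this is just concavity of $\sqrt{\cdot}$). Combining the three steps gives the stated bound. There is no real obstacle here — the only thing to be slightly careful about is that $\sqrt{1+\varepsilon/2}>1$, so one genuinely needs the compensating factor $\sqrt{1-\varepsilon/2}<1$ from the lower tail term and cannot simply bound each square root by $1$; the squaring argument handles this cleanly.
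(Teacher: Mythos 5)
Your proof is correct. It starts from the same two tail estimates \eqref{eq:betaEstimateUpper} and \eqref{eq:betaEstimateLower} of Lemma~\ref{lemma:tailBound2} as the paper does, but combines them differently: the paper applies the Cauchy--Schwarz inequality to the sum $\sqrt{1+\varepsilon/2}\,b_{+}+\sqrt{1-\varepsilon/2}\,b_{-}$ (writing $b_{\pm}$ for the two bracketed powers), using $(1+\varepsilon/2)+(1-\varepsilon/2)=2$ in the first factor and then $b_{-}^2\leq b_{+}^2$ inside the second, whereas you first invoke the monotonicity $b_{-}\leq b_{+}$ to factor out $b_{+}$ and then verify the scalar inequality $\sqrt{1+\varepsilon/2}+\sqrt{1-\varepsilon/2}\leq 2$ by squaring. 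The two arguments rest on exactly the same ingredients---the comparison $\frac{\varepsilon^2}{4(1-\varepsilon)}\geq\frac{\varepsilon^2}{4(1+\varepsilon)}$ and the fact that the squares of the two prefactors sum to $2$---and they produce the same constant $2$; yours is marginally more elementary in that it avoids Cauchy--Schwarz, and your closing remark correctly identifies why one cannot simply bound each prefactor by $1$. The only (shared and implicit) caveat is that the argument presupposes $\varepsilon\in(0,1)$ so that the lower-tail estimate of Lemma~\ref{lemma:tailBound2} is meaningful, which is the regime in which the corollary is actually used.
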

\begin{proof}
Using \eqref{eq:betaEstimateUpper}, \eqref{eq:betaEstimateLower} and the Cauchy-Schwarz inequality we have
\begin{align*}
&\P\left(\frac{\|N\|}{\|M_{2\ldots q}\|}>\sqrt{1+\varepsilon}\right)+\P\left(\frac{\|N\|}{\|M_{2\ldots q}\|}<\sqrt{1-\varepsilon}\right)\\
&\leq \sqrt{1+\frac{\varepsilon}{2}}\left[1+\frac{\varepsilon^2}{4(1+\varepsilon)} \right]^{-\frac{q}{2}}+ \sqrt{1-\frac{\varepsilon}{2}}\left[1+\frac{\varepsilon^2}{4(1-\varepsilon)} \right]^{-\frac{q}{2}}\\
&\leq \sqrt{1+\frac{\varepsilon}{2} +1-\frac{\varepsilon}{2}}\cdot\sqrt{\left[1+\frac{\varepsilon^2}{4(1+\varepsilon)} \right]^{-q}+\left[1+\frac{\varepsilon^2}{4(1-\varepsilon)} \right]^{-q}}\\
&\leq \sqrt{2}\cdot \sqrt{2\left[1+\frac{\varepsilon^2}{4(1+\varepsilon)} \right]^{-q}}= 2\left[1+\frac{\varepsilon^2}{4(1+\varepsilon)} \right]^{-\frac{q}{2}}
\end{align*}
\end{proof}

\subsubsection*{Proof of the concentration result}
We need one more lemma to have everything for the proof.

\begin{lemma}\label{lemma:cosineConcentration}
Given $\varepsilon\in (0, 0.08]$, $\left|\frac{M_1}{\| M_{2\ldots q}\|}\right|<\varepsilon/2$ and $\sqrt{1-\varepsilon}<\frac{\|N\|}{\|M_{2\ldots q}\|}<\sqrt{1+\varepsilon}$, then
$$\left|\frac{(\tR x,\tR y)}{\|\tR x\|\|\tR y\|} -\rho\right|\leq \frac{\varepsilon}{\sqrt{2}}  (1-\rho^2) .$$
\end{lemma}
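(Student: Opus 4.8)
The plan is to start from the exact representation in \eqref{eq:cosSimilaritytRepresentation} and reduce the quantity to a function of the two controlled ratios $a := M_1/\|M_{2\ldots q}\|$ and $b := \|N\|/\|M_{2\ldots q}\|$. Dividing numerator and denominator of \eqref{eq:cosSimilaritytRepresentation} by $\|M_{2\ldots q}\|$ gives
$$
\frac{(\tR x,\tR y)}{\|\tR x\|\|\tR y\|}=\frac{\rho b + a\sqrt{1-\rho^2}}{\sqrt{(\rho b + a\sqrt{1-\rho^2})^2 + (1-\rho^2)}},
$$
so the whole expression depends only on $(a,b,\rho)$. I would then invoke Lemma \ref{lemma:T_qB_q}, whose conclusion, after the same normalization by $\|M_{2\ldots q}\|$, reads
$$
\frac{(\tR x,\tR y)}{\|\tR x\|\|\tR y\|}-\rho=
(1-\rho^2)\cdot \frac{(\rho b+a\sqrt{1-\rho^2})^2 - \rho^2}{\sqrt{(\rho b+a\sqrt{1-\rho^2})^2 + (1-\rho^2)}}
\cdot \frac{1}{(\rho b+a\sqrt{1-\rho^2})+\rho\sqrt{(\rho b+a\sqrt{1-\rho^2})^2 + (1-\rho^2)}}.
$$
The factor $(1-\rho^2)$ is already isolated, so it remains to bound the product of the last two factors by $1/\sqrt{2}$ uniformly in $\rho\in[-1,1]$, under the hypotheses $|a|<\varepsilon/2$ and $\sqrt{1-\varepsilon}<b<\sqrt{1+\varepsilon}$.

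The main work is to control $T_q^2/\|M_{2\ldots q}\|^2 = (\rho b+a\sqrt{1-\rho^2})^2$. Writing $s:=\rho b+a\sqrt{1-\rho^2}$, I would first show $s$ is close to $\rho$: since $|b-1|\le \varepsilon$ (as $\sqrt{1+\varepsilon}-1\le\varepsilon$ and $1-\sqrt{1-\varepsilon}\le\varepsilon$ for small $\varepsilon$) and $\sqrt{1-\rho^2}\le 1$, we get $|s-\rho|\le |\rho|\,|b-1| + |a| \le \varepsilon + \varepsilon/2 = 3\varepsilon/2$. Consequently $s^2-\rho^2 = (s-\rho)(s+\rho)$ is bounded by roughly $3\varepsilon$ in absolute value, and $s^2+(1-\rho^2)$ lies within a small multiple of $\varepsilon$ of $1$, hence is bounded below by something like $1-3\varepsilon$, so the middle factor $\tfrac{s^2-\rho^2}{\sqrt{s^2+(1-\rho^2)}}$ is bounded by approximately $3\varepsilon/\sqrt{1-3\varepsilon}$. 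For the last factor, the denominator $s+\rho\sqrt{s^2+(1-\rho^2)}$ must be bounded away from $0$; here the worst case is $\rho$ near $0$ (where it is $\approx s\approx 0$), but then the numerator $s^2-\rho^2\approx s^2$ is quadratically small, so one should estimate the product of the two factors jointly rather than separately near $\rho=0$.

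The delicate step — and the one I expect to be the real obstacle — is handling the near-cancellation uniformly in $\rho$, in particular getting the clean constant $1/\sqrt{2}$ rather than some larger absolute constant times $\varepsilon$. I would handle it by treating the product of the last two factors as
$$
\frac{s^2-\rho^2}{\big(\sqrt{s^2+(1-\rho^2)}\big)\big(s+\rho\sqrt{s^2+(1-\rho^2)}\big)}
=\frac{s^2-\rho^2}{s\sqrt{s^2+(1-\rho^2)} + \rho\,(s^2+(1-\rho^2))},
$$
and then bounding $|s^2-\rho^2|$ and the denominator in terms of $|s-\rho|$ and $|s+\rho|$, using $s^2+(1-\rho^2)\ge 1-|s^2-\rho^2|$. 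Since $|s-\rho|\le 3\varepsilon/2$, the whole product is $O(\varepsilon)$ with an explicit constant; choosing the hypothesis window $\varepsilon\in(0,0.08]$ is exactly what makes the resulting constant collapse to $\le 1/\sqrt{2}$. The proof is thus a careful but elementary interval-arithmetic estimate; the only subtlety is organizing the bounds so the $\rho\to 0$ regime does not blow up, which the joint (rather than factor-by-factor) estimate above resolves.
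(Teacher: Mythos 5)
Your setup is sound: the reduction to the two ratios $a=M_1/\|M_{2\ldots q}\|$, $b=\|N\|/\|M_{2\ldots q}\|$ and the normalized form of Lemma~\ref{lemma:T_qB_q} are exactly right, and you correctly locate the two difficulties (the near-cancellation as $\rho\to 0$, and extracting the specific constant $1/\sqrt{2}$). The paper's own proof is essentially the same algebra packaged differently: it writes both quantities as values of the increasing map $\psi(h)=h/\sqrt{1+h^2}$, bounds the increment by a mean-value estimate $\psi'(\cdot)\cdot|h_1-h_0|$ (Lemma~\ref{lemma:meanValue}), and splits the lower bound into the cases $\rho^2>\varepsilon$ and $\rho^2\le\varepsilon$ to handle the $\rho\approx 0$ regime. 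So the route is not genuinely different; the question is whether your plan actually closes.

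It does not, as written, and the gap is precisely at the step you flag as "the real obstacle." The bound $|s-\rho|\le|\rho|\,|b-1|+|a|\le \varepsilon+\varepsilon/2=3\varepsilon/2$ is too weak to yield $\varepsilon/\sqrt{2}$: for $|\rho|$ close to $1$ one has $w=\sqrt{s^2+(1-\rho^2)}\approx 1$ and $s+\rho w\approx s+\rho$, so the product of your last two factors is essentially $|s-\rho|$ itself, and $3\varepsilon/2>\varepsilon/\sqrt{2}$. No restriction of the window to $\varepsilon\in(0,0.08]$ rescues a bound of the form $C\varepsilon$ with $C=3/2$; the constant $1/\sqrt{2}$ does not "collapse out" of smallness of $\varepsilon$ but comes from the $\rho$-dependent refinement
$|s-\rho|\le |\rho|\,\frac{\varepsilon}{1+\sqrt{1-\varepsilon}}+\frac{\varepsilon}{2}\sqrt{1-\rho^2}\approx \frac{\varepsilon}{2}\bigl(|\rho|+\sqrt{1-\rho^2}\bigr)\le\frac{\varepsilon}{\sqrt{2}}$,
i.e.\ from $\max_\rho(|\rho|+\sqrt{1-\rho^2})=\sqrt{2}$ together with the sharper $|b-1|\le \varepsilon/(1+\sqrt{1-\varepsilon})$ rather than $|b-1|\le\varepsilon$ (this is exactly the paper's bound \eqref{eq:ineq_cos_1a}--\eqref{eq:ineq_cos_1}). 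Even then you are not done: the prefactor $\frac{|s+\rho|}{w(s+\rho w)}$ can exceed $1$ when $w<1$, and near $\rho=0$ the denominator $s+\rho w$ can vanish or change sign, so the "joint estimate" you gesture at must actually be carried out; the paper does this by the explicit case split $\rho^2>\varepsilon$ (using Lemma~\ref{lemma:denominatorBounds}) versus $\rho^2\le\varepsilon$ (using $\psi'\le 1$ and a numerical evaluation at $\varepsilon=0.08$ --- which is where the hypothesis $\varepsilon\le 0.08$ is genuinely used). Until those two pieces are supplied, the proposal is a correct plan with the decisive inequality still unproved.
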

\begin{proof}
Recall, we used $\phi(s)=\frac{s}{\sqrt{1-s^2}}$ and let us define $\psi(h):=\phi^{-1}(h)=\frac{h}{\sqrt{1+h^2}}$. Note that 
$\psi'(h)=\frac{1}{(1+h^2)^{3/2}}>0$, hence $\psi$ is an increasing function.
We have 
\begin{align*}
&\frac{(\tR x,\tR y)}{\|\tR x\|\|\tR y\|}= \psi\left(\frac{\rho\|N\| + M_1\sqrt{1-\rho^2}}{\|M_{2\ldots q}\|\sqrt{1-\rho^2}}\right)= \psi\left(\frac{\rho}{\sqrt{1-\rho^2}}\frac{\|N\|}{\|M_{2\ldots q}\|}+\frac{M_1}{\|M_{2\ldots q}\|}\right).
\end{align*}
and
$$\rho = \psi\left(\frac{\rho}{\sqrt{1-\rho^2}}\right).$$
We will prove the claim for the case when $\rho\geq 0$; the other case is proven in a similar way.
Using non-negativity of $\rho$ and the fact that $\psi$ is increasing, we have 
\begin{multline}
\psi\left(\frac{\rho}{\sqrt{1-\rho^2}}\sqrt{1-\varepsilon}-\frac{\varepsilon}{2}\right)-\psi\left(\frac{\rho}{\sqrt{1-\rho^2}}\right)\\
\leq \frac{(\tR x,\tR y)}{\|\tR x\|\|\tR y\|} -\rho\leq\\
\psi\left(\frac{\rho}{\sqrt{1-\rho^2}}\sqrt{1+\varepsilon}+\frac{\varepsilon}{2}\right)-\psi\left(\frac{\rho}{\sqrt{1-\rho^2}}\right).
\label{ineq:boundsOfCosine}
\end{multline}

Using Lemma \ref{lemma:meanValue} (a) we can estimate the upper bound:
\begin{align}
\nonumber&\psi\left(\frac{\rho}{\sqrt{1-\rho^2}}\sqrt{1+\varepsilon}+\frac{\varepsilon}{2}\right)-\psi\left(\frac{\rho}{\sqrt{1-\rho^2}}\right)\\
\nonumber&\leq \left(1+\frac{\rho^2}{1-\rho^2}\right)^{-3/2}\cdot\left[\frac{\rho}{\sqrt{1-\rho^2}}(\sqrt{1+\varepsilon}-1)+\frac{\varepsilon}{2}\right]\\
\nonumber&\leq (1-\rho^2)^{3/2} \left[\frac{\rho}{\sqrt{1-\rho^2}}\frac{\varepsilon}{2}+\frac{\varepsilon}{2}\right]\\
&\leq (1-\rho^2)\frac{\varepsilon (\rho+\sqrt{1-\rho^2})}{2}.\label{eq:ineq_cos_1a}
\end{align}
Since $\rho + \sqrt{1-\rho^2} \leq \sqrt{2}$, we have 
\begin{equation}
\eqref{eq:ineq_cos_1a}\leq (1-\rho^2)\frac{\varepsilon}{\sqrt{2}}.\label{eq:ineq_cos_1}
\end{equation}

To prove the lower bound in \eqref{ineq:boundsOfCosine}, we first need to note that $\frac{\rho}{\sqrt{1-\rho^2}}\sqrt{1-\varepsilon}-\frac{\varepsilon}{2}$ can be negative or very close to $0$. For that reason, we will
have to consider two cases: $\rho^2\leq \varepsilon$ or $\rho^2>\varepsilon$.

If $\rho^2>\varepsilon$,
\begin{equation}\label{eq:firstCaseBound}
\frac{\rho}{\sqrt{1-\rho^2}}\sqrt{1-\varepsilon}\geq \sqrt{\varepsilon}>\varepsilon.
\end{equation}

we have 
\begin{equation}
\frac{\rho}{\sqrt{1-\rho^2}}\sqrt{1-\varepsilon}-\frac{\varepsilon}{2}\geq \frac{\varepsilon}{2}>0, \label{eq:techineq}
\end{equation} 
and in particular
\begin{equation}
\rho\sqrt{1-\varepsilon}-\frac{\varepsilon}{2}\sqrt{1-\rho^2}>0. \label{eq:techineq1}
\end{equation} 
Let
$$A:=\psi\left(\frac{\rho}{\sqrt{1-\rho^2}}\right)=\rho>0,$$
and 
$$B:=\psi\left(\frac{\rho}{\sqrt{1-\rho^2}}\sqrt{1-\varepsilon}-\frac{\varepsilon}{2}\right)>0.$$

From \eqref{eq:techineq} we have $B\geq 0$, and hence since $\psi$ is an increasing function we have $A>B>0$.

We obtain 
\begin{align}
\nonumber&A-B = \rho  -\frac{\rho\sqrt{1-\varepsilon}-\varepsilon/2 \sqrt{1-\rho^2}}{\sqrt{(\rho\sqrt{1-\varepsilon}-\varepsilon/2 \sqrt{1-\rho^2})^2 +(1-\rho^2)}}\\
\nonumber&\stackrel{\textrm{Lemma}\ \ref{lemma:denominatorBounds}.}{\leq} \rho -\frac{\rho\sqrt{1-\varepsilon}-\varepsilon/2 \sqrt{1-\rho^2}}{\sqrt{1-\varepsilon\rho^2}}=\rho\underbrace{\left(1 -\frac{\sqrt{1-\varepsilon}}{\sqrt{1-\varepsilon\rho^2}}\right)}_{\leq 0}+\frac{\varepsilon \sqrt{1-\rho^2}}{2\sqrt{1-\varepsilon\rho^2}}\\
&\leq \frac{\varepsilon \sqrt{1-\rho^2}}{2\sqrt{1-\varepsilon\rho^2}} \leq \frac{\varepsilon \sqrt{1-\rho^2}}{2\sqrt{1-\varepsilon\rho^2}}\cdot \frac{ \sqrt{1-\rho^2}}{\sqrt{1-\varepsilon\rho^2}} \leq \frac{\varepsilon (1-\rho^2)}{2(1-\varepsilon\rho^2)}\leq \frac{\varepsilon (1-\rho^2)}{2(1-\varepsilon)}\label{eq:ineq_cos_2a}
\end{align}
Since $\varepsilon \in (0, \frac{1}{4}]$, we have $\frac{1}{2 (1-\varepsilon)}\leq \frac{2}{3}<\frac{\sqrt{2}}{2}=\frac{1}{\sqrt{2}}$ and hence
\begin{equation}
\eqref{eq:ineq_cos_2a} \leq \frac{\varepsilon (1-\rho^2)}{\sqrt{2}}\label{eq:ineq_cos_2}
\end{equation}

On the other hand, if $\rho^2 \leq \varepsilon$, Lemma \ref{lemma:meanValue} (b) implies:
\begin{align}
\nonumber&\psi\left(\frac{\rho}{\sqrt{1-\rho^2}}\right)-\psi\left(\frac{\rho}{\sqrt{1-\rho^2}}\sqrt{1-\varepsilon}-\frac{\varepsilon}{2}\right)\\
\nonumber&\leq \left[\frac{\rho}{\sqrt{1-\rho^2}}(1-\sqrt{1-\varepsilon})+\frac{\varepsilon}{2}\right]=\left[\frac{\rho}{\sqrt{1-\rho^2}}\cdot \frac{\varepsilon}{1+\sqrt{1-\varepsilon}}+\frac{\varepsilon}{2}\right]\\
&=\varepsilon \left[\frac{\rho}{\sqrt{1-\rho^2}}\frac{1}{1+\sqrt{1-\varepsilon}}+\frac{1}{2}+\frac{\rho^2}{\sqrt{2}}\right]-\frac{\varepsilon}{\sqrt{2}}\rho^2.\label{eq:ineq_cos_3a}
\end{align}
Since $\rho^2\leq \varepsilon$, we have $\rho \in [0,\varepsilon]$ and
$$
\frac{\rho}{\sqrt{1-\rho^2}}\frac{1}{1+\sqrt{1-\varepsilon}}+\frac{1}{2}+\frac{\rho^2}{\sqrt{2}}\leq \frac{\sqrt{\varepsilon}}{\sqrt{1-\varepsilon}+1-\varepsilon}+\frac{1}{2} +\frac{\varepsilon}{\sqrt{2}}.
$$
This expression is increasing for $\varepsilon \in (0,1)$, so
\begin{align*}
&\leq \left.\frac{\sqrt{\varepsilon}}{\sqrt{1-\varepsilon}+1-\varepsilon}+\frac{1}{2} +\frac{\varepsilon}{\sqrt{2}}\right|_{\varepsilon =0.08}\approx 0.70708<\frac{1}{\sqrt{2}}.
\end{align*}
Therefore,
\begin{equation}
\eqref{eq:ineq_cos_3a} \leq \frac{\varepsilon (1-\rho^2)}{\sqrt{2}}\label{eq:ineq_cos_3}
\end{equation}

Finally, the inequalities \eqref{eq:ineq_cos_1}, \eqref{eq:ineq_cos_2} and \eqref{eq:ineq_cos_3} prove the claim.
\end{proof}
\begin{proof}[Proof of Theorem \ref{thm:cosineConcentration}]
Since $\varepsilon \in (0,0.055]$, we have $\varepsilon\sqrt{2} \in (0,0.08]$, then 
by Lemma \ref{lemma:cosineConcentration} we have 
\begin{align*}
&\P\left(\left|\frac{(\tR x,\tR y)}{\|\tR x\|\|\tR y\|} -\rho\right|\geq \varepsilon  (1-\rho^2)\right)=\P\left(\left|\frac{(\tR x,\tR y)}{\|\tR x\|\|\tR y\|} -\rho\right|\geq \frac{\varepsilon\sqrt{2}}{\sqrt{2}} (1-\rho^2)\right)\\
&\leq \P\left(\frac{M_1}{\|M_{2\ldots q}\|}>\frac{\varepsilon\sqrt{2}}{2} \right)+\P\left(\frac{M_1}{\|M_{2\ldots q}\|}<-\frac{\varepsilon\sqrt{2}}{2} \right)\\
&+\P\left(\frac{\|N\|}{\|M_{2\ldots q}\|}>\sqrt{1+\varepsilon\sqrt{2}}\right)+\P\left(\frac{\|N\|}{\|M_{2\ldots q}\|}<\sqrt{1-\varepsilon\sqrt{2}}\right).
\end{align*}
Using Lemma \ref{lemma:tailBound1} and Corollary \ref{cor:sumConcetration} we have
\begin{align*}
&\leq 2\left[1+\frac{\varepsilon^2}{2}\right]^{-\frac{q-1}{2}}+2\left[1+\frac{\varepsilon^2}{4(1+\varepsilon\sqrt{2})} \right]^{-\frac{q}{2}}\\
&= (2+\varepsilon^2)\left[1+\frac{\varepsilon^2}{2}\right]^{-\frac{q}{2}}+2\left[1+\frac{\varepsilon^2}{4(1+\varepsilon\sqrt{2})} \right]^{-\frac{q}{2}}\\
&\leq (2+\varepsilon^2)\left[1+\frac{\varepsilon^2}{4(1+\varepsilon\sqrt{2})} \right]^{-\frac{q}{2}}+2\left[1+\frac{\varepsilon^2}{4(1+\varepsilon\sqrt{2})} \right]^{-\frac{q}{2}}\\
&=(4+\varepsilon^2)\left[1+\frac{\varepsilon^2}{4(1+\varepsilon\sqrt{2})} \right]^{-\frac{q}{2}}.
\end{align*}
\end{proof}

\subsubsection{Johnson-Lindenstrauss-type Result}

\begin{thm}\label{thm:cosineSimilairtyJL}
Let $\varepsilon\in (0,0.05]$, $\delta\in (0,1)$ and $p_1,\ldots, p_k$ be non-zero vectors in $\R^n$. If 
\begin{equation}\label{ineq:cosp}
q\geq \frac{2\ln \left[ \frac{2k(k-1)\left(1+\frac{\varepsilon^2}{4}\right)}{\delta}\right]}{\ln \left[ 1+\frac{\varepsilon^2}{2(1+\varepsilon\sqrt{2})}\right]}
\end{equation}
then, with probability $1-\delta$, the inequality
$$\left|\frac{(\tR p_j,\tR p_i)}{\|\tR p_j\|\|\tR p_i\|}-\frac{( p_j, p_i)}{\|p_j\|\|p_i\|}\right| \leq\varepsilon(1-\rho^2)$$
holds for all $i< j$.
\end{thm}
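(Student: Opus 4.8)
The plan is to reduce the statement, by a union bound over all $\binom{k}{2}$ pairs, to the single-pair concentration estimate already established in Theorem~\ref{thm:cosineConcentration}. First I would fix a pair $i<j$ and apply that theorem with $x=p_j$ and $y=p_i$; this is legitimate since $\varepsilon\in(0,0.05]$ lies inside the interval $(0,0.055)$ required there. Writing $\rho=\rho_{p_j,p_i}$ for the cosine similarity of the pair, it gives
\begin{equation*}
\P\left(\left|\frac{(\tR p_j,\tR p_i)}{\|\tR p_j\|\|\tR p_i\|}-\rho\right|\geq \varepsilon(1-\rho^2)\right)\leq (4+\varepsilon^2)\left[1+\frac{\varepsilon^2}{2(1+\varepsilon\sqrt{2})}\right]^{-q/2}.
\end{equation*}
Let $A_{ij}$ denote the complementary good event. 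Since there are $\binom{k}{2}=\frac{k(k-1)}{2}$ pairs $i<j$, the union bound yields
\begin{equation*}
\P\left(\bigcap_{i<j}A_{ij}\right)\geq 1-\frac{k(k-1)}{2}(4+\varepsilon^2)\left[1+\frac{\varepsilon^2}{2(1+\varepsilon\sqrt{2})}\right]^{-q/2}.
\end{equation*}

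It then remains to check that the hypothesis \eqref{ineq:cosp} on $q$ forces the subtracted term to be at most $\delta$. Using $4+\varepsilon^2=4\left(1+\frac{\varepsilon^2}{4}\right)$, that term equals $2k(k-1)\left(1+\frac{\varepsilon^2}{4}\right)\left[1+\frac{\varepsilon^2}{2(1+\varepsilon\sqrt2)}\right]^{-q/2}$, so we need
\begin{equation*}
\left[1+\frac{\varepsilon^2}{2(1+\varepsilon\sqrt2)}\right]^{q/2}\geq \frac{2k(k-1)\left(1+\frac{\varepsilon^2}{4}\right)}{\delta}.
\end{equation*}
Since $1+\frac{\varepsilon^2}{2(1+\varepsilon\sqrt2)}>1$, taking logarithms and dividing by the positive quantity $\ln\left[1+\frac{\varepsilon^2}{2(1+\varepsilon\sqrt2)}\right]$ shows this is equivalent to exactly \eqref{ineq:cosp}. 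Finally, identifying $\cos(\tR p_j,\tR p_i)=\frac{(\tR p_j,\tR p_i)}{\|\tR p_j\|\|\tR p_i\|}$ and $\cos(p_j,p_i)=\rho$ turns the bound on $\bigcap_{i<j}A_{ij}$ into the claimed statement, holding with probability at least $1-\delta$.

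There is no genuine obstacle: all the analytic content sits inside Theorem~\ref{thm:cosineConcentration} (which itself rests on the rotation-argument representation of Theorem~\ref{representationTheorem} and the $\chi^2$-tail estimates of Lemmas~\ref{lemma:tailBound1} and~\ref{lemma:tailBound2}). The only points requiring care are bookkeeping ones: verifying that the admissible range of $\varepsilon$ propagates from $(0,0.05]$ into $(0,0.055)$; counting the $\binom{k}{2}$ pairs correctly in the union bound; and the elementary repackaging $\frac{1}{2}(4+\varepsilon^2)=2\left(1+\frac{\varepsilon^2}{4}\right)$ that makes the resulting lower bound on $q$ coincide with \eqref{ineq:cosp} exactly.
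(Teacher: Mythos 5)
Your proposal is correct and follows essentially the same route as the paper: a union bound over the $\binom{k}{2}$ pairs, each controlled by the single-pair concentration estimate of Theorem~\ref{thm:cosineConcentration}, followed by the algebraic check that the hypothesis on $q$ forces the total failure probability $\frac{k(k-1)}{2}(4+\varepsilon^2)\left[1+\frac{\varepsilon^2}{2(1+\varepsilon\sqrt{2})}\right]^{-q/2}$ below $\delta$. Your write-up is in fact slightly more careful than the paper's, since you verify the admissible range of $\varepsilon$ and correctly state the union bound as an inequality rather than an equality.
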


\begin{proof}
Define 
$$
W_{ij}:=
 \left(\left|\frac{(\tR p_j,\tR p_i)}{\|\tR p_j\|\|\tR p_i\|}-\frac{( p_j, p_i)}{\|p_j\|\|p_i\|}\right| \leq\varepsilon(1-\rho^2)\right).
$$

To prove the theorem, it suffices to show that $\P\left(\bigcup_{i<j}W_{ij}^c\right)<\delta$.

By Theorem \ref{thm:cosineConcentration}, we have
$$
\P\left(\bigcup_{i<j}W_{ij}^c\right)=\sum_{i<j}\P\left(W_{ij}^c\right)
=\frac{k(k-1)}{2}\cdot 4\left(1+\frac{\varepsilon^2}{4}\right)
\cdot\left[1+\frac{\varepsilon^2}{2(1+\varepsilon\sqrt{2})} \right]^{-\frac{q}{2}}.
$$
This expression is less than $\delta$ by the choice of $q$ in \eqref{ineq:cosp}.
\end{proof}

\subsubsection{Result comparison analysis}\label{subsec:resultsComparisonCosSim}
In this section, we state Lemma 5 from \cite{arpit2014analysis}, which gives a similar result to Theorem \ref{thm:cosineConcentration}, and compare the two results.

\begin{lemma}\label{cosine_similarity_comparable lemma}
For $x$ and $y$ non-zero vectors in $\R^n$ and for every $\beta \in (0,1/2)$, we have 
$$-\frac{\beta}{1-\beta}(1-\rho)\leq \frac{(\tR x,\tR y)}{\|\tR x\|\|\tR y\|}-\rho \leq\frac{\beta}{1+\beta}(1-\rho)$$
if $\rho\leq -\varepsilon$,
$$-\frac{\beta}{1-\beta}(1-\rho)\leq \frac{(\tR x,\tR y)}{\|\tR x\|\|\tR y\|}-\rho \leq\frac{\beta}{1-\beta}(1+\rho)$$
if $-\varepsilon \leq \rho\leq \varepsilon$,
$$-\frac{\beta}{1+\beta}(1+\rho)\leq \frac{(\tR x,\tR y)}{\|\tR x\|\|\tR y\|}-\rho \leq\frac{\beta}{1-\beta}(1+\rho)$$
if $\rho\geq \varepsilon$. Moreover, the inequality holds true with probability at least $1-8e^{-\frac{q}{4}(\beta^2-\beta^2)}.$
\end{lemma}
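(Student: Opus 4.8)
The plan is to reduce to unit vectors and then assemble the bound from Johnson--Lindenstrauss norm-preservation estimates for a small, fixed collection of vectors, glued together by the polarization identity. Since cosine similarity is invariant under positive rescaling, I would replace $x,y$ by $\tilde x=x/\|x\|$ and $\tilde y=y/\|y\|$, so that $\cos(Rx,Ry)=(R\tilde x,R\tilde y)/(\|R\tilde x\|\,\|R\tilde y\|)$ and $\rho=(\tilde x,\tilde y)$. The four relevant vectors are $\tilde x$, $\tilde y$, $\tilde x+\tilde y$, $\tilde x-\tilde y$, with squared norms $1$, $1$, $2(1+\rho)$, $2(1-\rho)$. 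By the classical $\chi^2$ concentration underlying the JL lemma (cf.\ \cite{MR2073630}), for a fixed $u$ the event $\|Ru\|^2\notin[(1-\beta)\|u\|^2,(1+\beta)\|u\|^2]$ has probability at most $2e^{-\frac{q}{4}(\beta^2-\beta^3)}$, so a union bound over these four vectors produces a good event of probability at least $1-8e^{-\frac{q}{4}(\beta^2-\beta^3)}$ on which all four squared norms are pinched to within a factor $1\pm\beta$ of their true values.

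On that event the estimate becomes deterministic. Writing $(R\tilde x,R\tilde y)=\tfrac14\bigl(\|R(\tilde x+\tilde y)\|^2-\|R(\tilde x-\tilde y)\|^2\bigr)$ gives
\[
\cos(Rx,Ry)=\frac{\|R(\tilde x+\tilde y)\|^2-\|R(\tilde x-\tilde y)\|^2}{4\,\|R\tilde x\|\,\|R\tilde y\|},
\]
and substituting the interval bounds shows the numerator lies in $[\,4(\rho-\beta),\,4(\rho+\beta)\,]$ and the denominator factor $\|R\tilde x\|\,\|R\tilde y\|$ in $[\,1-\beta,\,1+\beta\,]$. I would then divide, in each case pairing the extreme value of the numerator with the extreme value of the denominator that is actually worst for the bound in question; which pairing is worst depends on the sign of $\rho-\beta$ (for the lower bound) and of $\rho+\beta$ (for the upper bound), which is precisely why the statement splits according to whether $\rho$ lies below $-\varepsilon$, between $-\varepsilon$ and $\varepsilon$, or above $\varepsilon$ (one may take $\varepsilon=\beta$). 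In each regime, factoring $(1+\rho)$ or $(1-\rho)$ out of the resulting expression and bounding the leftover ratio by $\tfrac{\beta}{1\pm\beta}$ yields the six stated inequalities; for instance when $\rho\ge\beta$ the numerator is nonnegative, so $\cos(Rx,Ry)\le\tfrac{\rho+\beta}{1-\beta}$ and hence $\cos(Rx,Ry)-\rho\le\tfrac{\beta}{1-\beta}(1+\rho)$, while $\cos(Rx,Ry)\ge\tfrac{\rho-\beta}{1+\beta}$ gives the matching lower bound $-\tfrac{\beta}{1+\beta}(1+\rho)$.

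The main obstacle is the bookkeeping of the case analysis and nailing the constants, rather than any deep step: because the numerator can change sign and the denominator varies independently within $[1-\beta,1+\beta]$, the worst-case pairing for the upper bound differs from that for the lower bound, and the two boundary situations where the numerator straddles $0$ are what force the asymmetric middle regime. It is also worth noting, in the comparison with Theorem~\ref{thm:cosineConcentration}, that these bounds scale like $\beta(1\mp\rho)$ and therefore do not vanish as $\rho\to\pm1$, whereas the bound of Theorem~\ref{thm:cosineConcentration} equals $\varepsilon(1-\rho^2)$ and collapses to $0$ there, consistent with Proposition~\ref{propo:cosineEqual1}; a sharper alternative route, avoiding polarization entirely, is to start from the exact representation of Theorem~\ref{representationTheorem} and feed it through the tail bounds of Lemmas~\ref{lemma:tailBound1} and~\ref{lemma:tailBound2}, exactly as in the proof of Theorem~\ref{thm:cosineConcentration}.
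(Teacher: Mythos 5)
The paper does not actually prove this statement: it is quoted verbatim (as ``Lemma 5'') from the cited reference \cite{arpit2014analysis} in \S\ref{subsec:resultsComparisonCosSim}, purely so that it can be compared against Theorem~\ref{thm:cosineConcentration}, whose proof goes through the rotation representation of Theorem~\ref{representationTheorem} and the tail bounds of Lemmas~\ref{lemma:tailBound1} and~\ref{lemma:tailBound2}. So your polarization-based reconstruction is necessarily a different route from anything in this paper, and it is correct. The key computations check out: on the event where the four squared norms $\|R\tilde x\|^2,\|R\tilde y\|^2,\|R(\tilde x+\tilde y)\|^2,\|R(\tilde x-\tilde y)\|^2$ are each pinched within a factor $1\pm\beta$ (probability at least $1-8e^{-\frac{q}{4}(\beta^2-\beta^3)}$ by the standard $\chi^2$ bound and a union bound over four vectors), the polarization identity gives $(R\tilde x,R\tilde y)\in[\rho-\beta,\rho+\beta]$ since $\frac14\bigl[(1\pm\beta)2(1+\rho)-(1\mp\beta)2(1-\rho)\bigr]=\rho\pm\beta$, the denominator lies in $[1-\beta,1+\beta]$, and the worst-case pairings in each sign regime of the numerator reproduce exactly the six stated bounds with $\varepsilon=\beta$ (e.g.\ for $\rho\ge\beta$, $\frac{\rho+\beta}{1-\beta}-\rho=\frac{\beta(1+\rho)}{1-\beta}$ and $\frac{\rho-\beta}{1+\beta}-\rho=-\frac{\beta(1+\rho)}{1+\beta}$). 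You also correctly (if silently) repair the typo in the stated probability, where $\beta^2-\beta^2$ should read $\beta^2-\beta^3$; as written the bound is vacuous. Your closing comparison mirrors the paper's own point in the remark following the lemma: the polarization route yields errors of order $\beta(1\mp\rho)$ that do not vanish as $\rho\to\pm1$, whereas the direct analysis of the ratio in Theorem~\ref{thm:cosineConcentration} yields $\varepsilon(1-\rho^2)$, which is sharper near $\rho=\pm1$ and consistent with Corollary~\ref{cor:cosineEqual1}; the price of the polarization approach is exactly that the numerator and denominator are bounded independently rather than jointly.
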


\begin{remark}
We observe that Lemma \ref{cosine_similarity_comparable lemma} does not capture the fact that when $\rho=\pm1$, the random projection preserves the cosine similarity, as discussed in \S\ref{special_cases}.

Also, the role of $\beta$ is unclear. If we set $\varepsilon = \max\{\frac{\beta}{1-\beta} , \frac{\beta}{1+\beta}\}= \frac{\beta}{1-\beta}$, the result can be rewritten as
\begin{equation}\label{inq:cosine_known_result}
-\varepsilon (1+|\rho|)\leq \frac{(\tR x,\tR y)}{\|\tR x\|\|\tR y\|}-\rho \leq \varepsilon (1+|\rho|),
\end{equation}
with probability at least $1-8e^{-\frac{q}{4}\cdot \frac{\varepsilon^2}{(1+\varepsilon)^2}}$.

We note that the inequality \eqref{inq:cosine_known_result} is less precise than the one provided by Theorem \ref{thm:cosineConcentration}. 

Moreover, by Lemma \ref{lemma:appLogLemma}, we have $\left[1+\frac{\varepsilon^2}{2(1+\varepsilon\sqrt{2})} \right]^{-\frac{q}{2}} \leq e^{-\frac{q}{4}\cdot \frac{\varepsilon^2}{(1+\varepsilon)^2}}$. Therefore, the inequality in Theorem \ref{thm:cosineConcentration} also holds with higher probability.

We remark that Theorem \ref{thm:cosineConcentration} requires $\varepsilon \in (0,0.05]$, but this is a reasonable absolute error for cosine similarity, which ranges from $0$ to $1$.
\end{remark}

We will see how the dimension selection in Theorem \ref{thm:cosineSimilairtyJL} for cosine similarity compares with the similar result in original Johnson-Lindenstrauss Lemma. 
The following version is based on Theorem 2.13 and Remark 2.11 from \cite{MR3185193} and we present it without proof.
 
\begin{thm}\label{OriginalJLLemma}
Let $\varepsilon,\delta\in (0,1)$,  and $p_1,\ldots, p_k$ be non-zero vectors in $\R^n$. If 
\begin{equation}\label{ineq:cosp}
q\geq \frac{4}{\varepsilon^2}\log \left[\frac{k^2}{\delta}\right]
\end{equation}
then, with probability at least $1-\delta$, the inequality
\begin{equation}
    (1-\varepsilon)\|p_i-p_j\|^2\leq \|\tR (p_i-p_j)\|^2\leq (1+\varepsilon)\|p_i-p_j\|^2 \label{eq:johnsonLindenstrauss}
\end{equation}
holds for all $i< j$.
\end{thm}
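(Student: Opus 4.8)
Although this is the classical Johnson--Lindenstrauss lemma and we quote it from \cite{MR3185193}, it is worth recording the route a direct proof would take, since it meshes cleanly with the machinery developed above. The plan is to reduce \eqref{eq:johnsonLindenstrauss} to the concentration of a chi-squared random variable and then apply a union bound. First I would dispose of the degenerate pairs: if $p_i=p_j$ then \eqref{eq:johnsonLindenstrauss} is the trivial chain $0\le 0\le 0$, so it suffices to treat pairs with $p_i\neq p_j$, for which $u_{ij}:=(p_i-p_j)/\|p_i-p_j\|$ is a unit vector. Since $v\mapsto \|\tR v\|^2/\|v\|^2$ is scale invariant, \eqref{eq:johnsonLindenstrauss} is equivalent to $\bigl|\,\|\tR u_{ij}\|^2-1\,\bigr|\le \varepsilon$ holding simultaneously for all $i<j$, so the whole problem reduces to bounding the two tails of $\|\tR u\|^2$ for a fixed unit vector $u$.

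The next step is to pin down the law of $\|\tR u\|^2$: writing $\tR=q^{-1/2}Q$ with $Q$ having i.i.d.\ $\cN(0,1)$ entries, the computation in the proof of Lemma~\ref{lemma:orthogonalIndependence} shows that $Qu$ is a standard normal vector in $\R^q$, so $q\|\tR u\|^2\sim\chi^2(q)$ and $\E\|\tR u\|^2=1$. Equivalently, $\|\tR u\|^2=(\tR u,\tR u)$ is exactly the ``$\rho=1$'' instance of the random-projected dot product, so Proposition~\ref{propo:EstimatesNoRho} applied with $x=y=u$ already gives, for $\varepsilon\in(0,1)$, both $\P(\|\tR u\|^2-1>\varepsilon)\le \exp\!\bigl(-q\varepsilon^2/(4(1+\varepsilon))\bigr)$ and the matching lower-tail bound. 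A union bound over the $2\binom{k}{2}<k^2$ bad events then shows that \eqref{eq:johnsonLindenstrauss} fails for some pair with probability at most $k^2\exp\!\bigl(-q\varepsilon^2/(4(1+\varepsilon))\bigr)$.

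The only obstacle is cosmetic, and it concerns the constant rather than the structure of the argument: the bound just described forces the slightly inflated threshold $q\ge \frac{4(1+\varepsilon)}{\varepsilon^2}\log(k^2/\delta)$, whereas the statement asserts the cleaner $q\ge \frac{4}{\varepsilon^2}\log(k^2/\delta)$. Recovering the clean constant requires a sharper treatment of the $\chi^2(q)$ moment generating function --- for the upper tail one uses $\P(\|\tR u\|^2-1>\varepsilon)\le \exp\!\bigl(-\tfrac{q}{2}(\varepsilon-\log(1+\varepsilon))\bigr)$ together with an elementary lower bound on $\varepsilon-\log(1+\varepsilon)$, and the analogous (in fact easier) estimate for the lower tail, after which one optimises the numerical constants. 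This is precisely the bookkeeping carried out in \cite[Thm.~2.13, Rem.~2.11]{MR3185193}; since nothing elsewhere in the paper depends on the exact value of the constant, we quote the theorem from there rather than reproving it.
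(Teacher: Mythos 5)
The paper states Theorem~\ref{OriginalJLLemma} without proof, citing Theorem~2.13 and Remark~2.11 of \cite{MR3185193}, which is exactly what you end up doing, so your proposal takes essentially the same approach. Your supplementary sketch --- reduce to the $\rho=1$ instance of Proposition~\ref{propo:EstimatesNoRho} for a unit vector $u_{ij}=(p_i-p_j)/\|p_i-p_j\|$, union bound over $k(k-1)$ tail events, and honestly flag that this route only yields the inflated threshold $q\geq \frac{4(1+\varepsilon)}{\varepsilon^2}\log(k^2/\delta)$ unless one redoes the $\chi^2(q)$ Chernoff bound more carefully --- is correct and consistent with the paper's own remark that the $\rho=1$ specialization of its dot-product machinery recovers the standard chi-squared bounds used to prove the Johnson--Lindenstrauss lemma.
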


In the following we will use the notation $f\sim g$ that will denote 
$$\lim_{\varepsilon \to 0^+}\frac{f(\varepsilon)}{g(\varepsilon)}=1.$$ This is a standard tool to analyze asymptotic behavior.

\begin{proposition}\label{propo:asymptoticAnalysis}
For a fixed, $\delta$ and $k$ we have 
$$
\frac{2\ln \left[ \frac{2k(k-1)\left(1+\frac{\varepsilon^2}{4}\right)}{\delta}\right]}{\ln \left[ 1+\frac{\varepsilon^2}{2(1+\varepsilon\sqrt{2})}\right]} \sim  \frac{4}{\varepsilon^2}\ln \left[\frac{2 k(k-1)}{\delta}\right].
$$
\end{proposition}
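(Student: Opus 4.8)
The plan is to write the left-hand side as a quotient $N(\varepsilon)/D(\varepsilon)$ with numerator $N(\varepsilon):=2\ln\!\big[2k(k-1)(1+\varepsilon^{2}/4)/\delta\big]$ and denominator $D(\varepsilon):=\ln\!\big[1+\tfrac{\varepsilon^{2}}{2(1+\varepsilon\sqrt{2})}\big]$, analyze the $\varepsilon\to 0^{+}$ behaviour of each separately, and then recombine using the elementary fact that the relation $\sim$ is preserved under quotients and under multiplication by a fixed nonzero constant. First I would dispense with the degenerate case $k\le 1$, in which there are no pairs $i<j$ and the statement is vacuous; for $k\ge 2$ and $\delta\in(0,1)$ one has $2k(k-1)/\delta\ge 4/\delta>1$, so $C:=\ln[2k(k-1)/\delta]$ is a strictly positive constant. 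This is the point that makes the right-hand side a genuinely nonzero quantity, so that the asymptotic equivalence is meaningful, and it is the only place where the hypotheses on $k$ and $\delta$ are used.

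For the numerator, I would split the logarithm as $N(\varepsilon)=2C+2\ln(1+\varepsilon^{2}/4)$; since $\ln(1+\varepsilon^{2}/4)\to 0$, we get $N(\varepsilon)\to 2C$, and because the limit $2C$ is nonzero this gives $N(\varepsilon)\sim 2C$. For the denominator, I would apply $\lim_{u\to 0}\ln(1+u)/u=1$ with $u=u(\varepsilon):=\tfrac{\varepsilon^{2}}{2(1+\varepsilon\sqrt{2})}\to 0$, which yields $D(\varepsilon)\sim u(\varepsilon)$; and since $1+\varepsilon\sqrt{2}\to 1$ we also have $u(\varepsilon)\sim \varepsilon^{2}/2$, hence $D(\varepsilon)\sim \varepsilon^{2}/2$. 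Dividing these two equivalences (legitimate because $N$ tends to a positive constant and $D(\varepsilon)>0$ for $\varepsilon>0$, so both are eventually nonzero) gives $N(\varepsilon)/D(\varepsilon)\sim 2C/(\varepsilon^{2}/2)=\tfrac{4}{\varepsilon^{2}}\ln\!\big[2k(k-1)/\delta\big]$, which is exactly the claim.

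I do not anticipate a real obstacle here: the computation is a routine Taylor/limit argument. The only step requiring a moment of care is the bookkeeping that justifies dividing the two $\sim$-relations — namely checking that numerator and denominator are eventually nonzero, and that one is allowed to replace $1+\varepsilon^{2}/4$ and $1+\varepsilon\sqrt{2}$ by $1$ inside their respective asymptotic analyses — which is precisely why the case $k\ge 2$, $\delta<1$ (ensuring $C>0$) must be isolated at the start.
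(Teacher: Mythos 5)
Your proposal is correct and follows essentially the same route as the paper's proof: both treat the numerator by continuity of $\ln$ (so it tends to the constant $2\ln[2k(k-1)/\delta]$) and the denominator via $\ln(1+u)\sim u$ with $u=\varepsilon^{2}/(2(1+\varepsilon\sqrt{2}))\sim\varepsilon^{2}/2$, then combine the two limits. Your explicit remarks on the positivity of $\ln[2k(k-1)/\delta]$ and the degenerate case $k\le 1$ are careful additions the paper leaves implicit, but they do not change the argument.
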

\begin{proof}
Using L'Hospital's rule one can show $\lim_{x\to 0^+} \frac{\log(1+x)}{x}=1$, hence
\begin{equation}\label{eq:asympt1.4}
\lim_{\varepsilon\to 0^+}\frac{\ln \left[ 1+\frac{\varepsilon^2}{2(1+\varepsilon\sqrt{2})}\right]}{\frac{\varepsilon^2}{2(1+\varepsilon\sqrt{2})}}=1.
\end{equation}
Simple limit calculus gives us
\begin{equation}\label{eq:asympt1.5}
\lim_{\varepsilon\to 0^+}\frac{\frac{\varepsilon^2}{2(1+\varepsilon\sqrt{2})}}{\frac{\varepsilon^2}{2}}=\lim_{\varepsilon\to 0^+}\frac{1}{1+\varepsilon\sqrt{2}}=1.
\end{equation}

Hence, multiplying expressions in \eqref{eq:asympt1.4} and \eqref{eq:asympt1.5} we get:
\begin{equation}\label{eq:asympt1}
\lim_{\varepsilon\to 0^+}\frac{\ln \left[ 1+\frac{\varepsilon^2}{2(1+\varepsilon\sqrt{2})}\right]}{\frac{\varepsilon^2}{2}}=1.
\end{equation}

Furthermore, by continuity of the function $\ln$ we have  
\begin{equation}\label{eq:asympt2}
\lim_{\varepsilon \to 0^+} \ln \left[ \frac{2k(k-1)\left(1+\frac{\varepsilon^2}{4}\right)}{\delta}\right]  =\ln \left[ \frac{2k(k-1)}{\delta}\right]. 
\end{equation} 
The claim now follows from \eqref{eq:asympt1} 
and \eqref{eq:asympt2}. 
\end{proof}

The comparison in Proposition  \ref{propo:asymptoticAnalysis} tells us that the dimension $q$ will be of the same order but slightly higher. This is illustrated on Figure \ref{Figure:minDimension}.
\begin{figure}[ht]
\begin{center}
\includegraphics[width=70mm]{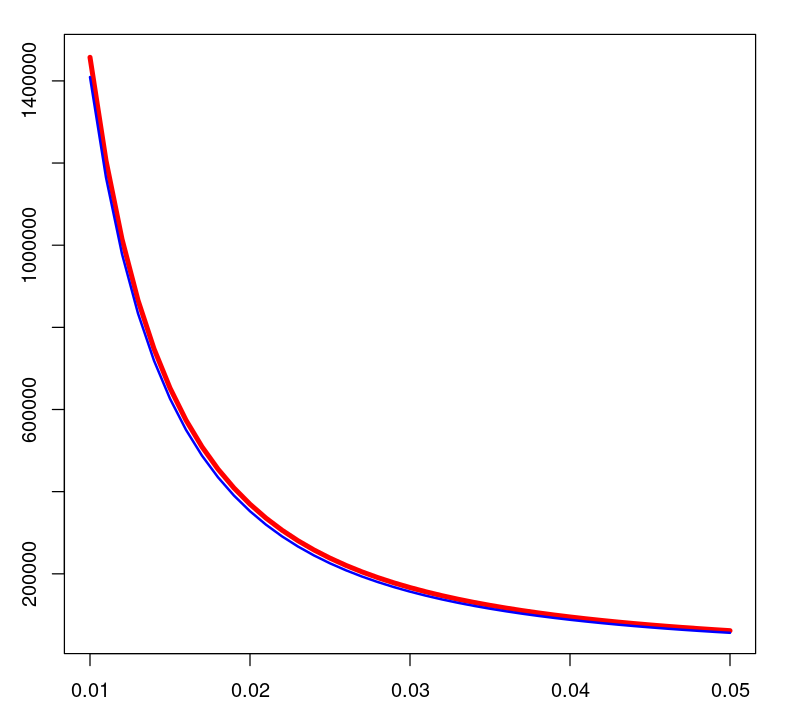}
\caption{Simulated random projection for $k=10,000,000$, $\delta =0.05$ the upper (red) curve represents the graph of 
$\varepsilon \mapsto \frac{2\ln \left[ \frac{2k(k-1)\left(1+\frac{\varepsilon^2}{4}\right)}{\delta}\right]}{\ln \left[ 1+\frac{\varepsilon^2}{2(1+\varepsilon\sqrt{2})}\right]}$ (minimum value of $q$ for cosine similarity) and the lower (blue) curve $\varepsilon \mapsto \frac{4}{\varepsilon^2}\ln \left[\frac{ k^2}{\delta}\right]$ (minimum value of $q$ for Johnson-Lindenstrauss Lemma). We can see that the two curves are very close.}  \label{Figure:minDimension}
\end{center}
\end{figure}

\hrulefill
\subsection{Technical Inequalities and Concentration Results}\label{appendixA}

In this appendix, we placed some technical inequalities we used in other parts of the text, so that they can be verified by the reader.

\subsubsection{Useful concentration inequalities}
In this paper we will depend on the results from concentration inequality theory. The approach in this paper is adapted based on the book \cite{MR3185193}. In this subsection we adapted results from Chapter 2 of the book. 

\begin{thm}\label{thm:subgamma}
Let $X$ be a random variable such that $v>0$, $c> 0$ we have 
\begin{equation}\label{mgf:subgamma}
\log\E[e^{\lambda (X-\E X)}]\leq \frac{v \lambda^2}{2(1-c\lambda)}
\end{equation}
for every $\lambda \in (0,c^{-1})$. Then for $t\geq 0$ we have
$$\P(X-\E X>t)\leq \exp\left(\frac{-t^2}{2(v+ct)}\right)$$
\end{thm}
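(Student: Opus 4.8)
The plan is to run the standard Chernoff–Cramér argument and then optimize the free parameter $\lambda$ over the admissible interval $(0,c^{-1})$. First I would dispose of the trivial case $t=0$, where the claimed bound reads $\P(X-\E X>0)\leq 1$. For $t>0$, I would apply Markov's inequality to the nonnegative random variable $e^{\lambda(X-\E X)}$: for any $\lambda\in(0,c^{-1})$,
\begin{equation*}
\P(X-\E X>t)=\P\!\left(e^{\lambda(X-\E X)}>e^{\lambda t}\right)\leq e^{-\lambda t}\,\E\!\left[e^{\lambda(X-\E X)}\right]\leq \exp\!\left(-\lambda t+\frac{v\lambda^2}{2(1-c\lambda)}\right),
\end{equation*}
where the last inequality uses the hypothesis \eqref{mgf:subgamma}.

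Next I would choose $\lambda$ to make the exponent as small as possible. The natural (and, for sub-gamma tails, optimal) choice is $\lambda=\dfrac{t}{v+ct}$. I would first check admissibility: since $v>0$ we have $ct<v+ct$, hence $\lambda<c^{-1}$, and clearly $\lambda>0$ for $t>0$. With this choice, $1-c\lambda=\dfrac{v}{v+ct}$, so
\begin{equation*}
\frac{v\lambda^2}{2(1-c\lambda)}=\frac{v\,t^2/(v+ct)^2}{2v/(v+ct)}=\frac{t^2}{2(v+ct)},\qquad -\lambda t=-\frac{t^2}{v+ct},
\end{equation*}
and adding these gives exponent $-\dfrac{t^2}{2(v+ct)}$, which yields the claimed inequality.

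There is no substantive obstacle here; the only thing to be careful about is verifying that the optimizing $\lambda$ lies strictly inside $(0,c^{-1})$ (which is exactly where $v>0$ is used) and handling $t=0$ separately. This mirrors the classical Bernstein-type bound for sub-gamma variables as in Chapter~2 of \cite{MR3185193}.
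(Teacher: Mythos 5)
Your proposal is correct. It follows the same Chernoff--Cramér skeleton as the paper (Markov's inequality applied to $e^{\lambda(X-\E X)}$ followed by the hypothesis \eqref{mgf:subgamma}), but the treatment of the free parameter differs in a way worth noting. The paper computes the exact supremum $\sup_{\lambda\in(0,c^{-1})}\bigl(\lambda t-\tfrac{v\lambda^2}{2(1-c\lambda)}\bigr)=\tfrac{v}{c^2}h\bigl(\tfrac{ct}{v}\bigr)$ with $h(s)=1+s-\sqrt{1+2s}$ via a calculus lemma (Lemma~\ref{lemma:variationalCalculaus}), and then invokes a second technical inequality $h(s)\geq \tfrac{s^2}{2(1+s)}$ (Lemma~\ref{lemma:improvementExponential}) to reach the stated bound. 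You instead substitute the sub-optimal but admissible choice $\lambda=t/(v+ct)$, which makes the exponent evaluate to exactly $-t^2/(2(v+ct))$ with no further estimation. Your route buys a shorter, self-contained argument that dispenses with both auxiliary lemmas; the paper's route buys the sharper intermediate expression $\exp\bigl(-\tfrac{v}{c^2}h(\tfrac{ct}{v})\bigr)$, which is the best bound obtainable from \eqref{mgf:subgamma} before it is relaxed to the Bernstein-type form. Your admissibility check ($\lambda<c^{-1}$ precisely because $v>0$) and the separate handling of $t=0$ are both correct and appropriately careful.
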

\begin{proof}
Using Markov inequality, for $\lambda \in (0, c^{-1})$ we have 
$$
\P(X-\E X>t)\leq \frac{\E[e^{\lambda (X-\E X)}]}{e^{\lambda t}}\leq \exp\left(\frac{v \lambda^2}{2(1-c\lambda)} -\lambda t\right).
$$
Hence,
$$
\P(X-\E X>t)\leq
 \exp\left[-\sup_{\lambda\in (0,c^{-1})}\left(\lambda t-\frac{v \lambda^2}{2(1-c\lambda)}\right) \right].
$$
Using usual calculus techniques we get (see Lemma \ref{lemma:variationalCalculaus} for details)
$$\sup_{\lambda\in (0,c^{-1})}\left(\lambda t-\frac{v \lambda^2}{2(1-c\lambda)}\right) =\frac{v}{c^2}h\left(\frac{ct}{v}\right),$$
where $h(s)=1+s-\sqrt{1+2s}$. By Lemma \ref{lemma:improvementExponential} we have $h(s)\geq \frac{s^2}{2(1+s)}$ for $s>0$ and the claim follows.
\end{proof}

The following Corollary explains what happens when $c=0$ in \eqref{mgf:subgamma}.
\begin{thm}
Let $X$ be a random variable such that $v>0$ we have 
\begin{equation}
\log\E[e^{\lambda (X-\E X)}]\leq \frac{v \lambda^2}{2}
\end{equation}
for every $\lambda \geq 0$. Then for $t\geq 0$ we have
$$\P(X-\E X>t)\leq \exp\left(\frac{-t^2}{2v}\right).$$
\end{thm}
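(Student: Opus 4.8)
The plan is to run the standard Chernoff bound, which is exactly the $c=0$ specialization of the argument proving Theorem~\ref{thm:subgamma}. First I would apply Markov's inequality to the nonnegative random variable $e^{\lambda(X-\E X)}$: for every $\lambda\geq 0$ and $t\geq 0$,
$$\P(X-\E X>t)=\P\!\left(e^{\lambda(X-\E X)}>e^{\lambda t}\right)\leq e^{-\lambda t}\,\E\!\left[e^{\lambda(X-\E X)}\right]\leq \exp\!\left(\frac{v\lambda^{2}}{2}-\lambda t\right),$$
where the last step uses the assumed bound on $\log\E[e^{\lambda(X-\E X)}]$. Note that the hypothesis is supposed to hold for \emph{all} $\lambda\geq 0$, which is precisely what is needed, since the optimal exponent will be attained at a $\lambda$ that can be arbitrarily large.

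Next I would minimize the exponent $g(\lambda):=\frac{v\lambda^{2}}{2}-\lambda t$ over $\lambda\geq 0$. Since $g'(\lambda)=v\lambda-t$ and $v>0$, the minimizer is $\lambda^{\star}=t/v$, which is admissible because $t\geq 0$; the minimal value is $g(\lambda^{\star})=-\frac{t^{2}}{2v}$. Substituting $\lambda=\lambda^{\star}$ into the displayed inequality yields
$$\P(X-\E X>t)\leq \exp\!\left(-\frac{t^{2}}{2v}\right),$$
which is the claim. The boundary case $t=0$ gives $\lambda^{\star}=0$ and the trivial bound $\P(X-\E X>0)\leq 1$, so no separate treatment is needed.

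I do not expect any genuine obstacle: this is the textbook sub-Gaussian tail estimate, and the only point to verify is that the optimal $\lambda^{\star}=t/v$ is nonnegative, which is immediate. As an alternative derivation, one may invoke Theorem~\ref{thm:subgamma} directly: the present hypothesis implies \eqref{mgf:subgamma} for every $c>0$ (because $1-c\lambda<1$ on $(0,c^{-1})$, so $\frac{v\lambda^2}{2}\leq\frac{v\lambda^2}{2(1-c\lambda)}$), hence $\P(X-\E X>t)\leq \exp(-t^{2}/(2(v+ct)))$ for all $c>0$, and letting $c\downarrow 0$ recovers $\exp(-t^{2}/(2v))$. The direct Chernoff argument above is shorter and self-contained, so I would present that one.
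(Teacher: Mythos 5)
Your proposal is correct and follows essentially the same route as the paper: Markov's inequality applied to $e^{\lambda(X-\E X)}$ followed by choosing $\lambda = t/v$ to optimize the exponent (the paper arrives at the same $\lambda$ by completing the square rather than by differentiating, which is an immaterial difference). No gaps.
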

\begin{proof}
Using Markov inequality, for $\lambda \geq 0$ we have 
\begin{align*}
\P(X-\E X>t)&\leq \frac{\E[e^{\lambda (X-\E X)}]}{e^{\lambda t}}\leq \exp\left(\frac{v \lambda^2}{2} -\lambda t\right)\\
&= \exp\left(\frac{v \lambda^2}{2} -\lambda t+\frac{t^2}{2v}-\frac{t^2}{2v}\right)\\
&= \exp\left(\frac{1}{2}\left(\lambda v^{1/2}-tv^{-1/2}\right)^2-\frac{t^2}{2v}\right).
\end{align*}
Setting $\lambda = t/v$ the claim follows.
\end{proof}

\begin{proposition}\label{propo:gammaTails}
If for a given $p>0$ and $c>0$ the inequality
\begin{equation}
\E[e^{\lambda (X-\E X)}]\leq \frac{e^{-\lambda p}}{(1-c\lambda)^{p/c}}
\end{equation}
holds for all $\lambda <c^{-1}$, then the following claims hold:

(a) For all $\lambda \in (0,c^{-1})$
\begin{equation}\label{eq:gammaTail}
\log\E[e^{\lambda (X-\E X)}]\leq \frac{pc \lambda^2}{2(1-c\lambda)}.
\end{equation}

(b) For all $\lambda \leq 0$
\begin{equation}\label{eq:gaussTail}
\log\E[e^{\lambda (X-\E X)}]\leq \frac{pc \lambda^2}{2}.
\end{equation}
\end{proposition}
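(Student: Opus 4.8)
The plan is to reduce both claims to elementary scalar inequalities by simply taking logarithms of the hypothesis. Write $g(\lambda) := \log \E[e^{\lambda(X-\E X)}]$. Taking $\log$ in the assumed bound gives, for every $\lambda < c^{-1}$,
$$
g(\lambda) \;\leq\; -\lambda p - \frac{p}{c}\log(1-c\lambda),
$$
which is well-defined because $1-c\lambda>0$ on this range. Introducing the substitution $s = c\lambda$ (so $\lambda = s/c$, $\lambda^2 = s^2/c^2$), the right-hand side becomes $\frac{p}{c}\bigl(-s-\log(1-s)\bigr)$, valid for $s<1$.

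For part (a), I would restrict to $\lambda \in (0,c^{-1})$, i.e.\ $s\in(0,1)$, and invoke the scalar bound $-\log(1-s)-s \leq \frac{s^2}{2(1-s)}$, which is exactly \eqref{eq:HelpInequality} from Lemma \ref{lemma:inequality1}(a). This yields $g(\lambda) \leq \frac{p}{c}\cdot\frac{s^2}{2(1-s)}$, and resubstituting $s=c\lambda$ gives $\frac{pc\lambda^2}{2(1-c\lambda)}$, which is \eqref{eq:gammaTail}. For part (b), I would restrict to $\lambda \leq 0$, i.e.\ $s\leq 0$: the case $\lambda=0$ is immediate since both sides vanish, and for $s<0$ I would use $-\log(1-s)-s \leq \frac{s^2}{2}$, which is \eqref{eq:HelpInequality1}, giving $g(\lambda) \leq \frac{p}{c}\cdot\frac{s^2}{2} = \frac{pc\lambda^2}{2}$, i.e.\ \eqref{eq:gaussTail}.

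I expect there to be essentially no obstacle here: the entire content is carried by the two elementary bounds on $-\log(1-s)-s$ already recorded in Lemma \ref{lemma:inequality1}, and the remaining steps are a single logarithm and a linear change of variable. The only points deserving a line of care are checking that $s=c\lambda$ maps the stated $\lambda$-ranges exactly onto the $s$-ranges on which those scalar inequalities are valid, and treating the boundary point $\lambda=0$ in part (b) separately (trivially). No moment-generating-function manipulation beyond taking a logarithm is required.
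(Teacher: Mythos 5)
Your proposal is correct and matches the paper's own proof essentially line for line: take logarithms of the hypothesis, rewrite the right-hand side as $\frac{p}{c}(-\log(1-c\lambda)-c\lambda)$, and apply the two scalar bounds of Lemma \ref{lemma:inequality1} on the respective ranges. Your explicit treatment of the boundary case $\lambda=0$ in part (b) is a minor (and welcome) addition of care that the paper leaves implicit.
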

\begin{proof}
(a) Using the inequality $-\log(1-s)-s\leq \frac{s^2}{2(1-s)}$ for $s\in (0,1)$ (see Lemma \ref{lemma:inequality1} (a)) we have: 
\begin{align*}
&\log\E[e^{\lambda (X-\E X)}]\leq -\lambda p -\frac{p}{c}\log(1-c\lambda)\\
&=\frac{p}{c}(-\log(1-c\lambda)-c\lambda)\leq \frac{p}{c}\cdot\frac{(c\lambda)^2}{2(1-c\lambda)}
=\frac{pc\lambda^2}{2(1-c\lambda)}.
\end{align*}
(b) Using inequality $-\log(1-s)-s\leq \frac{s^2}{2}$ for $s<0$ (see Lemma \ref{lemma:inequality1} (b)) we have:
 \begin{align*}
&\log\E[e^{\lambda (X-\E X)}]\leq -\lambda p -\frac{p}{c}\log(1-c\lambda)\\
&=\frac{p}{c}(-\log(1-c\lambda)-c\lambda)\leq \frac{p}{c}\cdot\frac{(c\lambda)^2}{2}=\frac{pc\lambda^2}{2}.
\end{align*}
\end{proof}

\begin{corollary}
If for a given $p>0$ and $c>0$ we have 
\begin{equation}
\log\E[e^{\lambda (X-\E X)}]\leq \frac{e^{-\lambda p}}{(1-c\lambda)^{p/c}}
\end{equation}
for all $\lambda <c^{-1}$, then for all $t\geq 0$
\begin{equation}
\P(X-\E X>t)\leq \exp\left(\frac{-t^2}{2(pc+ct)}\right),\label{eq:RightTail}
\end{equation}
and 
\begin{equation}
\P(X-\E X<-t)\leq \exp\left(\frac{-t^2}{2pc}\right).\label{eq:LeftTail}
\end{equation}
\end{corollary}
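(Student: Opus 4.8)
The plan is to feed the hypothesis directly into Proposition~\ref{propo:gammaTails} to produce the two sub-gamma moment generating function estimates, and then invoke the tail bounds already established in Theorem~\ref{thm:subgamma} and in its sub-Gaussian companion (the theorem stated immediately after Theorem~\ref{thm:subgamma}). No new ideas are needed; this is a composition of earlier results.

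For the right tail \eqref{eq:RightTail}: the assumed bound on the moment generating function is exactly the one required in Proposition~\ref{propo:gammaTails}, so part (a) of that proposition yields
\[
\log\E[e^{\lambda(X-\E X)}]\leq \frac{pc\,\lambda^2}{2(1-c\lambda)}\qquad\text{for all }\lambda\in(0,c^{-1}).
\]
This is precisely hypothesis \eqref{mgf:subgamma} of Theorem~\ref{thm:subgamma} with $v=pc$ (and the same $c$). Applying Theorem~\ref{thm:subgamma} gives $\P(X-\E X>t)\leq\exp\!\left(\tfrac{-t^2}{2(pc+ct)}\right)$, which is \eqref{eq:RightTail}.

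For the left tail \eqref{eq:LeftTail}: part (b) of Proposition~\ref{propo:gammaTails} gives $\log\E[e^{\lambda(X-\E X)}]\leq \frac{pc\,\lambda^2}{2}$ for all $\lambda\leq 0$. Setting $Y:=-X$, so $Y-\E Y=-(X-\E X)$, we get $\log\E[e^{\lambda(Y-\E Y)}]=\log\E[e^{-\lambda(X-\E X)}]\leq \frac{pc\,\lambda^2}{2}$ for every $\lambda\geq 0$. Thus $Y$ satisfies the hypothesis of the sub-Gaussian tail bound following Theorem~\ref{thm:subgamma} with $v=pc$, which gives $\P(Y-\E Y>t)\leq\exp(-t^2/(2pc))$; rewriting in terms of $X$ yields $\P(X-\E X<-t)\leq\exp(-t^2/(2pc))$, i.e.\ \eqref{eq:LeftTail}. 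I expect no real obstacle here: the only mild subtlety is the reflection $X\mapsto -X$ needed to convert the one-sided ($\lambda\le 0$) Gaussian-type MGF bound into a lower-tail estimate, and (cosmetically) reading the stated hypothesis as the MGF bound of Proposition~\ref{propo:gammaTails}.
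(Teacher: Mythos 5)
Your proposal is correct and follows essentially the same route as the paper: Proposition~\ref{propo:gammaTails} parts (a) and (b) supply the sub-gamma and sub-Gaussian MGF bounds, which are then fed into Theorem~\ref{thm:subgamma} and its sub-Gaussian companion, with the reflection $X\mapsto -X$ handling the left tail exactly as in the paper's proof. Your parenthetical about reading the (mis-typeset) hypothesis as the MGF bound of Proposition~\ref{propo:gammaTails} is the right call, since the $\log$ on the left-hand side of the corollary's hypothesis is evidently a typo inherited from the surrounding statements.
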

\begin{proof}
From Proposition \ref{propo:gammaTails} we know that \eqref{eq:gammaTail} holds. Now, by Theorem \ref{thm:subgamma}, inequality \eqref{eq:RightTail} holds.

From Proposition \ref{propo:gammaTails} we know that \eqref{eq:gaussTail} holds. If we substitute $\lambda\mapsto -\lambda$ and $X\mapsto -X$ in  \eqref{eq:gaussTail} we have
$$
\log\E[e^{\lambda (-X+\E X)}]\leq \frac{pc \lambda^2}{2}
$$
for $\lambda \geq 0$. Hence, using Theorem \ref{propo:gammaTails} we get \eqref{eq:LeftTail}.
\end{proof}

\subsubsection{Technical inequalities}
\begin{lemma}\label{lemma:improvementExponential}
For $s\geq 0$ we have 
$$1+s-\sqrt{1+2s}\geq \frac{s^2}{2(1+s)}.$$
\end{lemma}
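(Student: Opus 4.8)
The plan is to rationalize the left-hand side. Since $(1+s)^2 = 1+2s+s^2 \ge 1+2s$, we have $1+s \ge \sqrt{1+2s}$, so the quantity $1+s-\sqrt{1+2s}$ is nonnegative and we may multiply and divide by its conjugate:
\[
1+s-\sqrt{1+2s} = \frac{(1+s)^2-(1+2s)}{1+s+\sqrt{1+2s}} = \frac{s^2}{1+s+\sqrt{1+2s}}.
\]
Thus the claimed inequality is equivalent to
\[
\frac{s^2}{1+s+\sqrt{1+2s}} \ge \frac{s^2}{2(1+s)}.
\]

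For $s=0$ both sides vanish and the inequality holds trivially. For $s>0$ I would divide through by $s^2>0$, reducing the statement to
\[
\frac{1}{1+s+\sqrt{1+2s}} \ge \frac{1}{2(1+s)},
\]
i.e.\ (both denominators being positive) to $2(1+s) \ge 1+s+\sqrt{1+2s}$, i.e.\ to $1+s \ge \sqrt{1+2s}$. Squaring both sides (legitimate since both are nonnegative) gives $1+2s+s^2 \ge 1+2s$, which is just $s^2 \ge 0$. This closes the argument.

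There is no real obstacle here; the only points requiring a word of care are checking that $1+s+\sqrt{1+2s}>0$ and $1+s>0$ so that the conjugate manipulation and the inversion of the fractions are valid (both are immediate for $s\ge 0$), and treating the $s=0$ case separately before cancelling $s^2$.
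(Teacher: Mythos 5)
Your proof is correct and follows essentially the same route as the paper's: rationalize via the conjugate to get $\frac{s^2}{1+s+\sqrt{1+2s}}$, then bound the denominator using $1+s=\sqrt{1+2s+s^2}\geq\sqrt{1+2s}$. The extra care you take with the $s=0$ case and the positivity of the denominators is fine but not a substantive difference.
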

\begin{proof}
For $s>0$ we have:
\begin{align*}
1+s-\sqrt{1+2s}&=(1+s-\sqrt{1+2s})\cdot \frac{1+s+\sqrt{1+2s}}{1+s+\sqrt{1+2s}}\\
&=\frac{(1+s)^2-1+2s}{1+s+\sqrt{1+2s}}=\frac{s^2}{1+s+\sqrt{1+2s}}\geq \frac{s^2}{2(1+s)}.
\end{align*}
The last inequality follows from the fact that $1+s=\sqrt{1+2s+s^2}>\sqrt{1+2s}$.
\end{proof}

\begin{lemma}\label{lemma:inequality1}(a) For $s\in (0,1)$ we have 
$-\log(1-s)-s\leq \frac{s^2}{2(1-s)}$.

(b) For $s<0$ we have $-\log(1-s)-s\leq \frac{s^2}{2}$.
\end{lemma}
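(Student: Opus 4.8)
The plan is to handle each inequality by introducing the difference of the two sides as a function of $s$, noting that it vanishes at $s=0$, and then controlling the sign of its derivative so that monotonicity finishes the argument.

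For part (a), I would set $f(s):=\frac{s^2}{2(1-s)}+\log(1-s)+s$ on $(0,1)$, so that the claim is exactly $f(s)\ge 0$. First I would check $f(0)=0$ by direct substitution. Then a short computation gives $f'(s)=\frac{s(2-s)}{2(1-s)^2}-\frac{1}{1-s}+1$; putting everything over the common denominator $2(1-s)^2$, the numerator collapses to $s(2-s)-2(1-s)+2(1-s)^2=s^2$, so $f'(s)=\frac{s^2}{2(1-s)^2}\ge 0$ on $(0,1)$. Hence $f$ is nondecreasing, whence $f(s)\ge f(0)=0$ for $s\in(0,1)$, which is the assertion. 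A slicker alternative route is the power series $-\log(1-s)-s=\sum_{k\ge 2}\frac{s^k}{k}\le\frac12\sum_{k\ge 2}s^k=\frac{s^2}{2(1-s)}$, valid for $s\in(0,1)$ because $\frac1k\le\frac12$ for $k\ge 2$ and every term is nonnegative there; I would probably present the calculus version since it parallels part (b).

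For part (b), I would set $g(s):=\frac{s^2}{2}+\log(1-s)+s$ on $(-\infty,0)$, so the claim is $g(s)\ge 0$. Again $g(0)=0$, and differentiating gives $g'(s)=s+1-\frac{1}{1-s}$; over the common denominator $1-s>0$ the numerator is $(s+1)(1-s)-1=-s^2$, so $g'(s)=-\frac{s^2}{1-s}\le 0$ for $s<0$. Thus $g$ is nonincreasing on $(-\infty,0]$, and since it decreases down to the value $g(0)=0$ at the right endpoint, $g(s)\ge g(0)=0$ for all $s<0$, as required.

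Neither part poses a genuine obstacle; the only thing to be careful about is the elementary algebra in each derivative computation, where the numerator must be simplified correctly to $\pm s^2$ times an obviously-signed factor, and one must keep in mind that $1-s>0$ throughout the relevant range so that clearing denominators does not flip any signs.
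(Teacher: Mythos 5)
Your proof is correct, and both derivative computations check out (the numerators do collapse to $s^2$ and $-s^2$ over positive denominators, and $g$ nonincreasing on $(-\infty,0]$ indeed gives $g(s)\ge g(0)=0$ for $s<0$). The paper's argument is essentially the same elementary calculus in integrated form: it writes $-\log(1-s)-s=\int_0^s\frac{t}{1-t}\,dt$ and bounds the integrand by $\frac{t}{1-s}$ in part (a) and by $-t$ (after reversing the orientation) in part (b), so your monotonicity-of-the-difference argument and your series alternative for (a) are just minor repackagings of the same idea.
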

\begin{proof}
Not that for $f(s)=-\log(1-s)-s$ we have 
$f'(s)=\frac{1}{1-s}-1=\frac{s}{1-s}$.
Hence, in the case of (a), it follows for $s\in (0,1)$
\begin{align*}
-\log(1-s)-s&= f(s)-f(0)=\int_0^sf'(t)\, dt \\
&=\int_0^s\frac{t}{1-t}\, dt\leq  \int_0^s\frac{t}{1-s}\, dt=\frac{1}{1-s}\int_0^st\, dt= \frac{s^2}{2(1-s)}.
\end{align*}
In the case of (b), for $s<0$ we can apply a similar argument:
\begin{align*}
-\log(1-s)-s&= f(s)-f(0)=\int_0^sf'(t)\, dt \\
&=\int_0^s\frac{t}{1-t}\, dt\leq  \int_s^0\frac{-t}{1-t}\, dt\leq\int_s^0-t\, dt= \frac{s^2}{2}.
\end{align*}
\end{proof}

\begin{lemma}\label{lemma:variationalCalculaus}
For a given $t>0$, $v>0$ and $c>0$ the function 
$$f(\lambda) = \lambda t-\frac{v\lambda^2}{2(1-c\lambda)}$$
has a maximum on interval $(0,c^{-1})$ at point $\lambda_{\mathrm{max}} = \frac{\sqrt{v+2tc}-\sqrt{v}}{c\sqrt{v+2tc}}.$ and it equals
$$\frac{v}{c^2}\left(1+\frac{tc}{v}-\sqrt{1+2\cdot\frac{tc}{v}}\right).$$
\end{lemma}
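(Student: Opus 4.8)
The plan is to treat this as a standard single-variable calculus optimization, with one change of variable that makes the algebra transparent. First I would record the boundary behaviour on the open interval: $f(0)=0$, $f'(0)=t>0$, and $f(\lambda)\to-\infty$ as $\lambda\to (1/c)^-$, since $1-c\lambda\to 0^+$ while the numerator $v\lambda^2/2$ stays bounded away from $0$. Because $f$ is smooth on $(0,c^{-1})$, this already guarantees an interior maximum, so it suffices to locate the critical points and show there is exactly one.

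Next I would differentiate. Writing $g(\lambda)=\lambda^2/(1-c\lambda)$, the quotient rule gives $g'(\lambda)=\lambda(2-c\lambda)/(1-c\lambda)^2$, hence
\[
f'(\lambda)=t-\frac{v}{2}\cdot\frac{\lambda(2-c\lambda)}{(1-c\lambda)^2}.
\]
Setting $f'(\lambda)=0$ yields $\frac{2t}{v}(1-c\lambda)^2=2\lambda-c\lambda^2$. The key step is the substitution $u:=1-c\lambda$, so that $\lambda=(1-u)/c$ and, after simplification, $2\lambda-c\lambda^2=(1-u^2)/c$. The critical-point equation then collapses to $\frac{2tc}{v}\,u^2=1-u^2$, i.e. $u^2=\frac{v}{v+2tc}$. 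Since $\lambda\in(0,c^{-1})$ forces $u\in(0,1)$, we take the positive root $u=\sqrt{v}/\sqrt{v+2tc}$, and solving $c\lambda=1-u$ gives exactly $\lambda_{\mathrm{max}}=\frac{\sqrt{v+2tc}-\sqrt{v}}{c\sqrt{v+2tc}}$. As $u^2=\frac{v}{v+2tc}$ has a unique solution in $(0,1)$, this critical point is unique, hence it is the maximum identified above.

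Finally I would evaluate $f$ at $\lambda_{\mathrm{max}}$ using the same substitution. From $\frac{2tc}{v}u^2=1-u^2$ one gets $t=\frac{v(1-u^2)}{2cu^2}$; plugging $\lambda=(1-u)/c$ and $1-c\lambda=u$ into $f$ and simplifying leads to $f(\lambda_{\mathrm{max}})=\frac{v(1-u)^2}{2c^2u^2}$. Writing $s:=tc/v$ we have $u^2=1/(1+2s)$, so $(1-u)^2/u^2=(1/u-1)^2=(\sqrt{1+2s}-1)^2=2\left(1+s-\sqrt{1+2s}\right)$, and therefore $f(\lambda_{\mathrm{max}})=\frac{v}{c^2}\left(1+\frac{tc}{v}-\sqrt{1+\frac{2tc}{v}}\right)$, as claimed. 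The only real obstacle is bookkeeping in these algebraic simplifications; the substitution $u=1-c\lambda$ is what keeps it manageable, since it turns every rational expression in $\lambda$ into a polynomial identity in $u$.
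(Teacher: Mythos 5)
Your proof is correct and follows essentially the same route as the paper's: differentiate, reduce the critical-point equation to a quadratic via a substitution, and evaluate $f$ at the resulting point. The only differences are cosmetic but favourable --- your substitution $u=1-c\lambda$ collapses the equation to $u^2=\frac{v}{v+2tc}$ with no linear term (the paper substitutes $x=\lambda/(1-c\lambda)$ and invokes the full quadratic formula), and your justification that the unique interior critical point is a maximum, via $f(0)=0$, $f'(0)=t>0$ and $f(\lambda)\to-\infty$ as $\lambda\to c^{-1}$, is more complete than the paper's unproved assertion that $f'$ is strictly decreasing on $(0,c^{-1})$.
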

\begin{proof}
When we take the first derivative we get 
$$f'(\lambda)=t-\frac{v\lambda }{(1-c\lambda)} -\frac{vc\lambda^2}{2(1-c\lambda)^2} .$$
Setting $f'(\lambda ) =0$ and using the substitution $x= \frac{\lambda }{(1-c\lambda)}$, we get
$$-\frac{vc}{2}\cdot x^2-vx+t=0.$$
Solving this quadratic equation by $x$ we get
$$x_{1,2}=-\frac{1}{c}\pm \sqrt{\frac{1}{c^2}+\frac{2t}{vc}}.$$
Since the solution needs to be non-negative, 
\begin{align*}
x&=-\frac{1}{c}+ \sqrt{\frac{1}{c^2}+\frac{2t}{vc}}=-\frac{1}{c}+ \sqrt{\frac{v}{vc^2}+\frac{2tc}{vc^2}}\\
&=\frac{-\sqrt{v}+\sqrt{v+2tc}}{c\sqrt{v}}=\frac{\sqrt{v+2tc}-\sqrt{v}}{c\sqrt{v}}.
\end{align*}
Now, when we calculate $\lambda$ from $x$, we get
$$\lambda_{\mathrm{max}} = \frac{\sqrt{v+2tc}-\sqrt{v}}{c\sqrt{v+2tc}}.$$
Note that, $\lambda_{\mathrm{max}}$ is the minimum since $f'$ is a strictly decreasing function on $(0, c^{-1})$.
The value of the minimum is 
\begin{align*}
f'(\lambda_{\mathrm{max}})&=\lambda_{\mathrm{max}}\left(t - \frac{v}{2}\cdot \frac{\lambda_{\mathrm{max}}}{1-c\lambda_{\mathrm{max}}}\right)=\lambda_{\mathrm{max}}\left(t - \frac{v}{2}\cdot x\right)\\
&=\frac{\sqrt{v+2tc}-\sqrt{v}}{c\sqrt{v+2tc}}\left(t - \frac{v}{2}\cdot \frac{\sqrt{v+2tc}-\sqrt{v}}{c\sqrt{v}}\right)\\
&=\frac{\sqrt{v+2tc}-\sqrt{v}}{c\sqrt{v+2tc}}\left(t -  \frac{\sqrt{v}\sqrt{v+2tc}-v}{2c}\right)\\
&=\frac{\sqrt{v+2tc}-\sqrt{v}}{c\sqrt{v+2tc}}\cdot\frac{2tc+v-\sqrt{v}\sqrt{v+2tc}}{2c}\\
&=\frac{(\sqrt{v+2tc}-\sqrt{v})}{c}\cdot\frac{\sqrt{v+2tc}-\sqrt{v}}{2c}\\
&=\frac{(\sqrt{v+2tc}-\sqrt{v})^2}{2c^2}=\frac{2v+2tc-2\sqrt{v^2+2vtc}}{2c^2}\\
&=\frac{v}{c^2}\left(1+\frac{tc}{v}-\sqrt{1+2\cdot\frac{tc}{v}}\right).
\end{align*}
\end{proof}

\begin{lemma}\label{lemma:denominatorBounds}
For $\varepsilon \in (0,1)$ and $\rho^2\in [\varepsilon,1]$ we have 
$$(\rho\sqrt{1-\varepsilon}-\frac{\varepsilon}{2}\sqrt{1-\rho^2})^2+1-\rho^2 \leq 1-\varepsilon\rho^2.$$
\end{lemma}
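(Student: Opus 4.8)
The plan is to reduce the claimed inequality to an elementary one‑variable estimate by expanding the square and cancelling the dominant terms. First I would write
$\left(\rho\sqrt{1-\varepsilon}-\tfrac{\varepsilon}{2}\sqrt{1-\rho^2}\right)^2 = \rho^2(1-\varepsilon) - \varepsilon\rho\sqrt{1-\varepsilon}\sqrt{1-\rho^2} + \tfrac{\varepsilon^2}{4}(1-\rho^2)$, add $1-\rho^2$, and use $\rho^2(1-\varepsilon)-\rho^2 = -\varepsilon\rho^2$ to rewrite the left‑hand side of the lemma as $1 - \varepsilon\rho^2 - \varepsilon\rho\sqrt{1-\varepsilon}\sqrt{1-\rho^2} + \tfrac{\varepsilon^2}{4}(1-\rho^2)$. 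Subtracting $1-\varepsilon\rho^2$ from both sides, the statement becomes equivalent to the inequality $\tfrac{\varepsilon^2}{4}(1-\rho^2) \le \varepsilon\rho\sqrt{1-\varepsilon}\sqrt{1-\rho^2}$.

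Next I would dispose of the boundary case $\rho^2 = 1$, where both sides vanish. For $\rho^2 \in [\varepsilon, 1)$ we have $1-\rho^2 > 0$, so dividing through by the strictly positive factor $\varepsilon\sqrt{1-\rho^2}$ (here using $\rho\ge 0$, which is the regime in which this lemma is invoked within the proof of Lemma~\ref{lemma:cosineConcentration}), it remains only to prove $\tfrac{\varepsilon}{4}\sqrt{1-\rho^2} \le \rho\sqrt{1-\varepsilon}$. From the hypothesis $\rho^2 \ge \varepsilon$ I get simultaneously $\sqrt{1-\rho^2} \le \sqrt{1-\varepsilon}$ and $\rho \ge \sqrt{\varepsilon}$, so it suffices to check $\tfrac{\varepsilon}{4}\sqrt{1-\varepsilon} \le \sqrt{\varepsilon}\sqrt{1-\varepsilon}$; cancelling the positive factor $\sqrt{1-\varepsilon}$, this is $\sqrt{\varepsilon} \le 4$, which is immediate since $\varepsilon < 1$.

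There is no genuine obstacle here: the argument is a routine expansion followed by two monotonicity estimates. The only point requiring a little care is the sign of the cross term — the inequality really does need $\rho \ge 0$ (if $\rho<0$ the reduced right‑hand side is negative while the left‑hand side is nonnegative), so I would make the standing assumption $\rho\ge 0$ explicit, matching the way the lemma is used. Every division carried out is by a strictly positive quantity under the hypotheses $\varepsilon\in(0,1)$ and $\rho^2\in[\varepsilon,1)$, so no case analysis beyond the trivial $\rho^2=1$ boundary is needed.
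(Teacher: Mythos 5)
Your proof is correct. It takes a slightly different decomposition than the paper: you expand the square, cancel the dominant terms $\rho^2(1-\varepsilon)+1-\rho^2=1-\varepsilon\rho^2$ against the right-hand side, and reduce the claim to the cross-term inequality $\frac{\varepsilon}{4}\sqrt{1-\rho^2}\leq \rho\sqrt{1-\varepsilon}$. The paper instead sandwiches the base of the square, showing $0\leq \rho\sqrt{1-\varepsilon}-\frac{\varepsilon}{2}\sqrt{1-\rho^2}\leq \rho\sqrt{1-\varepsilon}$ and then squaring the upper bound; note that the lower bound there amounts to $\frac{\varepsilon}{2}\sqrt{1-\rho^2}\leq\rho\sqrt{1-\varepsilon}$, which is a strictly stronger intermediate fact than the one your route needs, though both are verified by the identical pair of monotonicity estimates $\sqrt{1-\rho^2}\leq\sqrt{1-\varepsilon}$ and $\rho\geq\sqrt{\varepsilon}$. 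So the two arguments are close cousins; yours is marginally more economical in that it only requires the weaker cross-term bound and isolates the boundary case $\rho^2=1$ cleanly. Your remark about the sign of $\rho$ is well taken and worth keeping: as literally stated the lemma admits $\rho<0$ (only $\rho^2\in[\varepsilon,1]$ is assumed), in which case the inequality is false; the paper's proof silently uses $\rho\geq 0$ when it asserts $\rho\geq\varepsilon$, and the lemma is indeed only invoked in the $\rho\geq 0$ branch of the proof of Lemma~\ref{lemma:cosineConcentration}, so making that hypothesis explicit, as you do, is an improvement rather than a deviation.
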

\begin{proof}
It is easy to show that  $\rho\geq \varepsilon$ and $1-\varepsilon\geq 1-\rho^2$, hence
\begin{equation}
0\leq \rho\sqrt{1-\varepsilon}-\frac{\varepsilon}{2}\sqrt{1-\rho^2}\leq \rho\sqrt{1-\varepsilon}.\label{eq:denominatorBounds1}
\end{equation}
Now, we have:
$$(\rho\sqrt{1-\varepsilon}-\frac{\varepsilon}{2}\sqrt{1-\rho^2})^2+1-\rho^2\stackrel{\eqref{eq:denominatorBounds1}}{\leq} (\rho\sqrt{1-\varepsilon})^2+1-\rho^2=1-\varepsilon\rho^2
$$
\end{proof}

\begin{lemma}\label{lemma:appLogLemma}
For $\varepsilon \in (0,1)$ we have 
$$\left[1+\frac{\varepsilon^2}{2(1+\varepsilon\sqrt{2})} \right]^{-\frac{p}{2}} \leq e^{-\frac{p}{4}\cdot \frac{\varepsilon^2}{(1+\varepsilon)^2}}$$
\end{lemma}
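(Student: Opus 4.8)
The plan is to take logarithms and reduce the claim to an elementary inequality. Since both sides of the asserted inequality are positive, taking $\ln$ and dividing by $-p/2<0$ (which reverses the inequality) shows that the statement is equivalent to
$$\ln\!\left(1+\frac{\varepsilon^2}{2(1+\varepsilon\sqrt{2})}\right)\ \geq\ \frac{1}{2}\cdot\frac{\varepsilon^2}{(1+\varepsilon)^2}.$$
So it suffices to prove this last inequality for all $\varepsilon\in(0,1)$; note the dependence on $p$ drops out entirely.

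Next I would invoke the standard bound $\ln(1+x)\geq \dfrac{x}{1+x}$, valid for all $x> -1$ (it follows by comparing derivatives: at $x=0$ both sides vanish, and $\tfrac{1}{1+x}\geq \tfrac{1}{(1+x)^2}$ for $x\geq 0$, with the reverse for $-1<x\le 0$). Applying this with $x=\dfrac{\varepsilon^2}{2(1+\varepsilon\sqrt{2})}>0$ gives
$$\ln\!\left(1+\frac{\varepsilon^2}{2(1+\varepsilon\sqrt{2})}\right)\ \geq\ \frac{\tfrac{\varepsilon^2}{2(1+\varepsilon\sqrt{2})}}{1+\tfrac{\varepsilon^2}{2(1+\varepsilon\sqrt{2})}}\ =\ \frac{\varepsilon^2}{2+2\sqrt{2}\,\varepsilon+\varepsilon^2}.$$

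Finally I would compare this lower bound with the target $\dfrac{\varepsilon^2}{2(1+\varepsilon)^2}=\dfrac{\varepsilon^2}{2+4\varepsilon+2\varepsilon^2}$. Since both fractions have positive numerator $\varepsilon^2$, the desired inequality reduces to comparing denominators, i.e. to $2+2\sqrt{2}\,\varepsilon+\varepsilon^2\leq 2+4\varepsilon+2\varepsilon^2$, which after cancellation and dividing by $\varepsilon>0$ becomes $2\sqrt{2}\leq 4+\varepsilon$. As $2\sqrt{2}\approx 2.83<4$, this holds for every $\varepsilon>0$, completing the argument. There is really no hard step here; the only point requiring care is tracking the reversal of the inequality when exponentiating by the negative power $-p/2$, which is why I would phrase the reduction explicitly before estimating.
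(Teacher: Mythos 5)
Your proof is correct. It follows the same overall strategy as the paper's proof (take logarithms, divide out the exponent $-p/2$, and lower-bound $\ln(1+x)$ by an elementary rational function), but the key inequality you invoke is different and leads to a noticeably cleaner finish. The paper uses the bound $\ln(1+x)\geq x-\tfrac{x^2}{2}$ (its Lemma on $-\log(1-s)-s$), which forces it to verify that
\[
\frac{\varepsilon^2}{2(1+\varepsilon\sqrt{2})}-\frac{\varepsilon^2}{2(1+\varepsilon)^2}\;\geq\;\frac{1}{2}\left(\frac{\varepsilon^2}{2(1+\varepsilon\sqrt{2})}\right)^2,
\]
a multi-line chain of estimates. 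You instead use $\ln(1+x)\geq \tfrac{x}{1+x}$, under which the lower bound becomes the single fraction $\varepsilon^2/(2+2\sqrt{2}\,\varepsilon+\varepsilon^2)$, and the whole lemma collapses to comparing denominators, i.e.\ to $2\sqrt{2}\leq 4+\varepsilon$. Both arguments are elementary and fully rigorous; yours trades the paper's quadratic Taylor-type bound for a sharper-at-the-relevant-scale rational bound and thereby avoids the bookkeeping. The only implicit assumption in both proofs is $p>0$, which holds since $p$ is the projection dimension, and you correctly flag the inequality reversal when dividing by $-p/2$.
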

\begin{proof}
Using Lemma \ref{lemma:inequality1}(b) we have
\begin{align}
\nonumber\log\left(\left[1+\frac{\varepsilon^2}{2(1+\varepsilon\sqrt{2})} \right]^{-\frac{p}{2}}\right) &= -\frac{p}{2}\cdot \log\left[1+\frac{\varepsilon^2}{2(1+\varepsilon\sqrt{2})} \right]\\
&\leq \frac{p}{2}\left[-\frac{\varepsilon^2}{2(1+\varepsilon\sqrt{2})}+\frac{1}{2}\left(\frac{\varepsilon^2}{2(1+\varepsilon\sqrt{2})}\right)^2\right]\label{eq:appLogLemma}
\end{align}
Looking at the difference:
\begin{align*}
\frac{\varepsilon^2}{2(1+\varepsilon\sqrt{2})}-\frac{\varepsilon^2}{2(1+\varepsilon)^2}&= \frac{\varepsilon^2[(1+\varepsilon)^2-(1+\varepsilon\sqrt{2})]}{2(1+\varepsilon)^2(1+\varepsilon\sqrt{2})}= \frac{\varepsilon^2[\varepsilon(2-\sqrt{2})+\varepsilon^2]}{2(1+\varepsilon)^2(1+\varepsilon\sqrt{2})}\\
&\geq \frac{\varepsilon^2[\varepsilon(2-\sqrt{2})+\varepsilon^2]}{2(1+\varepsilon)^2(1+\varepsilon\sqrt{2})}\geq \frac{\varepsilon^3(2-\sqrt{2})[1+\frac{\varepsilon}{2-\sqrt{2}}]}{2(1+\varepsilon)^2(1+\varepsilon\sqrt{2})}\\
&\geq \frac{\varepsilon^3(2-\sqrt{2})[1+\varepsilon\sqrt{2}]}{2(1+\varepsilon)^2(1+\varepsilon\sqrt{2})}\geq \frac{\varepsilon^3\cdot \frac{1}{2}}{2(1+\varepsilon)^2}=\frac{\varepsilon^3}{4(1+\varepsilon)^2}\\
&\geq \frac{1}{2}\left(\frac{\varepsilon^2}{2(1+\varepsilon\sqrt{2})}\right)^2
\end{align*}
Hence, we have $\eqref{eq:appLogLemma} \leq -\dfrac{p}{4}\cdot \dfrac{\varepsilon^2}{(1+\varepsilon)^2}$.
\end{proof}
\subsubsection{Special Case of Mean Value Theorem}
\begin{lemma}\label{lemma:meanValue}
Let $\psi(h)=\frac{h}{\sqrt{1+h^2}}$, then $\psi'(h)=\frac{1}{(1+h^2)^{3/2}}$.
Furthermore, we have the following inequalities:

(a) For $0\leq a<b$ we have 
$$\psi(b) - \psi(a)\leq \psi'(a) (b-a). $$

(b) For $a<0< b$ we have 
$$\psi(b) - \psi(a)\leq b-a. $$
\end{lemma}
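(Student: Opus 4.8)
The plan is to first verify the stated formula for $\psi'$ by a direct differentiation, and then obtain both inequalities (a) and (b) from the Mean Value Theorem together with elementary monotonicity facts about $\psi'$.

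For the derivative: writing $\psi(h) = h(1+h^2)^{-1/2}$ and applying the product and chain rules gives
$\psi'(h) = (1+h^2)^{-1/2} - h^2(1+h^2)^{-3/2} = (1+h^2)^{-3/2}$,
which is the claimed expression. Two consequences will be used repeatedly: first, $0 < \psi'(h) \le 1$ for every $h \in \R$, with equality only at $h = 0$; and second, since $1+h^2$ is increasing on $[0,\infty)$, the map $h \mapsto \psi'(h) = (1+h^2)^{-3/2}$ is (strictly) decreasing on $[0,\infty)$.

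For part (a), fix $0 \le a < b$. Since $\psi$ is continuous on $[a,b]$ and differentiable on $(a,b)$, the Mean Value Theorem provides some $\xi \in (a,b)$ with $\psi(b) - \psi(a) = \psi'(\xi)(b-a)$. Because $\xi > a \ge 0$ and $\psi'$ is decreasing on $[0,\infty)$, we get $\psi'(\xi) \le \psi'(a)$; multiplying through by the positive quantity $b-a$ yields $\psi(b) - \psi(a) \le \psi'(a)(b-a)$, as desired. For part (b), fix $a < 0 < b$ (the argument in fact works for any $a < b$). Again the Mean Value Theorem gives $\xi \in (a,b)$ with $\psi(b) - \psi(a) = \psi'(\xi)(b-a)$, and since $\psi'(\xi) = (1+\xi^2)^{-3/2} \le 1$ and $b - a > 0$, we conclude $\psi(b) - \psi(a) \le b - a$.

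There is no genuine obstacle in this lemma; it is routine calculus. The only points that require a little care are confirming that $\psi'$ is indeed monotone on the half-line $[0,\infty)$ (equivalently, that $\psi''(h) = -3h(1+h^2)^{-5/2} \le 0$ there, so that $\psi$ is concave on $[0,\infty)$) and making sure the sign of $b-a$ is handled correctly when passing from the bound on $\psi'(\xi)$ to the bound on the increment $\psi(b)-\psi(a)$.
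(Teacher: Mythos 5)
Your proof is correct and takes essentially the same approach as the paper: the paper writes $\psi(b)-\psi(a)=\int_a^b\psi'(h)\,dh$ and bounds it by $\max_{s\in[a,b]}\psi'(s)\,(b-a)$, which is the same monotonicity-of-$\psi'$ argument you carry out via the Mean Value Theorem. The two are interchangeable here, so there is nothing substantive to distinguish them.
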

\begin{proof}
Using product rule for derivatives
\begin{align*}
\psi'(h) &= \frac{(h)'\sqrt{1+h^2} -h(\sqrt{1+h^2})' }{1+h^2}=\frac{\sqrt{1+h^2} -h\cdot\frac{h}{\sqrt{1+h^2}} }{1+h^2}\\
&=\frac{1+h^2 -h^2 }{(1+h^2)^{3/2}} =\frac{1 }{(1+h^2)^{3/2}}.
\end{align*}

For $a<b$ we have
$$\psi(b) - \psi(a)=\int_a^b \psi'(h)\, dh\leq \max_{s\in [a,b]}\psi'(s) \int_a^b\, du= \max_{s\in [a,b]}\psi'(s) (b-a).$$
It is not hard to see
$$\max_{h\in [a,b]}\psi'(h) =\begin{cases}
\psi'(a) & 0\leq a<b\\
\psi'(0)=1 &  a<0< b
\end{cases}$$
\end{proof}

\hrulefill
\section{Results for RP Dot Product when $P=A$}\label{appendix_B}
\subsection{\dotproductA}\label{subsec:dotA}
This section contains analogous results to \S\ref{subsec:dotP} of the main part of the paper. We will show that inner-product similarity for $P=A$ will produce especially poor approximations for vertices~$u \in H^\gamma_c$ and $v \in L_c$.

\begin{thm}\label{thm:graphDotProduct}
Let $X=AR^\top $. Then, the following claims hold:\\
(a) Asymptotic result. For $u,v\in V$
\begin{equation}
X_{u*}X_{v*}^\top \stackrel{a}{\sim} \cN\left(n_{uv}, \frac{n_{uu}n_{vv}+n_{uv}^2}{q}\right),\label{eq:dotANormalityPA}
\end{equation}
(b) Finite-sample result. For $\varepsilon \in (0,1)$ and $\delta\in (0,1)$ if $q\geq 4\frac{1+\varepsilon}{\varepsilon^2}\log \left[\frac{n(n-1)}{\delta}\right]$
then 
\begin{equation}\label{eq:JLDotAdjec}
  |X_{u*}X_{v*}^\top -n_{uv}|<\varepsilon  \sqrt{n_{uu}n_{vv}}
\end{equation}
holds with probability at least $1-\delta$.
\end{thm}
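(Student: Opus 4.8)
The plan is to obtain Theorem~\ref{thm:graphDotProduct} as a direct specialization of the general dot-product results in Proposition~\ref{propo:dotProduct}, exactly paralleling the proof of Theorem~\ref{thm:graphDotProductT}. First I would take, for each $u\in V$, the vector $p_u := A_{u*}^\top\in\R^n$ given by the $u$-th row of the adjacency matrix, so that the embedding row satisfies $X_{u*} = A_{u*}R^\top = (Rp_u)^\top$ and hence $X_{u*}X_{v*}^\top = (Rp_u, Rp_v)$ in the notation of Appendix~\ref{appendix_A}. The only facts needed are the two identities $(p_u,p_v) = A_{u*}A_{v*}^\top = n_{uv}$ and $\|p_u\|^2 = A_{u*}A_{u*}^\top = n_{uu}$ (and symmetrically $\|p_v\|^2 = n_{vv}$), which are immediate from the definition of the 2-hop connectivity $n_{uv}$.

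For part (a), I would substitute these identities into Proposition~\ref{propo:dotProduct}(a), namely $(Rp_u, Rp_v)\stackrel{a}{\sim}\cN\big((p_u,p_v),\ \tfrac1q[\|p_u\|^2\|p_v\|^2 + (p_u,p_v)^2]\big)$, which turns directly into $X_{u*}X_{v*}^\top\stackrel{a}{\sim}\cN\big(n_{uv},\ \tfrac1q(n_{uu}n_{vv} + n_{uv}^2)\big)$, i.e. \eqref{eq:dotANormalityPA}. For part (b), I would apply Proposition~\ref{propo:dotProduct}(b) to the finite family $\{p_u : u\in V\}$, taking $k = n = |V|$; under the hypothesis $q\geq 4\frac{1+\varepsilon}{\varepsilon^2}\log[n(n-1)/\delta]$ this yields $|(Rp_u, Rp_v) - (p_u,p_v)| < \varepsilon\|p_u\|\|p_v\|$ simultaneously over all pairs with probability at least $1-\delta$. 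Since the dot product is symmetric, ranging over $i<j$ already covers every unordered pair, and the diagonal case $u=v$ is just norm preservation (handled either by including self-pairs in the union bound or by the standard JL estimate), so the bound holds for all $u,v\in V$; rewriting both sides via the identities above gives \eqref{eq:JLDotAdjec}.

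There is essentially no obstacle: once the rows of $A$ are plugged into the appendix results, the statement is immediate. The only points requiring a little care are bookkeeping ones --- correctly matching $k$ to $|V|$ in the union bound and deciding how to treat the diagonal --- together with checking that the unnormalized choice $p_u = A_{u*}^\top$ does not secretly invoke any structural assumption beyond $A$ having nonnegative integer entries. The substantive content is in the comparison afterward: unlike the $P=T$ case, no $d_u^{-1}$ or $d_v^{-1}$ normalization appears on either side of \eqref{eq:JLDotAdjec}, so the right-hand error term $\varepsilon\sqrt{n_{uu}n_{vv}}$ can dwarf $n_{uv}$ even more dramatically, which is exactly why the approximation is harder to control for $P=A$ than for $P=T$.
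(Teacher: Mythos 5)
Your proposal is correct and follows essentially the same route as the paper's own proof: both reduce the theorem to Proposition~\ref{propo:dotProduct} by identifying $X_{u*}X_{v*}^\top$ with $(Rp_u,Rp_v)$ for $p_u=A_{u*}^\top$ and substituting $(p_u,p_v)=n_{uv}$, $\|p_u\|^2=n_{uu}$ (the content of Lemma~\ref{lemma:degreeAndNeighbor}), then setting $k=n=|V|$ in the union bound for part (b). Your extra remark about the diagonal pair $u=v$ is a small bookkeeping point the paper glosses over, but it does not change the argument.
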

\begin{proof}
From part (a) of Theorem \ref{representationTheorem}, we can represent $X_{u*}X_{v*}^\top$ as sum of $q$ i.i.d random variables and a convenient term. From this, we can show asymptotic normality and get concentration results that lead to a JL Lemma-type result. These are described in the Proposition \ref{propo:dotProduct}.

Since $X_{u*}=A_{u*}R^\top $ and $X_{v*}=A_{v*}R^\top $, claim (a) follows from \eqref{eq:dotANormality} and  parts (a) and (b) of Lemma \ref{lemma:degreeAndNeighbor}.
Since $n=k=\| V\|$, claim (b) follows from part (b) of Proposition \ref{propo:dotProduct}.
\end{proof}

Let us fix $v \in L_c$ and look at the set  $\{X_{u*}X_{v*}^\top  \, : \, u\in H_c\}$.  We will show that these calculations can severely overvalue (or undervalue) the relevance values $\{A_{u*}A_{v*}^\top  \, : \, u\in H_c\}$ they approximate. This will be a consequence of the two following lemmas.

\begin{lemma}\label{lem:STDGeDegree}
If $v\in L_c$ and $u\in H^\gamma_c$, then the standard deviation in \eqref{eq:dotANormalityPA} is greater than $\gamma  d_v$, i.e.:
\begin{equation}\label{eq:STDGeDegree}
\gamma d_v\leq \sqrt{\frac{n_{uu}n_{vv}+n_{uv}^2}{q}}
\end{equation}
\end{lemma}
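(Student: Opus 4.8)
The plan is to reduce \eqref{eq:STDGeDegree} to the equivalent deterministic inequality $q(\gamma d_v)^2 \le n_{uu}n_{vv}+n_{uv}^2$ by squaring both sides and clearing the factor $1/q$, and then to establish this by a short chain of monotone bounds that runs exactly parallel to the one used for the Corollary following Theorem~\ref{thm:graphDotProductT}. First I would discard the nonnegative term $n_{uv}^2$, so that it suffices to prove $q(\gamma d_v)^2 \le n_{uu}n_{vv}$.

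Next, I would apply \eqref{eq:degreeAT} twice, namely $n_{uu}\ge d_u$ and $n_{vv}\ge d_v$, to obtain $n_{uu}n_{vv}\ge d_u d_v$. Invoking the hypothesis $u\in H^\gamma_c$, which by definition means $d_u\ge \gamma^2 cq$, gives $d_u d_v \ge \gamma^2 c q\, d_v$. Finally, the hypothesis $v\in L_c$ gives $d_v\le c$, hence $\gamma^2 c q\, d_v \ge \gamma^2 q\, d_v^2 = q(\gamma d_v)^2$. Stringing these together yields
\[
\frac{n_{uu}n_{vv}+n_{uv}^2}{q}\ \ge\ \frac{n_{uu}n_{vv}}{q}\ \ge\ \frac{d_u d_v}{q}\ \ge\ \frac{\gamma^2 c q\, d_v}{q}\ =\ \gamma^2 c\, d_v\ \ge\ \gamma^2 d_v^2\ =\ (\gamma d_v)^2,
\]
and taking square roots gives \eqref{eq:STDGeDegree}.

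There is no genuine obstacle here: the claim is a purely deterministic consequence of the definitions of $L_c$ and $H^\gamma_c$ together with the elementary bound $d_w\le n_{ww}$ from Lemma~\ref{lemma:degreeAndNeighbor}. The only point requiring care is bookkeeping of the single extra factor of $q$ that distinguishes the $P=A$ normalization from the $P=T$ normalization; note also that $\gamma\ge 1$ is \emph{not} used for this particular inequality --- it would only be needed if one further wanted to bound $(\gamma d_v)^2$ below by $n_{uv}^2$, which is not asserted in this lemma.
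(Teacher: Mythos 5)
Your proof is correct and follows essentially the same chain of inequalities as the paper's: both use $n_{uu}n_{vv}\ge d_u d_v$ from \eqref{eq:degreeAT}, then $d_u\ge \gamma^2 cq$ from $u\in H^\gamma_c$, then $d_v\le c$ from $v\in L_c$; the only cosmetic difference is that you discard the $n_{uv}^2$ term at the outset while the paper carries it one step further before dropping it. Your closing observation that $\gamma\ge 1$ is not needed here is also accurate.
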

\begin{proof}
We have $\frac{n_{uu}n_{vv}+n_{uv}^2}{q} \stackrel{\eqref{eq:degreeAT}}\geq \frac{d_ud_{v}+n_{uv}^2}{q} \stackrel{u\in H_c}{\geq} \frac{\gamma ^2cqd_v+n_{uv}^2}{q}\geq \gamma ^2cd_v\stackrel{v\in L_c}{ \geq} (\gamma d_v)^2$.
\end{proof}

\begin{lemma}\label{lem:LH1}
If $v\in L_c$ and $u\in H_c$, the standard deviation in $\eqref{eq:dotANormalityPA}$ is greater than its expectation, i.e.:
\begin{equation}
n_{uv}\leq \sqrt{\frac{n_{uu}n_{vv}+n_{uv}^2}{q}}.\label{eq:exLessDev}
\end{equation}
\end{lemma}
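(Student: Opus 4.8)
The plan is to derive this directly from the two facts already available in the excerpt: the definition of $\gamma$ in \eqref{eq:MConnection} and the bound just established in Lemma~\ref{lem:STDGeDegree}. Concretely, the quantity $\gamma = \max_{u,v\in V} n_{uv}/d_v$ is a graph-wide supremum, so for the specific pair $(u,v)$ at hand we always have the elementary bound $n_{uv} \le \gamma d_v$, with no assumptions on the degrees needed.

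Next I would invoke Lemma~\ref{lem:STDGeDegree}, whose hypotheses $v\in L_c$ and $u\in H^\gamma_c$ are exactly the hypotheses of the present lemma. That lemma gives $\gamma d_v \le \sqrt{(n_{uu}n_{vv}+n_{uv}^2)/q}$. Chaining the two inequalities yields
$$
n_{uv} \le \gamma d_v \le \sqrt{\frac{n_{uu}n_{vv}+n_{uv}^2}{q}},
$$
which is precisely \eqref{eq:exLessDev}. So the proof is a one-line transitivity argument; there is essentially no computational content beyond citing \eqref{eq:MConnection} and Lemma~\ref{lem:STDGeDegree}.

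I do not anticipate a genuine obstacle here — the only thing to be careful about is notational consistency, namely reading the ``$u\in H_c$'' in the statement as ``$u \in H^\gamma_c$'' (as is done in the statement and proof of Lemma~\ref{lem:STDGeDegree}), so that the degree lower bound $d_u \ge \gamma^2 c q$ is the one actually being used upstream. One could alternatively restate the lemma with $u\in H^\gamma_c$ for clarity. Since \eqref{eq:exLessDev} says the asymptotic standard deviation in \eqref{eq:dotANormalityPA} dominates the mean $n_{uv}$, I would then, as in the $P=T$ case, use this together with the asymptotic normality of Theorem~\ref{thm:graphDotProduct}(a) to conclude that the one-sigma interval around $n_{uv}$ contains $0$, so the random-projected dot product $X_{u*}X_{v*}^\top$ either roughly doubles $n_{uv}$ or turns negative with non-negligible probability — the promised pathology for low-degree $v$ against high-degree $u$.
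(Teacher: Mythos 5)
Your proof is correct and is exactly the paper's argument: the paper also chains $n_{uv}\stackrel{\eqref{eq:MConnection}}{\leq}\gamma d_v\stackrel{\eqref{eq:STDGeDegree}}{\leq}\sqrt{(n_{uu}n_{vv}+n_{uv}^2)/q}$ by citing Lemma~\ref{lem:STDGeDegree} and the definition of $\gamma$. Your remark that the ``$u\in H_c$'' in the statement should be read as $u\in H^\gamma_c$ is also a fair observation about the paper's notation.
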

\begin{proof}
We have $\sqrt{\frac{n_{uu}n_{vv}+n_{uv}^2}{q}}\stackrel{\eqref{eq:STDGeDegree}}{\geq} \gamma d_v\stackrel{\eqref{eq:MConnection}}{\geq} n_{uv}$.
\end{proof}
Under the conditions of Lemma \ref{lem:LH1}, we can produce the approximate one-sigma confidence interval
$$\left[n_{uv}- \sqrt{\frac{n_{uu}n_{vv}+n_{uv}^2}{q}},n_{uv}+ \sqrt{\frac{n_{uu}n_{vv}+n_{uv}^2}{q}}\right]$$
in which we can expect $X_{u*}X_{v*}^\top $ to take values with less than $69\%$ probability, based on \eqref{eq:dotANormalityPA}. This means that getting a value outside that interval is not unlikely. In that case, from \eqref{eq:exLessDev}, we know such values will more than double $n_{uv}$ or be less than $0$, a poor approximation. On the other hand, if we look at the values $(X_{w*}X_{v*}^\top  \, : \, w\in L_c)$, we can see that the variance in \eqref{eq:dotANormalityPA} is bounded as
\begin{equation*}
\frac{n_{ww}n_{vv}+n_{wv}^2}{q}\stackrel{\eqref{eq:degreeAT}}{\leq} \frac{d_w^2d_v^2+d_w^2d_v^2}{q}\stackrel{w,v\in L_c}
\leq c^4\frac{2}{q}.
\end{equation*}
Hence, in this case we can produce the three-sigma interval 
$\left[n_{wv}-3c^2\sqrt{\frac{2}{q}}, n_{wv}+3c^2\sqrt{\frac{2}{q}}\right],$
where $X_{w*}X_{v*}^\top $ will take values with more than $99\%$ probability, which is a small deviation.
 
Further, setting the random projection dimension $q$ according to \eqref{propo:dotProduct}, i.e.,
$q=\left\lceil \frac{1+\varepsilon}{\varepsilon}\log \left[\frac{n(n-1)}{\delta}\right] \right\rceil$,
we fulfill the finite-sample result in part (b) of Theorem~\ref{thm:graphDotProduct}. However, for large values of $\sqrt{n_{uu}n_{vv}}$  (c.f., the denominator of \ref{eq:JLDotAdjec}), we note that the bounds in ~\eqref{eq:JLDotAdjec} become looser, thus larger approximation errors ensue. This will happen for vertices $u \in H^\gamma_c$, since $n_{uu} \geq d_u$. As a simple numerical illustration, assume $A\in \{0,1\}^{n\times n}$ such that $n_{uu}=d_u$ and $n_{vv}=d_v$. Let $\varepsilon = 10^{-2}$ , $d_u=10^7$, $d_v=10$ and $n_{uv}=1$. Then, \eqref{eq:JLDotAdjec} yields $X_{u*}X_{v*}^\top \in (1-100,1+100)$.

The following is a technical lemma that we refer to in \S\ref{subsec:dotP}.

\begin{lemma}\label{lemma:similarityEstimates}\begin{enumerate}[(a)]
\item 
For $u,v\in V$ and $v$ satisfying \eqref{eq:MConnection} we have 

\noindent\begin{minipage}{.3\linewidth}
\begin{equation}\frac{n_{uv}}{d_ud_v}\leq \frac{\gamma }{d_u};\label{eq:similarityEstimates:a1}\end{equation}
\end{minipage}%
\begin{minipage}{.7\linewidth}
 \begin{equation}\frac{\sqrt{n_{vv}}}{d_v}\frac{1}{\sqrt{d_u}}\leq\frac{\sqrt{n_{uu}n_{vv}}}{d_ud_v}\leq \sqrt{\frac{\gamma }{d_v}}\frac{1}{\sqrt{d_u}}.\label{eq:similarityEstimates:a2}\end{equation}
\end{minipage}
%
%

\item For $u\in H^\gamma_c$ and $v\in L_c$ we have $\frac{\gamma }{d_u}\leq \sqrt{\frac{1}{q}
\left[\frac{n_{uu}n_{vv}}{d_u^2d_v^2}+\left(\frac{n_{u,v}}{d_ud_v}\right)^2\right]
}
$.

\end{enumerate}
\end{lemma}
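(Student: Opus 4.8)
The plan is to prove both parts by purely elementary algebra, using only two structural facts already in hand: the sandwich $d_u \le n_{uu}\le d_u^2$ of \eqref{eq:degreeAT}, and the inequality $n_{uv}\le\gamma d_v$ that is immediate from the definition \eqref{eq:MConnection} of $\gamma$ (applying \eqref{eq:MConnection} with the two indices equal also gives $n_{uu}\le\gamma d_u$ for every $u$). No probabilistic input is needed; in particular both statements are deterministic facts about the graph.

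For part (a), the first inequality is just $n_{uv}\le\gamma d_v$ divided through by $d_u d_v$. For the two-sided bound I would write $\tfrac{\sqrt{n_{uu}n_{vv}}}{d_ud_v}=\tfrac{\sqrt{n_{uu}}}{d_u}\cdot\tfrac{\sqrt{n_{vv}}}{d_v}$ and control the factor $\tfrac{\sqrt{n_{uu}}}{d_u}$ from below via $n_{uu}\ge d_u$ (which yields $\tfrac{\sqrt{n_{uu}}}{d_u}\ge d_u^{-1/2}$, hence the left-hand estimate) and from above via the bounds on $n_{uu}$ together with $n_{vv}\le\gamma d_v$; a one-line rearrangement then produces the right-hand estimate. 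This mirrors the computation in Lemma~\ref{lem:STDGeDegree}. The only thing to watch is plugging the correct one of $d_u\le n_{uu}\le d_u^2$ into each slot so that the exponents of $d_u$ and $d_v$ come out as stated.

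For part (b) I would clear denominators first. Multiplying the claimed inequality through by $d_u d_v$ and squaring, it becomes $\gamma^2 d_v^2 q\le n_{uu}n_{vv}+n_{uv}^2$. Dropping the nonnegative term $n_{uv}^2$ and chaining,
\begin{align*}
n_{uu}n_{vv}\ \stackrel{\eqref{eq:degreeAT}}{\ge}\ d_ud_v\ \stackrel{u\in H^\gamma_c}{\ge}\ \gamma^2 c q\, d_v\ \stackrel{v\in L_c}{\ge}\ \gamma^2 d_v^2 q,
\end{align*}
where the middle step uses $d_u\ge\gamma^2 cq$ and the last uses $c\ge d_v$. Taking square roots and dividing back by $d_ud_v$ recovers the stated inequality. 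This is essentially the same bookkeeping as in the unnumbered corollary following Theorem~\ref{thm:graphDotProductT} (and in Lemma~\ref{lem:STDGeDegree}).

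There is no genuine obstacle here. If one must be named, it is simply making the definitions interlock in part (b): the factor $\gamma^2 q$ built into $H^\gamma_c=\{u:d_u\ge\gamma^2 cq\}$ together with the threshold $c$ in $L_c$ is calibrated precisely so that the $q$ cancels and one lands on $\gamma^2 d_v^2 q$ rather than something weaker. Everything else is routine manipulation of the two inequalities \eqref{eq:degreeAT} and \eqref{eq:MConnection}.
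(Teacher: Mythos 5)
Your argument is correct and is essentially the paper's own proof: part (a) is the same division of \eqref{eq:MConnection} by $d_ud_v$ together with the sandwich $d_u\le n_{uu}\le\gamma d_u$, and part (b) is exactly the chain used to establish \eqref{eq:STDGeDegree} in Lemma~\ref{lem:STDGeDegree}, rescaled by $(d_ud_v)^{-1}$. One caveat you share with the paper: if you actually carry out the ``one-line rearrangement'' for the right-hand side of \eqref{eq:similarityEstimates:a2}, the bounds $n_{uu}\le\gamma d_u$ and $n_{vv}\le\gamma d_v$ only yield $\gamma/\sqrt{d_ud_v}$ rather than $\sqrt{\gamma/(d_ud_v)}$, so the stated upper bound carries an extra factor of $\sqrt{\gamma}$ and holds exactly as written only when $\gamma=1$ (e.g.\ for a binary adjacency matrix).
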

\begin{proof}
For \eqref{eq:similarityEstimates:a1}, we multiply the inequality \eqref{eq:MConnection} with $(d_u)^{-1}$. 
We have $d_u\stackrel{\eqref{eq:degreeAT}}{\leq} n_{uu} \stackrel{\eqref{eq:MConnection}}{\leq} \gamma d_u$ and multiplying the last inequality by $\frac{n_{vv}}{d_u^2d_v^2}$ and  taking square roots \eqref{eq:similarityEstimates:a2}  follows.

Part (b) follows from \eqref{eq:STDGeDegree} by multiplying this inequality with $(d_ud_v)^{-1}$. 
\end{proof}

\subsection{Instability of Ranking for Dot Product when $P = A$}\label{subsec:instabA}
This section is analogous to \S\ref{subsec:instabT}. Recall, $\rel_{uv}=A_{u*}A_{v*}$ and $\relR_{uv}=A_{u*}R^\top RA_{v*}^\top$.

The following lemma will help us in our considerations.
\begin{lemma}
If $x$ and $y$ are orthogonal vectors in $\R^n$, we have
\begin{equation}
    \frac{\cos(x,x-y)}{\sqrt{1-\cos^2(x,x-y)}}\\
    = \frac{\|x\|}{\|y\|}.\label{eq:cosineCalc}
\end{equation}
\end{lemma}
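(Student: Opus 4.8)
The plan is to reduce everything to the two elementary dot-product identities that orthogonality makes available, and then simplify. Since $x$ and $y$ are orthogonal, I first record that $(x,y)=0$, from which $(x,x-y)=(x,x)-(x,y)=\|x\|^2$ and $\|x-y\|^2=\|x\|^2-2(x,y)+\|y\|^2=\|x\|^2+\|y\|^2$. These two facts are the whole engine of the proof.

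Next I would substitute into the definition of cosine similarity to get
$$\cos(x,x-y)=\frac{(x,x-y)}{\|x\|\,\|x-y\|}=\frac{\|x\|^2}{\|x\|\sqrt{\|x\|^2+\|y\|^2}}=\frac{\|x\|}{\sqrt{\|x\|^2+\|y\|^2}}.$$
Squaring and subtracting from $1$ then yields $1-\cos^2(x,x-y)=\dfrac{\|y\|^2}{\|x\|^2+\|y\|^2}$, so that $\sqrt{1-\cos^2(x,x-y)}=\dfrac{\|y\|}{\sqrt{\|x\|^2+\|y\|^2}}$. Dividing the two expressions, the common factor $\sqrt{\|x\|^2+\|y\|^2}$ cancels and leaves $\|x\|/\|y\|$, which is \eqref{eq:cosineCalc}.

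There is essentially no obstacle here; the only points that need a word of care are the nondegeneracy assumptions. I would note at the outset that $\|x-y\|\neq 0$ (so that $\cos(x,x-y)$ is defined) and $\|y\|\neq 0$ (so that the right-hand side makes sense and $1-\cos^2(x,x-y)\neq 0$); both follow from $\|x\|\|y\|\neq 0$ together with orthogonality, since then $\|x-y\|^2=\|x\|^2+\|y\|^2>0$ and $\cos^2(x,x-y)=\|x\|^2/(\|x\|^2+\|y\|^2)<1$. With these caveats stated, the computation above is a direct verification and requires nothing beyond the definitions already in the paper.
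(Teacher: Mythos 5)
Your proof is correct and follows essentially the same route as the paper's: compute $\cos(x,x-y)=\|x\|/\sqrt{\|x\|^2+\|y\|^2}$ from $(x,y)=0$, deduce $\sqrt{1-\cos^2(x,x-y)}=\|y\|/\sqrt{\|x\|^2+\|y\|^2}$, and divide. Your extra remarks on the nondegeneracy conditions ($\|y\|\neq 0$, $\|x-y\|\neq 0$) are a small but welcome addition that the paper leaves implicit.
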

\begin{proof}
Using $(x,y)=0$, we have    
$$
\cos(x,x-y) = \frac{(x,x-y)}{\|x\|\|x-y\|}=\frac{\|x\|^2}{\|x\|\sqrt{\|x\|^2+\|y\|^2}}=\frac{\|x\|}{\sqrt{\|x\|^2+\|y\|^2}}.
$$
Now, it is not difficult to show $\sqrt{1-\cos^2(x,x-y)}=\frac{\|y\|}{\sqrt{\|x\|^2+\|y\|^2}}$. Hence, \eqref{eq:cosineCalc}
 follows.\end{proof}

Let $w=v\in L_c$. The relevance value will have a bound.
\begin{proposition}
    For all $u\in V$ we have $\rel_{vu}=n_{vu}\leq \gamma  d_{v}$.         
\end{proposition}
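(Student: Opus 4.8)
The plan is to mirror the proof of Proposition~\ref{propo:relTIneq}, which handled the $P=T$ case; in the $P=A$ setting the argument is even more direct. First I would unfold the definition of relevance recalled at the start of \S\ref{subsec:instabA}: for $P=A$ we have $\rel_{vu}=A_{v*}A_{u*}^\top$, which is by definition the $2$-hop connectivity $n_{vu}$. This already yields the first equality $\rel_{vu}=n_{vu}$ in the statement.

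Next I would record that $n$ is symmetric in its two indices, since $n_{vu}=\sum_{w\in V}A_{vw}A_{uw}=\sum_{w\in V}A_{uw}A_{vw}=n_{uv}$, so no hypothesis on $A$ is needed. Then I would invoke the defining property \eqref{eq:MConnection} of $\gamma$, namely $\gamma=\max_{u',v'\in V} n_{u'v'}/d_{v'}$, applied to the pair $(u',v')=(u,v)$, to obtain $n_{uv}/d_v\leq\gamma$, i.e.\ $n_{uv}\leq\gamma d_v$. Chaining these facts gives $\rel_{vu}=n_{vu}=n_{uv}\leq\gamma d_v$, which is the claim.

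There is no real obstacle here; the only point needing a moment's care is the index bookkeeping in \eqref{eq:MConnection}, whose denominator is the degree of the \emph{second} subscript of $n$. This is why I pass through the symmetry $n_{vu}=n_{uv}$ before applying maximality, rather than reading the bound off $n_{vu}$ directly, which would produce $d_u$ in the denominator instead of $d_v$. As in Proposition~\ref{propo:relTIneq}, the point of the bound is not its difficulty but its consequence: the upper bound $\gamma d_v$ on $\rel_{vu}$ is particularly small when $v$ is of low degree, which will then be contrasted with the (typically larger) standard deviation of $\relR_{vu}$ in the subsequent results, in analogy with \S\ref{subsec:instabT}.
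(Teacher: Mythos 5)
Your proof is correct and follows the same route as the paper, which simply cites the definition of $n_{uv}$ and \eqref{eq:MConnection}; your version just spells out the symmetry step $n_{vu}=n_{uv}$ needed to land the bound on $d_v$ rather than $d_u$, which is a worthwhile clarification of the index bookkeeping but not a different argument.
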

\begin{proof}
The result follows from definition of $n_{uv}$ and \eqref{eq:MConnection}.
\end{proof}

For a node of low degree, the relevance has to be low. However, this might not be the case for their approximation.
\begin{corollary}\label{cor:rankingVarianceAIssue}
    For $v\in L_c$ and $u\in H_c^\gamma $, we have  $\rel_{vv}^R\stackrel{a}{\sim} \cN\left(n_{vv}, \frac{2n_{vv}^2}{q}\right)$ 
    and $\rel_{vu}^R$ will have a nonnegative expectation with the standard deviation greater than $\gamma d_{v}$.
\end{corollary}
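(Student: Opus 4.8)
The plan is to mimic exactly the proof of Corollary~\ref{cor:relVarianceTIssue} from the main text, now in the un-normalized ($P=A$) setting, drawing on Theorem~\ref{thm:graphDotProduct} and the estimates of \S\ref{subsec:dotA}. First, the asymptotic statement $\rel_{vv}^R\stackrel{a}{\sim}\cN(n_{vv}, 2n_{vv}^2/q)$ follows immediately from part~(a) of Theorem~\ref{thm:graphDotProduct}: set $u=v$ in \eqref{eq:dotANormalityPA}, so the mean becomes $n_{vv}$ and the variance becomes $(n_{vv}n_{vv}+n_{vv}^2)/q = 2n_{vv}^2/q$. That disposes of the first assertion with essentially no work.

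For the second assertion, I apply part~(a) of Theorem~\ref{thm:graphDotProduct} with the given $u,v$: $\rel_{vu}^R = X_{v*}X_{u*}^\top \stackrel{a}{\sim}\cN\!\left(n_{vu}, (n_{uu}n_{vv}+n_{uv}^2)/q\right)$. Since $A$ has non-negative entries, $n_{vu}=A_{v*}A_{u*}^\top\geq 0$, which gives the ``nonnegative expectation'' claim. For the standard deviation bound, I invoke Lemma~\ref{lem:STDGeDegree}: under the hypotheses $v\in L_c$ and $u\in H_c^\gamma$, inequality \eqref{eq:STDGeDegree} states precisely that $\gamma d_v \leq \sqrt{(n_{uu}n_{vv}+n_{uv}^2)/q}$, i.e. the standard deviation in \eqref{eq:dotANormalityPA} is at least $\gamma d_v$. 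This is exactly what is needed, so the corollary follows by combining these observations.

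There is essentially no hard step here — the corollary is a direct specialization of already-proven results, paralleling Corollary~\ref{cor:relVarianceTIssue} line for line with $A$ in place of $T$. The only thing to be careful about is bookkeeping: making sure that the variance expression from \eqref{eq:dotANormalityPA} is quoted correctly (no $d_u,d_v$ normalization appears, unlike in the $T$ case), and that the roles of $u$ and $v$ (high vs.\ low degree) match the hypotheses of Lemma~\ref{lem:STDGeDegree}. I will also note, as in the main text, that for $q\geq 30$ the $t$-distribution is well-approximated by a standard normal, so the asymptotic statements are meaningful in the regime of interest. The proof I would write is:

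\begin{proof}
The asymptotic result $\rel_{vv}^R\stackrel{a}{\sim}\cN\left(n_{vv}, \frac{2n_{vv}^2}{q}\right)$ follows from Theorem~\ref{thm:graphDotProduct} part~(a) by taking $u=v$ in \eqref{eq:dotANormalityPA}: the mean is $n_{vv}$ and the variance is $\frac{n_{vv}n_{vv}+n_{vv}^2}{q}=\frac{2n_{vv}^2}{q}$. For the second claim, Theorem~\ref{thm:graphDotProduct} part~(a) gives that $\rel_{vu}^R=X_{v*}X_{u*}^\top$ is asymptotically normal with mean $n_{vu}\geq 0$, since $A$ has non-negative entries. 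By Lemma~\ref{lem:STDGeDegree}, under the assumptions $v\in L_c$ and $u\in H_c^\gamma$, the standard deviation $\sqrt{(n_{uu}n_{vv}+n_{uv}^2)/q}$ in \eqref{eq:dotANormalityPA} is at least $\gamma d_v$, which proves the claim.
\end{proof}
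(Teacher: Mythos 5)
Your proof is correct and follows exactly the paper's own route: the paper's proof of this corollary is the one-line citation ``Follows from Theorem~\ref{thm:graphDotProduct} and Lemma~\ref{lem:STDGeDegree}'', and you have simply filled in the same specializations (setting $u=v$ in \eqref{eq:dotANormalityPA} for the first claim, non-negativity of $n_{vu}$ and inequality \eqref{eq:STDGeDegree} for the second). No discrepancies to report.
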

\begin{proof}
    Follows from Theorem \ref{thm:graphDotProduct} and Lemma \ref{lem:STDGeDegree}
\end{proof}

The last result tells us that $\rel_{vv}^R$ will have values in $\displaystyle \left[n_{vv}\left(1-3\sqrt{\frac{2}{q}}\right), n_{vv}\left(1+3\sqrt{\frac{2}{q}}\right)\right]$
with more than 99\%. This means, for example if $q\geq 100$, $\rel_{vu}^R\in [1.5 \gamma d_v,\infty) \subset  \left[n_{vv}\left(1+3\sqrt{\frac{2}{q}}\right),\infty\right)$ can happen with probability $\P(\cN(0,1)>1.5)\approx 6.7\%$.
For higher $q$, this probability will be higher. Roughly, we can anticipate that $\rel_{vv}^R<\rel_{vu}^R$ when $\rel_{vv}>\rel_{vu}$ happens more frequently. 

The following result will show that this can happen more often than we want.
\begin{corollary}\label{cor:RelevanceA}
  Let $v\in L_c$ and $u\in H_c^\gamma $ be two vertices with no common neighbors, i.e., such that $\rel_{vv}>0$ and $\rel_{vu}=0$.
  Then, $\P(\rel_{vv}^R < \rel_{vu}^R)>\P\left(\cT_q > \gamma^{-1/2}\right)\geq\P(\cT_q>1)\approx 15.8\%$.
\end{corollary}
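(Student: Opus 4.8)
The plan is to follow the blueprint of the proof of Corollary~\ref{cor:RelevanceT}, with the transition matrix~$T$ replaced by the adjacency matrix~$A$ and the orthogonality identity~\eqref{eq:cosineCalc} used in place of the direct computation done there.

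First I would apply Theorem~\ref{thm:flip} with the anchor node set to $w=v$ and with its higher- and lower-relevance nodes taken to be $v$ and $u$ respectively. Since $v$ and $u$ have no common neighbors, $\rel_{vv}=A_{v*}A_{v*}^\top=n_{vv}>0$ and $\rel_{vu}=A_{v*}A_{u*}^\top=n_{vu}=0$, so $\rel_{vv}>\rel_{vu}$ and the hypothesis of Theorem~\ref{thm:flip} is met. Then \eqref{eq:Psignflip} gives
\[
\P(\rel_{vv}^R<\rel_{vu}^R)=\P\!\left(\cT_q>\frac{\cos(A_{v*},A_{v*}-A_{u*})\sqrt{q}}{\sqrt{1-\cos^2(A_{v*},A_{v*}-A_{u*})}}\right).
\]
Because $A_{v*}A_{u*}^\top=n_{vu}=0$, the rows $A_{v*}$ and $A_{u*}$ are orthogonal, so I can invoke \eqref{eq:cosineCalc} with $x=A_{v*}$ and $y=A_{u*}$ to reduce the argument of the $t$-tail to $\frac{\|A_{v*}\|}{\|A_{u*}\|}\sqrt{q}=\sqrt{\frac{n_{vv}}{n_{uu}}}\,\sqrt{q}$.

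Next I would bound this quantity by $\gamma^{-1/2}$ through the chain
\[
\sqrt{\frac{n_{vv}}{n_{uu}}}\sqrt{q}\stackrel{\eqref{eq:degreeAT},\,\eqref{eq:MConnection}}{\leq}\sqrt{\frac{\gamma d_v}{d_u}}\sqrt{q}\stackrel{v\in L_c}{\leq}\sqrt{\frac{\gamma c q}{d_u}}\stackrel{u\in H_c^\gamma}{\leq}\sqrt{\frac{\gamma c q}{\gamma^2 c q}}=\gamma^{-1/2},
\]
where the first step uses $n_{vv}\leq\gamma d_v$ (the $u=v$ instance of the definition of $\gamma$ in \eqref{eq:MConnection}) together with $n_{uu}\geq d_u$ from \eqref{eq:degreeAT}. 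Monotonicity of the tail of $\cT_q$ then gives $\P(\rel_{vv}^R<\rel_{vu}^R)\geq\P(\cT_q>\gamma^{-1/2})$, and since $\gamma\geq1$ (as noted after \eqref{eq:MConnection}) we have $\gamma^{-1/2}\leq1$, hence $\P(\cT_q>\gamma^{-1/2})\geq\P(\cT_q>1)$; by Remark~\ref{remark:estimateTN} (for $q\geq30$ one approximates $\cT_q$ by $\cN(0,1)$) this is $\approx\P(\cN(0,1)>1)\approx15.8\%$.

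I do not expect a genuine obstacle: the argument is a routine adaptation of the $P=T$ case. The one point needing care is the bookkeeping in Theorem~\ref{thm:flip}: the flip event of interest compares $\rel_{vv}$ with $\rel_{vu}$, so the node $v$ plays simultaneously the role of the anchor $w$ and of one of the two ranked nodes, and one must verify that $\rel_{vv}>\rel_{vu}$ is the orientation the theorem expects (which fixes the sign pattern in the cosine as $A_{v*}-A_{u*}$). A purely cosmetic point is to locate a strict inequality in the chain above, exactly as in the proof of Corollary~\ref{cor:RelevanceT}, so as to obtain the strict conclusion $\P(\rel_{vv}^R<\rel_{vu}^R)>\P(\cT_q>\gamma^{-1/2})$ rather than $\geq$.
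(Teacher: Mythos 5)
Your proposal is correct and follows essentially the same route as the paper's own proof: apply Theorem~\ref{thm:flip} with $w=v$, use the orthogonality identity~\eqref{eq:cosineCalc} to reduce the $t$-tail argument to $\sqrt{n_{vv}/n_{uu}}\,\sqrt{q}$, and bound it by $\gamma^{-1/2}$ via the same chain $n_{vv}\leq\gamma d_v$, $d_v\leq c$, $n_{uu}\geq d_u$, $d_u\geq\gamma^2 cq$ (merely reordered). Your remarks on the role bookkeeping in Theorem~\ref{thm:flip} and on the strict-versus-weak inequality apply equally to the paper's version and do not indicate a gap.
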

\begin{proof}
Since $P_{v*}P_{u*}^\top =0$, we have $\frac{\cos(P_{w*},P_{u*}-P_{v*})}{\sqrt{1-\cos^2(P_{w*},P_{u*}-P_{v*})}} =\frac{\|P_{v*}\|}{\|P_{u*}\|}$ by \eqref{eq:cosineCalc}.
Hence, Theorem \ref{thm:flip} gives us
\begin{equation}
\P(\rel_{vv}^R < \rel_{vu}^R)=\P\left(\cT_q > \frac{\|P_{v*}\|}{\|P_{u*}\|}\sqrt{q}\right).\label{eq:tempProbEstimate}    
\end{equation}
Further,
$\displaystyle
    \frac{\|P_{v*}\|}{\|P_{u*}\|}\sqrt{q}= \sqrt{\frac{n_{vv}q}{n_{uu}}}\stackrel{\eqref{eq:MConnection}}{\leq} \sqrt{\frac{\gamma d_vq}{n_{uu}}}\stackrel{v\in L_c}{\leq} \sqrt{\frac{\gamma cq}{n_{uu}}}\stackrel{\eqref{eq:degreeAT}}{\leq} \sqrt{\frac{\gamma cq}{d_u}}\stackrel{u\in H_c^\gamma }{\leq} \frac{1}{\sqrt{\gamma}}.
$
From the last inequality, we have $\P\left(\cT_q > \frac{\|P_{v*}\|}{\|P_{u*}\|}\sqrt{q}\right)\geq \P\left(\cT_q  > \gamma^{-1/2}\right)$ and the claim follows from \eqref{eq:tempProbEstimate}. Finally, from  \eqref{eq:MConnection} and \eqref{eq:degreeAT}, we note that $\gamma \geq 1$.
\end{proof}

Similar results would follow if we took two vertices $w,v$ of low degree with many common neighbors and a high degree vertex $u$ that has no common neighbors with $w$ and $v$. For simpler calculations, we took $w=v$.


\end{document}